\journal{Computer Science Review}
\newtheorem{property}{Property}[section]
\newtheorem{thm}[property]{Theorem}
\newtheorem{proposition}[property]{Proposition}
\newtheorem{theorem}[property]{Theorem}
\newtheorem{lemma}[property]{Lemma}
\newtheorem{corollary}[property]{Corollary}
\theoremstyle{definition}
\newtheorem{definition}[property]{Definition}
\newcommand{\mids}{{\bf mid}}
\renewcommand{\O}{O}
\newcommand{\clw}{{\bf clw}}
\newcommand{\rw}{{\bf rw}}
\DeclareMathOperator{\operatorClassNP}{\sf NP}
\newcommand{\classNP}{\ensuremath{\operatorClassNP}}
\newcommand{\bw}{{\bf bw}}
\newcommand{\fvs}{{\bf fvs}}
\newcommand{\vc}{{\bf vc}}
\newcommand{\obs}{{\bf obs}}
\newcommand{\tw}{{\bf tw}}
\newcommand{\cw}{{\bf cw}}
\newcommand{\FPT}{{\sf FPT}}
\begin{document}

\begin{frontmatter}

\title{Confronting Intractability via Parameters\tnoteref{fn1}
\\ \bigskip
{\em\small Happy 60th Birthday Mike!}
}

\tnotetext[fn1]{We would like to  dedicate this paper to the 60th birthday of {\sf Michael R. Fellows}. He is the \emph{alma mater} of the area of parameterized complexity.  The core ideas of this field  germinated from 
work of Langston and Fellows in the late 80's (such as 
\cite{FellowsL88nonc,FellowsL89,FellowsL94})
and Abrahamson, Fellows,  Ellis, and Mata \cite{AbrahamsonFEM89}, and then from a serendipitous meeting of Downey with Fellows in December 1990. But there is {\em no doubt} much of its current vigor comes from Mike's vision and irrepressible energy.}

%% use optional labels to link authors explicitly to addresses:
%% \author[label1,label2]{<author name>}
%% \address[label1]{<address>}
%% \address[label2]{<address>}

 \author[label1]{Rodney G. Downey\corref{cor1}}
 \ead{Rod.Downey@msor.vuw.ac.nz}
  \author[label2]{Dimitrios M. Thilikos\corref{cor2}}\ead{sedthilk@math.uoa.gr}
  \address[label1]{School of Mathematics, Statistics  and Operations Research, Victoria University, New
Zealand.  Supported by the Marsden Fund of New Zealand.}
      \address[label2]{Department of Mathematics, National \& Kapodistrian University of Athens, 
Panepistimioupolis, GR-15784,  Athens, Greece.}
\cortext[cor1]{Supported by the Marsden Fund of New Zealand.}
\cortext[cor2]{Supported by the project  ``Kapodistrias'' (A$\Pi$ 02839/28.07.2008) of the National and Kapodistrian University of Athens (project code: 70/4/8757).}

\begin{abstract}One approach to confronting computational 
hardness is to  try to understand the contribution of 
various parameters to the running time of algorithms and the complexity of 
computational tasks. Almost no computational
tasks in real life are specified by their size alone.
It is not hard to imagine that some parameters contribute 
more intractability than others and it seems reasonable  
to develop a theory of computational complexity which seeks to exploit
this fact. 
Such a theory should be able to address the needs of 
practicioners in algorithmics.
The last twenty years have seen the development of 
such a theory. This theory has a large number of successes 
in terms of a rich collection of algorithmic techniques 
both practical and theoretical, and a fine-grained intractability theory.
Whilst the  theory has been widely used in a number of areas of applications
including computational biology, linguistics, VLSI design, learning theory and many
others, knowledge of the area is highly varied. We hope that this article will
show both the basic theory and point at the wide array
of techniques available. 
Naturally the treatment is condensed, and the 
reader who wants more should go to the texts of  Downey and Fellows
\cite{DowneyF99para},  Flum and Grohe \cite{FlumG06para},
Niedermeier \cite{Niedermeier06invi}, and the upcoming 
undergraduate text Downey and Fellows \cite{DowneyF12fund}.
\end{abstract}

\begin{keyword}
Parameterized Complexity, Parameterized Algorithms
%% MSC codes here, in the form: \MSC code \sep code
%% or \MSC[2008] code \sep code (2000 is the default)
\end{keyword}
\end{frontmatter}

\newpage 
\setcounter{tocdepth}{2}
\tableofcontents

\newpage 
\section{Introduction}
\subsection{Preamble}

There is little question that the computer has caused a profound change 
in society. At our fingertips are devices that are performing 
billions of computations a second, and the same is true of embedded 
devices in all manner of things. In the same way that much of mathematics 
was developed to understand the physics of the world around us, 
developing mathematics to understand computation is an imperative. 
It is little wonder that the famous ${\sf P}^{\mbox{?}\atop \mbox{$_{\mbox =}$}}{\sf NP}$ question is 
one of the Clay Prize questions and is regarded by some 
as the single most important problem in mathematics.
This question is one in {\sl complexity theory} which seeks to understand 
the  {\sl resources} (such as time or space) needed for computation.

The goal of this paper is to describe the methods and development of an
area  of complexity theory called \emph{parameterized complexity}. 
This is a {\sl fine-grained} complexity analysis which is often more attuned 
to analyzing computational questions which arise {\sl in practice} than traditional 
worst case analysis.
The idea here is that if you are working in some computational science 
and wish to understand what is \emph{\sl feasible} in your area, then 
perhaps this is the methodology you should know.

As articulated by the first author in~\cite{Downey03para}, people working in the area 
tend to be a bit schizophrenic in that, even in computer science, 
paper reviews range from saying that
``parameterized complexity is now well known so why are you
including this introductory stuff?'', to
the other extreme where reviewers say that they have never heard
about it.

Whilst the use of parameterized complexity in applications has been growing at a very fast rate, there still remain significant
groups who seem unaware of the techniques and ideas of the area. Everyone, apparently, is aware of NP-completeness
and seems vaguely aware of randomized or approximation techniques. However, this seems not to be the case yet
for parameterized complexity and/or algorithms. Moreover, it is certainly not the case that parameterized complexity has
become part of the standard curriculum of theoretical computer science as something all researchers in complexity should know.

%
%It is certainly true that while the notion of ${\sf NP}$-completeness seems to have 
%penetrated, at least at some intuitive level, 
% into a lot of mathematical applications, the same is not true for parameterized complexity.
%

\subsection{The issue}

In the first book on the subject by Downey and Fellows \cite{DowneyF99para}, much is made of the 
issue ``what is the role of complexity theory?''. 
After the work of Alan Turing and others in the mid-20th century, we understand 
what it means for something to be computable.
But for \emph{\sl actual} computations, 
it is not enough to know that the problem is computable 
in theory. If we need to compute something it is not good
if the running time will be so large that no computer ever conceived 
will be able to run the computation in the life time of the universe. Hence,
the \emph{\sl goal}  is to understand 
what is \emph{\sl feasible}. 
The point is that for \emph{\sl any}
combinatorial problem, we will deal with only a small finite fraction of
the universe, so how do we quantify this notion of 
feasibility?
Classically, this is achieved by identifying \emph{\sl feasibility}
with \emph{\sl polynomial-time computability.}

When we first learn of this, many of us immediately think
of polynomial like $n^{10^{5000}}$ and say ``surely 
that is not feasible!''.
 We tend to forget that,
because of such intuitive objections, the original suggestions that
asymptotic analysis was a
reasonable way to measure
complexity and
polynomial time was a reasonable measure of feasibility, were
initially
quite controversial. The reasons that
the now central ideas of asymptotic polynomial time ({\sf P}-time)
and {\sf NP}-completeness have survived the test of time
are basically two.
Firstly, the methodologies associated with polynomial-time
and {\sf P}-time reductions have proved to be readily amenable to mathematical
analysis in most situations.  In a word, {\sf P} has good closure
properties.  It offers a readily negotiable
mathematical currency.  Secondly, and even more importantly,
the ideas seem to work \emph{\sl in practice}:
for ``natural'' problems
the universe seems \emph{\sl kind} in the sense that
(at least historically)
if a ``natural'' computational problem is in {\sf P} then
usually we can find an algorithm having
a polynomial running time with small degree and small constants.
This last point, while
seeming a religious, rather than a mathematical, statement,
seems the key driving force behind the universal use of {\sf P}
as a classification tool for ``natural'' problems.

Even granting this, the use of things like {\sf NP}-completeness is a 
very coarse tool. We are given some problem and show that it is 
{\sf NP}-complete. What to do? As pointed out in Garey and Johnson \cite{GareyJ79comp},
this is only an initial foray. All such a result says is that 
our hope for an exact algorithm for the general problem 
which is feasible is likely in vain.

The problem with the standard approach of showing {\sf NP}-completeness is that
it offers no  methodology of seeking practical and feasible 
algorithms for 
more restricted versions of the problem which might be both relevant 
and feasible. As we will see, parameterized complexity seeks to 
have a conversation with the problem which enables us 
to do just that.

\subsection{The idea} The idea behind parameterized complexity is that 
we should  look more deeply into the actual \emph{\sl structure} of the problem in order 
to seek some kind of hidden feasibility. Classical complexity 
views a problem as an instance and a question. The running time is 
specified by the input's size.\\

\noindent{\sl Question :} When will the input of a problem coming from ``real life''
have no more structure than its size?

\noindent{\sl Answer :} \emph{Never!}\\

For real-world computations we \emph{\sl always} know more about the problem. 
The problem is planar, the problem has small width, the problem only concerns small 
values of the parameters. Cities are regular, objects people construct tend to be 
built in some comprehensible way from smaller ones constructed earlier, people's 
minds do not tend to work with more than a few alternations of quantifiers\footnote{As anyone who
has taught analysis will know!}, etc. Thus, \emph{\sl why not have a complexity theory
which exploits these structural parameters?}
\emph{\sl Why not have a complexity theory more \emph{\sl fine-tuned} to actual
applications?}

\subsection{The contribution of this article}

Before we launch into examples and definitions, we make some remarks about what this 
article offers. The area of parameterized complexity (as an explicit discipline) 
has been around for around 20 years. 
The group or researchers who have adopted the methodology most strongly are 
those (particularly in Europe) who are working in \emph{\sl applications}.
What has emerged is a very rich collection of distinctive positive techniques 
which should be known to researchers in computational applications in areas as diverse 
as linguistics, biology, cognition, physics, etc. The point here is that once 
you focus on a an extended ``computational dialog'' with the problem
new techniques emerge around this. These techniques 
vary from simple (and practical) local reduction rules (for example 
Karsten Weihe's solution to the ``Europe train station problem'' \cite{Weihe98cove}) to some which 
are highly theoretical 
and use some very deep mathematics (for example, the recent proof that 
topological embedding is  fixed-parameter tractable by  Grohe, Kawarabayashi, Marx, and Wollan \cite{GroheKMW10find},
which uses most of the structure theory of the Graph Minors project of 
Robertson and Seymour~\cite{RobertsonS85}). 

The second  part of this article looks at these positive techniques systematically,
and we hope that the reader will find useful tools there. This area 
is still rapidly developing, and there are many internal contests amongst the 
researchers to develop techniques to beat 
existing bounds. For further details, the reader can see the web site \href{http://fpt.wikidot.com/}{{\tt http://fpt.wikidot.com/}}\ .

The first part of the article is devoted to limitations. It is worth mentioning that 
the area is concerned with tractability \emph{\sl within polynomial time.} Hence
classical methods do not readily apply. The question is 
``how does the problem resides in polynomial time''.
To illustrate this, we begin with the basic motivating examples.\\

\noindent {\sc Vertex Cover}\footnote{Current practice in the area 
is often to write this as {\sc Para-Vertex Cover}, or sometime {\sc p-$k$-Vertex Cover},
but this seems unnecessarily notational. We will usually supress the 
aspect of the problem being regarded as a parameter, or even that the 
problem is considered as a parameterized one when the context is clear. We believe that this will aide the general readability of the paper.}\\
{\sl Instance:}~~A graph $G=(V,E)$.\\
{\sl Parameter:} A positive integer $k$.\\
{\sl Question:}~~Does $G$ have a vertex cover of size $\leq k$?
(A {\em vertex cover} of a graph $G$ is a set $S\subseteq V(G)$ such
that for all edges $\{x,y\}$ of $G$, either $x\in S$ or
$y\in S$.)\\

\noindent {\sc Dominating Set}
\\
{\sl Instance:} A graph $G$.\\
{\sl Parameter:} A positive integer $k$.\\
{\sl Question:}
Does $G$ have a dominating set of size $k$? (A {\it dominating set} is a
set $S\subseteq V(G)$ where, for each
$u\in V(G)-S$ there is a $v\in S$ such that $\{u,v\}\in E(G)$.)\\

Of course both of these problems (without the parameter) are famously {\sf NP}-complete
by the work of Karp~\cite{Karp74}. With the parameter \emph{\sl fixed} 
then both of the problems are in polynomial time simply by trying all
of the ${n}\choose {k} $ subsets of size $k$ where $n$ is the number of vertices of 
the input graph $G$.
What we now know is that  there is an algorithm running in time 
$1.2738^k+ O(n)$ (see \cite{ChenKX10impr}) (i.e., linear time for a fixed $k$ and with an additive component that is mildly exponential 
in $k$), whereas the only known algorithm for 
{\sc Dominating Set} is to try all possibilities. This takes time more or less
$n^{k}$. Moreover, we remark that the methods for solving 
{\sc Vertex Cover} are simple reduction rules which 
are ``industrial strength'' in that they run extremely well in practice 
(e.g. \cite{CDRST03,LangstonPSSV08inno}), even for $k$ beyond which would seem reasonable,
like $k=2000$. 
The reader might wonder: does  this matter?
Table \ref{table3}  from the original book \cite{DowneyF99para} 
 illustrates the difference
between a running time of $\Omega(n^{k+1})$
(that is, where exaustive search is necessary) and a  running time of 
$2^k n$.  The latter has been achieved for several
natural parameterized problems.
(In fact, as we have seen above,  the constant $2^k$
can sometimes be significantly improved and its contribution can sometimes even be additive.)

\begin{table}[h]
\centering
{\tabcolsep=1em\begin{tabular}{| c | c |c|c|}
\hline
&$n=50$&$n=100$&$n=150$
\\
\hline
$k=2$&625&2,500&5,625\\
\hline
$k=3$&15,625&125,000&421,875\\
\hline
$k=5
$&390,625&6,250,000&31,640,625\\
\hline
$k=10$&$1.9\times 10^{12}$&$9.8\times 10^{14}$&$3.7\times 10^{16}$\\
\hline
$k=20$ & $1.8\times 10^{26}$&$9.5\times 10^{31}$&$2.1\times 10^{35}$\\
\hline
\end{tabular}
}
\vspace{4pt}
\caption{The Ratio $\frac{n^{k+1}}{2^kn}$ for Various
Values of $n$ and $k$.}
\label{table3}
\end{table}

Even before we give the formal definitions, the intent of the definitions will
be clear. If we have algorithms running in time 
$n^{c}$, for some $c$ that is independent of $k$, we regard this as being (fixed-parameter) tractable, and 
if $c$  increases with $k$, then this problem  is regarded as being intractable.
In the latter case, we cannot usually prove intractability as it would 
separate {\sf P} from {\sf NP}, as it would in the case of {\sc Dominating Set}, for example.
But we can have a \emph{\sl completeness programme} in the same spirit as {\sf NP}-completeness. 

We mention that the methodology has deep connections with classical complexity.
For example, one of the more useful assumptions for establishing lower bounds 
in classical complexity is what is called the \emph{\sl exponential time hypothesis} (ETH)
which is that not only is $n$-variable 3SAT not in polynomial time, but
in fact it does not have an algorithm running in subexponential time. (Impagliazzo,
Paturi  and Zane \cite{ImpagliazzoPZ01whic}). With this hypothesis, many 
lower bounds can be made rather sharp. Recently Chen and Grohe demonstrated an 
isomorphism between subexponential time complexity and parameterized complexity 
\cite{ChenG07anis}. The connection between subexponential time complexity and 
parameterized complexity was noticed long ago by Abrahamson, Downey and Fellows \cite{AbrahamsonDF95fixed}.
Another notable connection is with \emph{\sl polynomial time approximation
schemes}. Here the idea is to give a solution which is approximate
to within $\epsilon$ of the correct solution. 
Often the PCP theorem allows us to show that no 
such approximation scheme exists unless {\sf P}$=${\sf NP}. But sometimes they do,
but can have awful running times.
For example, here is a table from Downey \cite{Downey03para}:

\begin{itemize}
\item  Arora \cite{Arora96poly} gave a $\O(n^{\frac{3000}{\epsilon}})$ PTAS
for
{\sc Euclidean Tsp}
\item 
Chekuri and Khanna \cite{ChekuriK00apta} gave a $\O(n^{12 (\log({1/\epsilon})/\epsilon^{8})})$
PTAS
for {\sc Multiple Knapsack}
\item  Shamir and Tsur \cite{ShamirT98them} gave a $\O(n^{2^{2^{\frac{1}{\epsilon
}}}-1)})$ PTAS for {\sc Maximum Subforest}
\item  Chen and Miranda \cite{ChenM01apol} gave a $\O(n^{(3mm!)^{\frac{m}{\epsilon}+1}})$
PTAS for {\sc General Multiprocessor Job Scheduling}
\item  Erlebach {\em et al.} \cite{ErlebachJS05poly} gave a
$\O(n^{\frac{4}{\pi}(\frac{1}{\epsilon^2}+1)^2(\frac{1}{\epsilon^2}+2)^2})$
PTAS for {\sc Maximum Independent Set} for geometric graphs.
\end{itemize}

Table \ref{table1} below calculates
some running times for these PTAS's with a 20\% error.

\begin{table}[h]

\begin{center}

\begin{tabular}{|r|r|}
\hline
Reference &Running  Time for a 20\% Error
\\
\hline
Arora \cite{Arora96poly}&$\O(n^{1
5000})$\\
\hline
Chekuri and Khanna \cite{ChekuriK00apta}&
$\O(n^{9,375,000})$\\
\hline
Shamir and Tsur \cite{ShamirT98them}&
$\O(n^{958,267,391})$\\
\hline
Chen and Miranda \cite{ChenM01apol} &
$>\O(n^{10^{60}})$ \\
& (4 Processors)\\
\hline
Erlebach {\em et al.} \cite{ErlebachJS05poly} &
$\O(n^{523,804})$ \\
\hline
\end{tabular}
\end{center}
\caption{The
Running Times for Some Recent PTAS's with 20\% Error.}
\label{table1}
\end{table}

Downey \cite{Downey03para} argues as follows:
\begin{quote}{\sl
``By anyone's measure, a running time of $n^{500,000}$ is bad and
$n^{9,000,000}$  is even worse.
The optimist would argue that these examples are important in that they
prove that PTAS's exist, and are but a first foray. The optimist
would also argue that with  more effort
and better combinatorics, we will be able to come up with
some $n\log n$ PTAS for the problems. For example,
Arora \cite{Arora97near} also came up with another PTAS for {\sc Euclidean Tsp},
but this time it was nearly linear and practical.

But this situation is akin to {\sf P} vs {\sf NP}.
Why not argue that some exponential algorithm is just the first
one and with more effort and better combinatorics
we will find a feasible algorithm for {\sc Satisfiability}?
What if a lot of effort is
spent in  trying to find a practical PTAS's without success?
As with {\sf P} vs {\sf NP}, what is desired
is either an {\em efficient\footnote{An {\em Efficient Polynomial-Time Approximation Scheme (EPTAS)}
is an $(1+\epsilon)$-approximation algorithm that runs in $f(1/\epsilon)\cdot n^{O(1)}$ steps.
If, additionally,  $f$ is a polynomial function then we say that we have a   {\em Fully Polynomial-Time Approximation Scheme (FPTAS)}.
}} PTAS (EPTAS),
  or a proof that no such PTAS exists\footnote{The same  issue can also be raised if we consider FPTAS's instead of EPTAS's.}. 
  %Also the same question can be formulated using FPTAS instead of EPTAS.
A primary use of {\sf NP}-completeness is to give compelling evidence
that many problems are unlikely to have better than exponential
algorithms generated by complete search.''}
\end{quote}

The methods of parameterized complexity allow us to address 
the issue. Clearly,  the bad running times are caused by the 
presence of $\frac{1}{\epsilon}$ in the exponent. 
What we could do is  parameterize the problem by taking the parameter 
to be $k=\frac{1}{\epsilon}$ and then perform a reduction to a kind of
core problem (see Subsection~\ref{ptasc}). If we can do this, then \emph{\sl not only  
the particular algorithm is infeasible, but moreover, 
there cannot be a feasible algorithm unless something  unlikely 
(like a miniature {\sf P}$=${\sf NP}) occurs.}

\subsection{Other coping strategies} 
Finally, before we move to formal definitions, 
we mention other strategies which attempt to understand 
what is feasible. One that springs to mind is the theory of average-case 
complexity. This has not really been widely applied as 
the distributions seem rather difficult to apply.
 Similar comments apply to 
the theory of smoothed analysis. 

In fact one of the reasons that parameterized complexity has been used so often in 
practice is that it is widely applicable. We also mention that 
often it can be used to explain unexpected 
tractability of algorithms. Sometimes they seem to work because of 
underlying hidden parameters  in the  input. 
For example,
 the number of ``lets'' in some structured programming
languages in practice is usually bounded 
by some constant, and sometimes engineering 
considerations make sure that, for example, the number of 
wafers in VLSI design is small. 
 It is even conceivable 
there might be a parametric explanation of the 
tractability of 
the Simplex Algorithm. 

If the reader finds this all useful, then we refer him/her to two recent issues 
of the \emph{\sl The Computer Journal} \cite{Downey08thec} devoted to 
aspects and applications of parameterized complexity, to the survey
of Downey and McCartin \cite{RodneyM04some} on parameterized algorithms, 
the (somewhat dated) survey Downey \cite{Downey03para} for issues in complexity,
and other articles such as \cite{DowneyFS99thev,DowneyFS97survey,Fell0ows02survey,Fellows01para,Fellows03survey,DowneyFS99para} as well as the books by 
Downey and Fellows \cite{DowneyF99para}, Niedermeier \cite{Niedermeier06invi} and Flum and Grohe \cite{FlumG04para}.

\subsection{Organization and notation}

The paper is organized as follows:

In Section \ref{preliminaries} we will give the basic definitions and 
some examples to show the kinds of parameterizations 
we can look at. We consider as fortunate the fact that a problem can have different parameterizations
with different complexities.

In Section \ref{intract}, we will introduce some of the basic 
hardness classes, and in particular the \emph{\sl main standard}
of hardness, the class $W[1]$. The gold standard is 
established by an analog of the Cook-Levin Theorem 
discussed in Subsection \ref{analog}.
Parameterized reductions are more refined than the corresponding 
classical ones, and, for instance, it would appear that 
natural parameterized versions of 
{\sc 3-CNF Sat}
and {\sc CNF Sat} 
do \emph{\sl not} have the same parameterized complexity.
In Subsection \ref{W_hiera} we see how this gives rise 
to a hierarchy based on logical depth. We also mention the Flum-Grohe
${\sf A}[t]$ hierarchy which is another parameterized hierarchy of 
problems based on another measure of logical depth.
In Subsections \ref{ptasc} and 
\ref{optimality} we look at PTAS's, approximation, and 
lower bounds based on strong parameterized hypotheses. 
In Subection~\ref{ocapp} 
we look at other applications to classical questions which 
are sensitive to combinatorics in polynomial time, and 
in Subsections~\ref{opcla} and \ref{parapp} look at 
other parameterized classes such as counting classes and
the notion of parameterized approximation. 
(The latter seeks, for example, an {\sf FPT}-algorithm which on
input $k$ either delivers a ``no size $k$ dominating set''
or produces one of size $2k$.)
Subsection~\ref{limker}  deals with an important recent development.
One of the most important practical techniques  \emph{\sl kernelization.}
Here one takes a problem specified by $(x,k)\in \Sigma^{*}\times \Bbb{N}$ and 
produces, typically in polynomial time, a 
small version of the problem: $(x',k')$ such that 
$(x,k)$ is a yes iff $(x',k')$ is a yes, and 
moreover $|x'|\le f(k)$
and usually $k'\le k$. This technique is 
widely used in practice as it usually relies on a number of 
easily implementable reduction rules as we will
discuss in Subsection \ref{kernelization}.
We will look at recent techniques 
which say when this technique can be used 
to give $x'$ as above with $|x'|$ polynomially bounded.
The final part of the complexity section deals with 
\emph{\sl some} of the other classes we have left out.

In Section \ref{algorithms}, we turn to techniques for 
the design of parameterized algorithms.
We will focus mainly on graphs; for to lack of space we do not 
discuss too many applications, except \emph{\sl en passant}.

The principal contributor of high running times in 
classical algorithms comes from branching, and large search trees.
For this reason we begin by looking at methods 
which restrict the branching in Subection \ref{derand},
including bounded search trees  and greedy localization.
Then in Subection \ref{fasoc} we look at the use of 
automata and logic in the design of parameterized algorithms.

In turn this leads to meta-theoretical methods 
such as applications of Courcelle's theorem on 
graphs of bounded treewidth, and other methods 
such as local treewidth and First Order Logic (FOL).
This is discussed in Subection \ref{cource} and \ref{ftpfol}.

In Subections  
\ref{graphminors}  and \ref{irrelevant}
we look at the highly impractical, but powerful
methods emerging from the Graph Minors project.
Later, we examine how these methods can be sped up.

Having worked in the stratosphere of algorithm design we return to focus on
singly-exponential algorithm design techniques 
such as iterative compression, bidimensionality theory,
and then in Subsection \ref{kernelization} move to 
kernelization, and finish with variations.

Of course, often we will use all of this in combination.
For example, we might kernelize, then begin a branch 
tree of some bounded size, and the rekernelize the smaller graphs.
It is often the case that this method is provably faster and certainly 
this is how it it is often done in practice.
However, we do not have space  in this already long paper to 
discuss such refinements in more detail.

\section{Preliminaries}
\label{preliminaries}

\subsection{Basic definitions}
\label{definitions}

The first thing to do is to define a proper notion of tractability for parameterized problems. This induces the definition of the 
parametrized complexity class {\sf FPT},
namely the class of  fixed-parameter tractable problems.

\begin{definition}
A {\em parameterized problem} (also called {\em parameterized language}) 
is a subset $\Pi$ of $\Sigma^{*}\times \Bbb{N}$ where $\Sigma$ is some alphabet.  
In  the input $(I,k)\in \Sigma^{*}\times \Bbb{N}$ of a parameterized problem, we call $I$ as the {\em main part of the input}
and $k$ as the {\em parameter of the input}. We also agree that  $n=|(I,k)|$.
We say that $\Pi$ is {\em fixed parameter tractable} if there exists a
function $f: \Bbb{N}\rightarrow \Bbb{N}$ and 
an algorithm deciding whether $(I,k)\in \Pi$
in $$O(f(k) \cdot n^{c})$$
steps, where $c$ is a constant not depending on the parameter $k$ of the problem. We call such an algorithm {\em {\sf FPT}-algorithm} or, more concretely, to visualize the choice of $f$ and $c$, we say that $\Pi\in O(f(k) \cdot n^{c})$-\FPT.
We define 
the parameterized class {\sf FPT} as the one  containing all parameterized  problems that can be 
solved by an {\sf FPT}-algorithm.
 \end{definition}

Observe that an apparently 
 more demanding definition of  an {\sf FPT}-algoritm would ask for algorithms runnning in  
$O(f(k) + n^{c})$
steps, since then the exponential 
part would be additive rather than multiplicative. However, this would not define a different parameterized complexity class.
To see this, suppose that some parameterized problem $\Pi\subseteq \Sigma\times \Bbb{N}$ can be solved 
by an algorithm ${\cal A}$ that  can decide whether some $(I,k)$ belongs in $\Pi$ in $f(k)\cdot n^{c}$ steps.
In case $f(k)\leq n$, the same algorithm requires $n^{c+1}$ steps, while if $n<f(k)$, the algorithm 
runs in less than $(f(k))^{c+1}$ steps. In both cases, ${\cal A}$ solves $\Pi$  in at most $g(k)+n^{c'}$ steps
where $g(k)=(f(k))^{c+1}$ and $c'=c+1$.

%$O(f(k) \cdot n^{c})\leq O((f(k))^{2}+(n^{c})^{2})\leq O(f'(k)+n^{c'})$.
%In fact it is clear that if a problem is in FPT in time $O(f(k)n^c)$, then
%it will be in FPT via an algorithm with running time $O(g(k)+n^{c+1}).$

Time bounds for parameterized algorithms have two parts. The $f(k)$ is called {\em parameter dependence}
and, is typically a super-polynomial function. The $n^{c}$ is a polynomial function and we will call it {\em polynomial part}. While in classic algorithm design there is only polynomial part to improve, in {\sf FPT}-algorithms it appears to be more important to improve the parameter dependence. Clearly, for practical purposes, an $O(f(k) + n^{c})$ step {\sf FPT}-algorithm is  more welcome than one running in $O(f(k) \cdot  n^{c})$ steps.

%An \FPT-algorithm, for $\Pi$ is an algorithm that, with input 
%$(x,k)\in \Sigma^{*}\times \Bbb{N}$, decides whether $(x,k)\in \Pi$
%in 
%$$f(k)\cdot |x|^{c}$$
%steps where $f$ is a function depending  exclusively on $k$ and $c$ is a constant independent from both $k$ and $x$.
%The parameterized complexity class \FPT\ contains all parameterized problems that can be solved by an \FPT-algorithm.
%

\subsection{Nomenclature of parameterized problems}
 Notice that 
a problem of classic complexity whose input has several integers has 
several parameterizations depending on which one  is  chosen to be the parameter.
We complement the name of a parameterized problem so to indicate the 
parameterization that we choose. In many cases, the parameterization 
refers to a property of the input. As a driving example we consider the following problem: \\

\noindent{\sc Dominating Set}\\
{\sl Instance:} a graph $G$ and an integer $k$\\
{\sl Question}:  does $G$ have a dominating set of size at most $k$? \\

\noindent {\sc Dominating Set} has several parameterizations. The most popular one is the following one:\\

\noindent{\sc $k$-Dominating Set}\\
{\sl Instance:} a graph $G$ and an integer $k$.\\
{\sl Parameter:} $k$.\\
{\sl Question:}  does $G$ have a dominating set of size at most $k$? \\

\noindent Moreover, one can define parameterizations that do not depend on integers appearing 
explicitly in the input of the problem.
For this, one may set up a ``promise'' variant of the problem based on a suitable restriction of its inputs. That way, we may 
define the following  parameterization of {\sc Dominating Set}:\\

\noindent{\sc $d$-Dominating Set}\\
{\sl Instance:} a graph $G$ with maximum degree $d$ and an integer $k$.\\
{\sl Parameter:} $d$.\\
{\sl Question:}  does $G$ have  a dominating set of size at most $k$? \\

\noindent In the  above problem the promise-restriction is ``with maximum degree $d$''.
In general, we often omit this  restriction as it is  becomes clear by the chosen parameterization.
Finally, we stress  that we can define the parameterization by combining a promise 
restriction with  parameters that appear in the input. As an example, we can define the following 
parameterization of  {\sc Dominating Set}:\\

\noindent{\sc $d$-$k$-Dominating Set}\\
{\sl Instance:} a graph $G$ with maximum degree $d$ and an integer $k$.\\
{\sl Parameter:} $d+k$.\\
{\sl Question:}  does $G$ have a dominating set of size at most $k$? \\

Finally, the promise-restriction can be just a property of the main part of the input.  A typical example is the following parameterized problem.\\

\noindent{\sc $k$-Planar Dominating Set}\\
{\sl Instance:} a planar graph $G$ and an integer $k$.\\
{\sl Parameter:} $k$.\\
{\sl Question:}  does $G$ have  a dominating set of size at most $k$? \\

Certainly, different parameterizations may belong to different parameterized complexity classes. 
For instance, {\sc $d$-$k$-Dominating Set} belongs to $O((d+1)^{k}\cdot n)$-$\FPT$,
using the bounded search tree method presented in Subsection~\ref{subs_boundedst}.
Also, as we will see in Subsection~\ref{subexpalg}, {\sc $k$-Planar Dominating Set} belongs to $2^{O(\sqrt{k})}\cdot n^{O(1)}$-$\FPT$.
On the other side, {\sc $d$-Dominating Set} $\not\in \FPT$, unless ${\sf P}={\sf NP}$.
This follows by the well known fact that {\sc Dominating Set} is {\sf NP}-complete for 
graphs of maximum degree $3$ and therefore, not even an $n^{f(d)}$-algorithm is expected to exist for 
this problem.
The parameterized complexity of {\sc $k$-Dominating Set}
needs the definition of the {\sf W}-hierarchy (defined in Subsection~\ref{W_hiera}). 
While the problem can be solved in $O(n^{k+1})$ steps,  it
is known to be complete for the second level of the ${\sf W}$-hierarchy.
This indicates that an {\sf FPT}-algorithm is unlikely  to exist for this parameterization.

\section{Parameterized complexity}
\label{intract}

\subsection{Basics}
In this section we will look at some basic methods of establishing apparent 
parameterized intractability. We begin with the class ${\sf W}[1]$ and the ${\sf W}$-hierarchy,
and later look at variations, including the ${\sf A}$ and ${\sf M}$ hierarchies, 
connections with approximation, bounds on kernelization and the like.

The role of the theory of {\sf NP}-completeness is to give some kind of 
outer boundary for tractability. That is, if we identify {\sf P} with ``feasible'',
then showing that a problem is {\sf NP}-complete would suggest that the 
problem is computationally intractable. Moreover,  we would
believe that a deterministic algorithm for the problem would require worst-case exponential time.

However, showing that some problem is in {\sf P} does not say that the problem 
is feasible.
Good examples are  the standard parameterizations of  {\sc Dominating Set} 
or {\sc Independent Set} for which we know of no algorithm significantly better
than trying all possibilities. For a fixed $k$, trying all possibilities
 takes 
time
$\Omega(n^{k+1})$, which is infeasible for large $n$ and reasonable $k$,
\emph{\sl in spite of the fact that the problem is in ${\sf P}$}.  Of course, we would { like} to prove that
there is {\sl no} \FPT\ algorithm for such a problem, but, as with classical complexity, the best we can 
do is to formulate some sort of completeness/hardness program.
Showing that {\sc $k$-Dominating Set} is not in \FPT\ would also show, as a corollary, that ${\sf P}\neq  {\sf NP}.$

A hardness program needs three things. First, it needs a notion of easiness, which we have:
{\FPT}. Second, it needs a notion of reduction, and third, it needs some core problem
which we believe to be intractable.

Following naturally from the concept of fixed-parameter
tractability is an appropriate notion of reducibility that expresses the fact that two parameterized problems have comparable parameterized complexity. 
That is, if problem (language) $A$ reduces to problem (language) $B$, and problem $B$ is fixed-parameter tractable, then so too is problem $A$.  

\begin{definition}[Downey and Fellows \cite{DowneyF95fixe-I,DowneyF95fixe-II}-Parameterized reduction] 
A {\em parameterized reduction\footnote{Strictly speaking, this is a 
parameterized {\em many-one} reduction as an analog of the classical Karp reduction. 
Other variations such as parameterized Turing reductions are possible. The function 
$g$ can be arbitrary, rather than computable, for other non-uniform versions. 
We give the reduction most commonly met.}} from a parameterized
language $L$ to a parameterized language $L'$ (symbolically $L\leq_{\FPT}L'$) is an algorithm that
computes, from input consisting of a pair $(I, k)$, a pair
$(I', k' )$ such that:

\begin{enumerate}
\item $(I, k) \in L$ if and only if $(I', k' )
\in L'$, \item $k' = g(k)$ is a computable function depending only $k$, and \item
the computation is accomplished in time $f(k)\cdot  n^{c}$, where $n$ is the size of the main part of 
the input $I$, $k$ is the parameter, $c$ is a 
constant (independent of both $n$ and $k$), and $f$ is an arbitrary function dependent only on $k$.
\end{enumerate}
If $A\leq_{\FPT} B$ and $B\leq_{\FPT} A$, then we say that $A$ and $B$ are {\em \FPT-equivalent
 ,}
and write $A\equiv_{\FPT} B$.
\end{definition}

A simple example of an \FPT\ reduction is the fact that $k$-{\sc Independent Set}
$\equiv_{\FPT}$ $k$-{\sc Clique}. 
(Henceforth, we will usually drop the 
parameter $k$ from the name of problems 
and will do so when the parameter is implicit from the 
context.) Namely, $G$ has a clique of size 
$k$ iff the complement of $G$ has an independent set of size $k$. 
A simple {\em non-example} is the classical reduction of 
{\sc Independent Set} to {\sc Vertex Cover}: $G$ will have a 
size $k$ independent set iff $G$ has a size $n-k$ vertex cover,
where $n$ is the number of vertices of $G$. The point of this last example is 
that the {\em parameter} is not fixed.

\subsection{An analog of the Cook-Levin Theorem}
\label{analog}

We need the final component for our
program to establish the apparent parameterized intractability 
of computational problems: the identification of  a ``core'' problem to reduce from.

In classical {\sf NP}-completeness this is the heart of the Cook-Levin Theorem: 
the argument that a nondeterministic Turing machine is such an opaque object that
it does not seem reasonable that we can determine in polynomial time if it has an accepting 
path from amongst the exponentially many possible paths. 
Building on earlier
work of Abrahamson, Ellis, Fellows  and Mata 
\cite{AbrahamsonFEM89}, the idea of Downey and Fellows
was to define reductions and certain core problems 
which have this property.
In the fundamental papers \cite{DowneyF95fixe-I,DowneyF95fixe-II},
a parameterized version of 
{\sc Circuit Acceptance}. The classic version of this problem
has as instance a boolean circuit
and the question is whether  some value assignment to the input variables
leads to a yes. As is well known, this 
corresponds to Turing Machine acceptance, at least classically.
Downey and Fellows \cite{DowneyF95fixe-II} 
combined with Cai, Chen, Downey and Fellows \cite{CaiCDF97onth}
allows for a Turing Machine core problem:
\\

\noindent{\sc Short Non-Deterministic Turing Machine Acceptance}\\
\noindent {\sl Instance:}~~A nondeterministic Turing machine $M$ (of arbitrary 
degree of non-determinism).  \\
{\sl Parameter:} A positive integer $k$.\\
{\sl Question:}~~Does $M$ have a computation path accepting the empty string in
at most $k$ steps?\\

In the same sense that {\sf NP}-completeness of the {\sc $q(n)$-Step
Non-deterministic Turing Machine  Acceptance}, where $q(n)$ is a polynomial in the size of the
 input, provides us with very strong evidence that no {\sf NP}-complete problem is likely to
 be solvable in polynomial time,
using {\sc Short Non-Deterministic Turing Machine Acceptance} as a hardness core 
 provides us with
very
strong evidence that no 
parameterized 
language $L$, for which {\sc Short Non-Deterministic Turing Machine Acceptance}$\leq_{\FPT}L$,
is likely to be
fixed-parameter tractable.
That is, if we accept the idea behind the basis
of {\sf NP}-completeness,  then we should also accept that the  {\sc Short Non-deterministic Turing
 Machine Acceptance} problem is not solvable in time $O(|M|^c)$ for some  fixed $c$.
Our intuition would again be that all computation paths would need to be tried.

We remark that the hypothesis ``{\sc Short Non-Deterministic Turing Machine Acceptance}
is not in \FPT'' is somewhat stronger than {\sf P}$\neq${\sf NP}. Furthermore,
connections between this hypothesis and classical complexity have recently become apparent.
If {\sc Short Non-Deterministic Turing Machine Acceptance} {\em is} in \FPT,
then we know that the {\sc Exponential Time Hypothesis},
which states that $n$-variable {\sc 3Sat} is not in 
subexponential time (DTIME$(2^{o(n)})$), fails. See Impagliazzo, Paturi and Zane 
\cite{ImpagliazzoPZ01whic}, Cai and Juedes \cite{CaiJ03onth}, and 
Estivill-Castro, 
Downey,
Fellows,  Prieto-Rodriguez and   Rosamond \cite{DowneyCFPR03cutt}
(and our later discussion of ${\sf M}[1]$) for more details.
As we will later see the ETH is a bit stronger than the 
hypothesis that {\sc Short Non-Deterministic Turing Machine Acceptance}
is not in \FPT, but is equivalent to an apparently stronger 
hypothesis that ``${\sf M}[1]\neq \FPT$''.
The precise definition of ${\sf M}[1]$  will be given later, but
the idea here is that, as most researchers believe, not only is 
${\sf NP}\ne {\sf P}$,
but ${\sf NP}$ problems like 
{\sc Non-Deterministic Turing Machine Acceptance}
require substantial search of the available 
search space, and hence do not  
have algorithms running in deterministic subexponential time such as 
$O(n^{\log n})$.

The class of problems that are \FPT-reducible to {\sc Short Non-Deterministic Turing Machine Acceptance} 
is called ${\sf W}[1]$, for reasons discussed below. 
The parameterized analog of the classical Cook-Levin theorem (that {\sc CNF Sat}
is {\sf NP}-complete) uses the following parameterized version of 
{\sc 3Sat}:\\

\noindent{\sc Weighted CNF Sat}\\
\noindent {\sl Instance:}~~A CNF formula $X$ (i.e., a formula in Conjunctive Normal Form).  \\
{\sl Parameter:} A positive integer $k$.\\
{\sl Question:}~~Does $X$ have a satisfying assignment of weight $k$?\\ 

\noindent Here the {\em weight} of an assignment is its Hamming weight, that is, the 
number of variables set to be true.\\

  Similarly, we can define {\sc Weighted $n$CNF Sat}, where the clauses have only $n$ 
variables
and $n$ is some number fixed in advance.
{\sc Weighted $n$CNF Sat}, for any fixed $n \geq 2$, is complete for ${\sf W}[1]$.

\begin{theorem}[Downey and Fellows \cite{DowneyF95fixe-II} and Cai, Chen, Downey
and Fellows \cite{CaiCDF97onth}] For any fixed $n\geq 2$,
{\sc Weighted $n$CNF Sat}$\equiv_{\FPT}
$ \label{ccdf} {\sc Short Non-Deterministic Turing Machine Acceptance}.\end{theorem}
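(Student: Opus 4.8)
The plan is to prove the equivalence in two directions, exhibiting parameterized reductions between {\sc Weighted $n$CNF Sat} and {\sc Short Non-Deterministic Turing Machine Acceptance}, with the parameter transformed only by a computable function as required by Definition of $\leq_{\FPT}$. Throughout I fix $n\geq 2$ and recall that {\sc Short Non-Deterministic Turing Machine Acceptance} asks whether a nondeterministic machine $M$ accepts the empty string within $k$ steps, while {\sc Weighted $n$CNF Sat} asks whether an $n$-CNF formula has a satisfying assignment of Hamming weight exactly $k$.

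For the direction {\sc Weighted $n$CNF Sat} $\leq_{\FPT}$ {\sc Short NTM Acceptance}, the first step is to build, from an $n$-CNF formula $X$ over variables $x_1,\dots,x_m$, a nondeterministic machine $M_X$ that guesses which $k$ variables are set to true and then verifies that every clause is satisfied. The subtlety is that a naive guess of $k$ indices, each written in binary, costs roughly $k\log m$ steps, which depends on $m=|X|$ and so is \emph{not} bounded by any function of $k$ alone; this is exactly why the theorem is nontrivial. The standard fix is to give $M_X$ a large alphabet (indeed "arbitrary degree of non-determinism" is permitted) and a transition structure that encodes $X$ into the machine itself, so that $M_X$ nondeterministically selects the $k$ true variables in a constant number of steps per variable and checks clause satisfaction using the fixed clause width $n$. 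One must verify that the number of accepting steps can be bounded by a function $g(k)$ depending on $k$ and $n$ only, and that $M_X$ is constructible from $X$ in time $f(k)\cdot |X|^{c}$. I expect this encoding — forcing the step count to be free of $m$ — to be the main obstacle, and the place where the generous notion of "arbitrary nondeterminism" in the problem statement does the real work.

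For the reverse direction {\sc Short NTM Acceptance} $\leq_{\FPT}$ {\sc Weighted $n$CNF Sat}, I would follow the Tseitin-style tableau construction underlying the classical Cook--Levin theorem, adapted to short computations. Given $(M,k)$, I write a CNF formula whose variables describe the contents of a $k\times k$ computation tableau (cells, head positions, states) over the first $k$ steps, with clauses enforcing a valid initial configuration on the empty string, correct local transitions, and an accepting final state. Since the computation has only $k$ steps, the tableau and hence the formula have size bounded by a function of $k$ and $|M|$, with the crucial feature that the weight of a satisfying assignment is controlled: by a "one-hot" encoding (exactly one state, symbol, and head-position variable true per relevant cell) the number of true variables becomes a function $k'=g(k)$ of $k$ alone. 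The second step is to reduce the clause width to $n$ (or to $2$, the minimal case) by introducing auxiliary variables in the usual way, while re-accounting for the weight so that the target weight stays a computable function of the original parameter; this bookkeeping on the weight, rather than the logic, is where care is needed.

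Finally I would assemble the two reductions, check that each runs in the allotted $f(k)\cdot n^{c}$ time with $k'=g(k)$ computable, and conclude $\equiv_{\FPT}$. The reduction to {\sc Weighted $2$CNF Sat} simultaneously establishes the companion claim that {\sc Weighted $n$CNF Sat} is {\sf W}$[1]$-complete for every fixed $n\geq 2$, since {\sf W}$[1]$ is defined as the class of problems $\leq_{\FPT}$-reducible to {\sc Short NTM Acceptance}.
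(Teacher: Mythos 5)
There is a genuine gap in each direction, and in both cases it sits exactly where you flagged that ``care is needed.'' In the direction {\sc Weighted $n$CNF Sat} $\leq_{\FPT}$ {\sc Short Non-Deterministic Turing Machine Acceptance}, guessing the $k$ true variables in $O(k)$ steps over a large alphabet is fine, but the verification step does not go through: the machine is only allowed $g(k)$ steps, so it can perform only $g(k)$ elementary checks, whereas the formula may contain $\Omega(|X|)$ clauses that all need to be verified. Encoding $X$ into the transition table lets the machine look facts up quickly, but it does not reduce the \emph{number} of checks. The clauses that defeat you are the all-positive ones: such a clause is satisfied only if the chosen $k$-set hits it, there can be unboundedly many of them, and satisfaction cannot be certified by examining only the $k$ chosen variables. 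This is why the actual argument (the route the paper sketches right after the theorem) first applies a nontrivial combinatorial normalization of weft-$1$ circuits to \emph{antimonotone} form, essentially {\sc Independent Set}; there the only constraints are that no forbidden $n$-subset lies inside the chosen $k$-set, so the machine need only check the $\binom{k}{n}$ subsets of its own guesses --- a number depending on $k$ alone.

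In the direction {\sc Short Non-Deterministic Turing Machine Acceptance} $\leq_{\FPT}$ {\sc Weighted $n$CNF Sat}, the $k\times k$ tableau is the right start (this is the ``generic simulation by a circuit'' the paper alludes to), but the closing step ``reduce the clause width to $n$ by introducing auxiliary variables in the usual way, re-accounting for the weight'' is precisely the step that cannot be done in the usual way: if classical clause-splitting preserved weight parameters it would give {\sc Weighted CNF Sat} $\leq_{\FPT}$ {\sc Weighted $3$CNF Sat}, i.e.\ ${\sf W}[2]\subseteq {\sf W}[1]$, which the paper explicitly conjectures to be false. The offending long clauses are the ``at least one symbol/state per cell'' clauses of your one-hot encoding, whose width is about $|\Sigma|+|Q|$ and hence depends on $|M|$, not on $k$. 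The correct escape is not clause splitting but weight counting: keep only the antimonotone ``at most one per block'' clauses together with the local-consistency clauses, and observe that with $k'$ blocks and target weight exactly $k'$, ``at most one per block'' already forces ``exactly one per block,'' so the long positive clauses can be dropped; pushing the width all the way down to a fixed $n$ (in particular $n=2$) then requires grouping several choices into composite blocks, which is the real combinatorial content of the ${\sf W}[1]$-completeness proof. As written, neither of your reductions is valid; both become correct only after being replaced by these specific devices.
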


As we will see, there are many well-known problems hard for 
${\sf W}[1]$. For example, {\sc Clique} and {\sc Independent set}
are basic ${\sf W}[1]$-complete problems. An example 
of a ${\sf W}[1]$-hard problem {\sc Subset Sum} which classically has as input 
a set $S$ of integers, a positive integers and an integer $s$
and asks if there is a set of members of $S$ which add to $s$.
In parametric form, the question asks where there exist $k$ members of 
$S$ which add to $s$. A similar problem is {\sc Exact Cheap Tour}
which asks for a weighted graph whether there is a tour 
through $k$ nodes of total weight $s$. 
Another example is the {\sc Finite State Automata Intersection}
which has parameters $m$ and $k$ and asks 
for  a set $\{A_1,\dots,A_k\}$ of finite state automata over an alphabet 
$\Sigma$ whether there is a string $X$ of length $m$ 
accepted by each of the $A_i$, for $i=1,\dots,k.$
There are a number of problems related to the {\sc Least Common Subsequence}
which are hard according to various parameterizations, and notably
a ${\sf W}[1]$-hard parameterized version of {\sc Steiner Tree}.
Here, and later, we will refer the reader to the various 
monographs and compendia 
such as the one currently maintained by Marco Cesati:

\centerline{\href{http://bravo.ce.uniroma2.it/home/cesati/research/compendium/}{{\tt http://bravo.ce.uniroma2.it/home/cesati/research/compendium/}}}
\medskip

The original proof of Theorem \ref{ccdf} involves a generic simulation of 
a Turing machine by a circuit and then in the other direction 
involves combinatorial arguments to have parametric reductions from 
certain kinds of circuits (``weft 1'', see below) to {\sc Independent Set}.
 
Since the original Downey-Fellows work, hundreds of problems 
have been shown to be ${\sf W}[1]$-hard or ${\sf W}[1]$-complete. A classical graph theoretical 
problem which is ${\sf W}[1]$-complete (see~\cite{DowneyF99para,DowneyF95fixe-II}) is the following.\\

\noindent $d$-{\sc Red-Blue Nonblocker}\\
{\em Instance:} A 2-coloured graph $(V_{Red}\cup V_{Blue},E)$ of (fixed) maximum degree $d\ge 2$, and a positive integer $k$.
\\
{\em Parameter:} A positive integer $k$.
\\
{\em Question:} Is there a set of red vertices $V'\subseteq V_{Red}$ 
of cardinality $k$ such that each blue vertex has at least one neighbour not
belonging to $V'$?.\\

This problem seems to be useful for proving hardness results such as 
the proof that various coding problems are hard for ${\sf W}[1]$ in
Downey, Fellows, Vardy, and Whittle \cite{DowneyFVW99thep}. 
In that problem, using a reduction from
 {\sc Red-Blue Nonblocker}
 Downey et. al. show the following natural 
problem is hard for ${\sf W}[1]$.\\

\noindent{\sc Weight Distribution}
\\
{\sl Instance:} A binary $m\times n$ matrix $H$.\\
{\sl Parameter:} An integer $k>0$.\\
{\sl Question:} Is there a set of at most $k$ columns of 
$H$ that add to the all zero vector?\\

As a consequence,
the related problem of {\sc Maximum Likelihood Decoding } where there is a
target vector $s$ is also 
${\sf W}[1]$ hard. Two notorious open questions
remain in this area.\\

\noindent{\sc Shortest Vector}
\\
{\sl Instance:} A basis $X=\{x_1,\dots,x_n\}\in {\mathbb Z}^n$ for a lattice
${\mathbb L}$.\\
{\sl Parameter:} An integer $k>0$.
\\
{\sl Question:}  Is there a nonzero vector $x\in {\mathbb L}$ such that $||x||^2\leq k$?\\

\noindent{\sc Even Set} (also known as {\sc Minimum Distance})
\\
{\sl Instance:} A red/blue graph $G=(R,B,E)$.\\
{\sl Parameter:} A positive integer $k$.\\
{\sl Question:} Is there a set of at most $k>0$ red vertices 
all of which have an even number of blue neighbours?\\

Both of these are conjectured in \cite{DowneyF99para} as being ${\sf W}[1]$ hard. The 
unparameterized version of the latter 
is known to be {\sf NP}-complete and of the former is 
famously open, a question of Peter van Emde Boas from 1980. We refer the reader to
\cite{DowneyFVW99thep} for more details and other related problems such as {\sc Theta 
Series}.

\subsection{The ${\sf W}$-hierarchy}
\label{W_hiera}
The original theorems 
and hardness classes were first characterized in terms of boolean circuits of 
a certain structure. These characterizations lend themselves to easier 
{\em membership} proofs, as we now see.
This uses 
the model of a {\em decision circuit}. This has boolean variables as inputs,
and a single output. It has {\em and} and {\em or}  gates and 
{\em inverters}. We designate a gate as {\em large} or {\em small} depending
on the fan-in allowed, where small will be some fixed number.
For example  a 3CNF formula can be modeled by a circuit consisting of $n$ 
input variables  (of unbounded fanout) one for each formula variable, possibly inverters
below the variable, and  a  large {\em and} of small
{\em or}'s (of size 3) with a single output line. 
For  a decision circuit, the
{\em depth} is the maximum number of gates on any path
from the input variables to the output line, and the {\em weft} is the
``large-gate depth''.  More precisely, the weft is defined
to be the maximum number of large gates on any path from the input
variables to the output line, where a  gate is called large if its
fan-in exceeds some pre-determined bound.

The {\em weight} of an assignment to the input variables of a decision circuit is the Hamming weight, i.e., the number of variables set to true by the assignment.

Let ${\cal F}=\{C_1,\ldots,C_n,\ldots\}$ be a family of decision
circuits.  Associated with ${\cal F}$ is a basic parameterized language
$$L_{\cal F}=
\{\langle C_i,k \rangle: C_i {\rm ~has~a~weight~}k{\rm ~satisfying
~assignment }\}\;.$$

We will denote by $L_{{\cal F}(t,h)}$
the parameterized language associated with the family of weft $t$,
depth $h$, decision circuits.

\begin{definition} [${\sf W}\lbrack t\rbrack$ -- Downey and  Fellows \cite{DowneyF92}]
\label{W_1}
We define a language $L$ to be in the class ${\sf W}[t], t\geq 1$ if
there is a parameterized reduction from
$L$ to $L_{{\cal F}(t,h)}$, for some $h$.
\end{definition}

We think of the inputs of {\sc Weighted CNF Sat} as circuits consisting of  conjunctions  of disjunctions of literals. 
Hence  {\sc Weighted CNF Sat}  is in ${\sf W}[2]$. Extending this 
idea,
a typical example of a formula in ${\sf W}[3]$ would be a conjunction of disjunctions of 
conjunctions of literals. 
More generally, 
we can define {\sc Weighted $t$-Normalized Sat} as the weighted satisfiability
 problem for a formula $X$ where $X$ is a conjunction of disjunctions of conjunctions of disjunctions $\ldots$
 with $t$ alternations. 
 
This allows for the following basic result.

\begin{theorem}[Downey and Fellows \cite{DowneyF95fixe-I}]
For all $t\ge 1$, {\sc Weighted $t$-Normalized Sat} is complete for 
${\sf W}[t]$.\end{theorem}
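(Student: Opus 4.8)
The plan is to prove the two directions separately: membership of {\sc Weighted $t$-Normalized Sat} in ${\sf W}[t]$, which is essentially definitional, and ${\sf W}[t]$-hardness, which carries all the work. For membership, observe that a $t$-normalized formula $X$ is literally a decision circuit: reading it top-down it is an alternating stack of unbounded fan-in conjunctions and disjunctions, exactly $t$ of them, sitting above input literals. Pushing all negations down to the leaves, every root-to-leaf path meets exactly $t$ large gates and no small ones, so $X$ is a weft-$t$, depth-$t$ circuit. Hence each instance of {\sc Weighted $t$-Normalized Sat} is already an instance of $L_{{\cal F}(t,t)}$, and the identity map witnesses {\sc Weighted $t$-Normalized Sat} $\leq_{\FPT} L_{{\cal F}(t,t)}$, placing the problem in ${\sf W}[t]$ by Definition \ref{W_1}.

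For hardness it suffices, again by Definition \ref{W_1}, to exhibit for each fixed $h$ a parameterized reduction $L_{{\cal F}(t,h)}\leq_{\FPT}$ {\sc Weighted $t$-Normalized Sat}. So I am given a weft-$t$, depth-$h$ decision circuit $C$ and a parameter $k$, and I must produce in $f(k)\cdot n^{O(1)}$ time a $t$-normalized formula $X$ together with a parameter $k'=g(k)$ such that $C$ has a weight-$k$ satisfying assignment iff $X$ has a weight-$k'$ satisfying assignment. The first step is to unwind $C$ into a tree (formula): since $h$ is a fixed constant the number of root-to-node paths is at most $s^{h}$, where $s$ is the size of $C$, so the unwinding is polynomial in $n$, and because duplicated leaves still reference the original input variables the notion of weight — counted over the original variables — is untouched, so at this stage $k'=k$.

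The structural heart is the treatment of the maximal blocks of \emph{small} gates lying between two consecutive large gates (and above the topmost and below the bottommost large gate). Each small gate has fan-in at most some fixed bound $c$, and each such block has depth at most $h$, so it reads at most $c^{h}$ inputs — a number bounded purely in terms of $h$ and $c$. Consequently the Boolean function computed by any small-gate block depends on boundedly many inputs and can be rewritten in a fixed depth-$2$ normal form, a bounded CNF or DNF with at most $2^{c^{h}}$ clauses each of at most $c^{h}$ literals. Choosing CNF or DNF so that the top gate of each rewritten block agrees in type with the large gate immediately above it, I can collapse like-over-like layers by associativity (an {\em or} of {\em or}'s is an {\em or}, and dually) and absorb the small material into the $t$ large-gate layers. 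Because the small blocks contribute only bounded fan-in, the distributions needed to re-sort the layers never blow up unboundedly, and the final object is an alternating stack of exactly $t$ unbounded-fan-in gates over literals, i.e.\ a $t$-normalized formula.

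The step I expect to be the genuine obstacle is keeping the reduction \emph{parameterized}, namely guaranteeing that $k'=g(k)$ is independent of $n$, while simultaneously handling negations correctly. The inputs to a small-gate block are outputs of the large gates below it rather than raw literals, so merging a block into the alternation forces me to introduce auxiliary variables encoding the bounded local information, and these must be pinned down by enforcement gadgets so that each weight-$k$ satisfying assignment of $C$ extends to assignments of the auxiliary variables of a controlled total weight, yielding a new parameter $k'=g(k)$ depending on $k$ and on the constants $t,h,c$ but not on $n$. Getting the negations right — separating the monotone and antimonotone behaviour of the blocks and ensuring the enforcement gadgets are themselves expressible within $t$ alternations without spending an extra large-gate layer — is the delicate bookkeeping that makes the theorem nontrivial; once the weight accounting is shown to be a function of $k$ alone, the reduction satisfies all three conditions of a parameterized reduction and ${\sf W}[t]$-hardness follows.
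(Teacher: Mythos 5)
The paper states this theorem with only a citation to Downey and Fellows \cite{DowneyF95fixe-I}; it contains no proof to compare against, so I am judging your proposal on its own terms. Your membership direction is fine and essentially definitional, and your hardness outline follows the standard route (tree-unwinding, pushing negations to the leaves, observing that each maximal small-gate block depends on at most $c^{h}$ inputs and so admits a bounded depth-$2$ rewriting, then collapsing like-typed layers). But the proposal has a genuine gap exactly where you say you ``expect the obstacle to be'': the auxiliary variables, the enforcement gadgets, and the weight accounting are described as things that must exist, not constructed. That construction is the technical heart of the normalization theorem and cannot be waved at. In particular, your claim that choosing CNF or DNF ``so that the top gate of each rewritten block agrees in type with the large gate immediately above it'' lets everything collapse by associativity is too optimistic: the \emph{bottom} gates of the depth-$2$ rewriting must simultaneously agree in type with whatever feeds the block from below (large gates of possibly the wrong polarity, or raw literals at the lowest level), and when they do not, you are left either with an unbounded distribution or with an extra small-gate layer of the wrong type sitting below the $t$-th large layer --- which is one alternation too many for a $t$-normalized formula even though the weft is unchanged.

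Eliminating that extra bottom layer is precisely what the change-of-variables trick is for: one introduces a new variable for each of the polynomially many bounded conjunctions (or disjunctions) of literals appearing at the bottom, and adds enforcement clauses forcing the new variables to be set consistently with the old ones, so that a weight-$k$ solution of $C$ corresponds to a weight-$g(k)$ solution of $X$ and conversely. You name this but do not show (i) that the enforcement clauses themselves fit inside the $t$ available alternations --- this is routine for $t\ge 2$ but fails naively for $t=1$, where Downey and Fellows need the separate antimonotone-collapse argument for weft-$1$ circuits, a case your proposal does not address at all; nor (ii) that \emph{every} weight-$g(k)$ satisfying assignment of $X$ projects back to a weight-$k$ assignment of $C$, i.e.\ that no spurious solutions are created by the new variables. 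Without these two verifications the reduction is not established, so as written this is a correct plan with the decisive steps missing rather than a proof.
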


There are problems complete for other ${\sf W}[t]$ levels such as {\sc Dominating Set}
being complete for ${\sf W}[2]$, but as with many situations in logic and computer science,
natural problems at levels above $3$ tend to be rare. Cesati \cite{Cesati03thet}
gave 
machine-based problems for various levels 
and his approach (via Turing machines) can allow for easier proofs,
such as his proof of the ${\sf W}[1]$-completeness of {\sc Perfect Code}
(\cite{Cesati02perf}).

As an illustrative example, we will give one of the basic  reductions.

\begin{theorem}[Downey and Fellows \cite{DowneyF95fixe-I}] {\sc Dominating Set}
$\equiv_{\FPT} 
$ {\sc Weighted CNF Sat}.\end{theorem}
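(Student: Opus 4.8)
The statement is an $\FPT$-equivalence, so the plan is to give two parameterized reductions, $\textsc{Dominating Set}\leq_{\FPT}\textsc{Weighted CNF Sat}$ and $\textsc{Weighted CNF Sat}\leq_{\FPT}\textsc{Dominating Set}$, each keeping the parameter essentially unchanged ($k'=k$) so that requirement (2) in the definition of a parameterized reduction is immediate. Since \textsc{Weighted CNF Sat} is the canonical ${\sf W}[2]$ problem, the equivalence will also certify that \textsc{Dominating Set} is ${\sf W}[2]$-complete.

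For the membership direction I would introduce one Boolean variable $z_v$ for each vertex $v$ of the input graph $G=(V,E)$, read as ``$v$ lies in the dominating set''. Domination is then captured by a conjunction of closed-neighbourhood clauses: for every $u\in V$ take the clause $\bigvee_{v\in N[u]}z_v$, where $N[u]$ is the closed neighbourhood of $u$. In the formula $X=\bigwedge_{u\in V}\bigl(\bigvee_{v\in N[u]}z_v\bigr)$ a weight-$k$ satisfying assignment is exactly a $k$-element set of vertices meeting every closed neighbourhood, i.e.\ a dominating set of size $k$; so I set $k'=k$. The only wrinkle is the clash between ``size $\le k$'' and ``weight exactly $k$'': if $k\ge|V|$ I output a fixed yes-instance, and otherwise I observe that a dominating set of size $<k$ can always be padded up to size exactly $k$ by adding arbitrary vertices, so the two readings agree. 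The construction is plainly polynomial-time and leaves $k$ fixed.

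The hardness direction carries the real content. From a formula with variables $x_1,\dots,x_n$ and clauses $C_1,\dots,C_m$ I would build a \emph{selection} gadget of $k$ columns $A_1,\dots,A_k$, each $A_\ell=\{a_{\ell,1},\dots,a_{\ell,n}\}$ turned into a clique, and for each column two private \emph{enforcer} vertices $g_\ell,g_\ell'$, each adjacent to all of $A_\ell$ and to nothing else. Since $g_\ell$ and $g_\ell'$ are non-adjacent and have no neighbours outside $A_\ell$, each column can be dominated only from within, and dominating both enforcers on a budget of one vertex per column forces any size-$k$ dominating set to contain exactly one $a_{\ell,j}$ per column; reading that choice as ``the $\ell$-th true variable is $x_j$'' converts selections into assignments. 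I would then add one vertex $c_r$ per clause and join $a_{\ell,j}$ to $c_r$ exactly when putting $x_j$ true satisfies $C_r$, so that (with the whole budget already consumed by the columns) $c_r$ is dominated if and only if the encoded assignment satisfies $C_r$. Setting $k'=k$, a weight-$k$ satisfying assignment produces a size-$k$ dominating set and conversely.

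The main obstacle is that ``being dominated'' is a monotone event in the selections, while a clause may be satisfied by a \emph{negative} literal, that is, by a variable being \emph{left false}, which the edge-encoding above cannot see; a lesser nuisance is that repeated column-choices yield assignments of weight strictly below $k$, at odds with the exact-weight requirement. I would clear both difficulties at once by first passing to the monotone case: with no negative literals the edge-encoding is faithful, and monotonicity lets any under-weight satisfying assignment be padded up to weight exactly $k$ (just as in the membership direction), so repeated choices become harmless. Establishing that the monotone restriction of \textsc{Weighted CNF Sat} is still ${\sf W}[2]$-complete---which is really the normalisation/monotonicity phenomenon underlying the completeness of \textsc{Weighted $t$-Normalized Sat}---is the step I expect to demand the most care. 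Granting it, the two implications of the reduction reduce to direct checks on the gadget.
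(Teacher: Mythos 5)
Your membership direction (one variable per vertex, closed-neighbourhood clauses, padding a small dominating set up to exact weight $k$) is the standard one and is fine; the paper does not even spell it out. The substance is in the hardness direction, and there your route diverges from the paper's in a way that leaves a real gap. The paper's reduction works directly from \textsc{Weighted CNF Sat} \emph{with negative literals}: it uses $2k$ gadgets --- $k$ truth-selection cliques interleaved with $k$ gap-selection components and gap-enforcement lines, arranged in a circle --- so that a choice in a truth clique encodes ``variable $i$ is the next true variable'' while the accompanying gap choice encodes the run of variables set false before the following true one. Clause vertices are wired both to truth choices (positive occurrences) and to gap choices (negative occurrences), which is exactly how a clause satisfiable only through a literal $\neg x_i$ gets dominated. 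The parameter maps $k\mapsto 2k$, which is perfectly legal for an \FPT-reduction; your insistence on $k'=k$ is unnecessary.

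Your proposal replaces this with a one-column-per-true-variable gadget that is only faithful for monotone formulas, and defers all negative literals to the claim that the monotone restriction of \textsc{Weighted CNF Sat} is still ${\sf W}[2]$-hard. That claim is true, but you do not prove it, and you have correctly identified that it carries the entire difficulty of the theorem. Worse, in the standard development (including the cited Downey--Fellows papers) the monotone collapse for ${\sf W}[2]$ is \emph{derived from} the ${\sf W}[2]$-hardness of \textsc{Dominating Set} --- precisely via the closed-neighbourhood encoding you use in your membership direction --- so invoking it here is circular unless you supply an independent proof (say, by a circuit-normalization argument), which is at least as much work as the gap-selection gadget you are trying to avoid. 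As written, the central step of the hardness direction is assumed rather than proved.
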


\begin{proof}
We sketch the proof of the hardness direction that {\sc Weighted CNF Sat} $\leq_{\FPT}$ {\sc Dominating Set}. 

Let $X$ be a Boolean expression in conjunctive normal form 
consisting of $m$ clauses $C_{1},...,C_{m}$ over the set of 
$n$ variables $x_{0},...,x_{n-1}$.  We 
show how to produce in polynomial-time by local replacement, 
a graph $G=(V,E)$ that has a dominating set 
of size $2k$ if and only if $X$ is satisfied by a truth assignment 
of weight $k$.

\begin{figure}[t]
\begin{center}
{\includegraphics[scale=0.15]{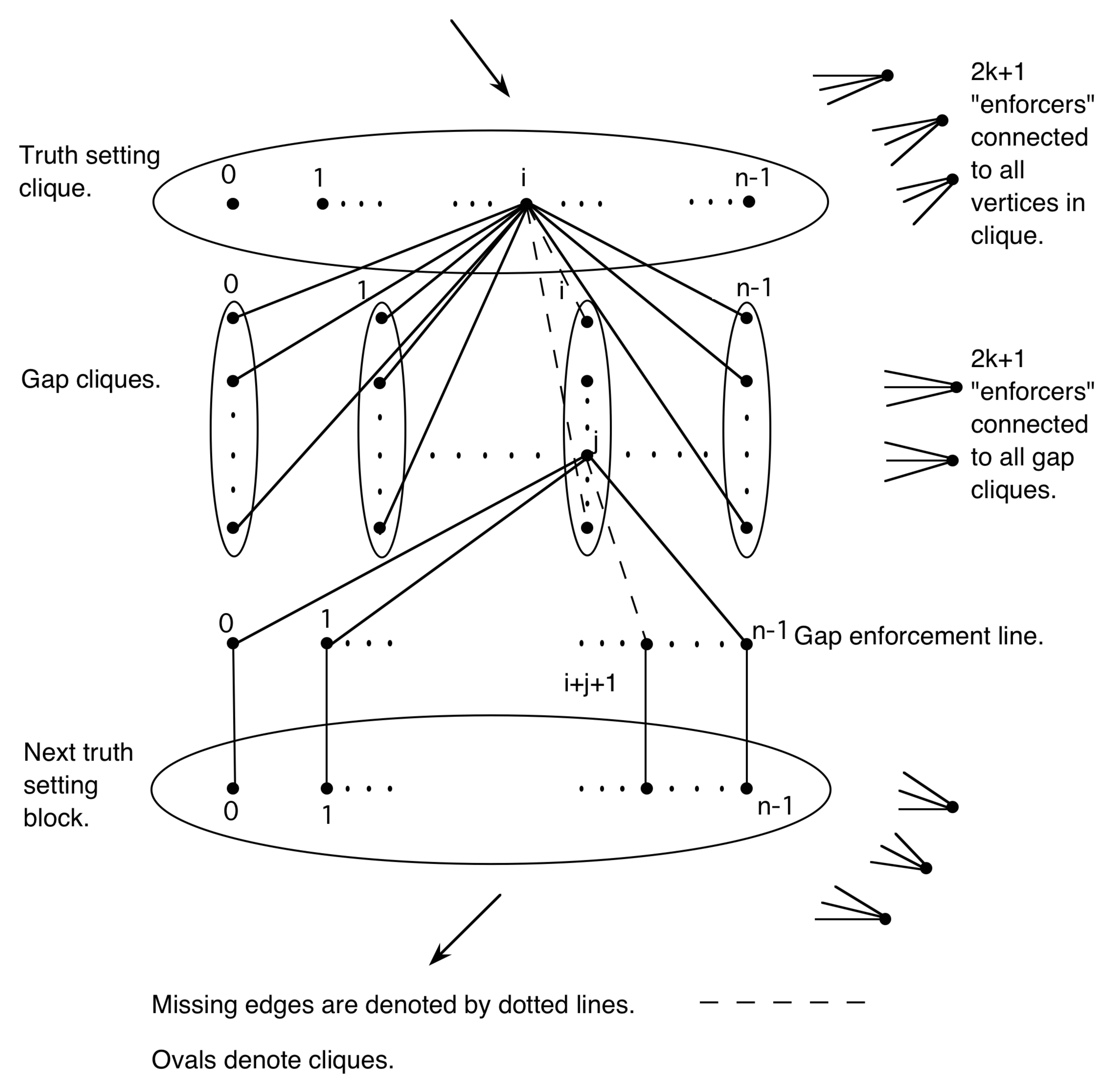}}
\end{center}
\caption{Gadget for {\sc Dominating Set}}
\label{dominating}
\end{figure}

A diagram of the gadget used in the reduction is given in Figure~\ref{dominating}. The idea of the proof is as follows. 
There are $k$ of the gadgets arranged in a circle,
where we regard them as ordered from first to last. Each of the gadgets has 3 main parts. Taken clockwise from top to bottom, these are the {\em truth setting
clique,} the
{\em gap selection} part (achieved by the gap selection cliques)
and the {\em gap enforcement} part (achieved by the gap enforcement line). 

The pigeonhole principle
combined with the so-called 
\emph{\sl enforcers}  are used  
to force one vertex from each of the truth cliques and 
one vertex from each of the next set of cliques, which form the gap 
enforcement part. The 
intuition is that the truth selection cliques represent 
a choice of a particular vertex to be selected to be true, and 
the gap selection represents the gap till the next 
selected true vertex. The interconnections between the 
truth setting cliques and the gap selection means that
they align, and the gap enforcement line makes all
the selections consistent. Finally because of  the \emph{\sl clause variables}
which also need to be dominated, we  will ensure that the
dominating set corresponds to a truth assignment.

In more detail,
the truth selection component is a  clique and the gap selection consists of $n$ cliques which we call {\em columns}.
Our first action is to ensure that in {\em any}
 dominating set of $2k$ elements, 
we must pick one vertex from each of these two components. This goal is achieved by
the $2k$ sets of $2k+1$ enforcers (which are independent sets).
For example, for each truth selection clique, $2k+1$ enforcers  are connected to
every vertex  in 
this clique and nowhere else and then it will
follow by the pigeonhole principle that in any size $2k$ dominating set
for the final graph, to dominate 
these enforcers, {\em some vertex} in the truth selection clique must 
be chosen.
Similarly, it follows that we must pick {\em exactly}
one vertex from each of the truth selection cliques,
and one of the gap selection cliques
to dominate the enforcers.

The truth selection component of the $r$-th gadget  is denoted by $A(r)$, $r=0,\ldots,k-1$.
%
%that are  picked by the truth selection components, are denoted by 
%$A(r)$, $r=0,\ldots,k-1$. 
Each of these $k$ components consists of a clique of 
$n$ vertices labeled $0,\ldots,n-1$. The intention is  that if  the vertex labeled $i$ is picked, 
this represents  variable $i$ being set to  true in the formula $X$.
We denote by $B(r)$ the gap selection part of the $r$-th gadget, $r=0,\ldots,k-1$.
As explained above, this part consists of $n$ columns (cliques)
where we index the columns by $i=0,\ldots,n-1$.
%
%
%Correspondingly in the next 
%gap selection cliques we will call
%$B(r)$ we have columns (cliques) $i=0,...,n-1$. 
The intention is that column $i$ corresponds to the choice of variable $i$ in the 
preceding $A(r)$. The idea then is the following. We join the vertex
$a[r,i]$ corresponding to variable $i, $ in $A(r)$, to all vertices in $B(r)$ 
{\em except} those in column $i$. This means that the choice of 
$i$ in $A(r)$ will cover all vertices of $B(r)$ except those in this column.
It follows 
since we have only
$2k$ to spend,
that we {\em must} choose the dominating element from this column and nowhere else. (There are no connections from column to column.)
The columns are meant to be the gap selection saying how many $0$'s there will be till the next positive choice for a variable.
We finally need to ensure that if we choose variable $i$ in $A(r)$ and 
gap $j$ in column $i$ from $B(r)$ then we need to pick $i+j+1$ in $A(r+1)$. This
is fulfilled by  the gap enforcement component which consists of a set of 
$n$ vertices. 
We denote by $d[r,s]$, $s=0,\ldots,n-1$ the set of  vertices in this gap-enforcement
 line in the $r$-th gadget $r=0,\ldots,k-1$.

For $r<k-1$,
the method is to connect vertex $j$ in column $i$ of $B(r)$ to all 
of the $n$ vertices $d[r,s]$ $except$ to $d[r,i+j+1]$
\emph{\sl provided that $i+j+1\le n$}. (For $r=k-1$
simply connect vertex $j$ in column $i$
of $B(r)$ to all of the $n$ vertices $d[r,s]$ except to 
$d[r,i+j+1]$  since this will need to ``wrap around''
to $A(0)$.)  The first point of this process is 
that if we choose vertex $j$ in column $i$ with $i+j+1>n$
then \emph{\sl none} of the vertices in the enforcement line are 
dominated. Since there is  only a single edge to the corresponding vertex in 
$A(r+1)$, there cannot possibly be a size $2k$ dominating set 
for such a choice. It follows that we must choose 
some $j$ with $i+j+1\le n$ in any dominating set of size $\le 2k$. 
The main point is 
that if we choose $j$ in column $i$ we will dominate all of the $d[r,s]$ 
except $d[r,i+j+1]$. Since 
we will only connect $d[r,s]$ additionally to $a[r+1,s]$ and nowhere else, 
to choose an element of $A[r+1]$ and still dominate all of the $d[r,s]$ we 
must actually choose $a[r+1,i+j+1]$.

%\begin{figure}\label{dominating}
%%\epsfscbox{.5}{CNFSAT.eps}
%\caption{Gadget for CNFSAT \protect$\leq_{f.p.}$\protect DOMINATING SET}
%\end{figure}

Thus the above provides a selection gadget that chooses $k$ true variables
with the gaps representing false ones. We enforce that the selection is 
consistent with the clauses of $X$ via  {\em clause vertices}
$c_i$ one for each clause $C_{i}$.
These are connected in the obvious ways. One connects a choice in $A[r]$ or $B[r]$ corresponding to making a clause $C_q$ true to the vertex $c_q$. Then if we dominate 
all the clause variables too, we must have either chosen in some $A[r]$ a 
positive occurrence of a variable in $C_q$ or we must have chosen in $B[r]$ 
a gap corresponding to a negative occurrence of a variable in $C_q$, and conversely. 
The formal details can be found in Downey and Fellows \cite{DowneyF95fixe-I,DowneyF99para}.\end{proof}

There are notable problems which are ${\sf W}[t]$-hard for all $t$,
such as {\sc Bandwidth}, below. \\

\noindent{\sc Bandwidth}\\
\noindent{\sl Instance:} A graph $G=(V,E)$.\\
{\sl Parameter:}
A positive integer $k$.\\
{\sl Question:}  
Is there a 1-1 layout $f:V\to \{1,\dots,|V|\}$ such that
$\{u,v\}\in E$ implies $|f(u)-f(v)|\leq k$?\\

%\marginpar{$<\rightarrow \leq$?}

The $W[t]$ (for all $t$) hardness of 
{\sc Bandwidth} 
 was proven by Bodlaender, Fellows and Hallett \cite{BodlaenderFH04beyo} via a rather complex
reduction. It is unknown what, say, {\sc Bandwidth} $\in {\sf W}[1]$ (or even 
${\sf W}[{\sf P}]$) would imply
either parametrically or classically.
On general grounds, it seems unlikely that such a containment is possible.

At even higher levels,
we can define {\sc Weighted Sat}  to be the weighted satisfiability
 problem where inputs correspond to unrestricted  Boolean formulas
 % $X$ that is unrestricted, 
 and finally 
{\sc Weighted Circuit Sat} to be the most general problem whose inputs are all polynomial sized circuits.

Notice that, 
in Theorem \ref{ccdf}, we did {\em not} say that 
{\sc Weighted CNF Sat} is ${\sf W}[1]$-complete. 
The reason for this is that we do not believe that 
it is!
In fact,  we believe that ${\sf W}[2]\neq {\sf W}[1].$

That is,
classically, using a padding argument, we know that {\sc CNF Sat} $\equiv_{m}^{P}$ {\sc $3$CNF Sat}. However, 
the classical reduction {\em does not} define a parameterized reduction from {\sc Weighted CNF Sat} to 
{\sc Weighted $3$CNF Sat}, it is not structure-preserving enough to ensure that parameters map to parameters. 
In fact, it is conjectured \cite{DowneyF95fixe-I} that there is {\em no} parameterized reduction at all from {\sc 
Weighted CNF Sat} to {\sc Weighted $3$CNF Sat}.
If the conjecture is correct, then {\sc Weighted CNF Sat} is {\em not} in the class ${\sf W}[1]$.

The point here is that parameterized reductions are more refined than classical ones, and
hence we believe that we get a wider variety of apparent hardness behaviour 
when intractable problems are classified according to this 
more fine-grained analysis.

These classes form part of the basic hierarchy of parameterized problems
below.
$$\FPT \subseteq {\sf W}[1] \subseteq {\sf W}[2] \subseteq \cdots \subseteq {\sf W}[t]
\subseteq {\sf W}[SAT] \subseteq {\sf W}[{\sf P}] \subseteq {\sf AW}[t]\subset {\sf AW}[{\sf P}] \subseteq {\sf {\sf XP}}$$

 This sequence is commonly termed ``the {\em ${\sf W}$-hierarchy}''. The complexity
class ${\sf W}[1]$ can be viewed as  the parameterized analog of {\sf NP}, since it 
suffices for the purpose of establishing  likely 
parameterized intractability.

The classes ${\sf W}[SAT]$, ${\sf W}[{\sf P}]$ and the ${\sf AW}$ classes were
introduced by Abrahamson, Downey and Fellows in \cite{AbrahamsonDF95fixed}.
The class ${\sf W}[SAT]$ is the collection of parameterized languages $\FPT$-reducible to {\sc Weighted Sat}.  
The class ${\sf W}[{\sf P}]$ is the collection of parameterized languages \FPT-equivalent to {\sc Weighted Circuit Sat}, the weighted satisfiability problem for a decision circuit  $C$ that is unrestricted.
A standard translation of Turing machines into circuits shows that {\sc $K$-Weighted Circuit Sat} is the same as the problem of deciding whether or not a deterministic Turing machine accepts an input of weight $k$. It is conjectured that the containment ${\sf W}[SAT] \subseteq {\sf W}[{\sf P}]$ is proper  \cite{DowneyF99para}.

Another way to view ${\sf W}[{\sf P}]$ is the following. 
Consider the problem {\sc Short Circuit Sat} defined as follows.\\

\noindent{\sc Short Circuit Sat}\\
\noindent{\em Instance:} A decision circuit $C$ with at most $n$ gates and 
$k\log n$ inputs.\\
{\em Parameter:} A positive integer $k$.\\
{\em Question:} Is there a setting of the inputs making $C$ true?\\

\begin{thm}[Abrahamson, Downey and Fellows \cite{AbrahamsonDF95fixed}]
{\sc Short Circuit Sat} is ${\sf W}[{\sf P}]$-complete.\end{thm}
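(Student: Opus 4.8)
The plan is to establish that {\sc Short Circuit Sat} is \FPT-equivalent to {\sc Weighted Circuit Sat}, which by the definition of ${\sf W}[{\sf P}]$ is exactly what ${\sf W}[{\sf P}]$-completeness asks for. The guiding intuition is that both problems have a search space of size roughly $n^{k}=2^{k\log n}$: a weight-$k$ assignment picks $k$ of the $N$ inputs (about $\binom{N}{k}\approx N^{k}$ choices), while a setting of $k\log n$ Boolean inputs ranges over $2^{k\log n}=n^{k}$ possibilities. So the two directions are really two encodings of the same amount of nondeterminism, and the work is to make each encoding respect the syntactic shape of the target problem while keeping the parameter equal to $k$ throughout.

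For membership, {\sc Short Circuit Sat} $\leq_{\FPT}$ {\sc Weighted Circuit Sat}, I would take a circuit $C$ with $k\log n$ inputs and split those inputs into $k$ blocks of $\log n$ bits, each block naming a value in $\{0,\dots,n-1\}$. I replace each block by $n$ fresh one-hot selector variables and build a decoder producing, from a selected index, the corresponding $\log n$ original bits; feeding these decoders into $C$ gives a circuit $C'$ on $kn$ variables of size polynomial in $n$ and $k$. To force a legal one-hot pattern I conjoin, for each block, the OR of its $n$ selectors (``at least one chosen''); since I ask for weight exactly $k$ over $k$ blocks, the pigeonhole principle forces exactly one selector per block, so weight-$k$ satisfying assignments of $C'$ correspond bijectively to input settings of $C$ that satisfy it. The parameter is unchanged, so this is a valid parameterized reduction.

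For hardness, {\sc Weighted Circuit Sat} $\leq_{\FPT}$ {\sc Short Circuit Sat}, I would go the other way: a weight-$k$ assignment of a circuit $C$ on $N$ inputs is determined by the $k$ indices it sets true, and each index can be written in binary using $\lceil\log n\rceil$ bits for a suitable $n\ge N$. Using $k$ such blocks, i.e.\ about $k\log n$ short inputs, I build a circuit $C''$ that (i) checks each encoded index lies in $\{1,\dots,N\}$ (a range test), (ii) checks the $k$ indices are pairwise distinct (at most $\binom{k}{2}$ equality comparators, each of size $O(\log n)$), and (iii) decodes the indices into an assignment of $C$---input $x_j$ true iff some block encodes $j$---and evaluates $C$ on it. Conjoining (i)--(iii), $C''$ is satisfiable precisely when $C$ has a weight-exactly-$k$ satisfying assignment, and the parameter stays $k$.

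The main obstacle is the bookkeeping in the hardness direction, not any deep combinatorics. First, the distinctness and range checks in (i)--(ii) are exactly what guarantees the decoded assignment has Hamming weight \emph{exactly} $k$ (repeated or out-of-range indices would otherwise drop the weight), so getting this right is the crux of correctness. Second, I must fit $C''$ into the rigid format of {\sc Short Circuit Sat}: at most $n$ gates and precisely $k\log n$ inputs. This forces a consistent choice of $n$---it must simultaneously exceed the size of $C''$ and supply the $\log n$ bits used per block. Since $|C''|$ is polynomial in $N$ and $k$ while $\log n=O(\log(Nk))$, one can pick $n$ as a fixed polynomial in $N$ and $k$ (padding with dummy gates to hit the gate bound exactly) so that $|C''|\le n$ and the encoding uses $\lceil\log n\rceil$ bits consistently. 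Verifying that such an $n$ exists and is computable in \FPT\ time is the one place where care is needed; everything else is routine circuit construction.
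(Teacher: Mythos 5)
Your proof is correct and follows the same route as the paper: both directions rest on the ``$k\log n$ trick'' of identifying weight-$k$ assignments over $n$ variables with settings of $k\log n$ Boolean inputs, which is exactly how the cited argument of Abrahamson, Downey and Fellows proceeds. The only cosmetic difference is that in the hardness direction the paper wires a lexicographic \emph{surjection} from $\{0,1\}^{k\log n}$ onto the $k$-element subsets (so every setting of the new inputs is automatically a valid encoding), whereas you instead reject invalid encodings via explicit range and distinctness checks; both implementations are sound.
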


\begin{proof} The proof of this result uses the ``$k\cdot \log n$'' trick
introduced by Abrahamson, Downey and Fellows \cite{AbrahamsonDF95fixed}.
To see that the problem is ${\sf W}[{\sf P}]$-hard, take an instance $I$ of 
{\sc Weighted Circuit Satisfiability} with parameter $k$ and inputs $x_1,\dots,
x_n$. Let $z_1,\dots,z_{k\log n}$
be new variables. Using lexicographic order and in polynomial time 
we have a surjection from this set to the $k$-element subsets of 
$x_1,\dots,x_n$. Representing this as a circuit and putting this 
on the top of the circuit for $I$ defines our new circuit $C$.
The converse is equally easy.\end{proof}

${\sf AW}[t]$ captures the notion of {\it alternation}. ${\sf AW}[t]$ is the 
collection of parameterized languages $\FPT$-reducible to  {\sc Parameterized Quantified Circuit Sat$_t$}, the weighted satisfiability problem for an unrestricted decision circuit that applies {\em alternating quantifiers} to the inputs, defined here.\\

\noindent {\sc Parameterized Quantified Circuit Sat$_t$}

\noindent {\sl Instance:}~~A weft $t$ 
decision circuit $C$ whose inputs correspond to a sequence $s_1, \ldots s_r$ of pairwise disjoint sets  of variables.  \\
{\sl Parameter:} $r, \; k_1, \ldots , k_n$.\\
{\sl Question:}~~Is it the case that there exists a size $k_1$ subset $t_1$ of $s_1$, 
such that for every size $k_2$ subset $t_2$ of $s_2$, 
there exists a size $k_3$ subset $t_3$ of $s_3$, 
such that $\ldots$ (alternating quantifiers) 
such that, when $t_1 \cup t_2 \cup \ldots \cup t_r$ are set to true, 
and all other variables are set to false, $C$ is satisfied?\\

The idea here is to look at the analog of {\sf PSPACE}. The problem is that 
in the parameterized setting there seems no natural analog of 
Savitch's Theorem or the proof that 
{\sc QBFSat} is {\sf PSPACE}-complete, and it remains an interesting 
problem to formulate a true analog of parameterized space.

The approach taken by \cite{AbrahamsonDF95fixed} was to look at 
the  parameterized analog of {\sc QBFSat} stated above.

One of the fundamental theorems proven here is that the choice of $t$
is irrelevant:

\begin{theorem}[Abrahamson, Downey and Fellows \cite{AbrahamsonDF95fixed}]
${\sf AW}[t]={\sf AW}[1]$ for  all $t\geq 1$.\end{theorem}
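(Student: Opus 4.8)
The plan is to prove the two inclusions separately. The inclusion ${\sf AW}[1]\subseteq {\sf AW}[t]$ is immediate: every instance of {\sc Parameterized Quantified Circuit Sat$_1$} is already a legal instance of {\sc Parameterized Quantified Circuit Sat$_t$} (a weft-$1$ circuit has weft $\le t$), so the identity map is an \FPT-reduction. All the content is in the reverse inclusion ${\sf AW}[t]\subseteq {\sf AW}[1]$; since ${\sf AW}[t]$ is by definition the class of languages \FPT-reducible to {\sc Parameterized Quantified Circuit Sat$_t$}, it suffices to exhibit one \FPT-reduction {\sc Parameterized Quantified Circuit Sat$_t$}$\leq_{\FPT}${\sc Parameterized Quantified Circuit Sat$_1$}. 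The guiding idea is that the quantifier prefix already pits an \emph{existential} player against a \emph{universal} one, and that this alternation is strong enough to \emph{absorb} the large-gate structure of the matrix circuit, so that any fixed weft $t$ buys nothing beyond weft $1$.

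I would realize this by turning the evaluation of the matrix circuit into a path-walking game carried by the prefix. After normalizing the input weft-$t$, depth-$h$ circuit $C$ so that large gates strictly alternate \emph{and}/\emph{or} along every input-to-output path (merging same-type large gates and folding the bounded-fan-in small gates into bounded-depth connecting subformulas), every path meets at most $t$ large gates. The existential player claims the output is true, and the two players then walk down a single root-to-leaf path of large gates: at a large \emph{or} gate the existential block selects the (binary encoding of the) index of a witnessing input, while at a large \emph{and} gate the universal block selects the index of a challenged input. After at most $t$ such moves the walk reaches a primary input or a small connecting subformula.

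It remains to describe the weft-$1$ matrix and to check the bookkeeping. For each of the at most $t$ level transitions the matrix contains one \emph{adjacency/consistency} check verifying that the index selected at a level is genuinely an input of the gate reached at the previous level and that the intervening bounded-depth small wiring is satisfied; each such check is a Boolean function of the $O(\log n)$ selected index bits and can be written with a single large gate, hence as a weft-$1$ formula. Because there are only $t$ transitions and $t$ is a \emph{constant} of the problem level, these checks are combined by a top gate of bounded fan-in $t$, which counts as \emph{small}; thus the whole matrix has weft exactly $1$. The number of quantifier blocks becomes $r'=r+O(t)=r+O(1)$, a function of $r$ alone, the block sizes $k_i'$ are computed from the $k_i$ together with constants determined by $h$ and $t$, and the construction is polynomial in $|C|$; so conditions (2) and (3) of a parameterized reduction hold. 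Finally, the existential player has a winning strategy in the walking game precisely when the original quantified circuit is satisfiable, which is condition (1).

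I expect the main obstacle to be exactly the requirement that this matrix stay at weft $1$. The delicate points are: (i) an adjacency relation is an \emph{arbitrary} Boolean function of two $\log n$-bit indices, so one must argue it costs only a \emph{single} large gate and that the constant-size combiner over the $t$ levels is genuinely small, rather than letting these checks accumulate back to weft $t$; (ii) the prefix of {\sc Parameterized Quantified Circuit Sat$_t$} \emph{selects size-$k$ subsets} rather than setting free bits, so a binary index must be simulated by size-controlled subset selection (e.g.\ pairing each index bit with a complementary dummy) with the weight constraints made to agree across blocks; and (iii) the spliced-in blocks must keep the prefix \emph{strictly alternating}, which I would enforce by merging adjacent blocks of the same quantifier type and padding with trivial size-$0$ blocks where the large-gate alternation is out of phase with the original prefix. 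Showing that these encodings simultaneously preserve the winner of the game, pin the matrix at weft $1$, and keep $(r',k_1',\dots,k_{r'}')$ a function of $(r,k_1,\dots,k_r)$ alone is the crux; the rest is routine.
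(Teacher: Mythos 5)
The paper itself does not prove this theorem; it only states it with a citation to Abrahamson, Downey and Fellows, so there is no in-paper argument to compare against. Your overall strategy -- normalize the weft-$t$ circuit into alternating layers of large gates, then let the quantifier prefix absorb the weft by appending at most $t$ extra alternating blocks that walk a root-to-leaf path (existential chooses at large \emph{or} gates, universal challenges at large \emph{and} gates), leaving only weft-$1$ consistency and leaf-evaluation checks in the matrix -- is the right one and is essentially the argument behind the cited result.

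There is, however, one concrete step that fails as written: your encoding of the walk moves. A large gate can have fan-in $\Theta(n)$, so an index of a child needs $\Theta(\log n)$ bits, and your proposed ``bit paired with a complementary dummy'' encoding forces each new quantifier block to have weight exactly $\log n$. But condition (2) of a parameterized reduction requires the new parameter $(r',k_1',\dots,k_{r'}')$ to be a computable function of the old parameter alone, and $\log n$ depends on the instance size, not on $(r,k_1,\dots,k_r)$. So the reduction you describe is not an \FPT-reduction. The repair is standard but different from what you wrote: encode each move in \emph{unary}, i.e.\ give each new block one variable per child of the relevant layer and quantify over subsets of size exactly $1$. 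Then every new block has weight $1$, the new parameter is $(r+O(t),k_1,\dots,k_r,1,\dots,1)$, which depends only on the old parameter (recall $t$ is a constant of the problem level), and each adjacency check ``the selected child at level $i+1$ is an input of the selected gate at level $i$'' becomes a large \emph{or}, over admissible pairs, of small \emph{and}s of two selection literals -- still weft $1$, and the bounded-fan-in combiner over the $t$ levels keeps it there. With that substitution, together with the normalization, padding of short paths with dummy large gates, and the block-merging you already describe to keep the prefix strictly alternating, the argument goes through.
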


Many parameterized analogs of game problems are complete for the ${\sf AW}[1]$,
such as the parameterized analog of {\sc Geography}.\\

\noindent 
{\sc Short Geography}
\\
{\sl Instance:} A directed graph $D=(V,E)$ and a specified vertex $v_0$.\\
{\sl Parameter:} A positive Integer $k$.
\\
{\sl Question:} Does player 1 have a winning strategy in the following game?
Each player alternatively chooses a new arc from $E$. The first arc must have its 
tail as $v_0$, and each vertex subsequently chosen must have as its tail
the head of the previously chosen arc. The first player unable to choose
loses.\\

The class 
${\sf XP}$, introduced in \cite{DowneyF99para},
 is the collection of parameterized languages $L$ such that the {\em $k$-th slice} of $L$ (the instances of $L$ having  parameter $k$) is  a member of ${\sf DTIME}(n^k)$. ${\sf XP}$ is provably distinct from $\FPT$ and seems to be the parameterized class corresponding to the classical class ${\sf EXP}$ (exponential time).
There are problems complete for ${\sf XP}$ including the game of $k$-{\sc Cat and Mice}.
The problem here is played on a directed graph $G=(V,E)$ and begins 
with a distinguished vertex $v_0$ called the cheese, a token $c$ on one vertex
called the cat, and $k$
tokens 
(called mice) on other vertices. Players cat and the team of mice 
play alternatively moving one token at a time.
A player can move a token along an arc
at each stage. Team of mice wins if one mouse can reach the cheese (by occupying it
even if the cat is already there) before 
the cat can eat one of the mice by occupying the same vertex.

It is conjectured that all of the containments given so far  are proper, but all that is currently known is
that \FPT\ is a proper subset of ${\sf XP}$.

There are hundreds of complete problems for the levels of the hierarchy.
Here is a short list. 
The reader is referred to \cite{Downey03para} for references and definitions.
As stated ${\sf XP}$ has {\sc $k$-Cat and Mouse Game} and 
many other games; ${\sf W}[{\sf P}]$ has
           {\sc Linear Inequalities},
{\sc Short Satisfiability}, 
           {\sc Weighted Circuit Satisfiability} 
and
  {\sc Minimum Axiom Set}. There are a number of quite important
problems from combinatorial pattern matching which are ${\sf W}[t]$ hard for all
$t$:
      {\sc Longest Common Subsequence} ($k=$ number of sequences, $|\Sigma|$-two
parameters), 
  {\sc Feasible Register Assignment},
 {\sc Triangulating Colored Graphs},
                        {\sc Bandwidth},
{\sc Topological Bandwidth},
                        {\sc Proper Interval Graph Completion},
 {\sc Domino Treewidth} and {\sc Bounded Persistence Pathwidth}.
Some concrete problems complete for ${\sf W}[2]$ include
         {\sc Weighted $\{0,1\}$ Integer Programming},
  {\sc Dominating Set},
           {\sc Tournament Dominating Set,}
          {\sc Unit Length Precedence Constrained Scheduling} (hard),
        {\sc Shortest Common Supersequence} (hard), 
        {\sc Maximum Likelihood Decoding} (hard),
        {\sc Weight Distribution in Linear Codes} (hard),
      {\sc Nearest Vector in Integer Lattices} (hard),
        {\sc Short Permutation Group Factorization} (hard).
Finally a collection of ${\sf W}[1]$-complete problems: {\sc $k$-Step Derivation for Context Sensitive Grammars},
{\sc Short NTM Computation},
  {\sc Short Post Correspondence}, {\sc Square Tiling},
       {\sc Weighted $q$--CNF Satisfiability},
        {\sc Vapnik--Chervonenkis Dimension}, 
        {\sc Longest Common Subsequence ($k$, $m=$ length of common subseq.)},
 {\sc Clique},
        {\sc Independent Set},
and
        {\sc Monotone Data Complexity for Relational Databases}. 
          This list is merely representative, and 
new areas of application are being found all the time.
There are currently good compendia of hardness and completeness results
as can be found at the web

 \centerline{\href{http://bravo.ce.uniroma2.it/home/cesati/research/compendium/}{{\tt http://bravo.ce.uniroma2.it/home/cesati/research/compendiuma/}}}
 
\noindent  For older material, see  the appendix of
the monograph by Downey and Fellows \cite{DowneyF99para}, as well as the many
surveys and the recent issue of the {\em Computer Journal}
\cite{Downey08thec}.

There remain several important structural questions associated 
with the ${\sf W}$-hierarchy such as how it relates to the ${\sf A}$-hierarchy 
below, and whether any collapse may propagate.\\

\noindent {\em Open questions :} Does ${\sf W}[t]={\sf W}[t+1] 
$ imply ${\sf W}[t]={\sf W}[t+2]$? Does ${\sf W}[t]=\FPT$ imply ${\sf W}[t+1]=\FPT?$

\subsection{The ${\sf A}$-hierarchy and the Flum-Grohe approach}
\label{A_hier}
There have been several attempts towards simplification of this material,
notably by Flum and Grohe \cite{FlumG04para}. Their method is to 
try to make the use of logical depth and logic more
explicit. To do this Flum and Grohe take a detour through
the logic of 
finite model theory. Close inspection of their proofs reveals
that similar combinatorics are hidden. Their view is that \emph{\sl model checking} 
should be viewed as the fundamental viewpoint of complexity.

To this end, for a class of first order formula $\varphi$
with
$s$-ary 
 free relation variable,
we can define
$p\mbox{-\sc WD}(\varphi)$ as the problem:\\

\noindent {$p\mbox{-\sc WD}(\varphi)$}\\
\noindent {\em Instance:}
A structure ${\mathcal A}
$ with domain $A$ and an integer $k$.\\
{\em Parameter:} $k$.\\
{\em Question:}
Is there a relation $R\subseteq A^s$ with $|R|=k$ such that
${\mathcal A}\vDash \varphi(R)?$\\

This idea naturally extends to {\em classes} $\Phi$  of formulae $\Phi$.
Then we define $p\mbox{-\sc WD}(\Phi)$ to be the class of 
parameterized problems $p\mbox{-\sc WD}(\varphi)$ for  $\varphi\in \Phi$.
It is easy to show that $p\mbox{-\sc WD}({\Sigma_1})=\FPT.$ (Strictly, we should write
$\Sigma_1^0$ but the formulae considered here are first order.)
Then to recast the classical 
${\sf W}$-hierarchy at the finite levels, the idea is to  {\em define}
for $t\geq 1$,
$${\sf W}[t]=[p\mbox{-\sc WD}({\Pi_t})]^{\FPT},$$
where, given a parameterized problem $X$,  $[X]^{\FPT}$ denotes the parameterized problems that are  $\FPT$-reducible to $X$.
%\marginpar{We mean problems here. Better put this in the begining}

Flum and Grohe have similar logic-based formulations of the 
other {{\sf W}-hi\-e\-rar\-chy} classes. We refer the reader to \cite{FlumG04para} 
for more details. 

We remark that the model checking  approach leads to other hierarchies. One important hierarchy
found by Flum and Grohe is the ${\sf A}$-hierarchy which is also 
based on alternation like  the {\sf AW}-hierarchy but works differently.
For a class $\Phi$ of formulae, 
we can define the following parameterized problem.\\

\noindent $p\mbox{\sc -MC}(\Phi)$
\\
{\em Instance:} A structure ${\mathcal A}$ and a formula $\varphi \in \Phi$.\\
{\em Parameter:} $|\varphi|$.
\\
{\em Question:} Decide if $\phi({\mathcal A}
)\neq \emptyset,$ 
where this denotes the evaluation of $\phi$ in ${\mathcal A}$.\\

Then Flum and Grohe define 
$${\sf A}[t]=[p\mbox{\sc -MC}(\Sigma_t)]^{\FPT}.$$
For instance, for $k\geq 1$, $k$-{\sc Clique}
can be defined by 
$$\mbox{clique}_k=\exists x_1,\dots x_k(\bigwedge _{1\le i<j\le k}x_i\neq x_j\land 
\bigwedge_{1\le i<j\le k}Ex_ix_j)$$ in the language of graphs, and the interpretation
of the formula in a graph $G$ would be that $G$ has a clique of size $k$.
Thus the mapping $(G,k)\mapsto (G,\mbox{clique}_k)$ is a \FPT\ reduction
showing that parameterized {\sc Clique} is in ${\sf A}[1].$
Flum and Grohe populate various levels of the $A$-hierarchy and 
show the following.

\begin{theorem}[Flum and Grohe \cite{FlumG04para}] The following hold:
\begin{itemize}
\item[(i)] ${\sf A}[1]={\sf W}[1].$
\item[(ii)] ${\sf A}[t]\subseteq {\sf W}[t]$.
\end{itemize}
\end{theorem}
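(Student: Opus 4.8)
The plan is to reduce everything to part (ii), and to prove (ii) by exhibiting, for each fixed $t$, a single $\Pi_t$-formula to whose weighted-definability problem $p\mbox{-\sc MC}(\Sigma_t)$ reduces. For part (i), the inclusion ${\sf A}[1]\subseteq{\sf W}[1]$ is exactly the case $t=1$ of (ii), so only the reverse inclusion needs a separate argument. For that I would use the two facts already recorded above: {\sc Clique} is ${\sf W}[1]$-complete, so ${\sf W}[1]$ equals the class of problems $\FPT$-reducible to {\sc Clique}; and the map $(G,k)\mapsto(G,\mbox{clique}_k)$ is an $\FPT$-reduction witnessing that {\sc Clique} lies in ${\sf A}[1]$. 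Since ${\sf A}[1]=[p\mbox{-\sc MC}(\Sigma_1)]^{\FPT}$ is closed under $\FPT$-reductions, every problem $\FPT$-reducible to {\sc Clique} lies in ${\sf A}[1]$, whence ${\sf W}[1]\subseteq{\sf A}[1]$ and (i) follows once (ii) is in hand.

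For (ii) it suffices, since ${\sf W}[t]$ is $\FPT$-closed, to show $p\mbox{-\sc MC}(\Sigma_t)\leq_{\FPT}p\mbox{-\sc WD}(\varphi_0)$ for one fixed $\varphi_0\in\Pi_t$. First I would normalise the input sentence to prenex form $\exists\bar x_1\forall\bar x_2\cdots Q_t\bar x_t\,\psi$ with $\psi$ quantifier-free, in time $f(|\varphi|)$. Next I would build, over a fixed vocabulary, a structure $\mathcal A'$ of size $f(|\varphi|)\cdot n^{O(1)}$ carrying a copy of $\mathcal A$ together with an encoding of the syntax of $\varphi$: elements naming the variables, relations recording which block each variable belongs to, the atoms occurring in $\psi$, and a linear order. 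The outermost existential block is then handed to the weight: a weight-$k$ relation $R$, with $k=g(|\varphi|)$, is meant to encode the assignment to $\bar x_1$ as a set of (variable, value) pairs, and the weight bound together with a purely universal ``at most one value per variable'' clause (which is $\Pi_1$) forces $R$ to code a total function. The fixed formula $\varphi_0(R)$ then reads these values off $R$ by a leading universal block, which merges with $\forall\bar x_2$, runs the remaining alternation $\forall\bar x_2\exists\bar x_3\cdots$ through its own quantifiers over $\mathcal A'$, and evaluates the encoded matrix $\psi$ in its quantifier-free part; correctness is the equivalence $\mathcal A\models\varphi\iff\exists R\,(|R|=k\wedge\mathcal A'\models\varphi_0(R))$.

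The hard part is that $\varphi_0$ must be \emph{fixed}---hence have a constant number of first-order variables---while the blocks $\bar x_2,\dots,\bar x_t$ of the \emph{varying} input sentence have width up to $|\varphi|$ and, worse, depend on one another. One cannot simply let a universal quantifier of $\varphi_0$ range over a whole block-assignment, since representing assignments as single elements, or conjoining the matrix over all value-tuples of a universal block, blows the structure (or circuit) up to size $n^{\Omega(|\varphi|)}$, which is only ${\sf XP}$ and not $\FPT$. Threading through a block one variable at a time keeps the quantifier type constant within a block but needs as many quantifiers as the block is wide, which a fixed formula does not have. Defeating this blow-up---simulating variable-width, mutually dependent quantifier blocks by one fixed $\Pi_t$-formula over an $\FPT$-size structure---is exactly where the technical weight of the Flum--Grohe argument lies, and the same obstacle reappears if one instead targets the equivalent weft-$t$ propositional characterisation of ${\sf W}[t]$, where universal blocks must become shared large $\land$-gates rather than unrolled conjunctions.

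Finally, the expected alternation bookkeeping is worth flagging, since it explains why the target level is $t$ and not lower or higher: absorbing the outermost $\exists\bar x_1$ into the weight-$k$ relation $R$ lowers the first-order alternation of the matrix by one, while the uniform interpretation of a varying sentence by a single fixed formula costs one alternation back, so the two effects cancel to leave $\varphi_0\in\Pi_t$ rather than $\Pi_{t-1}$ or $\Pi_{t+1}$. Making this balance exact is the delicate point; the base case $t=1$ is transparent because there $\psi$ is already quantifier-free and the single universal read-off of $R$ yields a $\Pi_1$-formula outright, recovering ${\sf A}[1]\subseteq{\sf W}[1]$.
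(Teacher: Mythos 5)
First, a point of reference: the paper does not prove this theorem at all---it is imported from Flum and Grohe with a citation---so there is no in-paper argument to compare your route against; what follows assesses the proposal on its own terms. Your outline for (i) is the standard one and is sound in spirit: ${\sf W}[1]\subseteq{\sf A}[1]$ via the ${\sf W}[1]$-completeness of {\sc Clique} together with $\mbox{clique}_k$, and ${\sf A}[1]\subseteq{\sf W}[1]$ from the $t=1$ case of (ii). But even at $t=1$ the real work is glossed over: a \emph{fixed} $\Pi_1$-formula $\varphi_0$ must evaluate a \emph{varying} quantifier-free matrix $\psi$ encoded inside the structure, and ``a single universal read-off of $R$'' does not accomplish this. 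The honest $t=1$ argument normalises $\psi$ (e.g.\ to DNF, at cost $f(|\varphi|)$, which an \FPT-reduction permits) and then runs the coloured-clique-style consistency construction reducing $p\mbox{\sc -MC}(\Sigma_1)$ to {\sc Clique}/{\sc Independent Set}; without something of that kind, (i) is not yet proved.

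For (ii) the gap is not one you can engineer around, and you have in fact put your finger on why. You set up the reduction, correctly observe that a fixed formula with constantly many variables cannot simulate inner quantifier blocks of width up to $|\varphi|$ without an $n^{\Omega(|\varphi|)}$ blow-up, and then stop, declaring this to be where the technical weight of the argument lies. It is not: the inclusion as printed runs in the opposite direction to the one Flum and Grohe actually establish. Their theorem is ${\sf W}[t]\subseteq{\sf A}[t]$, which follows in one line from the characterisation of ${\sf W}[t]$ as $[p\mbox{\sc -MC}(\Sigma_{t,1})]^{\FPT}$, where $\Sigma_{t,1}$ consists of the $\Sigma_t$-formulas whose quantifier blocks after the first are singletons (e.g.\ the $\exists^k\forall$ formula for {\sc Dominating Set}): since $\Sigma_{t,1}\subseteq\Sigma_t$, the identity map is the reduction. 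Whether ${\sf A}[t]\subseteq{\sf W}[t]$ holds for $t\geq 2$ is open, and the obstruction you ran into---a universal block of width $|\varphi|$ would have to become a single large gate over $n^{|\varphi|}$ inputs---is precisely the reason; the paper's own follow-up remark that no other containments with the ${\sf W}$-hierarchy are known is consistent with item (ii) being a transposition of ${\sf W}[t]\subseteq{\sf A}[t]$. So: (i) is true and your sketch is repairable; (ii) as stated should not be attacked by this method or any other---what you should prove (and what the cited source proves) is the reverse inclusion.
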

Clearly ${\sf A}[t]\subseteq {\sf XP}$, but no other containment 
with respect to other classes of the ${\sf W}$-hierarchy
is known. 
It is conjectured by Flum and Grohe that no other containments than those given 
exist, but this is not apparently related to any other conjecture.

If we compare classical and parameterized complexity it is evident
that the framework provided by parameterized complexity theory
allows for more finely-grained complexity analysis of
computational problems. 
It is deeply connected with algorithmic heuristics and 
exact algorithms in practice. We refer the reader to 
either the survey  \cite{FlumG04para}, or 
those in two recent issues of {\em The Computer Journal} \cite{Downey08thec} for further insight.

We can consider many different
parameterizations of a single classical problem, each of which
leads to either a tractable, or (likely) intractable, version in the
parameterized setting. This allows for an extended
dialog with the problem at hand. This idea towards
the solution of algorithmic problems is explored in, 
for example, \cite{DowneyFS99thev}. 
A nice example of this extended dialog can be found in the work of Iris van Rooij
and her co-authors, 
as discussed in van Rooij and Wareham \cite{vanRooijW08para}.

\subsection{Connection with {\sf PTAS}'s}
\label{ptasc}
The reader may note that parameterized complexity is addressing intractability 
{\sl within polynomial time}. In this vein, the parameterized framework can be used to demonstrate 
that many classical problems that admit a {\sf PTAS} do not, 
in fact, admit any {\sf PTAS} with a practical running time,
unless ${\sf W}[1]=\FPT$. % (see the end of this section.)
The idea here is that if a ${\sf PTAS}$ has a running time such as 
 $O(n^{\frac{1}{\epsilon}})$, where $\epsilon$ is the 
error ratio, then the {\sf PTAS} is unlikely to be useful. For example if
$\epsilon=0.1$ then the running time is already $n$ to the 10th power 
for an error of $10\%$. What we could do is regard $\frac{1}{\epsilon}$ as a parameter
and show that the problem is ${\sf W}[1]$-hard with respect to that prameterization.
In that case {\sl there would likely be no method of
removing the $\frac{1}{\epsilon}$ from the 
exponent in the 
running time and hence no {\em efficient} {\sf PTAS}},
a method first used by Bazgan \cite{Bazgan95sche}.
For many more details of the method we refer the reader to the survey 
\cite{Downey03para}.

A notable application of this technique is due to 
Cai et. al.~\cite{CaiFJR07thec}
who showed that the  method of using planar formulae 
tends to give {\sf PTAS}'s that are never practical. 
The exact calibration of {\sf PTAS}'s and parameterized complexity
comes through yet another hierarchy called the ${\sf M}$-hierarchy.
The breakthrough was the realization by
Cai and Juedes \cite{CaiJ01sube} that the collapse of   a basic sub-hierarchy
of the ${\sf W}$-hierarchy 
was deeply connected with approximation. 

The base level of the hierarchy is the problem ${\sf M}[1]$ defined by the core problem
below.\\

\noindent
{\em Instance:} A CNF circuit $C$ (or, equivalently, a CNF formula) of size $k\log n$.\\
{\em Parameter:} A positive integer $k$.\\
{\em Question:} Is $C$ satisfiable? \\

That is, we are parameterizing the {\em size} of the problem
rather than some aspect of the problem. 
The idea naturally extends to higher levels for that, for example, {\sf M}[2] would 
be a product of sums of product formula of size $k\log n$ and 
we are asking whether it is satisfiable.

The basic result is that ${\sf FPT}\subseteq {\sf M}[1]\subseteq {\sf W}[1]$.
The hypothesis ${\sf FPT}\ne {\sf M}[1]$ is {\em equivalent to} ETH.
In fact, as proven in \cite{ChenG07anis}, there is an isomorphism, the so-called
\emph{miniaturization}, between exponential time complexity (endowed
with a suitable notion of reductions) and {\sf XP} (endowed with {\sf FPT}
reductions) such that the respective notions of tractability correspond,
that is, subexponential time on the one and \FPT\ on the other side.
The reader might wonder why we have not used a Turing Machine 
of size $k\log n$ in the input, rather than a CNF circuit. The reason is that
whilst we can get reductions for small version of problems, such as  $k\log n$-{\sc Vertex Cover}
and the like,  to have the same complexity as the circuit problem above, 
we do not know how to do this 
for Turing machines. It remains an open question whether
the $k\log n$-sized circuit problem and the $k\log n$-sized 
Turing Machine problem have the same complexity.

For more on this topic and other connections with 
classical {exponential} algorithms we refer the reader to the 
survey of Flum and Grohe \cite{FlumG04para}.

\subsection{{\sf FPT} and {\sf XP} optimality}
\label{optimality}

Related to the material of the last section 
are emerging programmes devoted to proving tight lower bounds on 
parameterized problems, assuming various non-collapses of the 
parameterized hierarchies.
For this section, it is useful to use the $O^*$ notation for 
parameterized algorithms. This is the part of the running time which 
is exponential. 
For example, a running time of $2^{k^2}|G|^5$ would
be written as $O^*(2^{k^2}).$

One example of a lower bound was the original paper of Cai and Juedes
\cite{CaiJ01sube,CaiJ03onth} who proved the following definitive result.

\begin{thm}
\label{ethvc}

$k$-{\sc Planar Vertex Cover}, $k$-{\sc Planar Independent Set},
$k$-{\sc Planar Dominating Set}, and $k$-{\sc Planar Red/Blue Dominating Set}
 cannot be in
$O^*(2^{o(\sqrt{k})})$-$\FPT$ unless
${\sf FPT}={\sf M}[1]$ (or, equivalently, unless ETH fails).
\end{thm}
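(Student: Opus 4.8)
The plan is to establish the lower bound via a sequence of parameter-preserving reductions that transfer the hardness of $k\log n$-sized satisfiability to the planar problems, controlling carefully how the parameter transforms under each reduction. The key observation is that the theorem's hypothesis ``${\sf FPT}={\sf M}[1]$'' (equivalently ETH) tells us precisely that $n$-variable {\sc 3Sat} has no subexponential $2^{o(n)}$ algorithm. So the overall strategy is contrapositive: assuming one of the four planar problems admits an $O^*(2^{o(\sqrt{k})})$ algorithm, I will construct a subexponential algorithm for {\sc 3Sat}, thereby collapsing ETH.

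First I would invoke the classical {\sf NP}-hardness reductions from {\sc 3Sat} to {\sc Planar Vertex Cover} (and thence to the other three planar problems), but the crucial point is to track the instance-size blowup: the standard planarization gadgets produce a planar graph on $N = O(m+n)$ vertices from a formula with $n$ variables and $m$ clauses, where the sought solution size $k$ is \emph{linear} in $N$. Under this \emph{linear}-parameter reduction, a hypothetical $O^*(2^{o(\sqrt{k})})$ algorithm would only yield a $2^{o(\sqrt{N})}$ algorithm for {\sc 3Sat}, which is \emph{not} subexponential in the right sense and hence would not directly refute ETH. The resolution — and this is the heart of the Cai--Juedes argument — is to exploit the \emph{square-root phenomenon} of planar graphs in the opposite direction. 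Because planar problems admit the $2^{O(\sqrt{k})}\cdot n^{O(1)}$ algorithms mentioned in Subsection~\ref{subexpalg} via treewidth and planar separators, the exponent $\sqrt{k}$ is genuinely the correct tight scale, and one must produce a reduction in which $k = \Theta(N^2)$ rather than $k = \Theta(N)$.

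Thus the main technical step is to design a \emph{quadratic}-blowup reduction: starting from {\sc 3Sat} on $n$ variables, build a planar instance with parameter $k = \Theta(n^2 \cdot \mathrm{polylog})$ while keeping the instance size polynomial. This is achieved through the ETH-sparsification lemma (to assume $m = O(n)$) followed by a ``serialization'' or grid-embedding gadget that spends $\Theta(n^2)$ solution-budget to encode $n$ binary choices within a planar layout. Then an $O^*(2^{o(\sqrt{k})})$-algorithm runs in time $2^{o(\sqrt{n^2})} = 2^{o(n)}$ on this instance, contradicting ETH. I expect the construction of this quadratic gadget — ensuring simultaneously that it is planar, that the solution sizes correspond exactly ($X$ satisfiable iff the planar instance has a solution of the prescribed size $k$), and that $k$ scales as $n^2$ rather than $n$ — to be the main obstacle; the planarity crossover gadgets and the bookkeeping that a genuine $2^{o(\sqrt{k})}$ saving propagates back to a $2^{o(n)}$ saving for {\sc 3Sat} are delicate.

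Having proven the bound for {\sc Planar Vertex Cover}, the remaining three problems follow by the standard local \FPT-reductions that are well known to preserve planarity and change the parameter only by a constant factor: {\sc Planar Independent Set} via the complement-within-neighborhoods trick adapted to the planar setting, and {\sc Planar Dominating Set} and {\sc Planar Red/Blue Dominating Set} via the gadget replacements used to show their ${\sf W}$-theoretic relationships. Since a constant-factor change in $k$ leaves the class $O^*(2^{o(\sqrt{k})})$ invariant, the lower bound transfers verbatim, completing the proof.
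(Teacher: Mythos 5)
The paper offers no proof of Theorem~\ref{ethvc} --- it defers entirely to Cai and Juedes \cite{CaiJ01sube,CaiJ03onth} --- and your proposal correctly reconstructs their argument: sparsify {\sc 3Sat} under ETH, reduce to {\sc Vertex Cover} with parameter linear in $n$, then planarize via crossover gadgets whose $\Theta(n^2)$ crossings inflate the parameter quadratically, so that a $2^{o(\sqrt{k})}$ algorithm back-translates to a $2^{o(n)}$ algorithm for {\sc 3Sat}. Your handling of the remaining three problems (planarity-preserving reductions with linearly related parameters, using that $k=\Theta(N)$ in the produced instances so the Independent Set complementation does not distort the $\sqrt{k}$ scale) likewise matches the cited source; the only cosmetic oddity is that you first describe the planarization as a linear-parameter reduction and then ``repair'' it, when the quadratic blowup from the crossing gadgets is already the standard reduction.
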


The optimality of Theorem~\ref{ethvc} follows from the fact that  all above problems have been classified in $O^*(2^{O(\sqrt{k})})$-{\sf FPT} as proved in~\cite{AlberBFKN02,KloksLL02newa,FominT06domi,KoutsonasT10plan} (see also Subsection~\ref{subexpalg}).

We can ask for similar optimality results for any \FPT\ problem.
See for example, Chen and Flum \cite{ChenF06onmi}.
We will meet another approach to \FPT\ optimality 
in Subsection \ref{boundedx} where we look at 
classes like {\sf EPT} meant to capture how problems are placed
in \FPT\ via another kind of completeness programme.

Another example of such optimality programmes 
can be found in exciting resent work on {\sf XP} optimiality.
This programme represents a major step forward 
in the sense that it regards the classes like ${\sf W}[1]$ as artifacts of the 
basic problem of proving hardness 
under reasonable assumptions, and strikes at 
membership of ${\sf XP}$.

Here are some  examples. We know that {\sc Independent Set} and 
{\sc Dominating Set} are in ${\sf XP}$.

\begin{thm}[Chen et. al \cite{ChenCFHJKX05tigh}] The following hold:
\begin{itemize}
\item[(i)] {\sc Independent Set} cannot be solved in
time $n^{o(k)}$ unless ${\sf FPT}={\sf M}[1].$
\item[(ii)] {\sc Dominating Set} cannot be solved in
time $n^{o(k)}$ unless ${\sf FPT}={\sf M}[2]$.
\end{itemize}
\end{thm}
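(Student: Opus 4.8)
The plan is to prove both items with a single template: exhibit a reduction from the canonical \emph{miniaturized} satisfiability problem defining the relevant level of the ${\sf M}$-hierarchy to the target graph problem that is \emph{linear in the parameter} (it sends a parameter $k$ to $O(k)$ while keeping the instance of polynomial size), so that an $n^{o(k)}$ algorithm for the target, composed with the reduction, would witness the collapse of that level to \FPT. For (i) I would use the equivalence recorded above, that ${\sf FPT}\neq{\sf M}[1]$ is exactly ETH, and argue the contrapositive: if {\sc Independent Set} admits an $n^{o(k)}$ algorithm then ETH fails. Since {\sc Independent Set} $\equiv_{\FPT}$ {\sc Clique} by complementation, it suffices to refute ETH starting from an $n^{o(k)}$ algorithm for {\sc Clique}.

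First I would invoke the Sparsification Lemma of Impagliazzo, Paturi and Zane \cite{ImpagliazzoPZ01whic} to reduce the task to instances of $3$-SAT with $n$ variables and only $O(n)$ clauses, for which ETH still forbids a $2^{o(n)}$ algorithm; to make all constraints binary I would then pass, by the standard linear-size reduction, to $3$-{\sc Colorability} on a graph with $N=O(n)$ vertices and $O(N)$ edges, which inherits the $2^{\Omega(N)}$ lower bound. Then comes the \emph{grouping} reduction: partition the $N$ vertices into $k$ blocks of size $N/k$, create one vertex for each block together with each proper $3$-colouring of its induced subgraph (so at most $k\cdot 3^{N/k}$ vertices), and join two such vertices exactly when their colourings are jointly proper on the edges between the two blocks. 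Because every colouring constraint is binary it is certified by some pair of blocks, so the instance has a clique of size $k$ iff the graph is $3$-colourable. Writing $\mathcal{N}$ for the number of vertices, an $\mathcal{N}^{o(k)}$ algorithm runs in $3^{(N/k)\cdot o(k)}=2^{o(N)}$ once the block number $k=k(N)$ is let grow to infinity slowly enough that the constructed instance, and the algorithm on it, both stay subexponential in $N$; this contradicts the lower bound and hence refutes ETH, giving ${\sf FPT}={\sf M}[1]$.

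For (ii) I would run the same template one level up, replacing {\sc Clique} by {\sc Dominating Set}, ${\sf M}[1]$ by ${\sf M}[2]$, and the binary-CSP source by the canonical ${\sf W}[2]$ problem {\sc Weighted CNF Sat}. The essential ingredient is already available in this paper: the gadget reduction {\sc Weighted CNF Sat} $\leq_{\FPT}$ {\sc Dominating Set} proved above turns a weight-$k$ satisfiability question into a dominating set of size $2k$ on a graph of polynomial size, i.e.\ it is linear in the parameter. Combining this reduction with the miniaturization correspondence of Chen and Grohe \cite{ChenG07anis} (which matches the size-$(k\log n)$ variant of {\sc Weighted CNF Sat}, the core problem of ${\sf M}[2]$, with subexponential-time solvability of the full problem), an $n^{o(k)}$ algorithm for {\sc Dominating Set} would solve the ${\sf M}[2]$ core problem within the budget forcing ${\sf M}[2]={\sf FPT}$, exactly in the spirit of Theorem \ref{ethvc}.

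The hard part will be the calibration of the two competing exponential scales in the grouping step: one must choose $k=k(N)$ as a function of the source size so that the target instance has size $\mathcal{N}=2^{\Theta(N/k)}$, yet $\mathcal{N}^{o(k)}=2^{o(N)}$, while also ensuring the reduction itself is computable in subexponential time; this balancing is what pins the lower bound at the sharp exponent $n^{o(k)}$ rather than a weaker one. A secondary technical point, needed to make the pairwise-compatibility graph faithful, is the preliminary reduction to a purely binary constraint problem before grouping, and for (ii) the verification that the dominating-set gadget is genuinely parameter-linear, so that the collapse lands precisely at ${\sf M}[2]$ and not at a neighbouring level of the hierarchy.
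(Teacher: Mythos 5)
The paper states this theorem without proof, as a citation to Chen et al., so your proposal is judged on its own merits. Part (i) is correct and is the standard argument: sparsify, pass to \textsc{3-Colorability} on $N$ vertices with a $2^{\Omega(N)}$ ETH lower bound, group the vertices into $k$ blocks, build the compatibility graph on at most $k\cdot 3^{N/k}$ coloring-vertices, and balance $k=k(N)$ against the concrete $o(\cdot)$ function of the assumed algorithm so that both the construction and the run in time $\mathcal{N}^{o(k)}$ stay within $2^{o(N)}$. You correctly flag the calibration of $k(N)$ as the delicate point, and the stated equivalence of ${\sf FPT}\neq{\sf M}[1]$ with ETH closes the loop.

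Part (ii), as written, has a genuine gap. You treat ``the size-$(k\log n)$ variant of \textsc{Weighted CNF Sat}'' as the core problem of ${\sf M}[2]$, but the ${\sf M}$-hierarchy miniatures are \emph{unweighted}: the ${\sf M}[2]$ core asks for plain satisfiability of a formula of total size $k\log n$, whereas the paper's gadget reduction to \textsc{Dominating Set} starts from weight-$k$ satisfiability of a polynomial-size CNF. The bridge between these two is the ``$k\log n$ trick'': partition the $k\log n$ variables of the miniature into $k$ blocks of $\log n$, introduce $n$ selection variables per block (one per block assignment), enforce ``at most one per block'' with negative $2$-clauses so that weight exactly $k$ means one per block, and rewrite each clause of the miniature as a disjunction of selection variables, all while staying at the CNF level so that the target really is \textsc{Weighted CNF Sat}. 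This is precisely the step your sketch omits, and it is where the level of the hierarchy (hence ${\sf M}[2]$ rather than a neighbouring class) gets pinned down; without it the parameter-linear gadget has no source instance with the right size/parameter trade-off. Relatedly, the final inference is too quick: an $n^{o(k)}$ algorithm for the miniature is \emph{not} an \FPT\ algorithm, so it does not ``force ${\sf M}[2]={\sf FPT}$'' directly. You must first convert it, via the reparameterization $n=2^{m/k}$ with $k=k(m)$ growing slowly, into a $2^{o(m)}$ algorithm for unrestricted \textsc{CNF Sat}, and then invoke the characterization of ${\sf M}[2]={\sf FPT}$ in terms of subexponential CNF satisfiability --- that is, you must run for ${\sf M}[2]$ the same two-scale balancing you ran for ${\sf M}[1]$, not merely cite the Chen--Grohe isomorphism.
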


A beautiful development in this area is the resent paper by Marx 
on the {\sc Closest Substring} problem. We refer 
to Marx \cite{Marx08clos} for more details, and 
to Chen and Meng \cite{ChenM08onpa} for other related  results.

There remains a lot of work to be done here and these 
programmes appear to be exciting developments, see e.g., \cite{FedorGLS09cliq,FominGLS10algo,LokshtanovMS11know,LokshtanovMS10know}.

\subsection{Other classical applications}
\label{ocapp}
Similar techniques  have  been used to 
solve a significant open question about techniques for 
formula evaluation when they were used to
show ``resolution is not automizable'' unless ${\sf W}[{\sf P}]=\FPT$ (Alekhnovich and Razborov
\cite{AlekhnovichR01reso}, Eickmeyer,  Grohe and  Gr{\"u}ber \cite{EickmeyerGG08appr}.) 
Parameterized complexity assumptions  can also be used to show 
that the large hidden constants (various towers of two's)
in the running times of generic algorithms obtained though the use of algorithmic meta-theorems cannot be improved upon (see \cite{FlumG06para}.)
One illustration is obtained through the use of 
{\em local treewidth} (which we will discuss in Subection \ref{ftpfol}). 
The 
 notions of treewidth and branchwidth  are  by now ubiquitous
in algorithmic graph theory (the definition of branchwidth is given in  Subsection~\ref{cource}). 
Suffice to say is that it is a method of decomposing 
graphs to measure how ``treelike'' they are, 
and if they have small treewidth/branchwidth, 
as we see in Subection \ref{cource}, we can run dynamic 
programming algorithms upon them.
The {\em local} treewidth of a {\em class} ${\mathcal C}$ of graphs is called 
bounded iff there a function $f$ such that for all 
graphs $G\in {\mathcal C}$ and all vertices $v\in V(G)$
the neighborhood of $v$ of distance $k$ from $v$ has treewidth $f(k)$ (see Subsection \ref{ftpfol}).
%Local treewidth will be discussed in context in .
Examples include planar graphs and  graphs of bounded maximum degree. 
The point is that a class of graphs of bounded local treewidth 
is automatically \FPT\ for a wide class of properties.

\begin{theorem}[Frick and Grohe \cite{FrickG99deci}] 
Deciding first-order statements is \FPT\ for every fixed class 
of graphs of bounded local treewidth.\end{theorem}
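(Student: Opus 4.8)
The plan is to prove the statement via \emph{Gaifman's locality theorem}, which is the natural engine for turning a ``global'' first-order property into a finite family of ``local'' ones that the bounded-local-treewidth hypothesis can handle. Fix the class $\mathcal{C}$ together with its local-treewidth bound $f$, and let $\varphi$ be the input sentence, with $|\varphi|$ the parameter. Gaifman's theorem rewrites $\varphi$, in a manner depending only on $\varphi$, as a Boolean combination of \emph{basic local sentences} of the form
$$\exists x_1 \cdots \exists x_s\,\Big(\bigwedge_{1\le i<j\le s} \mathrm{dist}(x_i,x_j)>2r \;\wedge\; \bigwedge_{i=1}^{s}\psi(x_i)\Big),$$
where the radius $r$, the number $s$ of scattered witnesses, and the size of each $r$-local formula $\psi$ are all bounded by functions of $|\varphi|$ alone. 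Since a Boolean combination of finitely many conditions is evaluated by evaluating each conjunct, it suffices to decide one basic local sentence within the desired time bound.

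The first subtask is to compute the set $A=\{v\in V(G): G\models\psi(v)\}$. Because $\psi$ is $r$-local, whether $G\models\psi(v)$ depends only on the subgraph induced by the ball $N_r(v)$ of radius $r$ around $v$. By the bounded-local-treewidth hypothesis, $\tw(G[N_r(v)])\le f(r)$, a constant once $\varphi$ is fixed. First I would compute $N_r(v)$ by breadth-first search; then, using Bodlaender's linear-time algorithm, produce a tree decomposition of $G[N_r(v)]$ of width $O(f(r))$; then invoke Courcelle's theorem (Subsection~\ref{cource}), or a direct first-order dynamic program on the decomposition, to decide $\psi(v)$. Each evaluation costs $h(|\varphi|)\cdot |N_r(v)|$ for a suitable $h$, so computing all of $A$ costs at most $h(|\varphi|)\cdot\sum_v |N_r(v)|$, which is polynomial in $|G|$ and hence within the \FPT\ budget.

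The remaining and genuinely harder subtask is to decide whether $A$ contains $s$ vertices pairwise at distance greater than $2r$, an ``$(2r)$-scattered set of size $s$'' inside $A$. A naive search over $s$-subsets costs $|A|^{s}=n^{g(|\varphi|)}$, which only places the problem in ${\sf XP}$, not \FPT; and a maximal greedy scattered set need not be maximum, so greediness alone does not settle the question. This is the step I expect to be the main obstacle. To overcome it I would exploit the local structure more strongly: classes of bounded local treewidth are \emph{locally tree-decomposable}, so one can compute in linear time a cover $\{T_1,\dots,T_m\}$ of $V(G)$ such that every $N_r(v)$ lies inside some $T_i$, each $G[T_i]$ has treewidth bounded by a function of $r$, and every vertex lies in a bounded number of the $T_i$. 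Two scattered witnesses required to be far apart either fall inside a common piece, where their interaction is visible to a bounded-treewidth dynamic program, or lie in distinct pieces whose interaction is controlled by the bounded overlap; assembling these local computations into a global dynamic program over the cover decides the scattered-set condition in time $h'(|\varphi|)\cdot n^{O(1)}$.

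Combining the three steps, each basic local sentence, and therefore the Boolean combination equivalent to $\varphi$, is decided in time $F(|\varphi|)\cdot |G|^{O(1)}$, establishing that first-order model checking is \FPT\ on $\mathcal{C}$. The delicate points I would treat carefully in a full proof are the explicit bounds on $r$ and $s$ furnished by Gaifman's theorem (so that they are genuinely functions of $|\varphi|$ only) and the correctness and running-time analysis of the scattered-set dynamic program over the neighbourhood cover.
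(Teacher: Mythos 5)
The paper states this theorem only as a citation to Frick and Grohe and gives no proof of its own, so your proposal has to stand on its own merits. Its skeleton is the right one, and indeed the historically correct one: Gaifman's theorem reduces $\varphi$ to a Boolean combination of basic local sentences; the set $A=\{v: G\models\psi(v)\}$ is computable within the budget because $\psi$ is $r$-local and every $r$-ball has treewidth at most $f(r)$; and the residual difficulty is deciding whether $A$ contains $s$ vertices pairwise at distance greater than $2r$. You correctly isolate that last step as the crux, but the argument you offer for it --- a ``global dynamic program over the cover'' whose pieces interact through ``bounded overlap'' --- is a gesture rather than a proof, and as described I do not see how to make it work: the pieces of a neighbourhood cover overlap in an unstructured way, there is no tree-shaped arrangement of them over which a dynamic program can run, and the fact that each vertex lies in boundedly many pieces does not by itself control the interaction between witnesses sitting in different pieces. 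This is a genuine gap, not merely a deferred routine verification.

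The step can be closed by a more elementary device, which is essentially what Frick and Grohe do. Greedily compute a maximal $2r$-scattered subset $B\subseteq A$; if $|B|\ge s$, accept. Otherwise $|B|=q<s$ and every element of $A$ lies within distance $2r$ of $B$, so any candidate witness set lives in $N^{2r}(B)$, and every distance of value at most $2r$ between two such vertices is realised by a path inside $H=G[N^{4r}(B)]$; hence the scattered-set question for $A$ in $G$ is equivalent to the same question in $H$. The key point is that $H$ has treewidth bounded by a function of $|\varphi|$ alone: if two centres of $B$ have $4r$-balls meeting the same connected component of $H$, a path in that component links them through a chain of centres with consecutive distances at most $8r+1$, so each component of $H$ is contained in a single ball of radius at most $(8r+1)q$ around one centre and therefore has treewidth at most $f((8r+1)s)$ by bounded local treewidth. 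The condition ``$A$ contains $s$ vertices pairwise at distance greater than $2r$'' is first-order (hence MSOL) over $H$ with $A$ as a unary predicate, so Courcelle's theorem decides it on the bounded-treewidth graph $H$. With this replacement for your third step the outline becomes a complete proof; the rest of it (the Gaifman bounds on $r$, $s$ and $|\psi|$ depending only on $|\varphi|$, and the per-vertex evaluation of $\psi$ on balls of bounded treewidth) is sound.
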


One problem  with this algorithmic meta-theorem is that 
the algorithm obtained for a fixed first-order statement $\varphi$ 
can rapidly have towers of twos depending on the quantifier complexity
of the statement, in the same way that this happens for Courcelle's 
theorem on decidability of monadic second order statements 
(as discussed in Subsection \ref{cource})
for graphs of bounded treewidth. What Frick and Grohe \cite{FrickG04thec} 
showed is 
that 
such towers of two's cannot be removed unless ${\sf W}[{\sf P}]=\FPT.$

Another use of parameterized complexity is to give an indirect approach to
proving likely intractability of problems which are not known to be 
{\sf NP}-complete. A classic example of this is the following problem.\\

\noindent {\sc Precedence Constrained $k$-Processor Scheduling}
\\
{\em Instance:} A set $T$ of unit-length jobs and a partial ordering 
$\preceq$ on $T$, a positive deadline $D$ and a number of processors $k$.\\
{\sl Parameter:} A positive integer $k$.\\
{\sl Question:} Is there a mapping $f:T\to \{1,\dots,D\}$
such that for all $t\prec t'$, $f(t)\prec f(t')$, and for all $i$, $1\leq i\leq D$,
$|f^{-1}(i)|\leq k?$.\\

In general this problem is {\sf NP}-complete and is known to be
in {\sf P} for 2 processors. The question is what happens 
for $k>2$ processors. For us the question becomes 
whether the problem is  in {\sf XP} for $k>2$ processors.
This remains one of the open questions from Garey and Johnson's famous book
\cite{GareyJ79comp}
(Open Problem 8), but we have the following.

\begin{theorem}[Bodlaender, Fellows and Hallett \cite{BodlaenderFH04beyo}]
{\sc Precedence Constrained $k$-Processor Scheduling} is ${\sf W}[2]$-hard.
\end{theorem}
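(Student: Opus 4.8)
The plan is to establish ${\sf W}[2]$-hardness by exhibiting a parameterized reduction from {\sc Dominating Set}, which was noted above to be ${\sf W}[2]$-complete. Given an instance $(G,k)$ of {\sc Dominating Set} with $V(G)=\{v_1,\dots,v_n\}$, I would build a scheduling instance $(T,\preceq,D,k')$ in which the number of processors $k'=g(k)$ depends only on $k$ (say $k'=k$ or $k+O(1)$). This dependence is exactly the delicate feature that separates a genuine parameterized reduction from the classical {\sf NP}-hardness reduction: in the classical setting the processor count is allowed to grow with $n$, whereas here it must be bounded by a function of $k$ alone.

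The construction would have two interacting parts. First, a \emph{selection gadget}: a designated block of time slots whose processor capacity is almost entirely consumed by mandatory \emph{filler} jobs, which are pinned into position by long precedence chains together with the tight deadline $D$. The chains and capacities are arranged so that at a distinguished selection time exactly $k$ processor slots remain free, and the only jobs eligible to occupy them are the $n$ \emph{vertex-selection jobs} $a_1,\dots,a_n$; a pigeonhole/capacity argument then forces any feasible schedule to place exactly $k$ of the $a_i$ there, i.e.\ to choose a size-$k$ set $S\subseteq V(G)$. Second, a \emph{domination-checking gadget}: for each vertex $u$ a checking job (or short chain) $b_u$ that can be completed within the deadline only if some vertex of the closed neighbourhood $N[u]$ was selected, the link being that selecting $v_i$ vacates a later slot that the checking jobs of the vertices dominated by $v_i$ can then occupy. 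Setting $D$ tightly makes the whole instance feasible if and only if the $k$ selected vertices dominate $G$.

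Correctness would be argued in the two usual directions. From a dominating set of size $\le k$ one schedules the corresponding selection jobs and then slots in every checking job using the capacity these selections free up; conversely, from a feasible schedule one reads off the $a_i$ placed at the selection time and uses the capacity accounting to show they form a dominating set of size $\le k$. Finally I would verify that the instance is produced in time $f(k)\cdot n^{O(1)}$ and that $k'=g(k)$, so that the map is a legitimate parameterized reduction and hence {\sc Dominating Set} $\leq_{\FPT}$ {\sc Precedence Constrained $k$-Processor Scheduling}, yielding the claimed hardness.

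The main obstacle is the domination-checking gadget. Precedence constraints are intrinsically \emph{conjunctive} (a job may start only after \emph{all} of its predecessors finish), whereas domination is a \emph{disjunctive} condition (\emph{some} neighbour must be selected), so the ``or'' cannot be encoded through precedence directly and must instead be simulated through the capacity bound. Engineering this disjunction—so that the single slot freed by each selected vertex simultaneously enables precisely the checking jobs of the vertices it dominates, with the deadline tight enough that no spurious feasible schedule exists—while keeping the processor count bounded by a function of $k$ alone, is the delicate heart of the argument.
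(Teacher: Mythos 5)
First, a point of comparison: the survey does not actually prove this theorem --- it states it with a citation to Bodlaender, Fellows and Hallett, so there is no in-paper proof to measure you against. Your choice of source problem ({\sc Dominating Set}, which the paper records as ${\sf W}[2]$-complete) and your overall architecture (filler jobs pinned by long precedence chains and a tight deadline, a pigeonhole argument forcing exactly $k$ selection jobs into $k$ free slots, and the domination condition simulated through capacity rather than precedence) are all in the spirit of the original reduction. So the \emph{plan} is sound and correctly identifies every constraint a parameterized reduction must satisfy, including the crucial requirement that the processor count depend on $k$ alone.

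The difficulty is that the plan is not a proof, and the part you defer --- the domination-checking gadget --- is essentially the entire content of the argument. You write that selecting $v_i$ ``vacates a later slot that the checking jobs of the vertices dominated by $v_i$ can then occupy,'' but as stated this cannot work: a single vacated unit slot can host only one unit job, whereas selecting $v_i$ must simultaneously certify domination for every $u\in N[v_i]$, of which there may be $\Theta(n)$. The disjunction therefore has to be realized once per vertex $u$ of $G$, in a separate checking window for each $u$, and the extra capacity available in window $u$ must be made to depend on \emph{whether some} $a_v$ with $v\in N[u]$ was placed in the selection block. That forces the successor structure of the selection jobs to encode the neighbourhood relation of $G$ and forces you to track, slot by slot over all $n$ windows with only $g(k)$ processors, that no spurious schedule can borrow slack from one window to fake domination in another. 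You also leave unaddressed where the $n-k$ unselected jobs $a_i$ go before the deadline without creating exactly such exploitable slack. None of this is routine; it is the delicate heart you yourself flag as unresolved, so the proposal as written has a genuine gap rather than a complete argument.
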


The point here is that {\sl even if} \ {\sc Precedence Constrained $k$-Processor Sche\-duling} is in {\sf XP}, there seems no way that it will be  feasible 
for large $k$.
Researchers in the area of parameterized 
complexity have long wondered whether  this approach might 
be applied to other problems like {\sc Composite Number} or 
{\sc Graph Isomorphism}.
For example, Luks \cite{Lucks82isom} has shown that 
{\sc Graph Isomorphism} can be solved in $O(n^k)$ for  graphs 
of maximum degree  $k$, but any proof that the problem was ${\sf W}[1]$
hard would clearly imply that the general problem was not 
feasible. We know that {\sc Graph Isomorphism} is 
almost certainly not {\sf NP}-complete, since proving that
would collapse the polynomial hierarchy to 2 or fewer levels
(see the work of Schöning in~\cite{Schoning87grap}). Si\-mi\-lar comments can be applied to
graphs of bounded treewidth by Bodlaender \cite{Bodlaender90poly}.

\subsection{Other parameterized classes}
\label{opcla}

There have been other parameterized analogs of classical complexity 
analyzed. For example McCartin \cite{McCartin06para} and Flum and Grohe \cite{FlumG02thep}
each analyzed parameterized counting complexity. Here we 
can define the class $\#{\sf W}[1]$ for instance
(with the core problem being counting the number of 
accepting paths of length $k$ in a nondeterministic
Turing machine), and show that
counting the number of $k$ cliques is $\#{\sf W}[1]$-complete.
Notably Flum and Grohe proved the following analog to Valiant's Theorem on the 
permanent.

\begin{theorem}[Flum and Grohe \cite{FlumG02thep}] Counting the number of 
cycles of size $k$ in a bipartite graph is $\#{\sf W}[1]$-complete.\end{theorem}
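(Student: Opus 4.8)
The plan is to prove the two inclusions separately: that counting $k$-cycles in bipartite graphs lies in $\#{\sf W}[1]$, and that it is $\#{\sf W}[1]$-hard. For the hardness direction the natural starting point is the counting of $k$-cliques, which is $\#{\sf W}[1]$-complete by the result quoted just above.

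For membership I would reuse the logical description that already witnesses the membership of $k$-{\sc Clique}. The statement that an ordered tuple $(v_1,\dots,v_k)$ of distinct vertices forms a cycle is captured by a first-order formula with $k$ existentially quantified variables,
$$\mathrm{cyc}_k=\exists x_1\cdots\exists x_k\Big(\bigwedge_{1\le i<j\le k}x_i\ne x_j\ \wedge\ \bigwedge_{1\le i\le k}E\,x_i\,x_{(i\bmod k)+1}\Big),$$
in exactly the same spirit as $\mathrm{clique}_k$ in the excerpt. Counting the satisfying tuples of a fixed such bounded-variable formula lies in $\#{\sf W}[1]$ by the counting/model-checking characterization of the class. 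Each undirected $k$-cycle is counted exactly $2k$ times among these tuples (the $k$ rotations times the two orientations), so the number of cycles is obtained by dividing the tuple-count by the parameter-dependent factor $2k$; computing the final answer by this fpt function of the oracle value is a legitimate counting fpt reduction. As this argument is oblivious to the input graph being bipartite, membership follows.

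For hardness I would design a counting fpt reduction from $p$-$\#$-{\sc Clique}. The difficulty is structural: a bipartite graph has no triangles and only even cycles, and a single cycle visits its vertices in one cyclic order, whereas a $k$-clique requires all $\binom{k}{2}$ adjacencies; hence there is no gadget that merely ``forces'' a cycle to certify a clique, and a parsimonious reduction is hopeless. Instead I would relate the two counts linearly and recover the clique count by interpolation. From the input graph $G$ I would build a family of graphs $H_1,H_2,\dots$ in which a potential $k$-clique is represented by a cyclic structure threading $k$ vertex-slots, with auxiliary connector gadgets whose multiplicity (realised as a varying number of internally disjoint parallel paths) is the interpolation parameter $t$. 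Subdividing every edge once makes each $H_t$ bipartite and turns each $\ell$-cycle into a $2\ell$-cycle, so the target length is even and the whole argument stays inside the bipartite problem. The number of target-length cycles of $H_t$ is then a polynomial in $t$ whose coefficients are the numbers of $j$-cliques of $G$ for $j\le k$, together with bounded contributions from degenerate substructures; querying the $k$-cycle-counting oracle on the fpt-many graphs $H_t$ gives a Vandermonde-type linear system, and solving it isolates the coefficient equal to the number of $k$-cliques of $G$.

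The main obstacle will be the exact bookkeeping forced by counting: every cycle of $H_t$ must be accounted for, including the spurious ones that run through the connector gadgets without encoding a genuine clique and those encoding cliques of size smaller than $k$. The technical heart of the proof is to choose the connector gadgets so that the total count is a controllable polynomial in $t$ of bounded degree and to prove that the resulting coefficient matrix is nonsingular, so that the $k$-clique coefficient can indeed be extracted. Keeping this accounting correct after the edge-subdivision that enforces bipartiteness --- which doubles every relevant length and may create additional short even cycles inside the gadgets --- is where most of the care is needed. Once the linear system is shown to be invertible, the construction is an fpt Turing (counting) reduction, and together with the membership argument it yields the claimed $\#{\sf W}[1]$-completeness.
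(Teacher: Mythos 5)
First, a point of reference: the survey does not actually prove this theorem --- it is stated with a citation to Flum and Grohe \cite{FlumG02thep} --- so your proposal can only be measured against the proof in that source. Your membership argument is fine and is essentially the standard one: counting satisfying tuples of the $k$-variable $\Sigma_1$ formula $\mathrm{cyc}_k$ is in $\#{\sf W}[1]$ by the $\#{\sf A}[1]=\#{\sf W}[1]$ characterization, and dividing the tuple count by $2k$ is a legitimate fpt post-processing step. Your overall strategy for hardness --- an fpt Turing reduction from counting $k$-cliques, recovering the clique count by polynomial interpolation against gadget multiplicities, and forcing bipartiteness by subdividing every edge once (which bijectively turns $\ell$-cycles into $2\ell$-cycles and creates no new ones) --- is also genuinely the architecture of the Flum--Grohe proof, including the observation that a parsimonious reduction is hopeless.

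The gap is in the one place where the construction has to carry the load. You correctly diagnose the obstruction --- a cycle certifies only consecutive adjacencies, whereas a clique needs all $\binom{k}{2}$ of them --- but your proposed construction, ``a cyclic structure threading $k$ vertex-slots with connector gadgets,'' does not overcome it: a cycle through $k$ slots still traverses, and hence can only verify, $k$ adjacencies of $G$. Interpolation cannot repair this, because varying the connector multiplicity $t$ only rescales the contributions of structures the construction already distinguishes; the oracle's answers on the family $H_t$ carry no information whatsoever about vertex pairs whose adjacency no cycle in any $H_t$ ever tests, so no linear system over those answers can determine the clique count. The actual reduction must thread the cycle through a slot for each of the $\binom{k}{2}$ pairs (giving target length $\Theta(k^2)$, which is still a function of the parameter, so this is harmless), and the price of doing so is that the same vertex of $G$ is selected independently in several slots; the interpolation and the nonsingularity of the resulting coefficient matrix are needed precisely to untangle these consistency collisions, not to remove ``spurious gadget cycles.'' That invertibility argument is the technical heart of the Flum--Grohe proof and is the step your sketch would still have to supply even after the construction is corrected.
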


One of the hallmark theorems of classical complexity is Toda's 
Theorem which states that ${\sf P}^{\#{\sf P}}$ contains the polynomial time hierarchy.
There is no known analog of this result for parameterized complexity.
One of the problems is that all known proofs of Toda's Theorem 
filter through probabilistic classes. Whilst there are known  
analogs of Valiant's Theorem (Downey, Fellows and Regan \cite{DowneyFR98para}, and M{\"u}ller
\cite{Muller06rand,Muller08para}), there is no known method of ``small'' probability
amplification. (See Montoya \cite{Montoya08thep} for a thorough discussion of this problem.)
This seems the main problem and there is really no 
satisfactory treatment of  probability amplification
in parameterized complexity.
For example, suppose we wanted an analog of the operator  calculus for 
parameterized complexity.  For example,
consider ${\sf BP}\cdot\oplus {\sf W}[{\sf P}]$, as an analog
of ${\sf BP}\cdot \oplus {\sf P}$. We can define $L\in \oplus {\sf W}[{\sf P}]$
to mean that  $(x,k)\in L$ iff 
$(x,k)\mapsto (x',k')$ where $x'$ is a circuit with (e.g.)
$k\log |x'|$ inputs and the number of accepting 
inputs is odd. We need that
there is a language $L\in {\sf BP}\cdot {\mathcal C}$
iff there is a language $L'\in {\mathcal C}$ such that for all $(x,k)$,
$$(x,k)\in L\mbox{ iff a randomly chosen }(x,k,k')\in L'.$$
A problem will appear as soon as we try to prove the 
analog basic step in Toda's Theorem: ${\sf W}[{\sf P}]\subseteq {\sf BP}\cdot \oplus {\sf W}[{\sf P}].$ 
The first step in the usual proof of Toda's Theorem,  which {\em can}
be emulated,  is to use  some kind of random hashing 
to result in  a $\oplus {\sf W}[{\sf P}]$ circuit with either 
no accepting inputs, or exactly one accepting input, the latter with {\em nonzero}
probability. So far things work out okay.
However, the  next step in the usual proof is 
to amplify this probability: that is, do this a polynomial number of times
independently 
to amplify by majority to get the result in the ${\sf BP}$ class.
The problem is that if this amplification uses
\emph{\sl many} independently chosen instances,
then the number of input variables goes up and the 
the result is no longer a ${\sf W}[{\sf P}]$ circuit since 
we think of this 
as a polynomial sized circuit with {\em only $k\log n$ many inputs.}
There is a fundamental question: {\em Is it possible to 
perform amplification with only such limited nondeterminism?}

Notable here are the following:

\begin{theorem}[Downey, Fellows and Regan \cite{DowneyFR98para}]
For all $t$, there is a randomized {\sf FPT}-reduction from
${\sf W}[t]$ to unique ${\sf W}[t]$. (Analog of of the Valiant-Vaziarini 
Theorem)\end{theorem}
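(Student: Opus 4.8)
The plan is to transplant the Valiant--Vazirani isolation lemma into the weighted, bounded-weft setting. By the completeness of {\sc Weighted $t$-Normalized Sat} for ${\sf W}[t]$ stated above, it suffices to exhibit a randomized \FPT-reduction from this single problem to its unique-promise variant, in which one is additionally promised that at most one weight-$k$ assignment satisfies the formula; composing this with the (deterministic) \FPT-reduction of an arbitrary $L\in{\sf W}[t]$ into {\sc Weighted $t$-Normalized Sat} then yields the full statement. First I would record the elementary but decisive counting bound: a $t$-normalized formula on $n$ variables has at most $\binom{n}{k}\le n^{k}$ satisfying assignments of weight $k$, so the base-$2$ logarithm of the number of solutions is at most $k\log_2 n$. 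This is exactly the ``$k\log n$'' regime that recurs throughout parameterized circuit complexity, and it is what makes a hashing argument feasible here: we only ever need to contract the solution space by a factor whose logarithm is $O(k\log n)$, rather than $O(n)$ as in the classical theorem.

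Second, following Valiant and Vazirani, I would fix a pairwise independent family of hash maps sending weight-$k$ assignments to $\{0,1\}^{i}$, draw a scale $i\in\{0,1,\dots,\lceil k\log_2 n\rceil\}$ and a hash $h$ from the family at random, and form the new instance by conjoining to the formula the requirement that the surviving assignment hash to $0$. The standard second-moment computation then shows that, conditioned on the formula being satisfiable and on $i$ being the scale closest to $\log_2 s$ (where $s$ is the true number of weight-$k$ solutions), the augmented formula has exactly one weight-$k$ satisfying assignment with probability bounded below by a constant; an unsatisfiable formula stays unsatisfiable for every choice of $h$ and $i$, since we only ever add constraints. As the number of candidate scales is $O(k\log n)$, a correct scale is hit with probability $\Omega(1/(k\log n))$, giving the required inverse-polynomial success probability, with the new parameter $k'$ a fixed function of $k$.

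The hard part---and the entire content beyond the classical argument---is realizing the hash constraint ``$h(x)=0$'' as a weft-$t$, bounded-depth formula of size $f(k)\cdot n^{O(1)}$ \emph{without} raising the weft above $t$. The naive implementation of an affine hash over $GF(2)$ introduces parity gates of unbounded fan-in over all $n$ variables, and parity is precisely what bounded-depth circuits cannot compute; a direct encoding would therefore either violate the depth/weft budget or balloon to size $n^{\Omega(k)}$, which is not \FPT. The key is to exploit the weight-$k$ promise so that each hash constraint is evaluated only against the $k$ active coordinates. Concretely, I would augment the instance with auxiliary selection variables---of the same ``column/clique'' flavour as the enforcer gadget used above in the {\sc Dominating Set} reduction---that guess an encoding of $h$ applied to the chosen $k$-subset and then verify consistency through gates whose fan-in is bounded in terms of $k$ alone, with pigeonhole-style enforcers forcing the guessed bits to encode the genuine hash value of the selected assignment. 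Arranging these verification gadgets so that they contribute \emph{no additional large gate} on any input-to-output path is the delicate point on which the whole reduction turns: here the weft of the target level $t$ must be respected exactly rather than merely up to an additive constant, since an extra large-gate layer would land the output in ${\sf W}[t+1]$ instead of ${\sf W}[t]$. Once this weft-preserving isolation gadget is in hand, the counting bound and the pairwise-independence analysis of the previous paragraphs combine to give the randomized \FPT-reduction, and hence the parameterized analog of the Valiant--Vazirani theorem.
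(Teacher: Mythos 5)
First, a point of reference: the survey states this theorem only as a citation to Downey, Fellows and Regan \cite{DowneyFR98para} and gives no proof, so there is no in-text argument to compare against. Measured against the known argument, your high-level plan is the right one: pass to the complete problem {\sc Weighted $t$-Normalized Sat}, observe that the weight-$k$ solution space has size at most $\binom{n}{k}\le n^{k}$ so that Valiant--Vazirani hashing only needs scales up to $k\log n$, and isolate by conjoining a random hash constraint. The probabilistic half of your sketch (constant isolation probability at the correct scale, overall success $\Omega(1/(k\log n))$, preservation of unsatisfiability because only constraints are added) is sound and essentially classical.

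The genuine gap is the one you name yourself and then do not close: the entire non-classical content of the theorem is the weft-preserving realization of ``$h(x)=0$,'' and your proposal stops exactly where that construction must begin. Saying that auxiliary selection variables of ``column/clique flavour'' should be arranged so as to ``contribute no additional large gate'' restates the proof obligation rather than discharging it. Concretely, three things are left open. (i) The auxiliary variables must carry $\Theta(k\log n)$ bits of information (the hash computation on the selected $k$-subset), yet the new parameter $k'$ must depend on $k$ alone; this forces something like the $k\log n$ trick (blocks of $n$ variables with exactly one true, each of weight $1$ but encoding $\log n$ bits), which you never invoke, and whose ``exactly one per block'' enforcement itself costs depth and weft. (ii) Uniqueness must survive the lift: every isolated weight-$k$ solution of the original formula must extend to \emph{exactly one} weight-$k'$ solution of the constructed formula, or the isolation achieved downstairs is destroyed upstairs; nothing in your sketch pins the auxiliary variables down uniquely. (iii) Verifying the hash value against the selected indices is a parity/arithmetic predicate on $\Theta(k\log n)$ bits, and under the survey's definition even a gate of fan-in bounded by a function of $k$ is a \emph{large} gate, so your claim that fan-in ``bounded in terms of $k$ alone'' is harmless needs the nontrivial normalization lemmas for weft-$t$ circuits -- uniformly in $t$, including the most constrained case $t=1$. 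Until that gadget is exhibited and its weft accounted for path by path, what you have is a correct plan, not a proof.
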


\begin{theorem}[M{\"u}ller \cite{Muller08vali}]
${\sf W}[{\sf P}]\cdot {\sf BPFPT}$ (an analog of ${\sf BPP}$) has 
weakly uniform derandomization (``weakly uniform'' is an 
appropriate technical condition) iff 
there is a polynomial time computable unbounded function $c:
{\mathbb N}\to {\mathbb N}$
with ${\sf BPP}(c)={\sf P}$, where ${\sf BPP}(c)$
denotes {\sf BPP} with only access to $c(n)\log n$ nondeterministic bits.
\end{theorem}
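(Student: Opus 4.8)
The plan is to prove both directions of the biconditional by translating between the parameterized world of bounded nondeterminism that defines ${\sf W}[{\sf P}]$ and the classical world of bounded randomness. The starting point is the circuit characterization underlying ${\sf W}[{\sf P}]$: by the earlier {\sc Short Circuit Sat} theorem, membership in ${\sf W}[{\sf P}]$ is witnessed by polynomial-sized circuits with only $k\log n$ inputs, so its probabilistic analog ${\sf W}[{\sf P}]\cdot{\sf BPFPT}$ is exactly the setting in which the guessed bits are limited to $k\log n$ many. First I would fix a precise bookkeeping of this correspondence, making the parameter $k$ on the parameterized side play the role of an unbounded function on the classical side: a parameterized computation reading $k\log n$ random bits on inputs of length $n$ corresponds, slice by slice, to a classical ${\sf BPP}(c)$ computation in which the number of random bits used on length-$n$ inputs is $c(n)\log n$. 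This is precisely the miniaturization bridge between subexponential and parameterized complexity invoked earlier (following Chen and Grohe), specialized to the randomized operator.

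For the direction ``derandomization $\Rightarrow {\sf BPP}(c)={\sf P}$'', I would take the weakly uniform derandomization of ${\sf W}[{\sf P}]\cdot{\sf BPFPT}$ and read off a classical derandomizing algorithm. The weak uniformity condition is exactly what permits the family of per-slice derandomizations to be assembled into one uniform classical procedure; the resource usage of that procedure then defines the unbounded, polynomial-time computable function $c$, and the correspondence above converts the parameterized derandomization into a proof that every ${\sf BPP}(c)$ computation can be simulated deterministically in polynomial time, i.e. ${\sf BPP}(c)={\sf P}$. The unboundedness of $c$ is inherited from the fact that the parameter $k$ ranges over all of $\mathbb{N}$, and polynomial-time computability of $c$ is guaranteed by the uniformity clause.

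For the converse, I would run the translation backwards: given an unbounded polynomial-time computable $c$ with ${\sf BPP}(c)={\sf P}$, take an arbitrary problem in ${\sf W}[{\sf P}]\cdot{\sf BPFPT}$, realize its $k$-th slice as a ${\sf BPP}(c)$ computation by matching the $k\log n$ random-bit budget to the $c(n)\log n$ budget (restricting attention to input lengths $n$ with $c(n)\ge k$, which exist for every $k$ since $c$ is unbounded), and apply the hypothesized deterministic simulation. Collecting these simulations across slices, and using the polynomial-time computability of $c$ to locate the appropriate input lengths, yields exactly a weakly uniform derandomization of the parameterized class.

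The hard part will be making the two notions of ``limited resource'' line up faithfully and handling the order of quantifiers. On the parameterized side the parameter $k$ is a single number fixed for a whole slice, whereas $c(n)$ is a function of the input length; bridging these requires care about which input lengths $n$ carry enough budget ($c(n)\ge k$) and about re-expressing the per-slice error probability so that no probability amplification is needed, since, as noted just above, such amplification would illegitimately increase the number of guessed bits and destroy the ${\sf W}[{\sf P}]$ structure. Calibrating the exact ``weakly uniform'' hypothesis so that it is strong enough to drive the forward direction yet weak enough to be delivered by the backward direction is the delicate point on which the whole equivalence turns.
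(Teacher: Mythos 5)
This theorem is quoted in the survey from M\"uller's paper \cite{Muller08vali} and no proof is given here, so there is nothing in the text to compare your argument against line by line; I can only judge the proposal on its own terms. Your high-level intuition is the right one --- the equivalence does rest on matching a budget of $k\log n$ guessed bits per slice against a budget of $c(n)\log n$ random bits per input length --- but as written the proposal is an outline of the statement rather than a proof, because the two places where the real work happens are both deferred.

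First, the forward direction. A derandomization of a parameterized class is allowed $f(k)\cdot n^{O(1)}$ time with $f$ arbitrary; to extract ${\sf BPP}(c)={\sf P}$ you must choose $c$ to grow slowly enough that $f(c(n))$ is polynomial in $n$, i.e.\ $c$ must be (roughly) an inverse of $f$, and for $c$ to be polynomial-time computable and unbounded you need $f$ to be suitably time-constructible. That inversion step is the entire content of the ``weakly uniform'' hypothesis, and your proposal never performs it: you write that ``the resource usage of that procedure then defines the unbounded, polynomial-time computable function $c$,'' which asserts the conclusion rather than deriving it, and you explicitly concede at the end that calibrating the uniformity condition ``is the delicate point on which the whole equivalence turns.'' A proof of an iff whose statement contains an undefined technical condition must begin by fixing that condition; leaving it open means neither direction can actually be checked. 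Second, in the backward direction you restrict to input lengths $n$ with $c(n)\geq k$; a derandomization must handle \emph{all} inputs, and the lengths with $c(n)<k$ need a separate argument (for fixed $k$ they can be brute-forced over all $n^{k}$ seeds in time bounded by a function of $k$ alone, but this has to be said, and it requires $c$ to be monotone or at least that the bad lengths form a finite set for each $k$). Until the uniformity condition is pinned down and the $f\leftrightarrow c$ inversion is carried out explicitly, the proposal is a correct description of what the theorem says, not a proof of it.
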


Moritz M{\"u}ller's result says that, more or less, 
parameterized derandomization implies nontrivial
classical derandomization.
Other interesting work   on randomization in parameterized complexity
is to be found in the work of M{\"u}ller.
For instance, in \cite{Muller06rand}, he showed that
there is a Valiant-Vaziarini type lemma for 
most ${\sf W}[t]$-complete problems, including e.g.\
{\sc Unique Dominating Set}. 
The only other work in this area is
in the papers of Montoya, such as \cite{Montoya08thep}.
In \cite{Montoya08thep}, Montoya showed that it
is in a certain sense unlikely that BP$\cdot${\sf W}[{\sf P}], an analogue of the
classical Arthur-Merlin class, allows probability amplification.
(That is, amplification with $k\log n$ bits of nodeterminism
is in a sense unlikely.)
Much remains to do here.

Perhaps due to the delicacy of material, or because of the 
focus on practical computation, there is only a little work on 
what could be called parameterized \emph{structural} complexity.
By this we mean analogs of things like Ladner's Theorem (that if
$P\ne NP$ then there are sets of intermediate complexity)
(See \cite{DowneyF93fixe}), Mahaney's 
Theorem that if there is a sparse $NP$ complete set then $P=NP$ (Cesarti
and Fellows \cite{CesatiF96spar},
the PCP Theorem, Toda's theorem etc. 
There is a challenging research agenda here.

%This seems related to the material of the next section.

\subsection{Parameterized approximation}
\label{parapp}
One other fruitful area of research has been the area of {\em parameterized 
approximation},
beginning with three papers 
at the same conference! (Cai and  Huang
\cite{CaiH06fixe}, Chen, Grohe and Gr{\"u}ber \cite{ChenGG06onpa},
and Downey, Fellows and McCartin
\cite{DowneyFM06para}). 
Parameterized approximation was  part of the folklore for 
some time originating with 
the dominating set question 
originally asked by Fellows.
For parameterized approximation,
 one inputs a problem and asks for either a solution of size 
$f(k)$ or a statement that there is no solution of size $k$. This idea
was originally suggested by Downey and Fellows, inspired by earlier work
of Robertson and Seymour on approximations to treewidth. 
Of course, we need to assume that $\FPT\ne {\sf W}[1]$ for this to  
make sense.
A classical example taken from Garey and Johnson 
\cite{GareyJ79comp}
is {\sc Bin Packing} where the First Fit algorithm  
either says that no packing of size $k$ exists or gives one 
of size at most $2k$.
As observed by Downey, Fellows and McCartin~\cite{DowneyFM06para} 
most ${\sf W}[t]$ hard problems do not have approximations with an additive 
factor (i.e.  $f(k)=k+c$) unless ${\sf W}[t]={\sf FPT}.$ One surprise from that paper is the following.

\begin{theorem}[Downey, Fellows, McCartin \cite{DowneyFM06para}] The problem
{\sc Independent Dominating Set} which asks if there is a dominating set
of size $k$ which is also an independent set, has no 
parameterized  approximation algorithm for {\em any} computable function
$f(k)$ unless ${\sf W}[2]=\FPT.$\end{theorem}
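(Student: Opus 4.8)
The plan is to derive a contradiction with the assumption ${\sf W}[2]\neq\FPT$ by showing that a hypothetical \FPT\ approximation algorithm for \textsc{Independent Dominating Set} would decide the ${\sf W}[2]$-complete problem \textsc{Dominating Set} in \FPT-time. Concretely, suppose there were a computable function $f$ and an \FPT-algorithm $\mathcal{A}$ that, on input $(H,\ell)$, either outputs an independent dominating set of $H$ of size at most $f(\ell)$, or correctly reports that $H$ has no independent dominating set of size at most $\ell$. From a \textsc{Dominating Set} instance $(G,k)$ with $V(G)=\{v_1,\dots,v_n\}$, I would build an \textsc{Independent Dominating Set} instance $(H,\ell)$, with $\ell=h(k)$ for a computable $h$, enjoying a large gap: if $G$ has a dominating set of size at most $k$, then $H$ has an independent dominating set of size at most $\ell$; and if $G$ has no dominating set of size at most $k$, then every independent dominating set of $H$ has size strictly greater than $f(\ell)$.

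Given such a reduction, running $\mathcal{A}$ on $(H,\ell)$ decides $(G,k)$. In the YES case a solution of size at most $\ell$ exists, so $\mathcal{A}$ cannot legitimately report ``no set of size $\leq\ell$'' and must output a witness of size at most $f(\ell)$; in the NO case no solution of size at most $f(\ell)$ exists, so $\mathcal{A}$ cannot output a witness and must report ``no''. Reading off which branch occurs tells us whether $(G,k)$ is a YES or NO instance. Since the reduction runs in \FPT-time (its blow-up factor will depend only on $k$, as explained below) and $\mathcal{A}$ is \FPT, \textsc{Dominating Set} would lie in \FPT, forcing ${\sf W}[2]=\FPT$. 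The crucial point is that the reduction is allowed to depend on the one fixed function $f$: we assume an approximation algorithm for a \emph{single} computable $f$ and tailor the gadget to that $f$, which is exactly what defeats every computable $f$ in turn.

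For the construction I would start from a faithful ``selection'' component encoding a choice of $k$ vertices of $G$ by an independent set, together with ``checker'' vertices $c_1,\dots,c_n$, where $c_i$ is adjacent exactly to the selectors corresponding to the closed neighbourhood $N_G[v_i]$, so that $c_i$ is dominated precisely when the chosen set dominates $v_i$ in $G$. The two recurring subtleties are \emph{independence} (the chosen selectors must be pairwise non-adjacent, and no ``cheating'' vertex may be silently substituted for a genuine selection) and \emph{maximality} (the honest independent set must actually dominate all of $H$, so that it is a bona fide independent dominating set), both of which I would enforce with local enforcer vertices attached only to each gadget. This base construction yields only a constant gap; to push the NO-case minimum above $f(\ell)$ I would append a \emph{penalty} component, namely an independent set $W$ of $M=f(h(k))+1$ vertices wired so that a valid size-$k$ selection dominates all of $W$ at no extra cost, whereas any failure to dominate some $v_i$ propagates into being forced to take $M$ vertices of $W$ into the solution. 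As $M$ is a function of $k$ alone, $|H|$ remains bounded by a polynomial in $|G|$ times a function of $k$, so the reduction stays \FPT.

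The main obstacle is precisely the design and verification of this gap-amplifying gadget. One must simultaneously (i) make the honest solution mirroring a dominating set of $G$ both independent and maximal, so it is a genuine independent dominating set of size at most $\ell$; (ii) guarantee that in the NO case no independent dominating set can avoid the penalty, so that every solution has size greater than $f(\ell)$; and (iii) rule out all unintended cheap maximal independent sets not coming from dominating sets of $G$. I expect step (iii) to be the delicate one, because independent \emph{domination} (unlike ordinary domination) forces reasoning about maximal independent sets, where inserting a single extra vertex can flip feasibility; the enforcers must be shown to leave no ``shortcut'' maximal independent set. Once the gadget is proved correct, the distinguishing argument above immediately yields ${\sf W}[2]=\FPT$, completing the proof.
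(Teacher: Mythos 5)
Your overall strategy is the right one and is essentially the argument of Downey--Fellows--McCartin: the survey states this theorem without proof, and the cited source proves it by exactly the kind of gap-creating \FPT-reduction from \textsc{Dominating Set} you describe, exploiting the fact that an independent dominating set is a maximal independent set, so that a domination failure can be made to force an enormous number of otherwise-undominated vertices into any feasible solution. Two points where your sketch should be tightened. First, the selection component is most cleanly realized as $k$ pairwise disjoint cliques $V_1,\dots,V_k$, each a copy of $V(G)$ with no edges between copies: independence then automatically limits any solution to at most one selector per clique, and a chosen selector dominates its entire clique, which disposes of most of your worry (iii) without extra enforcers. Second, and this is the genuine imprecision, a \emph{single} global penalty set $W$ adjacent to all selectors cannot work as literally described: as soon as any one selector is chosen, all of $W$ is dominated, so $W$ cannot detect the failure to dominate a \emph{particular} vertex $v_i$ of $G$. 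You need a separate independent penalty set $W_i$ of size $M=f(h(k))+1$ for each $v_i\in V(G)$, joined exactly to the copies of $N_G[v_i]$ in every clique (i.e., your checkers $c_i$ each blown up into an independent set of size $M$). Then in the NO case some $v_i$ is undominated by the at most $k$ selectors, every vertex of $W_i$ has no neighbour in the solution and hence must itself belong to the solution, giving size greater than $f(\ell)$; and the degenerate solutions you fear, such as taking all of $\bigcup_i W_i$, are indeed maximal independent sets but have size far exceeding $f(\ell)$, so they cause no harm. With that repair the distinguishing argument you give goes through and yields ${\sf W}[2]=\FPT$ as claimed.
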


Subsequently, Eickmeyer,  Grohe  Gr{\"u}ber showed the following in \cite{EickmeyerGG08appr}.

\begin{theorem}
If $L$ is a ``natural'' ${\sf W}[{\sf P}]$-complete language
then $L$ has no 
parameterized approximation algorithm unless 
${\sf W}[{\sf P}]={\sf FPT}.$ \end{theorem}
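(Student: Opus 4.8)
The plan is to prove the result first for the prototypical natural minimisation problem, namely {\sc Weighted Circuit Sat} in its optimisation form (``find a satisfying assignment of \emph{minimum} weight''), and then to lift the argument to an arbitrary natural ${\sf W}[{\sf P}]$-complete $L$ through the reduction that witnesses its naturalness. Recall that a parameterized approximation algorithm $\mathcal{A}$ for such a minimisation problem, on input $(C,k)$, must either output a satisfying assignment of weight at most $f(k)$, or correctly report that no satisfying assignment of weight at most $k$ exists. Since {\sc Weighted Circuit Sat} is ${\sf W}[{\sf P}]$-complete (this is exactly the problem defining ${\sf W}[{\sf P}]$, and, via the ``$k\log n$'' trick, \FPT-equivalent to {\sc Short Circuit Sat}), it suffices to show that such an $\mathcal{A}$ would let us decide, in \FPT\ time, whether $C$ has a satisfying assignment of weight at most $k$; this collapses ${\sf W}[{\sf P}]$ to \FPT.

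The engine of the proof is a weight-neutral threshold gadget. Given $(C,k)$, I would build $C' = C \wedge T_{\le k}$, where $T_{\le k}$ is a polynomial-size sub-circuit, assembled from \emph{and}/\emph{or}/\emph{not} gates only and introducing \emph{no new input variables}, that evaluates to true on an assignment precisely when at most $k$ of the input variables are set to true (a popcount via a tree of binary adders, compared against the constant $k$). Because $T_{\le k}$ contributes no weight-bearing inputs, the Hamming weight of an assignment to $C'$ equals its weight in $C$; hence if the minimum weight of a satisfying assignment of $C$ is at most $k$ then $C'$ has the \emph{same} optimum $\le k$, while if that minimum exceeds $k$ then $C'$ is \emph{unsatisfiable}. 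The decisive consequence is an all-or-nothing gap: every instance $C'$ produced this way has optimum either at most $k$ or infinite, so \emph{nothing} lies in the approximation window $(k, f(k)]$. Running $\mathcal{A}$ on $(C',k)$ therefore decides the original question regardless of $f$: in the satisfiable case ``no weight-$\le k$ solution'' is false, so $\mathcal{A}$ is forced to return an assignment; in the unsatisfiable case no assignment exists at all, so $\mathcal{A}$ must report ``no weight-$\le k$ solution.'' This yields an \FPT\ decision procedure for {\sc Weighted Circuit Sat}, and hence ${\sf W}[{\sf P}]=\FPT$. Note that $f$ never enters the construction; the argument beats \emph{every} computable $f$ precisely because the gadget empties the approximation window. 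Note also that $T_{\le k}$ is a deep comparison circuit of unbounded weft, which is exactly why the argument is native to ${\sf W}[{\sf P}]$ rather than to a bounded level of the ${\sf W}$-hierarchy.

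To reach a general natural ${\sf W}[{\sf P}]$-complete $L$, I would compose the threshold gadget with the reduction supplied by naturalness. The substantive content of ``natural'' is that the \FPT-reduction from {\sc Weighted Circuit Sat} to $L$ is cost-respecting: it carries the optimum in a controlled way, sending optimum-$\le k$ instances to instances of $L$ whose optimum is bounded by some $k^{*}=g(k)$, and sending unsatisfiable (over-budget) instances to instances of $L$ whose optimum stays above that bound (ideally infeasible). Pushing the all-or-nothing gap through this reduction preserves the emptiness of the approximation window for $L$, so the same decision argument applies to $L$. The main obstacle is this final transfer rather than the gadget itself: one must check that the definition of ``natural'' is robust enough that the images of infeasible instances land strictly \emph{outside} the window $(k^{*}, f(k^{*})]$ simultaneously for every computable $f$, i.e.\ that naturalness forces an exact (feasible-versus-infeasible) dichotomy and not merely a large-but-finite gap. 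Making that dichotomy precise, and verifying that the standard natural problems (such as {\sc Minimum Axiom Set}) satisfy it, is where the real care is required; the circuit-side construction of the weight-neutral threshold is by comparison routine.
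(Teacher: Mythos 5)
Your first step is sound: for {\sc Weighted Circuit Sat} over unrestricted circuits, conjoining a weight-threshold subcircuit $T_{\le k}$ does empty the approximation window, and the resulting decision procedure is correct. But that is the easy, essentially folklore half of the story, and the genuine gap is in the second step --- which you flag as needing ``real care'' but whose difficulty you invert. The survey does not prove this theorem itself; it cites Eickmeyer, Grohe and Gr{\"u}ber \cite{EickmeyerGG08appr}, where ``natural'' is a structural condition on the problem (solutions are subsets of the instance's universe, cost is cardinality, the standard parameterization is ${\sf W}[{\sf P}]$-complete). It is \emph{not} the assertion that $L$ is complete under cost- or gap-preserving reductions, so you cannot simply import the dichotomy: an \FPT\ reduction witnessing completeness of the \emph{decision} problem only guarantees that over-budget instances map to instances with optimum exceeding $g(k)$, and gives no control over whether those optima land inside $(g(k), f(g(k))]$. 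The emptiness of the window is destroyed the moment you leave the circuit world.

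Worse, for the paradigmatic natural ${\sf W}[{\sf P}]$-complete problems --- {\sc Minimum Axiom Set}, weighted \emph{monotone} circuit satisfiability, and their relatives --- the solution space is closed under supersets: any superset of an axiom set is an axiom set, and any superset of a satisfying assignment of a monotone circuit is again satisfying. Every feasible instance of such a problem therefore has solutions of every cardinality between the optimum and the size of the universe, so the window $(k, f(k)]$ is \emph{never} empty, and no reduction into such a problem can manufacture the all-or-nothing gap your argument requires (note that your own gadget $T_{\le k}$ is inherently non-monotone, which is exactly the symptom). This is why \cite{EickmeyerGG08appr} is a nontrivial paper: the heart of their proof is a separate and considerably more involved argument for the monotone and antimonotone cases, where an exact \FPT\ algorithm must be extracted from an approximation algorithm without ever creating a gap. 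As it stands, your proposal establishes the theorem only for the one problem where it is easy and leaves the actual content of the statement untouched.
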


The notion of ``natural'' here is indeed quite natural and covers all of the 
known ${\sf W}[{\sf P}]$-complete problems, say, in the appendix 
of Downey and Fellows \cite{DowneyF99para}. We refer the reader to \cite{EickmeyerGG08appr}
for more details.

One open question asks whether there is 
any multiplicative \FPT\ approximation for {\sc Dominating Set}.
This question of Mike Fellows
has been open for nearly 20 years and asks in its oldest incarnation
whether there is an algorithm which, on input $(G,k)$ either says 
that there is no size $k$ dominating set, or that there is one of size $2k$.

\subsection{Limits on kernelization}
\label{limker}

There has been important recent work concerning limitations of 
parameterized techniques. One of the most important techniques is that 
of {\em kernelization.} We will see in Subsection \ref{kernelization}
 that this is one of the basic techniques 
of the area, and remains one of the most practical techniques
since usually kernelization is based around simple reduction rules
which are both local and easy to  implement.
(See, for instance 
Abu-Khzam, et. al. \cite{Abu-KhzamCFLSS04kern}, Flum and Grohe \cite{FlumG04para},
or Guo and Niedermeier \cite{GuoN07invi}.)
The idea, of course, is that we shrink the 
problem using some polynomial time reduction to a small one whose size depends only on the parameter, and then do exhaustive  search on that kernel.
Naturally the latter step is the most time consuming and 
the problem becomes how to find  small kernels.
An important question is therefore: When is it possible to show that a problem
has {\em no polynomial kernel?}
A formal definition of kernelization is the following:

\begin{definition}[Kernelization]
\label{Definition:Kernelization}%
A \emph{kernelization algorithm}, or in short, a \emph{kernel} for a
parameterized problem $L \subseteq \Sigma^* \times \mathbb{N}$ is an
algorithm that given $(x,k) \in \Sigma^* \times \mathbb{N}$, outputs
in $p(|x|+ k)$ time a pair $(x',k') \in \Sigma^* \times \mathbb{N}$
such that
\begin{itemize}
\item $(x,k)\in L \Leftrightarrow (x',k')\in L$,
\item $(|x'|,k') \leq f(k)$,
\end{itemize}
where $f$ is an arbitrary computable function, and $p$ a polynomial.
Any function $f$ as above is referred to as the \emph{size} of the
kernel. 
We frequently use the term ``kernel'' for the outputs $(|x'|,k')$ of the 
kernelization algorithm and, in case $f$ is a polynomial (resp. linear) function, we say that we have a {\em polynomial  (resp. linear) kernelization algorithm} or, 
simply, a {\em polynomial (resp. linear) kernel}.
\end{definition}

Clearly, if ${\sf P}={\sf NP}$, then all problems have constant size kernels so some kind of complexity theoretical hypothesis is needed to show that something does not have a 
small kernel.
The key thing that the reader should realize is that 
the reduction to the kernel is a \emph{polynomial time} reduction for 
both the problem \emph{and} parameter, and not an FPT algorithm.
Limiting results on kernelization 
mean that this often used practical method cannot be used to get FPT algorithms.
Sometimes showing that problems do not have small kernels (modulo 
some hypothesis) can be extracted
from work on approximation, since a small kernel is, itself, an approximation.
Thus, whilst we know of a $2k$ kernel for $k$-{\sc Vertex Cover}~\cite{NemT75vert}, we know that
unless ${\sf P}={\sf NP}$,
we cannot do better than size $1.36\cdot k$
using Dinur and Safra \cite{DinurS02thei}, 
since the PCP theorem 
provides a 1.36-lower bound on approximation (see also~\cite{GuoN07invi}).

To show that certain problems do not have 
polynomial kernels modulo a reasonable hypothesis,
we will need the following
definition, which is in classical complexity. 
(It is similar to another definition of Harnik and Noar \cite{HarnikN06onth}).

\begin{definition}[Or-Distillation]
\label{Definition: Distillation}%
An \emph{Or-distillation algorithm} for a classical problem $L \subseteq
\Sigma^*$ is an algorithm that
\begin{itemize}
\item receives as input a sequence
$(x_1, \ldots, x_t)$, with $x_i \in \Sigma^*$ for each $1\leq i\leq
t$,
\item uses time polynomial in $\sum_{i=1}^t |x_i|$,
\item and outputs a string $y \in \Sigma^*$ with
\begin{enumerate}
\item $y \in L \iff x_i \in L$ for some $1\leq i\leq t$.
\item $|y|$ is polynomial in $\max_{1\leq i\leq t} |x_i|$.
\end{enumerate}
\end{itemize}
\end{definition}

We can similarly define {\em And-distillation} by replacing the second last
item by ``$y \in L \iff x_i \in L$ for {\em all} $1\leq i\leq t$.''
Bodlaender, Downey, Fellows and Hermelin \cite{BodlaenderDFH09onpr}
showed that an {\sf NP}-complete problem has an Or-distillation
algorithm iff all of them do.
On general Kolmogorov complexity grounds, it seems very unlikely
that {\sf NP}-complete problems have either distillation algorithms.  
Following a suggestion of 
Bodlaender, Downey, Fellows and Hermelin,
Fortnow and Santhanam related Or-distillation to 
the polynomial time hierarchy as follows.

\begin{lemma}[\cite{FortnowS11infe}]
\label{Lemma: No distillation for NP_complete}%
If any {\sf NP}-complete problem has an Or-distillation algorithm then
{\sf co}-${\sf NP}\subseteq {\sf NP}\backslash${\sf Poly} and hence 
%sed> POLY?
the polynomial time hierarchy 
collapses to 3 or fewer levels.
\end{lemma}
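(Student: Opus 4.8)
The plan is to prove the containment directly: assuming the \textsf{NP}-complete language $L$ admits an Or-distillation $D$, I would show that $\overline{L}\in\mathsf{NP}/\mathsf{poly}$. Since $L$ is \textsf{NP}-complete its complement $\overline L$ is \textsf{coNP}-complete, and $\mathsf{NP}/\mathsf{poly}$ is closed under polynomial-time many-one reductions; hence $\overline L\in\mathsf{NP}/\mathsf{poly}$ already yields $\mathsf{coNP}\subseteq\mathsf{NP}/\mathsf{poly}$. The last clause of the statement would then follow from the classical theorem of Yap (that $\mathsf{coNP}\subseteq\mathsf{NP}/\mathsf{poly}$ collapses the polynomial hierarchy to $\Sigma_3^p=\Pi_3^p$, i.e. to three levels), which I would invoke as a black box. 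So the whole burden is the membership $\overline L\in\mathsf{NP}/\mathsf{poly}$.

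To build the advice-taking nondeterministic machine, fix an input length $n$, let $m=p(n)$ be the distillation's output bound (crucially \emph{independent of the number $t$ of inputs}), and set $t$ to a suitably large polynomial in $n$, e.g. $t=m+1$. On input $x$ of length $n$, the machine would nondeterministically guess companion strings $w_1,\dots,w_{t-1}\in\Sigma^n$, run $D$ on the tuple $(x,w_1,\dots,w_{t-1})$ — costing only $\mathrm{poly}(tn)=\mathrm{poly}(n)$ time — to obtain a string $y$ of length at most $m$, and accept iff $y$ lies in an advice-supplied set $\mathcal S$ of strings of length $\le m$, each of which is a \textsf{No}-instance of $L$. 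Soundness is then automatic and requires \emph{no} verification of the guessed companions: if $x\in L$ then $x$ is a satisfied disjunct of the Or, so the distillation property forces $D(x,w_1,\dots,w_{t-1})\in L$ for every choice of companions, whereas $\mathcal S\subseteq\overline L$, so the acceptance test can never succeed on a \textsf{Yes}-instance. This is the pleasant half: the one-sided nature of Or-distillation rules out false acceptances for free.

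The real work, and the step I expect to be the main obstacle, is completeness with \emph{polynomially} bounded advice. For each \textsf{No}-instance $x$ I need at least one all-\textsf{No} companion tuple whose distillation lands inside $\mathcal S$, while keeping $|\mathcal S|$ polynomial in $n$. I would phrase this as a covering problem: the universe is the set $\overline L\cap\Sigma^n$ of \textsf{No}-instances, and each candidate output $o\in\overline L\cap\Sigma^{\le m}$ covers those $x$ admitting an all-\textsf{No} tuple containing $x$ that $D$ maps to $o$; the goal is a subfamily of polynomial size covering everything, and that subfamily becomes $\mathcal S$. The leverage is precisely that the output carries only $m$ bits while there are $\binom{|\overline L\cap\Sigma^n|}{t}$ all-\textsf{No} tuples, so $D$ is enormously many-to-one on the \textsf{No}-side; a counting/greedy argument should then extract polynomially many reference outputs that are jointly reachable from every \textsf{No}-instance. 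Making this extraction rigorous — showing that the short, $t$-independent output length genuinely compresses the \textsf{No}-side enough that a polynomial-size cover must exist — is the delicate combinatorial core, and it is exactly here that the hypothesis ``$|y|$ is polynomial in $\max_{1\le i\le t}|x_i|$'' with no dependence on $t$ does all the heavy lifting.
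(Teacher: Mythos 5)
You should know that the paper states this lemma without proof, attributing it to Fortnow and Santhanam; your plan reconstructs precisely their argument --- the same advice-taking nondeterministic machine, the same observation that the Or-semantics makes soundness free of any verification of the guessed companions, and the same covering formulation of completeness, with the reduction to $\mathsf{coNP}\subseteq\mathsf{NP}/\mathsf{poly}$ via completeness of $\overline{L}$ and the collapse via Yap's theorem both handled correctly. The single step you leave open is closed by exactly the pigeonhole your setup anticipates: with $t=m+1$ and $A=\overline{L}\cap\Sigma^{n}$, the $|A|^{t}$ all-\textsf{No} tuples map into fewer than $2^{m+1}$ possible outputs, so some output $y$ is the image of at least $|A|^{t}/2^{m+1}$ of them; every coordinate of every such tuple lies in the set $C_{y}$ of strings covered by $y$, so $|C_{y}|^{t}\ge |A|^{t}/2^{m+1}$, i.e.\ $|C_{y}|\ge |A|\cdot 2^{-(m+1)/t}\ge |A|/2$. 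Greedily choosing such a $y$, deleting $C_{y}$ from $A$, and iterating terminates after at most $n+1$ rounds, yielding an advice set of $O(n)$ strings each of length at most $m=p(n)$ --- polynomial, as required.
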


At the time of writing, there is no known 
version of  Lemma \ref{Lemma: No distillation for NP_complete}
for And-distillation and it remains an
important open question, whether {\sf NP}-complete problems
having And-distillation implies any classical collapse of the 
polynomial time hierarchy.
This material all relates to 
kernelization as follows.

\begin{definition}[Composition]
\label{Definition: Composition}%
An \emph{Or-composition algorithm} for a parameterized problem $L
\subseteq \Sigma^* \times \mathbb{N}$ is an algorithm that
\begin{itemize}
\item receives as input a sequence
$((x_1,k),\ldots,(x_t,k))$, with $(x_i,k) \in \Sigma^* \times
\mathbb{N}^+$ for each $1\leq i\leq t$,
\item uses time polynomial in $\sum_{i=1}^t |x_i|+k$,
\item and outputs $(y,k') \in \Sigma^* \times \mathbb{N}^+$ with
\begin{enumerate}
\item $(y,k') \in L \iff (x_i,k) \in L$ for some $1\leq i\leq t$.
\item $k'$ is polynomial in $k$.
\end{enumerate}
\end{itemize}
\end{definition}

Again we may similarly define {\em And-composition.} The key lemma relating the
two concepts is the following, which has an analogous 
statement for the And-distillation case:

\begin{lemma}[Bodlaender, Downey, Fellows and Hermelin \cite{BodlaenderDFH09onpr}]
\label{Lemma: Combining composition and kernelization}%
Let $L$ be a Or-compositional parameterized problem whose
unparameterized version $\widetilde{L}$ is {\sf NP}-comple\-te. If $L$ has a
polynomial kernel, then $\widetilde{L}$ is also Or-distillable.
\end{lemma}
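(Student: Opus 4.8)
The plan is to build an Or-distillation algorithm for $\widetilde{L}$ out of the Or-composition algorithm and the polynomial kernel for $L$, using a two-stage ``within-group / across-group'' combination. First I would fix the standard correspondence between the classical problem $\widetilde{L}$ and the parameterized problem $L$: a classical instance has the form $x\#1^k$ and belongs to $\widetilde{L}$ exactly when $(x,k)\in L$, so every classical instance of length $\le n$ decodes to a parameterized instance whose parameter satisfies $k\le n$. Given a distillation input $x_1,\dots,x_t$ with $n=\max_i|x_i|$, I decode each $x_i$ to $(x_i',k_i)$ with $k_i\le n$ and partition the instances into groups according to the value of $k_i$. Since every parameter lies in $\{1,\dots,n\}$, there are at most $m\le n$ nonempty groups, and crucially this bound does not depend on $t$.

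Within a single group all instances share a common parameter $k\le n$, so the Or-composition algorithm applies: it returns one instance $(y,k')$ with $k'$ bounded by a polynomial $p(k)\le p(n)$ and with $(y,k')\in L$ iff some instance of the group lies in $L$. The size of $y$ may still be polynomial in the total group size, hence in $t$, so I then feed $(y,k')$ to the polynomial kernel, obtaining $(y',k'')$ with $|y'|,k''\le f(k')\le f(p(n))$ and $L$-membership unchanged. Because $f$ and $p$ are polynomials, $g(n):=f(p(n))$ is a polynomial in $n$ alone, so each group yields a single parameterized instance of size $\le g(n)$ whose $L$-membership equals the Or over that group. This composition-then-kernelization step is exactly what removes the dependence on the number of instances $t$. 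Re-encoding $(y',k'')$ as the classical instance $y'\#1^{k''}$ gives, per group, one $\widetilde{L}$-instance $\hat y_j$ of length $\le 2g(n)+1=\mathrm{poly}(n)$ with $\hat y_j\in\widetilde{L}$ iff some $x_i$ in that group lies in $\widetilde{L}$.

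At this point I have at most $m\le n$ classical instances $\hat y_1,\dots,\hat y_m$, each of length $\mathrm{poly}(n)$, and I must produce a single $\widetilde{L}$-instance equivalent to their Or. Here I would invoke the {\sf NP}-completeness of $\widetilde{L}$: the language of tuples $\langle w_1,\dots,w_\ell\rangle$ such that some $w_j\in\widetilde{L}$ lies in {\sf NP} (guess the witnessing index and a certificate, using $\widetilde{L}\in{\sf NP}$), hence reduces in polynomial time to $\widetilde{L}$. Applying this reduction to $\hat y_1,\dots,\hat y_m$, whose total length is $m\cdot\mathrm{poly}(n)=\mathrm{poly}(n)$, produces a single $\widetilde{L}$-instance $y^\star$ of length $\mathrm{poly}(n)$ with $y^\star\in\widetilde{L}$ iff some $\hat y_j\in\widetilde{L}$, i.e.\ iff some $x_i\in\widetilde{L}$. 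Chasing the equivalences back through the kernel and the composition confirms correctness, every stage runs in time polynomial in $\sum_i|x_i|$, and the output length is $\mathrm{poly}(n)=\mathrm{poly}(\max_i|x_i|)$; this is precisely an Or-distillation of $\widetilde{L}$.

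The conceptual heart, and the step I expect to be the main obstacle to get right, is the separation of the two combination mechanisms and why each is indispensable. One cannot simply Or all $t$ instances through the {\sf NP}-completeness reduction, since that would make the output length polynomial in the total input length and thus grow with $t$, violating the distillation size bound. The role of composition-plus-kernelization is to collapse each group of arbitrarily many instances into one instance of size $\mathrm{poly}(n)$; the role of the bound ``number of groups $\le n$'' (coming from parameter $\le$ instance size) is to ensure that the surviving cross-group Or involves only polynomially many instances, so that the final {\sf NP}-completeness reduction stays within $\mathrm{poly}(n)$. Care must also be taken that composition's polynomial parameter blow-up $p(k)$ composes with the polynomial kernel size $f$ to a genuine polynomial $g(n)$ in $n$, which is what keeps all intermediate objects of size $\mathrm{poly}(n)$ and independent of $t$.
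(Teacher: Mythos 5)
Your proof is correct and follows essentially the same route as the argument of Bodlaender, Downey, Fellows and Hermelin: group the decoded instances by parameter value (at most $\max_i|x_i|$ groups), compose within each group, kernelize to shrink each composed instance to size polynomial in $\max_i|x_i|$, and finish with the {\sf NP}-completeness reduction for the OR of the remaining polynomially many instances. The two-stage structure and the reasons each stage is needed are exactly as in the original proof, so there is nothing to add.
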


%Actually,
%, Condition 2.  in the definition of 
%Or-composition can be weakened to
%``$k'$ is polynomial in
%$\sum_{i=1}^t |x_i|$'', saving the Lemma above.
%Also 
%the methododology extends to {\sf NP}-hard problems and 
Distillation
of one problem within another has also been
discussed in 
Chen, Flum and M{\"u}ller \cite{BodlaenderDFH09onpr}.

Using Lemma \ref{Lemma: Combining composition and kernelization},
 Bodlaender, Downey, Fellows and Hermelin \cite{BodlaenderDFH09onpr}
proved that a large class of graph-theoretic \FPT\ problems 
including $k$-{\sc Path}, $k$-{\sc Cycle}, various problems for 
graphs of bounded treewidth, etc., all have  no polynomial-sized 
kernels unless the polynomial-time hierarchy collapses to
three or fewer levels.

For And-composition, tying the distillation to something like 
the Fortnow-Santha\-nam material would be important since
it would say that important \FPT\ problems like
{\sc Treewidth} and {\sc Cutwidth} would likely not have polynomial size 
kernels, and would perhaps suggest why algorithms such 
as Bodlaender's linear time \FPT\ algorithm \cite{Bodlaender96alin}
(and other treewidth algorithms)
is so hard to run. 
Since the original  paper~\cite{BodlaenderDFH09onpr}, there 
have been a number of developments such as 
the Bodlaender, Thomass{\'e}, and Yeo \cite{BodlaenderTY08anal}
  use of reductions to extend the 
arguments above to wider classes of problems such as {\sc Disjoint Paths},
{\sc Disjoint Cycles}, and {\sc Hamilton Cycle Parameterized by Treewidth}.
In the same direction, Chen,  Flum and M{\"u}ller  \cite{ChenFM09lowe}
used this methodology to further explore the 
possible sizes of kernels.
 One fascinating application was by Fernau et. al. \cite{FernauFLRSV09kern}. 
They showed that
the following problem was in~\FPT, by showing a kernel of size $O(k^3)$.\\

\noindent{\sc Rooted $k$-OutLeaf Branching}\\
{\sl Instance:} A directed graph $D=(V,E)$ with exactly one vertex of indegree $0$
called the {\em root}.\\
{\sl Parameter:} A positive integer $k$.\\
{\sl Question:} Is there an oriented subtree $T$ of 
$D$ with exactly $k$ leaves spanning $D$?\\

However, for the {\em unrooted version}
they used the  machinery from ~\cite{BodlaenderDFH09onpr} to demonstrate that it 
has {\sl no} polynomial-size kernel unless some collapse occurs, but 
clearly by using $n$ independent versions of the rooted version 
the problem has a {\em polynomial-time Turing kernel}. We know of no method
of dealing with the non-existence of polynomial time Turing kernels and it seems an important programmatic question.

All of the material on lower bounds for Or distillation tend to 
use reductions to things like large disjoint unions of graphs having Or composition.
Thus, problems sensitive  to this seemed not readily approachable
using the machinery. Very recently, Stefan Kratsch \cite{Kratsch11cono} discovered a way around
this problem using the work Dell and van Melkebeek \cite{DellM10sati}.

\begin{theorem}[Kratsch \cite{Kratsch11cono}] The FPT problem $k$-{\sc Ramsey}
which asks if a graph $G$ has either an independent set or a clique of
size $k$,
has no polynomial kernel unless 
NP$\subseteq co-NP\backslash $poly.\end{theorem}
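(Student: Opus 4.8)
The plan is to rule out a polynomial kernel through the composition-and-distillation machinery recorded above. Recall that, by Lemma~\ref{Lemma: Combining composition and kernelization}, if $k$-{\sc Ramsey} admits an Or-composition and its unparameterized version $\widetilde{L}$ is {\sf NP}-complete, then a polynomial kernel would make $\widetilde{L}$ Or-distillable; by Lemma~\ref{Lemma: No distillation for NP_complete} (Fortnow--Santhanam) this forces {\sf co}-${\sf NP}\subseteq{\sf NP}\backslash${\sf Poly}, which is the (equivalent) hypothesis of the statement and collapses the polynomial hierarchy. So my first task is to verify that deciding whether a graph $G$ has a clique \emph{or} an independent set of size $k$ (with $k$ part of the input) is {\sf NP}-complete. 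This is already where Ramsey's theorem bites: the instances are only nontrivial when the number of vertices lies below the Ramsey bound $R(k,k)$, so any reduction must be engineered to produce graphs in that regime rather than sparse certificates.

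The immediate difficulty --- and the reason the standard recipe does not simply apply --- is that the usual Or-composition glues the input instances by a \emph{disjoint union}, and the disjoint union behaves asymmetrically for the two halves of the Ramsey question. A clique never spans two components, so $\bigsqcup_i G_i$ has a $k$-clique iff some $G_i$ does, faithfully isolating the clique side. But an independent set \emph{accumulates} across components: a single vertex from each of many components is already a large independent set, so the union is a trivial yes-instance on the independence side and destroys the intended equivalence. The complementary operation, the complete join, swaps the two roles, isolating independent sets but letting cliques accumulate. Neither operation alone controls both extremal parameters simultaneously, and this is precisely the obstruction flagged just before the statement.

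The way around it, following Kratsch \cite{Kratsch11cono} and building on Dell and van Melkebeek \cite{DellM10sati}, is to relax the composition to a \emph{co-nondeterministic} one. Rather than demanding a deterministic map whose output is a yes-instance exactly when some input is, I would allow the composing procedure universal branching and discharge the delicate direction by guessing a short witness that the composed graph contains neither a clique nor an independent set larger than what a single ``selected'' input forces. The Dell--van Melkebeek strengthening of the distillation argument --- phrased through oracle communication protocols --- accommodates exactly this relaxation: a polynomial kernel still yields a protocol of cost polynomial in the parameter, and a co-nondeterministic Or-composition of the right cost still drives the hierarchy to collapse. Concretely I would combine the $t$ instances by a controlled blow-up (or product) construction together with Ramsey-type padding, arranged so that the clique number and the independence number of the output are pinned down except for the contribution of the selected input, with the selection and the refutation of spurious structure carried out along the co-nondeterministic branches.

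The hard part will be the simultaneous quantitative control of both extremal quantities. I must design the gadget so that (i) the output parameter $k'$ stays polynomial in $k+\log t$, (ii) a size-$k'$ clique-or-independent-set appears in the composed graph precisely when one of the $t$ inputs is a yes-instance, and (iii) the \emph{absence} of the unwanted structure admits a certificate short enough to be guessed within the co-nondeterministic budget. Reconciling (i) and (ii) against the clique/independence asymmetry, while keeping the co-nondeterministic certificate of (iii) of polynomial size, is the delicate heart of the argument; once the gadget and its certificate are in place, the collapse follows essentially mechanically from the Dell--van Melkebeek framework.
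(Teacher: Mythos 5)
Your proposal matches the route the paper itself indicates for this result: you correctly identify the obstruction (disjoint union lets independent sets accumulate across components, and the complete join fails symmetrically for cliques), and you propose exactly the fix the paper attributes to Kratsch, namely a co-nondeterministic Or-composition analyzed through the Dell--van Melkebeek oracle-communication strengthening of the Fortnow--Santhanam distillation argument. The paper gives no more detail than you do --- the concrete composition gadget is deferred to \cite{Kratsch11cono} in both cases --- so your outline is faithful to the argument as presented, with the honest caveat (which you state yourself) that the quantitative gadget construction is the unproved heart of the matter.
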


The point here is that a large 
collection of disjoint graphs would have a large independent set, hence
new ideas were definitely needed.
The ideas in Kratsch's proof are quite novel and use co-nondetermism
in compositions and communication complexity in a quite novel way.
Perhaps they might allow us to attack other such problems.

Finally, we remark in passing that as an important spinoff 
of the question from parameterized complexity, 
the Fortnow-Santhanam Lemma has been used by
Burhmann and Hitchcock \cite{BuhrmanH08npha} to show that unless the 
polynomial-time hierarchy collapses,
{\sf NP}-hard languages must  
be exponentially dense
(meaning that hard instances must occur very often), perhaps suggesting a connection between 
parameterized complexity and density of hard instances.

For further techniques on lower bounds on kernels, see \cite{DomLS09inco}.

\subsection{Bounded parameterized complexity}
\label{boundedx}
Another direction exploring the fine structure of \FPT\ was taken by 
Flum, Grohe and Weyer \cite{FlumGW06boun}
 who suggested that the important problems
were those where the constants were small. If $L$ is \FPT\
then the question ``$(x,k)\in L?$'' is decidable in time $f(k)|x|^c$, for some computable function $f$. But as we have seen
$f$ could be anything. Flum, Grohe and Weyer argue that the correct 
classes to focus upon 
are those with membership algorithms of the form $2^{o(k)}|x|^c$,
$2^{O(k)}|x|^c$,
and  $2^{\mbox{\footnotesize poly}(k)}|x|^c.$ These classes are called 
{\sf SUBEPT}, {\sf EPT} and {\sf E{\sf XP}T}, respectively. 
As a group they are referred to as {\em bounded} \FPT.
Interestingly, the reductions may be different for the different classes,
because the idea is to give the maximum power possible to the reductions and yet still
remain within the class. 
For putting this idea in a  general framework, we 
mention Downey, Flum, Grohe and Weyer \cite{FlumGW06boun}.

Of course, any reduction $(x,k)\mapsto (x',k')$ which is polynomial 
in both variables 
will keep the problem within the relevant class. 
As an example, one of the most useful reductions here is the
{\em {\sf EPT} reduction} which asks that there is a function $f\in 2^{O(k)}$
so that $x'$ is computable in time $f(k)|x|^c$ and 
there is a constant $d$ such that $k'\leq d(k+\log |x|)$. 
It is easy to see that {\sf EPT} is closed under {\sf EPT} reductions.

Concentrating on {\sf EPT} as a representative example,
the next step is to introduce classes akin to the ${\sf W}$-hierarchy 
for the bounded theory. For example, a {\em nondeterministic $(2^{O(k)},k)$-restricted 
Turing machine} is one for which there is a function
$f\in  2^{O(k)}$ that for each 
run on input $x$ the machine performs at most $f(k)\cdot |x|^{c_1}$ many steps 
such that at most $c_2(k+\log |x|)\cdot \log |x|$ are
nondeterministic. Using this as the core problem, and closing under {\sf EPT}
reductions defines the class ${\sf EW}[{\sf P}]$. More generally, 
it is possible to extend this definition to define another 
hierarchy
akin to the ${\sf W}$-hierarchy called the ${\sf EW}$-hierarchy. 
It is, of course, easy to show that the classes {\sf EPT}, {\sf E{\sf XP}T}, {\sf SUBEPT} are all
distinct by diagonalization. But what the new hierarchy allows 
for is to show that various problems which are all ${\sf W}[1]$-hard, say,
differ on the new hierarchy, aligning with our intuition that
they should be. As an illustration, there is a problem from 
computational learning theory called the {\sc Vapnik-Chervonenkis Dimension}
which is proven to be ${\sf W}[1]$-complete by combining 
Downey, Evans and Fellows \cite{DowneyEF93para} and Downey and Fellows \cite{DowneyF95survey}. The hardness  proof in~\cite{DowneyEF93para} 
used explicitly non-polynomial \FPT\ reductions. It turns out that 
{\sc Vapnik-Chervonenkis Dimension} is complete for ${\sf EW}[3]$,
and yet the problem {\sc Tournament Dominating Set} which is 
${\sf W}[2]$-complete is complete for the class ${\sf EW}[2]$.

Little is known here. It would be fascinating if a  central  \FPT\ problem
with infeasible algorithms at present,
such as {\sc Treewidth} could be shown to be complete for, say, ${\sf EW}[2]$,
which would suggest no possible {\sl reasonable} \FPT\ algorithm.
The current algorithms only put {\sc Treewidth} into {\sf EXPT}.

From \cite{FlumGW06boun} one example of this phenomenon is 
{\sc First-Order Model-Checking over Words}: it is in  \FPT\ but even {\sf EAW}[*]-hard (a bounded analog of ${\sf AW}$).

\subsection{Things left out} In such a short article, we do not really
have space to devote to the many areas of applications for this hardness
theory. Suffice to say, it has been applied in relational databases, 
phylogeny, linguistics, VLSI design, graph structure theory,
cognitive science,
Nash equilibria, voting schemes, operations research, etc. 
We can only point at the various survey articles, the books by Niedermeier's
\cite{Niedermeier06invi}, Fernau 
\cite{Fernau05para}, and  Flum and Grohe~\cite{FlumG06para}  
as well as the {\em Computer Journal} issues mentioned earlier~\cite{Downey08thec}.

Also we have not really had space to devote neither to many other
natural hierarchies based on various aspects of 
logical depth such as the ${\sf S}$, ${\sf W}^*$ and other hierarchies,
nor to issues like parameterized parallelism. Here we
refer the reader to Flum and Grohe \cite{FlumG06para}. Finally, we have not really developed 
all of the models which have been used.

What we hope the reader has gotten is the general 
flavor of the parameterized complexity  subject and some appreciation of the techniques.

% 
%\noindent{\em Acknowledgements} Many thanks to Mike Fellows and Moritz M{\"u}ller 
%for corrections and other helpful comments.

\section{Parameterized algorithms}
\label{algorithms}

 The diversity of problems and areas where parameterized algorithms have been developed is enormous.
In this section we make an, unavoidably incomplete,  
presentation of the main techniques and results on the design of parameterized algorithms. To facilitate our description, we are mainly focusing on  problems on graphs. 

Some basic notational conventions on graphs follow: Given a graph, $G$ 
we denote the vertex and edge set of $G$ by $V(G)$ and $E(G)$ respectively. Given a vertex $v\in V(G)$ we denote the set of its neighbors by $N_{G}(v)$.  When a
graph $G$ is the  input of a parameterized problem, we always denote by $n$ the number of its vertices.

\subsection{De-nondeterminization}
\label{derand}

Parameterized algorithm design requires a  wider viewpoint that
the classic one of polynomial algorithm design. The reason is that 
we now permit time complexities that are super-polynomial. However, we have 
to make sure that this super-polynomial overhead depends 
only on the parameter. A wide family of techniques in  
parameterized algorithm design can be seen as 
ways to turn some polynomial non-deterministic algorithms 
to a deterministic one where the resulting super-polynomial
overhead is independent of  the main part of the problem.

\subsubsection{Bounded search trees}
\label{subs_boundedst}
The most ubiquitous  de-nondeterminization technique is the {\sl bounded search tree technique}.
We present it on one of the most extensively  studied problems in parameterized algorithms and complexity: \\

\noindent{\sc $k$-Vertex Cover}

\noindent {\sl Instance:}~~A graph $G$ and a non-negative  integer $k$.  \\
{\sl Parameter:} $k$.\\
{\sl Question:}~~Does $G$ have a vertex set $S$ of size at most $k$ that intersects all the edges of $G$?\\

This problem can be solved by the following non-deterministic algorithm:

\begin{tabbing}
%{\sl Algorithm} {\sf {algvc}}$({G},{k})$\\
{\bf 1.}  set $S\leftarrow \emptyset$ and $i\leftarrow k$,\\
{\bf 2.} while \= $E(G)\neq \emptyset$ and $i>1$, \\
\>  consider some edge $e=\{v,u\}$ of $G$,\\
\>  guess {\em non-deterministically} one, say $x$, of the two endpoints of $e$, \\
\> set $G\leftarrow G\setminus x$, $i\leftarrow i-1$\\
{\bf 3.}  if $E(G)=\emptyset$ then return  {\tt YES} \\
{\bf 4.}  If ${k}=0$, then return  {\tt NO}
\end{tabbing}

Clearly,  this algorithm is based on the fact that for each edge, one of its endpoints should be a vertex of every vertex cover. It makes at most $k$ non-deterministic choices each requiring $O(n)$ deterministic steps. This polynomial non-deterministic 
algorithm can be reverted to an (exponential) deterministic one  as follows.

\begin{tabbing}
{\sl Algorithm} {\sf {algvc}}$({G},{k})$\\
{\bf 1.} {\bf if} $|E({G})|=0$, then return {\tt YES}\\
{\bf 2.} {\bf if} ${k}=0$, then return  {\tt NO}\\
{\bf 3.} choose\= \  (arbitrarily)  an edge  $e=\{v,u\}\in E({G})$ and\\
\> {\bf return} {\sf {algvc}}$({G}-v,{k}-1)\bigvee${\sf algvc}$({G}-u,{k}-1)$ 
\end{tabbing}

Notice that the above algortihm makes $2$ recursive calls and the depth of the recursion is $\leq k$. Therefore it takes $O(2^{k}\cdot n)$ steps and is an {\sf FPT}-algorithm. This implies that 
{\sc $k$-Vertex Cover}$\ \in{\sf FPT}$. Notice that the algorithm is based on the transformation of a non-deterministic algorithm to a deterministic one in a way that the exponential blow-up (i.e., the size of the search tree) 
depends  exclusively  on the parameter $k$. This idea traces back to the paper of Buss and Goldsmith in~\cite{BussG93nond} and the corresponding technique is called the {\em bounded search tree technique}. 

Can we do better? A positive answer requires a better combination of non-deterministic guesses. As an example, instead of an edge one may pick a path {\sf P} with 
vertices $v_{1},v_{2},v_{3}$, and $v_{4}$. Then every  vertex cover of $G$ will contain some of the pairs  $\{v_{1},v_{2}\}$, $\{v_{2},v_{3}\}$,  $\{v_{2},v_{4}\}$.
That way, each recursive call has now $3$ calls but also guesses 2 vertices and therefore the depth of the recursion is at most $\lceil k/2\rceil$. In case $G$ does not contain
a path of length $3$, then $G$ is a forest of stars and in such a case the {\sc $k$-Vertex Cover}  can be solved in linear time. Summing all this together, 
we have a  $O(3^{k/2}\cdot n)$ step {\sf FPT}-algorihtm for the {\sc $k$-Vertex Cover} which improves the previous one, as $3^{1/2}<1.733$.
An even faster algorithm can be designed if we exploit  the fact  that the {\sc $k$-Vertex Cover} can be solved in linear time for graphs of maximum degree $2$.
Then, as long as there is a vertex $v$ with at least  3 neighbors, we know that a vertex cover should contain $v$ or all its neighbors. 
An elementary analysis implies that the size $T(k)$ of the search tree satisfies the relation $T(k)\leq T(k-1)+T(k-3)$. As the biggest root 
of the characteristic polynomial $a^{k}=a^{k-1}+a^{k-3}$ is less than $1.466$, we have an {\sf FPT}-algorithm for the the {\sc $k$-Vertex Cover} that runs in $O(1.466^{k}\cdot n)$ steps. 

Especially for  {\sc $k$-Vertex Cover}, there is a long sequence of improvements of the running time, based on even more refined search trees. 
The first non-trivial results dates back to the $O(1.3248^{k}\cdot n)$ step algorithm of Balasubramanian,  Fellows, and Raman~\cite{BalasubramanianFR98anim}. This result was improved in~\cite{ChenKJ01vert, NiedermeierR99uppe} and, 
currently,  the fastest parameterized algorithm for  {\sc $k$-Vertex Cover} runs in $O(1.2738^k + kn)$ steps  by Chen, Kanj, and Xia~\cite{ChenKX10impr}. For applications of the same technique on restricted versions of the same problem, see~\cite{ChenLJ00impr,NiedermeierR03onef}.

We stress that the bounded search tree technique is strongly linked  to 
the design of exact algorithms, as {\sf FPT}-algorithms can be seen as exact algorithms where the parameter is not any more restricted. 
For instance, the $O(1.2738^k + kn)$ step algorithm of  \cite{ChenKX10impr} implies an $O^{*}(1.2738^n)$ step%\footnote{The $O^{*}$-notation suppresses $n^{O(1)}$ terms.} 
exact algorithm for 
{\sc Vertex Cover}. However, the existence of fast exact algorithms does not imply directly the existence of a parameterized one with the 
same parameter dependence. For example {\sc Vertex Cover} can be solved in $O^{*}(1.189^{n})$ steps by an algorithm of Robson~\cite{Robson86algo} (see also~\cite{FominGK06meas} for a simpler one running in $O({1.221}^{k})$ steps).\\

The bounded search tree technique is an unavoidable ingredient of most parameterized algorithms. Typical examples of problems where this technique has been applied are {\sc Feedback Vertex Set}  \cite{ChenFLSV08impr} ($O(5^{k}\cdot kn^{2})$ steps -- see also~\cite{DehneFLRS07anft,GuoGHNW06comp}), {\sc Cluster Editing} \cite{BockerBBT08goin} ($O((1.82)^{k}+n^{3})$ steps -- see also~\cite{GrammGHN04auto}), {\sc Dominating Set}~\cite{AlonG09line}, and partial covering problems \cite{AminiFS08impl}.
%SED>\marginpar{More refs?}
%
%
\subsubsection{Greedy localization}
%
%\bigskip

A step further in  the branching approach is to combine it with some non-determini\-stic guess of some part of the solution. This idea is known as the   {\em  greedy localization technique} and  was introduced in~\cite{ChenFJK04usin} and~\cite{JiaZC04anef}.
%and has been further developed in~\cite{X}. 
We will briefly present the idea using the following problem.\\

\noindent{\sc $k$-Triangle Packing}

\noindent {\sl Instance:}~~A graph $G$ and a non-negative  integer $k$.  \\
{\sl Parameter:} $k$.\\
{\sl Question:}~~Does $G$ have at least $k$ mutually vertex disjoint triangles?\\

Again, we describe the {\sf FPT}-algorithm for this problem in its non-deterministic version.  We use the term {\em partial triangle}
for any vertex or edge of $G$ and we will treat it as a potential part of a triangle in the requested triangle packing. The first step is to find in $G$ a maximal set of disjoint triangles ${\cal T}=\{T_{1},\ldots,T_{r}\}$. This greedy step justifies the name of the technique and can be done in polynomial time.  If $r\geq k$ then return {\tt YES} and stop. If not, let $G_{{\cal T}}$ be the graph formed by the disjoint union of the triangles in ${\cal T}$ and and 
observe that $|V(G_{{\cal T}})|\leq 3(k-1)$. The key observation is that if there exists a solution ${\cal T'}=\{T_{1}',\ldots,T_{r}'\}$ to the problem ($r'\geq k$), then each $T_{i}'$ should intersect some triangle in ${\cal T}$, (because of the  maximality of the choice of ${\cal T}$). These intersections define a partial solution ${\cal P}=\{A_{1},\ldots,A_{p}\}$ of the problem. The next step of the algorithm is to {\sl guess} them: let $S$ be a subset of $V(G_{{\cal T}})$ such that $G_{{\cal T}}[S]$ has at least $k$ connected components (if such a set does not exist, the algorithm returns {\tt NO} and stops). Let ${\cal P}=\{P_{1},\ldots,P_{p}\}$  be these connected components and observe that each $P_{i}$ is a partial triangle of $G$. 
We also denote by $V({\cal P})$ the set of vertices in the elements of ${\cal P}$.
Then, the algorithm intends to 
extend this partial solution to a full one using the following non-deterministic procedure {\sf branch}$({\cal P})$:

\begin{tabbing}
{\sl Procedure} {\sf {branch}}$(P_{1},\ldots,P_{p})$\\
{\bf 1.} $A\leftarrow B\leftarrow \emptyset$, $i\leftarrow 1$\\
{\bf 2.} while \ \= $i\leq p$\\
%every partial triangle $P_{i}$ in ${\cal P}=\{P_{1}\ldots,P_{p}\}$\\ 
\> {\bf if} there exists \=a partial triangle $B$ of $G\setminus (V({\cal P})\cup A)$\\
\> \>  such that $G[P_{i}\cup B]$ is a triangle, \\
\>{\bf then} $A\leftarrow A\cup B$, $i\leftarrow i+1$\\
\>{\bf otherwise} goto step {\bf 4}\\
{\bf 3.} {\bf return} {\tt YES} and stop.\\
{\bf 4.} {\bf let }\=$A'$ be the set containing each vertex  $v\in A$ \\
\>  such that $P_{i}\cup \{v\}$ is a (partial) triangle.\\
{\bf 5.} {\bf if} $A'=\emptyset$, \={\bf then} \={\bf return} {\tt NO} and stop,\\
   \> {\bf otherwise} {\sl guess} $v\in A'$  and\\
\> \>  {\bf return} {\sf {branch}}$(P_{1},\ldots,P_{i}\cup\{v\}\ldots,P_{p})$.
\end{tabbing}

\noindent In Step {\bf 2}, the procedure checks greedily for a possible extension of ${\cal P}$ to a 
full triangle packing. If it succeeds, it enters Step {\bf 3} and returns {\tt YES}.
If it fails at the $i$-th iteration of Step {\bf 2}, this means that one of the vertices used for the 
current extension (these are the vertices in $A'$ constructed in Step {\bf 4}) should belong to the extension of $P_{i}$ (if such a solution exists). So we further {\sl guess}
which vertex of $A$ should be added to $P_{i}$ and we recurse with this new collection of partial triangles
(if such a vertex does not exist, return a negative answer). This algorithm has two non deterministic steps: one is the initial choice of the set $S$ and the other is the one in Step {\bf 5} of {\sf {branch}}$(P_{1},\ldots,P_{p})$. The first choice implies that  {\sf {branch}}$(P_{1},\ldots,P_{p})$ is called $k^{O(k)}$ times (recall that $|V(G_{{\cal T}})|=O(k)$). Each call of {\sf {branch}}$(P_{1},\ldots,P_{p})$ has $|A'|\leq 2k$ recursive calls and the depth of the recursion is upper bounded by $2k$, therefore the size of the search tree is bounded by $k^{O(k)}$. Summing up, we have that {\sc $k$-Triangle Packing} is in  $2^{O(k\log k)}\cdot n$-{\sf FPT}. 
 
The greedy localization technique has been used to solve a wide variety of problems. 
It was applied for the standard parameterizations of problems such as  $r$-{\sc Dimensional Matching} in~\cite{ChenFJK04usin}, $r$-{\sc  Set Packing} in~\cite{JiaZC04anef}, $H$-{\sc Graph Packing} in~\cite{FellowsHRST04find}. Also, similar ideas, combined with other techniques, have been used for
various problems such as  the (long-standing open)  {\sc Directed Feedback Vertex Set}~\cite{ChenLOR08afix} by and others such as the  \textsc{Bounded Length Edge Disjoint Paths} in \cite{GolovachT11path}. The main drawback of this method is that the parameter dependence of the derived {\sf FPT}-algorithm is typically $2^{O(k\log k)}$.
%SED> \marginpar{More refs? Relations with color coding?}
\medskip

\subsubsection{Finite automata and sets of characteristics}
\label{fasoc}

We proceed now with a more elaborate use of nondeterminism in the design of parameterized algorithms. 
The problem that will drive our presentation is the following:\\

\noindent{\sc $k$-Cutwidth}

\noindent {\sl Instance:}~~A graph $G$ and a non-negative  integer $k$.  \\
{\sl Parameter:} $k$.\\
{\sl Question:}~~Does $G$ have cutwidth at most $k$? i.e., is there a vertex layout $L=\{v_{1},\ldots,v_{n}\}$ of $G$ such that for every $i\in\{0,\ldots,n\}$
the number of edges of $G$ with one endpoint in $\{v_{1},\ldots,v_{i}\}$ and one in $\{v_{i+1},\ldots,v_{n}\}$ is at most $k$?\\

In what follows we denote the cutwidth of a graph $G$ by $\cw(G)$.
This problem belongs to the category of {\em layout problems} and the technique we present applies to many of them.
The standard approach uses the parameter of pathwidth defined as follows.
Given a graph $G$, we say that a sequence ${\cal X}=(X_{1},\ldots,X_{r})$ of vertex sets from $G$ is a {\em path decomposition} of $G$ if 
every vertex or edge of $G$ belongs entirely in some $X_{i}$ and if for every vertex $v\in G$ the set $\{i \mid v\in X_{i}\}$ %\marginpar{MOD??} 
  is a set of consecutive 
indices from $\{1,\ldots,r\}$. The {\em width} of a path-decomposition is equal to $\max\{|X_{i}|-1\mid i=1,\ldots,r\}$.
The {\em pathwidth} of a graph is the minimum $k$ for which $G$ has a path-decomposition of width at most $k$.
It follows from the results in~\cite{BodlaenderT04comp,BodlaenderK96effi,Bodlaender96alin} that the following problem is in $f(k)\cdot n$-\FPT:\\

\noindent{\sc $k$-Pathwidth}\\
\noindent {\sl Instance:}~~A graph $G$ and a non-negative  integer $k$.  \\
{\sl Parameter:} $k$.\\
{\sl Question:}~~Does $G$ have pathwidth at most $k$?\\

\noindent Moreover, the aforementioned  \FPT-algorithm for  {\sc $k$-Pathwidth}  is able to construct the corresponding path-decomposition, in case of a positive answer. 

The design of an \FPT-algorithm for {\sc $k$-Cutwidth} will be reduced 
to the design of an \FPT-algorithm for the following problem.\\

\noindent{\sc $pw$-$k$-Cutwidth}

\noindent {\sl Instance:}~~A graph $G$ of pathwidth $\leq l$ and a non-negative  integer $k$.  \\
{\sl Parameter:} $k+l$.\\
{\sl Question:}~~Does $G$ have  cutwidth at most $k$?\\

An \FPT-algorithm for {\sc $pw$-$k$-Cutwidth} is based on  the construction, for any $k,l\geq 0$, of
a {\sl non-deterministic finite state automaton} that receives $G$ as a word encoded in terms of a path decomposition
of width at most $l$ and  decides whether a graph $G$ has cutwidth at most 
$k$.  The number of states of this automaton depends exclusively on the parameters $k$ and $l$ and, of course,  this is also the 
case when we revert it to a deterministic one. Before we proceed with the definition of the automaton we need some more definitions. 

%Given an automaton $A$, we denote by  ${\sf L}(A)$ the set of strings recognized by it. 

A path decomposition is {\em nice} if $|X_{1}|=1$, and for every $i\in\{2,\ldots,r\}$, the symmetric difference of 
$X_{i-1}$ and $X_{i}$ contains only one vertex. Moreover, if this vertex belongs to $X_{i}$ we call it {\em insert vertex} for $X_{i}$ and if it 
belongs to $X_{i-1}$ we call it {\em forget vertex} for $X_{i}$. 
%The {\em width of a path decomposition} is  defined as $\max\{|X_{i}|-1\mid i\in\{1,\ldots,r\}\}$.
Given a nice path decomposition ${\cal X}=(X_{1},\ldots,X_{r})$ of $G$ of width at most $l$, it is easy to  construct an $(l+1)$-coloring  $\chi: V \rightarrow \{{\bf 1},\ldots, {\bf l+1}\}$ of
 $G$ such that vertices in the same $X_{i}$ have always distinct colors (we refer to these colors using bold characters).
If $X_{i}$ has an introduce vertex then set $p(i):={\tt ins}({\bf t},{\bf S})$ where 
$\{{\bf t}\}=\chi(X_{i}-X_{i-1})$ and ${\bf S}=\chi(N_{G[X_{i-1}]}(v))$ (if $i=1$ then ${\bf S}=\emptyset$). 
If $X_{i}$ has a forget vertex then set $p(i):={\tt del}({\bf t})$ where 
$\{{\bf t}\}=\chi(X_{i-1}-X_{i})$.
We encode ${\cal X}$ as a string $w({\cal X})$  whose $i$-th letter is $p(i)$, $i=1,\ldots,r$.

Let $\Sigma$ be an alphabet consisting of all possible ${\tt ins}(v,S)$ and ${\tt del}(v)$ (notice that the size of $\Sigma$ depends only on $l$) and denote $K=\{0,\ldots,k\}$. We also set up a set of labels ${\bf T}=\{\mbox{\bf --},{\bf 1},\ldots,{\bf
l+1}\}$ where bold numbers represent colors and ``\mbox{\bf --}''
represents the absence of a vertex.

We define the automaton $A_{G,l,k}=(Q,\Sigma,\delta,q_{s},F)$ that receives $w({\cal X})$  as input (provided that  a path decomposition ${\cal X}$ of $G$
is given). We set  $$Q=\{ w\mid w\in K({\bf T}K)^{*}, |w|\leq 2n+1\mbox{, \small  and each label of
${\bf T}$ appears at most once in $w\}$},$$

$q_{s}=[0]$, $F=\{w\in Q\mid
w=2n+1\}$, and for any $q=n_{0}{\bf t}_{1}n_{1}{\bf t}_{2}\cdots{\bf
t}_{r}n_{r}\in Q$ and ${\tt ins}({\bf t},{\bf S})\in\Sigma$ or ${\tt
del}({\bf t})\in \Sigma$, the value of the transition function $\delta$ is defined as follows:
\begin{eqnarray}
\delta(q,{\tt ins}({\bf t},{\bf S})) & = & \{q'\mid q'=n_{0}'{\bf t}_{1}'n_{1}'{\bf t}_{2}'\cdots {\bf t}_{r+1}'n_{r+1}'
\mbox{~where~} q'\in Q\nonumber \\
 \mbox{~and~}  \exists_{i,0\leq i\leq r}:                          &     & \hspace{-.7cm}\forall_{h=1,\ldots,i}{\bf t}'_{h}={\bf t}_{h}\wedge {\bf t}_{i+1}'={\bf t} \wedge \forall_{h=i+1,\ldots,i}{\bf t}'_{h+1}={\bf t}_{h}\nonumber \wedge\\
                           & & \hspace{-.7cm} \forall_{h=1,\ldots,i} n'_{h}=n_{h}+|\{{\bf t}_{j}\mid {\bf t}_{j}\in {\bf S} \wedge j\leq h\}|\wedge\nonumber \\     
                           & & \hspace{-.7cm} \forall_{h=j+1,\ldots,r+1} n'_{h}=n_{h+1}+|\{{\bf t}_{j}\mid {\bf t}_{j}\in {\bf S} \wedge j\geq h\}|\}\nonumber 
\end{eqnarray}
and 
\begin{eqnarray}
\delta(C,{\tt del}({\bf t})) & = & \{C'\mid C'=n_{0}{\bf t}_{1}n_{1}{\bf t}_{2}\cdots n_{i-1}\mbox{\bf --}n_{i} \cdots{\bf t}_{r+1}n_{r+1}\nonumber \\
& & \mbox{~where~} {\bf t}={\bf t}_{i}\}\nonumber 
\end{eqnarray}
In the definition of $\delta(q,{\tt ins}({\bf t},{\bf S}))$,  $A_{G,l,k}$ {\sl guesses} the position $i$ where the vertex colored by ${\bf t}$
is being inserted in the already guessed layout.  Bodlaender, Fellows, and  Thilikos proved the following in~\cite{BodlaenderFT09deri}.

 \begin{lemma}
Let $G=(V,E)$ be a graph and ${\cal X}$ be a path decomposition of $G$ 
of width $\leq l$.  Then $\cw(G)\leq k$ iff  $w({\cal X})\in {\sf L}(A_{G,l,k})$.
 \end{lemma}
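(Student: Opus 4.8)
The plan is to prove both implications by establishing a single invariant that pins down the meaning of the automaton's states in terms of partial layouts. Reading the encoding $w(\mathcal{X}) = p(1)\cdots p(r)$ letter by letter simulates building a vertex layout of $G$ incrementally: each ${\tt ins}(\mathbf{t},\mathbf{S})$ letter introduces a new vertex and the automaton's nondeterminism guesses where to place it relative to the vertices already laid out, while each ${\tt del}(\mathbf{t})$ letter records that a vertex's position has been permanently fixed. Concretely, I would interpret a state $q = n_0\mathbf{t}_1 n_1\cdots\mathbf{t}_r n_r$ as the \emph{cut profile} of such a partial layout: the symbols $\mathbf{t}_1,\ldots,\mathbf{t}_r$ list, in left-to-right layout order, the colours of the vertices currently relevant to the interface (bag vertices, with forgotten blocks collapsed to the placeholder ``--''), and each $n_j$ is the number of edges crossing the $j$-th gap of the layout, counting only edges whose \emph{both} endpoints have already been inserted. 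The requirement $n_j\in K=\{0,\ldots,k\}$ built into the state set $Q$ is precisely the device that discards every partial layout some of whose cuts already exceed $k$.

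First I would prove, by induction on the length of the consumed prefix of $w(\mathcal{X})$, the invariant: the states reachable after reading $p(1)\cdots p(i)$ are exactly the cut profiles, with all gap values $\le k$, of the layouts of the subgraph induced by the vertices inserted during $p(1),\ldots,p(i)$ that are order-consistent with these insertions. The base case is the empty layout $q_s=[0]$. For an insert letter ${\tt ins}(\mathbf{t},\mathbf{S})$ the guessed index $i$ in the transition corresponds to the gap at which the new vertex is placed; I would verify that the arithmetic updates $n'_h = n_h + |\{\mathbf{t}_j\in\mathbf{S} : j\le h\}|$ on the left of the insertion point, and the symmetric update involving $j\ge h$ on its right, count exactly the newly created edges (one per neighbour in $\mathbf{S}$) that cross each gap, so the resulting numbers are the true cut values of the enlarged layout; staying inside $Q$ enforces that these remain $\le k$. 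For a forget letter ${\tt del}(\mathbf{t})$ I would argue that the vertex being forgotten has \emph{no} incident edge to any not-yet-inserted vertex, by the defining property of a path decomposition that every edge lies inside some bag; hence all of its incident cuts are already final, the ``--'' collapse loses no information that future letters can still change, and the profile is faithfully maintained.

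The single structural fact that makes both directions work is monotonicity: because edges are only ever added, the value at any fixed gap of the layout can only increase as further vertices are inserted, so every partial cut value is bounded by the corresponding value in the completed layout. With the invariant and monotonicity in hand the two implications are immediate. If $\cw(G)\le k$, fix an optimal layout $L$ and run the automaton making, at each ${\tt ins}$ letter, the guess that places the new vertex at the position it occupies in $L$; by monotonicity every intermediate cut value is at most its final value $\le k$, so no state leaves $Q$, and after the last letter all $n$ vertices are placed, giving a state of length $2n+1\in F$, whence $w(\mathcal{X})\in {\sf L}(A_{G,l,k})$. Conversely, an accepting run's sequence of guesses reconstructs a complete layout of $G$ whose final gap values are the numbers appearing in the accepting state; since that state lies in $Q$ these are all $\le k$, so $\cw(G)\le k$.

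The step I expect to be the main obstacle is the correctness of the forget transition and the ``--'' placeholder, that is, proving that collapsing a forgotten vertex (or a maximal block of forgotten vertices) preserves exactly the information needed for the remainder of the computation. The delicate point is to show that two gaps separated only by forgotten vertices are crossed by identical sets of \emph{future} edges (a consequence of forgotten vertices having all their edges already present), so that merging them while retaining the larger value neither underestimates a future cut nor destroys any realizable layout. Establishing this faithful correspondence between the bounded-size collapsed profiles and genuine layouts is what simultaneously keeps the state space a function of $k$ and $l$ only and secures both directions of the equivalence.
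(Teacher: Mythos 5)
Your proof is correct and is the intended argument: the reachable states of $A_{G,l,k}$ are exactly the cut profiles of order-consistent partial layouts, the insertion arithmetic adds one unit to every gap that a newly introduced edge crosses (and every edge of $G$ is introduced exactly once, when its later endpoint is inserted with the earlier one still in the bag), and monotonicity of gap values under further insertions yields both implications at once --- simulate an optimal layout for one direction, read a complete layout off an accepting run for the other. One clarification: the obstacle you single out as the main difficulty, namely merging gaps separated only by forgotten vertices and showing that the collapse preserves all realizable futures, belongs to the \emph{second} lemma about the compressed automaton $\tilde{A}_{l,k}$, not to this one; in $A_{G,l,k}$ the transition for $\texttt{del}(\mathbf{t})$ merely replaces the label by the placeholder symbol (so that the colour $\mathbf{t}$ can safely be reused by a later vertex) and never merges adjacent gaps, so for the present lemma the forget step needs nothing beyond the observation you already make, that a forgotten vertex has no edge to a not-yet-inserted vertex and hence all of its incident cuts are final.
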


In the above lemma, ${\sf L}(A)$ is the language recognized by the automaton $A$.
Notice that in $A_{G,l,k}$ the definition of $\delta$ does not require any 
knowledge of $G$. This is because the adjacency information  codified in the string $w({\cal X})$
is enough for the definition of $\delta$. However, 
the number  $|Q|$ of the states still depends on the size of $G$.
To explain how this problem can be overcome
we  give an example.

Suppose we have a substring ${\bf 5}4\mbox{\bf --}7\mbox{\bf --}9{\bf 2}$ in some state of $Q$.
Then notice the following: If the automaton accepts the string $w({\cal X})$ and during its
operation enters this state and proceeds by ``spliting'' the number $7$ then 
it will also accept the same string if it ``splits'' instead the number $4$.
This essentially means that it is not a problem if we ``forget'' 7 from this state.
As a consequence of this observation, we can reduce the length of the strings in
$Q$ by suitably ``compressing'' them (see~\cite{BodlaenderK96effi, BodlaenderT97cons,BodlaenderT00cons, ThilikosSB00cons,ThilikosSB05cutw1}). 
To explain this we first need some definitions.

Let $w\in Q$. We say that a $z$ is a {\em portion} of $w$ if it is a maximal 
substring of $w$ that does not contain symbols from ${\bf T}$.
We say that a portion of $w$ is  {\em compressed} if it does 
not contain a sub-sequence $n_{1}\mbox{\bf --}n_{2}\mbox{\bf --}n_{3}$ such that
either $n_{1}\leq n_{2}\leq n_{3}$ or $n_{1}\geq n_{2}\geq n_{3}$.
The operation of replacing in a portion 
any such a subsequence by $n_{1}\mbox{\bf --}n_{3}$ is called {\em compression} of the portion.
We also call {\em compression of} $w\in Q$ the string that appears if we compress all 
portions of $w$. We define $\tilde{Q}$ as the set occurring from $Q$ if we replace 
each of its strings by their compressions. This replacement naturally defines 
an equivalence class in $Q$ where two strings are equivalent if they have the same 
compression. That, way, $\tilde{Q}$ defines a ``{\sl set of characteristics}'' of $Q$ in the sense 
that it contains all the useful information that an automaton needs in order to operate. The good news is that
the size of $\tilde{Q}$ is now depending {\sl only} on $l$ and $k$. Indeed, in~\cite{BodlaenderK96effi,ThilikosSB05cutw1} it was proved 
that $|\tilde{Q}|\leq (l+1)!\frac{8}{3}2^{2k}$ and this makes it possible to
construct an automaton that does not depend on the input graph $G$.
In particular define the non-deterministic finite state automaton
$\tilde{A}_{l,k}=(\tilde{Q},\Sigma,\tilde{\delta},q_{s},\tilde{Q})$. Here
$\tilde{\delta}=\beta\circ\delta$ where $\beta$ is the function that receives
a set of members of $Q$ and outputs the set of all their
compressions. The following holds:

 \begin{lemma}
Let $G=(V,E)$ be a graph and ${\cal X}$ be a path decomposition of $G$ 
of width $\leq l$.  Then $\cw(G)\leq k$ iff  $w({\cal X})\in {\sf L}(\tilde{A}_{l,k})$.
 \end{lemma}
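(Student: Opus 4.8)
The plan is to bootstrap from the preceding lemma, which already establishes that $\cw(G)\le k$ iff $w({\cal X})\in {\sf L}(A_{G,l,k})$. Thus it suffices to show that the compressed automaton $\tilde{A}_{l,k}$ accepts $w({\cal X})$ exactly when the uncompressed automaton $A_{G,l,k}$ does. I would prove this by showing that the compression map $\beta$ realizes a semantics-preserving quotient of $A_{G,l,k}$ onto $\tilde{A}_{l,k}$: reading any prefix of the input and then compressing produces the same set of compressed states as reading that prefix directly in $\tilde{A}_{l,k}$.

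Concretely, first I would fix the intended reading of a state $q=n_{0}{\bf t}_{1}n_{1}\cdots {\bf t}_{r}n_{r}\in Q$: the colors record, left to right, the currently active (introduced but not yet forgotten) vertices in their guessed layout order, each $n_{h}$ counts the edges crossing the $h$-th gap that run to already-forgotten vertices, and membership in $Q$ (all numbers in $K=\{0,\dots,k\}$) is exactly the invariant ``every cut guessed so far has width $\le k$''. In this reading a portion is a maximal color-free block of numbers and dashes, i.e.\ a run of consecutive gaps none of whose bordering vertices is active. The heart of the argument is then the claim that a state and its compression are bisimilar, which reduces to a one-step statement: if a portion contains a monotone triple $n_{1}\mbox{\bf --}n_{2}\mbox{\bf --}n_{3}$, then deleting the middle number preserves, for every possible future letter, the set of compressed successor states. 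This is exactly the ``splitting'' insensitivity illustrated by the example ${\bf 5}4\mbox{\bf --}7\mbox{\bf --}9{\bf 2}$: any accepting continuation that splits at $n_{2}$ can be mimicked by splitting at whichever of $n_{1},n_{3}$ lies on the same side of the monotonicity, with cut values that are no larger, so the middle entry is never the bottleneck.

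Formally I would establish the congruence $\beta(\delta(q,a))=\tilde{\delta}(\beta(q),a)$ for every $q\in Q$ and every letter $a\in\Sigma$, distinguishing the two shapes of $a$. For $a={\tt del}({\bf t})$ the transition only recolors a ${\bf t}_{i}$ to $\mbox{\bf --}$, merging two adjacent portions, and one checks directly that merging commutes with compression. For $a={\tt ins}({\bf t},{\bf S})$ the transition guesses a split position $i$ and adds, on each side, the contribution $|\{{\bf t}_{j}\in{\bf S}: j\le h\}|$ (resp.\ $j\ge h$) to the gap counts; here I would verify that these position-dependent increments preserve the monotonicity pattern within each portion, so that splitting at a compressed-away middle entry is dominated by splitting at a retained neighbour. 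Iterating the congruence along the input then yields $\mathrm{Reach}_{\tilde{A}_{l,k}}(u)=\beta(\mathrm{Reach}_{A_{G,l,k}}(u))$ for every prefix $u$ of $w({\cal X})$.

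It remains to match the acceptance conditions. Observe that the length of a state is guess-independent: an ${\tt ins}$ letter increases it by two and a ${\tt del}$ letter leaves it unchanged, so after the whole word $w({\cal X})$ every reachable state of $A_{G,l,k}$ has length $2n+1$ and hence lies in $F$. Since every state of $\tilde{A}_{l,k}$ is accepting and $\beta$ sends nonempty state sets to nonempty state sets, the reachability identity above gives $w({\cal X})\in {\sf L}(A_{G,l,k})$ iff $w({\cal X})\in {\sf L}(\tilde{A}_{l,k})$, which together with the preceding lemma proves the claim. The main obstacle I anticipate is the ${\tt ins}$ case of the congruence: because the edge contributions added to the two sides of the split depend on the guessed position $i$ and on ${\bf S}$, one must argue carefully (ideally by exhibiting a domination preorder on $Q$ under which $\delta$ is monotone and under which $q$ and $\beta(q)$ are mutually dominating) that removing a monotone middle entry can neither destroy an otherwise surviving run nor create a spurious one.
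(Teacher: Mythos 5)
The paper does not actually prove this lemma: being a survey, it states it after the informal ``splitting'' observation on the example ${\bf 5}4\mbox{\bf --}7\mbox{\bf --}9{\bf 2}$ and defers the details to the cited original papers. Your proposal is a faithful formalization of exactly that observation --- reduce to the preceding lemma for $A_{G,l,k}$ and show that compression is semantics-preserving --- so in spirit you are following the intended route, and your handling of the acceptance condition (state length is guess-independent, so acceptance of $w({\cal X})$ reduces to survival, matching the fact that every state of $\tilde{A}_{l,k}$ is accepting) is a point the paper glosses over and you get right.

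The one substantive issue is that the exact congruence $\beta(\delta(q,a))=\tilde{\delta}(\beta(q),a)$ you put at the heart of the argument is too strong and will fail for ${\tt ins}$ letters. If $q$ contains a monotone triple $n_{1}\mbox{\bf --}n_{2}\mbox{\bf --}n_{3}$, then $q$ has a successor obtained by splitting at the gap carrying $n_{2}$; that successor retains $n_{2}$ (plus increments) adjacent to the newly inserted colour, whereas no successor of $\beta(q)$ can produce that value in that slot, since $n_{2}$ has been erased. So the two sets of compressed successors are genuinely different as sets; what is true is only that each successor of $q$ is dominated by a successor of $\beta(q)$ with the same colour pattern and componentwise no larger gap values, and conversely. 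In other words, the ``domination preorder'' you mention at the end as an ideal fallback is not an optional refinement but the form the argument must take: one proves that $\delta$ is monotone with respect to this preorder and that $q$ and $\beta(q)$ mutually dominate, and then acceptance (survival within $K=\{0,\ldots,k\}$) transfers in both directions. With the congruence replaced by this two-sided simulation, your proof goes through and is the standard argument from the characteristic-sequences literature the paper cites.
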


As mentioned, the construction of the automaton $\tilde{A}_{l,k}$ 
depends  only on $l$ and $k$. Moreover,
as the input is a string of length $O(n)$ the decision can be made non-deterministically 
in linear time. As for every non-deterministic finite state automaton one can construct an equivalent 
deterministic one  (notice that the state explosion will be an exponential function on $k$ and $l$), we deduce the following:

\begin{theorem}
\label{pw-cut-check}
{\sc $pw$-$k$-Cutwidth}$\ \in f(k)\cdot n$-\FPT.
\end{theorem}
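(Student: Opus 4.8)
The plan is to assemble the machinery developed above into a single linear-time procedure parameterized by $k+l$. First I would invoke the \FPT-algorithm for {\sc $k$-Pathwidth}: since the input graph $G$ is promised to have pathwidth at most $l$, that algorithm runs in $f(l)\cdot n$ steps and, as noted, actually returns a path decomposition $\mathcal{X}$ of $G$ of width at most $l$. I would then convert $\mathcal{X}$ into a \emph{nice} path decomposition of the same width by the standard insert/forget transformation; this is done in linear time and yields a decomposition of length $O(n)$. From the nice decomposition I would compute the $(l+1)$-coloring $\chi$ and read off the encoding string $w(\mathcal{X})$ of length $O(n)$ over the alphabet $\Sigma$ (whose size depends only on $l$), exactly as set up before the lemmas.

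The second step is to build a machine that decides membership $w(\mathcal{X}) \in {\sf L}(\tilde{A}_{l,k})$. By the preceding lemma, $\cw(G)\leq k$ iff $w(\mathcal{X})\in {\sf L}(\tilde{A}_{l,k})$, so it suffices to decide this language. The automaton $\tilde{A}_{l,k}$ is non-deterministic, but its state set $\tilde{Q}$ has size at most $(l+1)!\cdot\frac{8}{3}\cdot 2^{2k}$, a quantity depending only on $k$ and $l$. Applying the classical subset construction yields an equivalent \emph{deterministic} finite automaton $D_{l,k}$ with at most $2^{|\tilde{Q}|}$ states; crucially this is still a function of $k$ and $l$ alone, say $h(k,l)$, and is independent of $n$. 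Both $\tilde{A}_{l,k}$ and $D_{l,k}$ can be constructed in time bounded by a function of $k$ and $l$ without any reference to $G$, since $\tilde{\delta}=\beta\circ\delta$ is defined purely combinatorially on the strings of $\tilde{Q}$.

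The final step is to run $D_{l,k}$ on the input word $w(\mathcal{X})$. As $D_{l,k}$ is deterministic, processing a word of length $O(n)$ costs $O(n)$ transitions, each a table lookup whose cost is bounded in terms of $k$ and $l$; I would return {\tt YES} iff $D_{l,k}$ halts in an accepting state, which by the equivalence of $D_{l,k}$ with $\tilde{A}_{l,k}$ and the lemma is equivalent to $\cw(G)\leq k$. Adding the three phases, the total running time is $f(l)\cdot n + h(k,l) + O(n) = f'(k+l)\cdot n$, establishing that {\sc $pw$-$k$-Cutwidth}$\in f(k)\cdot n$-\FPT.

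The step I expect to be the main obstacle — were the supporting lemmas not already in hand — is guaranteeing that the automaton's state count is governed solely by $k$ and $l$: a naive automaton has $|Q|$ growing with $n$, and the crux is the compression / ``set of characteristics'' argument yielding the bound $|\tilde{Q}|\leq (l+1)!\cdot\frac{8}{3}\cdot 2^{2k}$ together with the correctness lemma for $\tilde{A}_{l,k}$. Given these, the only remaining care is to observe that determinization preserves parameter-only dependence (the blow-up $2^{|\tilde{Q}|}$ is still a function of $k$ and $l$) and that the per-symbol cost of the deterministic run is constant relative to $n$; these are routine once the parameter-bounded size of $\tilde{Q}$ is established.
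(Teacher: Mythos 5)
Your proposal is correct and follows essentially the same route as the paper: encode a width-$\leq l$ path decomposition as a word $w(\mathcal{X})$, rely on the compression lemma bounding $|\tilde{Q}|$ by a function of $k$ and $l$ together with the correctness lemma for $\tilde{A}_{l,k}$, determinize (noting the blow-up still depends only on the parameters), and run the resulting DFA on the linear-length word. Your explicit first phase of computing the decomposition via the {\sc $k$-Pathwidth} algorithm is a minor presentational addition; otherwise the argument matches the paper's.
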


In~\cite{BodlaenderFT09deri} Bodlaender, Fellows, and  Thilikos described how to turn the decision algorithm to an algorithm that, in
case of a positive answer, also outputs the corresponding layout.

An {\sf FPT}-algorithm 
for {\sc $k$-Cutwidth} follows easily from Theorem~\ref{pw-cut-check} by taking into account that 
if the input graph $G$  has cutwidth at most $k$ then  it also has a path decomposition of width at most $k$ (see~\cite{ThilikosSB05cutw1}).
Recall that {\sc $k$-pathwidth} belongs to $f(k)\cdot n$-\FPT \cite{BodlaenderT04comp}. Using this algorithm,  we check 
whether the pathwidth of $G$ is at most $k$ and, in case of a negative answer, we report a negative answer for cutwidth as well. Otherwise, the same algorithm outputs a path decomposition of $G$ of  width at most $k$. Then we may solve  {\sc $k$-Cutwidth} by  applying the {\sf FPT}-algorithm of Theorem~\ref{pw-cut-check}. 

The above technique is known as the {\em characteristic sequences technique} and appeared for the 
first time in~\cite{BodlaenderK96effi} (for pathwidth and treewidth) and has further been developed in~\cite{BodlaenderK96effi, BodlaenderT97cons,BodlaenderT00cons, ThilikosSB00cons,ThilikosSB05cutw1,ThilikosSB05cutw2} (for linear-width, branchwidth, cutwidth, and carving-width). Its automata interpretation appeared in~\cite{BodlaenderFT09deri} and derived {\sf FPT}-algoroithms for weighted and directed variants of pathwidth,  cutwidth, and modified-cutwidth (see also~\cite{SernaT05para} for a survey).  Finally, an extension of the same technique for more general parameters appeared by Berthomé and Nisse in~\cite{BerthomeN08auni}.

%SED> other de-nondetrminination techniques?

\subsection{Meta-algorithmic techniques}
\label{matech}

By the term {\em algorithmic meta-theorem} we mean a theorem than provides generic conditions concerning the problem itself that enable us to guarantee the existence (constructive or not) of an algorithm. In a sense, (constructive) meta-algorithmic results provide algorithms that receive as input the problem and output an algorithm to solve it. Such results usually 
require a combinatorial condition on the inputs of the problem and a logical 
condition on the statement of the problem. 
This is a ``{\sl dream ideal algorithm-design technique}'' since we automatically obtain an algorithm  
given just the specification of the problem.
%In out cases we concentrate to \FPT-algorithms.

We remark that these methods are so general that, in most of the cases,
 they not produce very good algorithms, especially in terms of the constants.
As we mentioned in Subsection~\ref{ocapp}, sometimes this 
impracticality is provable. Paradoxically, this 
impracticality is often proved using parameterized complexity 
classes like $W[1]$. 
However, at least for Courcelle's Theorem, 
the methods work well for problems of small quantifier depth,
and the principle impediment here is a really good algorithm for 
finding tree (or other) decompositions. 
We will also discuss efforts towards speeding up those algorithms 
 in Subsection~\ref{ffpta}.

We also remark that the ideas have been applied beyond graphs, for example to 
\emph{\sl matrices} and \emph{\sl matroids} via the work of, for example, Petr Hliňený \cite{Hlineny06}
or Geelen, Gerards, and Whittle \cite{GeelenGW06}.
We do not have the space or expertise to present their ideas, and refer
the reader to their work for more details.

\subsubsection{Courcelle's theorem}
\label{cource}

The logical condition is related to the Monadic Second Order Logic on graphs (MSOL). Its syntax requires an infinite supply of individual variables, usually denoted by lowercase letters x,y,z and an infinite supply of set variables, usually denoted by uppercase letters X,Y,Z. 

Monadic second-order formulas in the language of graphs are built up from 
\begin{itemize}
\item atomic formulas $E(x,y)$, $x=y$ and $X(x)$ (for set variables $X$ and individual variables $x$) by using the usual Boolean connectives $\neg$ (negation), $\land$ (conjuction), $\lor$(disjunction), $\rightarrow$ (implication), and $\leftrightarrow$ (bi-implication)  and
\item existential quantification $\exists x, \exists X$ and universal quantification $\forall x, \forall X$ over individual variables and set variables.
\end{itemize}

Individual variables range over vertices of a graph and set variables are interpreted by sets of vertices. The atomic formula $E(x,y)$ express adjacency, the formula $x=y$ expresses equality  and $X(x)$ means that the vertex $x$ is contained in the set $X$. The semantics of the monadic-second order logic  are defined in the obvious way. 
As an example of an MSOL property on graphs, we give the following formula 
\begin{eqnarray*}
\phi & =& \exists {R}\ \exists {Y} \ \exists {B}\ (\forall x\  (({R(x)}\vee Y(x)\vee {B(x)})\wedge\\
& & \neg ({R(x)}\wedge {Y(x)}) \wedge \neg ({B(x)}\wedge {Y(x)}) \wedge \neg ({R(x)}\wedge {B(x)})))\\
& & \wedge \neg (\exists x \ \exists y\  {(} E(x,y)\ \wedge\\
& &  {(}({R(x)}\wedge {R(y)})\vee ({Y(x)}\wedge {Y(y)})\vee({B(x)}\wedge {B(y)}){)}{)})
\end{eqnarray*}
and observe that $G\models \phi$ ($G$ {\em models} $\phi$) if and only if $G$ is 3-colorable (the two first lines ask for a partition of the vertices of $G$ into three sets $R$, $Y$, and $B$ and the two last lines ask for the non-existence of an edge between vertices of the same part of this partition).

The second (combinatorial) ingredient we need is a measure of the topological 
resemblance of a graph to the structure of a tree.

A {\em branch decomposition} of a graph $G$ is a  pair $(T,\tau)$, where $T$ is a ternary tree 
(a tree with vertices of degree one or three) and $\tau$ is a bijection from
$E({G})$ to the set of leaves of $T$. 
Clearly, every edge $e$ of $T$ defines a partition $(\tau^{-1}(L_{e}^{1}),\tau^{-1}(L_{e}^{1}))$ 
of $E(G)$ where, for $i=1,2$,  $L_{e}^{i}$ contains the leaves 
of the connected components of $T\setminus e$. %(i.e. are also leaves of $T$).
The {\em middle set
function} $\mids: E(T)\rightarrow 2^{V(G)}$ of a branch
%\marginpar{Explain $\Bbb{N}_{0}$}
decomposition maps every edge $e$ of $T$ to the set  (called {\em middle set}) containing every vertex of $G$
that is incident {\sl both} to some edge in $\tau^{-1}(L_{e}^{1})$ and to some edge in $\tau^{-1}(L_{e}^{2})$.
The {\em branchwidth} of $(T,\tau)$ is equal to $\max_{e\in E(T)}|{\bf mid}(e)|$
and the {\em branchwidth} of ${G}$, $\bw({G})$, is
the minimum width over all branch decompositions of ${G}$.

Now we are ready to define the following parameterized meta-problem.\\

\noindent{\sc $bw$-$\phi$-Model Checking for $\phi$}\\
\noindent {\sl Instance:}~~A graph $G$.  \\
{\sl Parameter:} $\bw(G)+|\phi|$.\\
{\sl Question:}~~$G\models \phi$?\\

At this point we must stress that branchwidth can be interchanged with its  ``twin''
parameter of treewidth,  $\tw(G)$, for  most of their applications in parameterized algorithm design.
The reason  is that for every graph $G$, it holds that $\bw(G)\leq \tw(G)+1\leq \frac{3}{2}\bw(G)$, where 
by $\tw(G)$ we denote the treewidth of $G$~\cite{RobertsonS91-X}. 
Therefore, there is not much difference  in the {\sc $bw$-$\phi$-Model Checking for $\phi$} problem if we use treewidth instead of branchwidth

We can now present the following meta-algorithmic result proved by Courcelle.

\begin{theorem}[Courcelle's Theorem~\cite{Courcelle90them}]
\label{courcelle:th}
If $\phi$ is an  MSOL formula, then  {\sc $bw$-$\phi$-Model Checking for $\phi$} belongs to $O(f(k,|\phi|)\cdot n)$-\FPT.
%can be solved in $f(k,\phi)\cdot n$ steps.
\end{theorem}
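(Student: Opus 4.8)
The plan is to reduce the logical model-checking problem to a bottom-up dynamic programming computation over a decomposition tree, in the spirit of the characteristic-sequences/automata technique of Subsection~\ref{fasoc}. First I would not work directly with a branch decomposition but with a tree decomposition: since $\bw(G)\le \tw(G)+1\le \tfrac32\bw(G)$, bounded branchwidth is equivalent to bounded treewidth, so using the linear-time \FPT\ algorithm for computing treewidth decompositions (the analogue of the {\sc $k$-Pathwidth} algorithm invoked earlier) I can compute, in $f(\bw(G))\cdot n$ time, a tree decomposition of width $w=O(\bw(G))$, and then massage it into a rooted binary \emph{nice} tree decomposition with $O(n)$ nodes in which each node is obtained from its child(ren) by an introduce, forget, or join operation. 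Each node $t$ is associated with the boundaried graph $G_t$ induced by the vertices appearing in the subtree below $t$, whose boundary is the bag $X_t$ of size $\le w+1$.

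The second ingredient is the notion of \emph{MSO $q$-type} relative to a boundary, where $q=q(\phi)$ is the quantifier rank of $\phi$. For a boundaried structure (a graph together with its $\le w+1$ distinguished boundary vertices, treated as named constants) its $q$-type is the set of all MSO formulas of quantifier rank $\le q$, in the signature enriched by these constants, that it satisfies. The decisive finiteness fact is that, up to logical equivalence, there are only finitely many such formulas --- a number bounded by a function of $q$ and $w$ --- so the set of possible $q$-types is a finite set $\mathcal{T}$ whose cardinality depends only on $\bw(G)$ and $|\phi|$. This is exactly the boundedness that lets a finite automaton (whose states are the elements of $\mathcal{T}$) do the work, and it plays the role that the compressed set $\tilde{Q}$ of characteristics played in Subsection~\ref{fasoc}.

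The heart of the proof, and the main obstacle, is a \emph{composition (Feferman--Vaught) lemma}: the $q$-type of a structure obtained by gluing two boundaried structures along their common boundary is determined by, and efficiently computable from, the $q$-types of the two pieces. I would prove this via Ehrenfeucht--Fra\"iss\'e games, assembling a winning strategy for Duplicator in the $q$-round game on the glued structure out of winning strategies in the games on the two pieces: a move played on one side is answered using the strategy for that side, and the boundary identification guarantees that the partial-isomorphism conditions stay consistent across the join. The same argument, specialised to the introduce and forget operations, gives the corresponding (easier) update rules. Translating this lemma into the tree-decomposition setting yields, for each nice-decomposition operation, a transition function $\delta\colon\mathcal{T}\times\mathcal{T}\to\mathcal{T}$ (or $\mathcal{T}\to\mathcal{T}$) depending only on $\bw(G)$ and $|\phi|$; this is precisely the statement that MSO formulas are recognised by a finite tree automaton running on the decomposition.

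Finally I would assemble the algorithm and analyse its cost. Precomputing the finite transition tables takes time bounded by a function of $\bw(G)$ and $|\phi|$ only. Then a single bottom-up pass over the $O(n)$ nodes of the nice tree decomposition computes the $q$-type of $G_t$ at every node $t$ from the types of its children using $\delta$, each step costing $O(f(\bw(G),|\phi|))$. At the root, whose boundaried graph is all of $G$ with empty boundary, one reads off whether $\phi$ lies in the computed type, which by definition holds if and only if $G\models\phi$. The total running time is therefore $O(f(\bw(G),|\phi|)\cdot n)$, giving membership in $O(f(k,|\phi|)\cdot n)$-\FPT\ as required.
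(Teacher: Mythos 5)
Your proposal is correct, and it is essentially the standard argument that the paper itself does not reproduce but defers to: the type-composition proof found in Flum and Grohe~\cite[Chapter 10]{FlumG06para}, which is also the natural formalisation of the ``characteristic sequences / finite automaton on a decomposition'' idea the paper develops for cutwidth in Subsection~\ref{fasoc}. All the pieces are in order --- Bodlaender's linear-time algorithm to get a nice tree decomposition of width $O(\bw(G))$, finiteness of the set of MSO $q$-types of $(w+1)$-boundaried graphs, and the Feferman--Vaught composition lemma proved by Ehrenfeucht--Fra\"{\i}ss\'e games (where, for the set moves, Duplicator splits a chosen set into its traces on the two glued pieces and answers in each piece separately, the shared boundary being handled by the named constants). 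The only alternative routes worth mentioning are Courcelle's original one, which builds the tree automaton by induction on the structure of $\phi$ using closure of recognizable tree languages under the Boolean operations and projection rather than via an explicit composition lemma, and the game-theoretic proof of~\cite{KneisL09apra,KneisLR10cour}; your version buys a cleaner separation between the logic (the finite type algebra) and the algorithm (a single bottom-up pass), at the price of the doubly-exponential-in-$q$ blow-up in $|\mathcal{T}|$ that the paper warns about when discussing the impracticality of such meta-theorems.
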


% 
%Theorem~\ref{courcelle:th} is one of the most commonly 
%used tools in parameterized algorithm design.   
A proof of the above theorem can be found in~\cite[Chapter 6.5]{DowneyF99para} and 
\cite[Chapter 10]{FlumG06para} and similar results 
appeared  by Arnborg, Lagergren, and Seese in~\cite{ArnborgLS91easy} and Borie, Parker, and Tovey  in~\cite{BoriePT92auto}.
 Also, an alternative game theoretic-proof has appeared recently in~\cite{KneisL09apra,KneisLR10cour}.
In fact, Theorem~\ref{courcelle:th} holds also for the more general monadic second order logic, namely MSOL$_{2}$, where 
apart from quantification over vertex sets we also permit quantification over edge sets.
%w we also allow {\sl edge-set quantification} in the logic formulas, and 
%to ``counting'' extensions where we add some predicate 
%of the form ``$R_n(\varphi,p)$ holds if the number of vertices that 
%satisfy $\varphi$ is congruent to $p\mod n$''. 
It also has  numerous generalizations for other extensions of MSOL (see~\cite{Courcelle92them,CourcelleM93mona,Courcelle97thee}).
Theorem~\ref{courcelle:th} and its extensions can be applied to a very wide family of problems due to the 
expressibility  power of MSOL. 
As a general principle, ``most normal'' problems in  ${\sf NP}$  can be expressed  in MSOL$_{2}$, and hence 
they are classified in  {\sf FPT} for graphs of bounded width.

\emph{\sl However,}
 there are still graph problems that remain parameterized-intractable even when 
parameterized in terms of the branchwidth of their inputs. As examples, we mention {\sc List Coloring}, 
{\sc  Equitable Coloring}, {\sc Precoloring Extension}~\cite{FellowsFLRSST11othe},  and  {\sc Bounded Length Edge-Disjoint Paths}~\cite{GolovachT09para}.

In~\cite{CourcelleMR00line}, Courcelle, Makowski, and Rotics extended Theorem~\ref{courcelle:th} for the more general parameter of rankwidth (see also~\cite{LangerRS11line}).
The definition of rankwidth is quite similar to the one of branch-width: a {\em rank decomposition} of a graph $G$ is again
a  pair $(T,\tau)$ with the difference that now $\tau$ is a bijection from
$V({G})$ to the set of leaves of $T$. For each edge $e$ of $T$, the partition $(V_{e}^{1},V_{e}^{2})=(\tau^{-1}(L_{e}^{1}),\tau^{-1}(L_{e}^{1}))$
is a partition of $V(G)$ which, in turn, defines a $|V_{e}^{1}|\times |V_{e}^{2}|$ matrix $M_{e}$  over the field $GF(2)$ with entries $a_{x,y}$ for $(x,y)\in V_{e}^{1}\times V_{e}^{2}$ where $a_{x,y}=1$ if $\{x,y\}\in E(G)$, otherwise $a_{x,y}=0$. The {\em rank-order} $\rho(e)$ of $e$ is defined as the row rank of $M_{e}$. Similarly to the definition of branch-width,  the {\em rankwidth} of $(T,\tau)$ is equal to $\max_{e\in E(T)}|\rho(e)|$
and the {\em rankwidth} of ${G}$, $\rw({G})$, is
the minimum width over all rank decompositions of ${G}$. 
Rank-width is a more general  parameter than branchwidth in the sense that there exists a function $f$ were 
$\bw(G)\leq f(\rw(G))$ for every graph $G$, as proved by Oum and Seymour in~\cite{OumS06appr}.

Consider now the following parameterized meta-problem.\\

\noindent{\sc $rw$-$\phi$-Model Checking for $\phi$}\\
\noindent {\sl Instance:}~~A graph $G$.  \\
{\sl Parameter:} $\rw(G)+|\phi|$.\\
{\sl Question:}~~$G\models \phi$?\\

We can now present
the following extension of Theorem~\ref{courcelle:th}.

\begin{theorem}[Courcelle, Makowski, and Rotics~\cite{CourcelleMR00line}]
\label{ext:more}
If $\phi$ is an  MSOL formula, then  {\sc $rw$-$\phi$-Model Checking for $\phi$} belongs to $O(f(k,|\phi|)\cdot n)$-\FPT.
\end{theorem}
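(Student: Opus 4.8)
The plan is to reduce rankwidth‑parameterized MSOL model checking to the analogous problem for \emph{clique-width}, for which a tree-automaton–style evaluation can be carried out in exactly the characteristic-sequences spirit of Subsection~\ref{fasoc}. The starting point is the tight relationship between the two parameters established by Oum and Seymour~\cite{OumS06appr}: rankwidth and clique-width are functionally equivalent, with $\rw(G)\le \clw(G)\le 2^{\rw(G)+1}-1$, and moreover a rank-decomposition of width $O(\rw(G))$ can be computed in \FPT\ time. First I would use this machinery to turn the input graph $G$, in time $f(\rw(G))\cdot n^{O(1)}$, into a \emph{clique-width expression} (a $k$-expression) of width $k=2^{O(\rw(G))}$: a rooted parse tree whose leaves introduce single labeled vertices and whose internal nodes apply one of the clique-width operations — disjoint union, relabeling $\rho_{i\to j}$, and edge-creation $\eta_{i,j}$, the last of which joins every $i$-labeled vertex to every $j$-labeled vertex.

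Given such a parse tree, the second step is a bottom-up dynamic program that evaluates $\phi$. Fix the quantifier rank $q$ of $\phi$. To each node of the parse tree I associate the \emph{MSOL $q$-type} of the labeled graph produced at that node, that is, the equivalence class of that graph under $q$-round MSOL equivalence over the vocabulary enriched by the $\le k$ label predicates. Since there are only finitely many such types — a number bounded by a function of $q$ and $k$ — this is a legitimate finite characteristic in the sense of Subsection~\ref{fasoc}, and the whole computation reduces to propagating these characteristics up the tree. The crux is a \emph{composition} (Feferman–Vaught) lemma: the $q$-type of the graph at an internal node is determined, computably, by the $q$-type(s) of its child(ren) together with the operation applied there. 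The answer to ``$G\models\phi$'' is then read off from the type stored at the root.

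The main obstacle is exactly this composition lemma, and within it the edge-creation operation $\eta_{i,j}$. Disjoint union is the classical Feferman–Vaught case and relabeling is a benign renaming of monadic predicates, but $\eta_{i,j}$ inserts a \emph{dense} (complete bipartite) set of new edges in a single step, so one must show that the MSOL $q$-type of the result depends only on the $q$-type of the argument and not on its concrete structure. I would prove this by an Ehrenfeucht–Fraïssé game argument: a winning strategy for Duplicator in the $q$-round game on two $q$-equivalent labeled graphs transports to a winning strategy on the two graphs obtained after applying $\eta_{i,j}$, because the adjacency that $\eta_{i,j}$ introduces between any two vertices is decided solely by their labels, and the label predicates are preserved by the back-and-forth correspondence the strategy maintains. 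This is also where the restriction to MSOL with quantification over \emph{vertex} sets is essential: edge-set quantification is not preserved by the clique-width operations, which is precisely why Theorem~\ref{ext:more}, unlike Theorem~\ref{courcelle:th}, does not extend to the two-sorted logic MSOL$_2$. Once the composition lemma is in hand, each of the $O(n)$ nodes is processed in time bounded by a function of $q$ and $k$, and folding the $2^{O(\rw(G))}$ clique-width blow-up into $f$ yields the claimed $O(f(k,|\phi|)\cdot n)$ bound.
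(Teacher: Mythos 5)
The survey does not actually prove Theorem~\ref{ext:more}: it cites Courcelle, Makowski, and Rotics~\cite{CourcelleMR00line} and only records the parametric translation between rankwidth and cliquewidth via~\cite{OumS06appr}. Your sketch reconstructs essentially the argument of the cited source, so there is no divergence to report, only an assessment: passing to a $k$-expression with $k=2^{O(\rw(G))}$ labels and propagating MSOL $q$-types up the parse tree is the right plan, Feferman--Vaught handles disjoint union, and your Ehrenfeucht--Fra\"{\i}ss\'e argument for $\eta_{i,j}$ is the game-theoretic form of the observation that $\eta_{i,j}$ and $\rho_{i\to j}$ are quantifier-free interpretations over the label-enriched vocabulary, so the $q$-type of the output is a computable function of the $q$-type of the input. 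Your diagnosis of why the argument is confined to vertex-set quantification (and hence why the theorem does not extend to MSOL$_{2}$) is also correct. Two caveats. First, your step~1 runs in $f(\rw(G))\cdot n^{O(1)}$ time, which establishes \FPT\ membership but not the advertised $O(f(k,|\phi|)\cdot n)$ bound: no $f(k)\cdot n$ algorithm for producing a rank decomposition or $k$-expression is known, and the linear running time in~\cite{CourcelleMR00line} presupposes that the $k$-expression is part of the input; the survey's statement inherits this looseness, but you should flag it rather than silently fold a polynomial preprocessing step into a linear bound. Second, when invoking the composition lemma you should fix $q$ to be the quantifier rank of $\phi$ over the vocabulary enriched with the $k$ label predicates, and note that the leaf types and the type-update maps are effectively computable; this is routine but it is precisely where the (in general non-elementary) function $f(k,|\phi|)$ comes from.
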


Actually the original statement of Theorem~\ref{ext:more} used 
yet another width metric, called cliquewidth,
 instead of rankwidth. However, we
avoid the definition of cliquewidth and we use rankwidth instead, in order to maintan uniformity with the definition of branchwidth. In what concerns the statement of the theorem, this makes no difference as rankwidth and cliquewidth are parametrically equivalent: as proved in~\cite{OumS06appr}, for every graph $G$, $\rw(G)\leq \clw(G)\leq 2^{\rw(G)}-1$ (where $\clw(G)$ is the cliquewidth of $G$).

It is interesting to observe that Theorem~\ref{ext:more} is not expected to hold for MSOL$_{2}$ formulas. Indeed,  Fomin,  Golovach,  Lokshtanov, and Saurabh,  proved in~\cite{FedorGLS09cliq} that there are problems,
such as {\sc Graph Coloring}, {\sc Edge Dominating Set}, and {\sc Hamiltonian
Cycle} that can be expressed in MSOL$_{2}$ but not in MSOL, that are {\sf {\sf W}[1]}-hard for graphs of bounded cliquewidth (or rankwidth).

\subsubsection{{\sf FPT} and {\rm FOL}}
\label{ftpfol}

While the parameterized tractability horizon of MSOL-expressible problems is restricted to 
graphs of bounded branchwidth, one can do much 
more for problems that can be expressed in First Order Logic (FOL). 
First Order formulas  are defined similarly to MSOL formulas with the (essential) restriction that 
no quantification on sets of variables is permitted any more. A typical example of a parameterized FOL-formula 
is the following:
\begin{eqnarray*}
\phi_{k} & =& \exists_{x_{1}}\cdots\exists_{x_{k}}(\bigwedge_{1\leq i<j\leq k} x_{i}\neq x_{j}\wedge\forall y\bigvee_{i\in\{1,\ldots,k\}}(y=x_{i}\vee E(y,x_{i})))
\end{eqnarray*}
The formula $\phi_{k}$ expresses the fact that $G$ has a dominating set of size  $k$.\\

Let ${\cal C}$ be a class of graphs. We consider the following parameterized meta-problem.\\

%\marginpar{Formulas or Sentences?}
\noindent{\sc $\phi$-Model Checking for FOL  in ${\cal C}$}

\noindent {\sl Instance:}~~A graph $G\in{\cal C}$ and a FOL-formula $\phi$.  \\
{\sl Parameter:} $|\phi|$.\\
{\sl Question:}~~$G\models \phi$?\\

A  graph $H$ is a {\em minor} of a graph $G$, we denote it by $H\leq_{\rm m} G$,  if there exists a partial function $\sigma: V(G)\rightarrow V(H)$ that is surjective and such that 
\begin{itemize}
\item [a)] for each $a\in V(H)$, $G[\sigma^{-1}(a)]$ is connected and
\item [b)] for each edge $\{a,b\}\in E(H)$ 
there exists an edge $\{x,y\}\in E(G)$ such that $x\in \sigma^{-1}(a)$ and $y\in\sigma^{-1}(b)$.
%\item  for every $ \{x,y\}\in E(G)$,  either $\phi(x)=\phi(y)$
% or  $\{\phi(x),\phi(y)\}\in E(H)$.
\end{itemize}
Given a graph $H$ we say that a graph class ${\cal G}$ is {\em $H$-minor free} if none of the graphs in ${\cal G}$ contain $H$ as a minor.  Intuitively, $H$ is a minor of $G$ if $H$ can be obtained from 
a subgraph of $G$ after applying a (possibly empty) sequence of edge contractions.

If in  the definition of the minor relation we further restrict each $G[\sigma^{-1}(a)]$ to have radius at most $r$, for some  $r\in \Bbb{N}$, then we say that
$H$ is  a {\em $r$-shallow minor} of $G$ and we denote it by $H\leq_{{\rm m}}^{r} G$.

The {\em grad} (greatest reduced average density) {\em of} {\em rank} $r$ of 
a graph $G$ is equal to the maximum average density over all $r$-shallow minors of $G$
and is denoted by $\nabla_{r}(G)$. We say that a graph class ${\cal G}$
has {\em bounded expansion} if there exists a function $f:\Bbb{N}\rightarrow \Bbb{R}^{+}$ such that 
$\nabla_{r}(G)
\leq f(r)$ for every integer $r\geq 0$ and for every graph in ${\cal G}$. The notion of bounded expansion was 
introduced by Nešetřil
 and Ossona de Mendez in~\cite{NesetrilO11onno} and was studied in~\cite{NesetrilO08grad-I,NesetrilO08grad-II,NesetrilO08grad-III}.
 Examples of classes of graphs with bounded expansion include proper minor-closed classes of graphs\footnote{A graph class ${\cal G}$ is {\em (proper) minor-closed} if it (does not contain all graphs
 and) is closed under taking  minors.}, classes of graphs with bounded maximum degree, classes of graphs excluding a subdivision of a fixed graph, classes of graphs that can be embedded on a fixed surface with bounded number of crossings per each edge and many others (see~\cite{NesetrilMW09char}). Many meta-algorithmic  results from proper minor-closed classes of graphs to classes of graphs with bounded expansion, see~\cite{Kreutzer08algo,DawarGK07loca,DvorakK09algo,Grohe08algo,Grohe08logi}.
These include the classes of graphs of bounded local treewidth.
The reader will 
recall
that a class of graphs 
 ${\mathcal C}$ of graphs 
has
bounded local treewidth iff there a function $f$ such that for all
graphs $G\in {\mathcal C}$ and all vertices $v\in V(G)$
the neighbourhood of $v$ of distance $k$ from $v$ has treewidth $f(k)$.

The following theorem was proved by  Dvorak, Kral, and Thomas in~\cite{DvorakKR10deci} and is one of the most general meta-algorithmic results in parameterized complexity.

\begin{theorem}
If ${\cal G}$ is a  graph class of bounded expansion, then {\sc $\phi$-Model Checking for FOL  in ${\cal C}$} belongs to {\sf FPT}. In particular, 
%for every $\epsilon>0$, 
there exists a computable function $f: \Bbb{N}\rightarrow \Bbb{N}$ 
and an algorithm that, given a $n$-vertex graph $G\in{\cal G}$ and a FOL-formula $\phi$ as input, decides whether $G\models \phi$ in 
$f(|\phi|)\cdot n^{}$ steps.
%$f(|\phi|)\cdot n^{1+\epsilon}$ steps.
 \end{theorem}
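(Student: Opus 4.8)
The plan is to prove the statement by induction on the quantifier rank of $\phi$, eliminating the innermost quantifier at each step while augmenting $G$ with a bounded (in $|\phi|$) number of auxiliary unary predicates and unary ``pointer'' functions that record exactly the information discarded by that quantifier. Two ingredients are needed: a combinatorial decomposition tool and a logical quantifier-elimination step. The whole difficulty lies in making these two cooperate so that the augmented structure stays sparse enough for the induction to continue.

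First I would invoke the low tree-depth colouring theorem of Nešetřil and Ossona de Mendez: for a class of bounded expansion and every $p$ there is a number $N(p)$ so that $V(G)$ can be coloured, in linear time, with $N(p)$ colours such that any union of at most $p$ colour classes induces a subgraph of tree-depth at most $p$ (here tree-depth is the minimum height of a rooted forest whose ancestor-closure contains the graph). Taking $p$ as a function of $|\phi|$, this lets me localise the problem: any constant number of elements touched by the quantified variables lies, after restricting to a few colour classes, inside a subgraph of bounded tree-depth, and on bounded tree-depth structures first-order logic is tame.

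The core step is quantifier elimination on bounded tree-depth. Here I would show that a formula $\exists x\,\psi(x,\bar{y})$, with $\psi$ quantifier-free over a suitably augmented structure, is equivalent to a quantifier-free formula $\psi'(\bar{y})$ over a further-augmented structure, where the augmentation adds only finitely many new unary predicates and pointers and is computable in time $f(|\phi|)\cdot n$. The point is that in a bounded tree-depth graph only boundedly many neighbourhood \emph{types} occur, so the existence of a witness $x$ of a given type can be precomputed and stored in the new relations; bounded expansion guarantees a low-degeneracy orientation that makes this precomputation linear in $n$. Iterating the elimination over the $O(|\phi|)$ quantifiers reduces $\phi$ to a quantifier-free sentence, decided by inspecting the stored relations in linear time; accumulating the per-step augmentations and costs yields the claimed $f(|\phi|)\cdot n$ bound, with $f$ necessarily non-elementary in $|\phi|$.

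The main obstacle I expect is not any single elimination step but the \emph{preservation of the sparsity invariant} across them: each augmentation must use only bounded-in-degree pointer functions and a bounded number of predicates, so that the resulting coloured, functional structure again falls into a class of bounded expansion in the appropriate relational sense, ensuring that the next low tree-depth colouring exists with a bounded number of colours. Proving that the augmentation operations do not blow up the shallow-minor densities $\nabla_{r}$ of the augmented structures beyond a bound depending only on $r$ and $|\phi|$ is the technical crux on which the whole induction rests.
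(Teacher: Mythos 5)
The paper does not actually prove this theorem: it is stated as a quoted result of Dvo\v{r}\'ak, Kr\'al, and Thomas~\cite{DvorakKR10deci}, with no argument given in the text. So there is no in-paper proof to compare against; what can be said is how your sketch relates to the cited original. Your plan --- induction on quantifier rank, eliminating the innermost quantifier while augmenting the structure with boundedly many unary predicates and bounded-in-degree pointer functions, using the N\v{e}set\v{r}il--Ossona de Mendez low tree-depth colourings to localise witnesses, and identifying the preservation of bounded expansion under augmentation as the crux --- is a faithful reconstruction of the strategy actually used in~\cite{DvorakKR10deci}. What you have is a correct roadmap rather than a proof: the step you flag as the ``technical crux'' (that the augmented relational structures again have shallow-minor densities bounded in terms of $r$ and $|\phi|$ only) is precisely where the substance of the original argument lies, and it is resolved there via the fraternal/transitive augmentation machinery of N\v{e}set\v{r}il and Ossona de Mendez, which controls $\nabla_r$ of the functional augmentations. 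Likewise, the claim that an existential quantifier over a bounded tree-depth piece can be replaced by precomputed type information stored in new unary relations needs a concrete quantifier-elimination lemma for forests of bounded depth, which you assert but do not establish. As a blind attempt the architecture is right and nothing in it would fail, but the two load-bearing lemmas are named rather than proved.
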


Actually, there is an extension of the  above for the even more general case where ${\cal G}$ has {\em locally bounded expansion}, that is there exists a function   $f:\Bbb{N}\times \Bbb{N}\rightarrow \Bbb{R}^{+}$ such that 
$\nabla_{r}(G)
\leq f(d,r)$ for every integer $r\geq 0$ and for every graph 
that is a subgraph of the $d$-neighborhood of a vertex of a graph from in ${\cal G}$. In this case, according to~\cite{DvorakKR10deci}, the algorithm runs in $f(|\phi|)\cdot n^{1+\epsilon}$ steps, for every $\epsilon>0$.% ($f$ depends on $\epsilon$).

\subsubsection{Graph Minors}
\label{graphminors}

The Graph Minor Theory has been developed by Robertson and Seymour   towards proving the long-standing Wagner’s Conjecture: {\sl graphs are well quasi-ordered  under the minor relation} (see Theorem~\ref{rste} below).  The contribution of the results and the methodologies derived by this theory to the design and analysis of parameterized algorithms, was fundamental.
In this subjection and the next one we present some of the main contributions in this direction.

A {\em graph parameter} is a recursive function ${\bf p}$ mapping graphs to non-negative integers. A graph parameter ${\bf p}$ is {\em minor-closed} if $H\leq_{\rm m}  G$ implies ${\bf p}(H)\leq {\bf p}(G)$.
Given a graph parameter  ${\bf p}$ and $k\geq 0$, we define 
its {\em $k$-th obstruction} ${\obs}({\bf p},k)$ as the set of all minor-minimal graphs 
of the set $\{G\mid {\bf p}(G)>k\}$. The following lemma is a direct consequence of the definitions.

\begin{lemma}
\label{easy}
If ${\bf p}$ is a  minor-closed parameter ${\bf p}$ and $G$ is a graph, then ${\bf p}(G)\leq k$ if and only if  
none of the graphs in $\obs({\bf p},k)$ is a minor of $G$.
\end{lemma}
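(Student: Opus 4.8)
The plan is to prove both implications by contraposition, the only substantive content being a standard minimality argument resting on well-foundedness of the minor order.

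First I would treat the direction ``$\mathbf{p}(G)\leq k \Rightarrow$ no member of $\obs(\mathbf{p},k)$ is a minor of $G$'' via its contrapositive. Suppose some $H\in\obs(\mathbf{p},k)$ satisfies $H\leq_{\rm m} G$. By the definition of the obstruction set, $H$ is a minor-minimal element of $\{G'\mid \mathbf{p}(G')>k\}$, so in particular $\mathbf{p}(H)>k$. Since $\mathbf{p}$ is minor-closed and $H\leq_{\rm m} G$, we have $\mathbf{p}(H)\leq \mathbf{p}(G)$, whence $\mathbf{p}(G)\geq \mathbf{p}(H)>k$. This is exactly what was needed.

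For the converse, ``$\mathbf{p}(G)>k \Rightarrow$ some member of $\obs(\mathbf{p},k)$ is a minor of $G$'', I would consider the family $\mathcal{M}=\{H\mid H\leq_{\rm m} G \text{ and } \mathbf{p}(H)>k\}$ of minors of $G$ witnessing $\mathbf{p}>k$. This family is nonempty, since $G\leq_{\rm m} G$ together with $\mathbf{p}(G)>k$ yields $G\in\mathcal{M}$. Next I would pick a minor-minimal element $H$ of $\mathcal{M}$; such an element exists because taking a proper minor strictly decreases the quantity $|V|+|E|$, so any nonempty family of finite graphs contains a member minimizing $|V|+|E|$, which is necessarily minor-minimal within that family. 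It then remains only to upgrade minimality within $\mathcal{M}$ to global minimality, i.e.\ to check $H\in\obs(\mathbf{p},k)$.

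That verification is immediate: if $H'$ were a proper minor of $H$ with $\mathbf{p}(H')>k$, then $H'\leq_{\rm m} H\leq_{\rm m} G$ gives $H'\leq_{\rm m} G$ by transitivity of the minor relation, so $H'\in\mathcal{M}$, contradicting minor-minimality of $H$ in $\mathcal{M}$. Hence no proper minor of $H$ has parameter value exceeding $k$, so $H\in\obs(\mathbf{p},k)$, and by construction $H\leq_{\rm m} G$. I do not expect any genuine obstacle here: the statement is essentially unpacking the definitions, and the one point worth making explicit is the existence of a minor-minimal witness, which relies purely on well-foundedness of the minor order on finite graphs and not on the far deeper well-quasi-ordering theorem of Robertson and Seymour (Theorem~\ref{rste}).
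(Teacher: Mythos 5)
Your proof is correct; the paper itself offers no argument, dismissing the lemma as ``a direct consequence of the definitions,'' and your write-up is exactly the standard unpacking it has in mind (contraposition via minor-closedness in one direction, extraction of a minor-minimal witness in the other). You are also right to flag that the only point requiring any care --- the existence of a minor-minimal element of $\{H \mid H\leq_{\rm m} G,\ {\bf p}(H)>k\}$ --- follows from well-foundedness of the minor order on finite graphs and does not need the Robertson--Seymour well-quasi-ordering theorem.
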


The following theorem is one of the deepest results in  Graph theory and was the main 
result of the Graph Minor series of Robertson and Seymour.

\begin{theorem}[Robertson and Seymour theorem -- \cite{RobertsonS04-XX}]
\label{rste}
Every infinite sequence of graphs contains two distinct elements $G_{i}$, $G_{j}$ such that $G_{i}\leq_{m}  G_{j}$.
\end{theorem}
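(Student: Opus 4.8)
The statement says precisely that the minor relation $\leq_{m}$ is a \emph{well-quasi-order} (WQO) on finite graphs: calling a sequence \emph{bad} when it contains no pair $i<j$ with $G_{i}\leq_{m} G_{j}$, the theorem asserts that no infinite bad sequence exists. The first move I would make is a reduction to minor-closed exclusions. If $(G_{1},G_{2},\ldots)$ were a bad sequence then, by the very definition of badness, $G_{1}\not\leq_{m} G_{j}$ for every $j>1$; hence the tail $(G_{2},G_{3},\ldots)$ is again a bad sequence, but now lying entirely in the class $\mathrm{Ex}(G_{1})$ of $G_{1}$-minor-free graphs. Consequently it suffices to prove that $\leq_{m}$ is a WQO on $\mathrm{Ex}(H)$ for every fixed graph $H$; I would establish this through the decomposition theory of the Graph Minors series, with all auxiliary WQO facts about labelled objects obtained via Nash-Williams's \emph{minimal bad sequence} technique together with Higman's and Kruskal's lemmas.

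The engine is the Robertson--Seymour \emph{structure theorem} for $H$-minor-free graphs: every $G\in\mathrm{Ex}(H)$ admits a tree-decomposition in which each torso is \emph{almost embeddable} in a surface of bounded genus --- embeddable after deleting a bounded number of \emph{apex} vertices and attaching a bounded number of bounded-depth \emph{vortices} along faces. I would first settle the combinatorial base layer, namely that graphs of bounded treewidth are WQO under $\leq_{m}$ (the ``Graph Minors~IV'' result), by encoding a width-$w$ tree-decomposition as a tree labelled over a finite alphabet of bag types and applying Kruskal's tree theorem to these labelled trees. The decisive intermediate target is then WQO for graphs almost embeddable in a \emph{fixed} surface $\Sigma$, which I would attack by induction on the Euler genus together with a WQO theory of \emph{societies} (the cyclically ordered interface where a vortex meets the surface) and of \emph{linkages}.

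With the surface pieces under control, the final step lifts WQO from the torsos to the whole class: if the family of possible torsos, carried with their apex and vortex attachment data, is WQO under the appropriate labelled minor order, then graphs assembled by gluing such torsos along a decomposition tree are themselves WQO. This is once more a Kruskal/Higman-type combinatorial lemma, applied now to decomposition trees whose nodes are labelled by (WQO) torsos, and it delivers WQO of $\mathrm{Ex}(H)$. Feeding this back through the reduction contradicts the existence of the bad tail $(G_{2},G_{3},\ldots)$, so no infinite bad sequence exists and $\leq_{m}$ is a WQO, which is exactly the assertion of the theorem.

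The step I expect to dominate the difficulty --- and which in fact absorbs the bulk of the twenty-odd papers of the series --- is the WQO of graphs (almost) embedded in a fixed surface, and specifically the handling of vortices. One must well-quasi-order not merely the embedded remainder but the cyclically ordered society describing how each vortex attaches, and then certify that a good pair of societies \emph{together} with a good pair of embedded remainders can be \emph{simultaneously} promoted to a single global minor containment. Proving that such local good pairs always combine into one honest minor model --- the province of the linkage and ``folio'' machinery and the taming of vortices --- is the genuine crux; by comparison, the minimal-bad-sequence bookkeeping and the tree-gluing lemmas are routine.
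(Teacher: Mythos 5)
The paper does not prove this theorem at all: it is stated as a citation to the Graph Minors series (Graph Minors~XX), being far too deep for a survey to reprove, so there is no in-paper argument to compare yours against. Your outline is a faithful high-level digest of the actual Robertson--Seymour strategy --- reduce a bad sequence to its tail inside $\mathrm{Ex}(G_{1})$, well-quasi-order bounded-treewidth graphs via Kruskal's tree theorem, fight through the surface/vortex/society case, and glue along the decomposition tree with Higman--Kruskal arguments --- and you correctly locate the crux in the almost-embeddable case. Be aware, though, that what you have written is a roadmap rather than a proof: every load-bearing step (the excluded-minor structure theorem, the WQO of almost-embeddable graphs, the linkage and folio machinery) is itself a multi-paper theorem that your argument invokes without establishing, so the proposal should be read as an accurate summary of the known proof's architecture, not as a self-contained derivation.
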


By  definition,  the graphs in ${\obs}({\bf p},k)$ are incomparable with respect to the minor relation. Therefore,   a direct corollary of Theorem~\ref{rste} is the following.

\begin{corollary}
\label{finitecor}
${\obs}({\bf p},k)$  is finite for every parameter ${\bf p}$ and $k\geq 0$.
\end{corollary}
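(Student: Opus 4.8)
The plan is to argue by contradiction, combining the antichain structure of $\obs({\bf p},k)$ with the well-quasi-ordering supplied by Theorem~\ref{rste}. First I would make precise the remark recorded just before the statement, namely that $\obs({\bf p},k)$ is an antichain under $\leq_{m}$. Suppose $G_1,G_2\in\obs({\bf p},k)$ satisfy $G_1\leq_{m}G_2$. By definition of the obstruction set both graphs belong to $S=\{G\mid {\bf p}(G)>k\}$, and each is minor-minimal in $S$. If $G_1\not\cong G_2$, then $G_1$ is a \emph{proper} minor of $G_2$ that still lies in $S$, contradicting the minor-minimality of $G_2$. Hence $G_1\cong G_2$, so the members of $\obs({\bf p},k)$ are pairwise $\leq_{m}$-incomparable. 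I would emphasize that this step uses only minor-minimality and never any monotonicity of ${\bf p}$; this is exactly why the corollary can be asserted for \emph{every} parameter, not merely for minor-closed ones as in Lemma~\ref{easy}.

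Second, I would assume toward a contradiction that $\obs({\bf p},k)$ is infinite, and enumerate its (pairwise non-isomorphic) members as an infinite sequence $G_1,G_2,G_3,\ldots$ of distinct graphs. Applying Theorem~\ref{rste} to this sequence yields indices $i\neq j$ with $G_i\leq_{m}G_j$, which directly contradicts the antichain property established above. Therefore $\obs({\bf p},k)$ must be finite, for every parameter ${\bf p}$ and every $k\geq 0$, which is the assertion of the corollary.

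I do not expect any genuine obstacle here: once the antichain property is in hand, finiteness is an immediate consequence of the Robertson--Seymour theorem, and the antichain property is essentially the definitional observation preceding the statement. The single point deserving a moment's care is that ``distinct minimal obstructions are incomparable'' rather than merely ``neither is a proper minor of the other''; this rests on $\leq_{m}$ being antisymmetric on finite graphs up to isomorphism, which is routine, since every genuine vertex deletion, edge deletion, or edge contraction strictly decreases the quantity $|V(G)|+|E(G)|$ and so no graph can be a proper minor of itself.
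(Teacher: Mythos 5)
Your proof is correct and follows exactly the route the paper takes: the paper notes that, by definition, the graphs in $\obs({\bf p},k)$ are pairwise incomparable under $\leq_{\rm m}$, and then invokes Theorem~\ref{rste} to conclude that this antichain must be finite. Your write-up merely spells out the antichain verification and the antisymmetry of the minor relation in more detail, which is a faithful expansion of the same argument.
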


Part of the Graph Minors series was the study of the parameterized complexity 
of the following problem and its generalizations.\\

\noindent{\sc $H$-Minor Checking}\\
\noindent {\sl Instance:}~~Two graphs $G$ and $H$. \\
{\sl Parameter:} $k=|V(H)|$.\\
{\sl Question:}~~Is $H$ a minor of $G$?

\begin{theorem}[Robertson and Seymour \!\!\cite{RobertsonS95-XIII}]
\label{checking}
{\sc $H$-Minor Checking}$\ \in f(k)\cdot n^{3}$-$\FPT$. 
\end{theorem}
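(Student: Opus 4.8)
The plan is to prove this by the \emph{irrelevant vertex technique}, which is the algorithmic core of the Graph Minors series. The whole algorithm is organized around a dichotomy on the width of $G$; since (as recalled above) $\bw(G)\le \tw(G)+1\le \frac{3}{2}\bw(G)$, branchwidth and treewidth are interchangeable up to constants, and I will phrase everything in terms of branchwidth. The key structural input is that there is a function $b:\mathbb{N}\to\mathbb{N}$ so that $G$ either has small branchwidth, in which case the problem is solved directly, or has branchwidth exceeding $b(k)$, in which case a single vertex can be discarded without changing the answer. The algorithm then simply iterates the second case until it lands in the first.

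First I would handle the bounded-width base case. For a \emph{fixed} pattern $H$ with $|V(H)|=k$, the property ``$H\le_m G$'' is expressible in monadic second order logic: it asserts the existence of $k$ pairwise disjoint vertex sets $\{B_a\mid a\in V(H)\}$, each inducing a connected subgraph of $G$, such that for every edge $\{a,b\}\in E(H)$ there is an edge of $G$ between $B_a$ and $B_b$. This produces an MSOL sentence $\phi_H$ whose size is bounded by a function of $k$. Hence, by Courcelle's Theorem (Theorem~\ref{courcelle:th}), whenever $\bw(G)\le b(k)$ one can decide $G\models\phi_H$, and thus $H\le_m G$, in time $g(k)\cdot n$ for a suitable function $g$.

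The substance of the proof --- and the step I expect to be the main obstacle --- is the high-width case: establishing that if $\bw(G)>b(k)$ then $G$ has an \emph{irrelevant} vertex $v$, i.e.\ one for which $H\le_m G\iff H\le_m G-v$, together with an efficient way to find it. This rests on the deep companion results of the same series. Large branchwidth forces a large grid (wall) minor via the Excluded Grid Theorem; when $H$ happens to be planar, a sufficiently large grid already contains $H$ as a minor and the answer is an immediate \texttt{YES}. For general (non-planar) $H$ one invokes the structure theorem: inside a large, suitably \emph{flat} portion of a wall, the central vertices are redundant, because any minor model of $H$ that uses such a vertex can be rerouted through the surrounding wall while remaining a valid model elsewhere in the decomposition. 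Proving this rerouting statement, and the tangle and flat-wall machinery it requires, is by far the hardest part of the argument.

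Finally I would assemble the recursion and account for the running time. The algorithm computes (an approximation of) a branch decomposition; if its width is at most $b(k)$ it answers by the MSOL routine above, and otherwise it locates an irrelevant vertex $v$, sets $G\gets G-v$, and repeats. Each round that does not terminate strictly decreases $|V(G)|$, so there are at most $n$ rounds, and each round --- the width computation together with the search for the irrelevant vertex --- can be carried out in $O(h(k)\cdot n^2)$ time. The final bounded-width call costs only $g(k)\cdot n$. Summing over the rounds gives a total running time of $O(f(k)\cdot n^3)$, which is the claimed bound.
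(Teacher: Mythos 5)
Your proposal is correct and follows essentially the same route the paper sketches for this result in Subsection~\ref{irrelevant}: repeatedly delete solution-irrelevant vertices while the (branch/tree)width exceeds a bound depending on $k$, then finish on the bounded-width remainder via Courcelle's theorem, for a total of $O(f(k)\cdot n^{3})$ steps. The only detail worth adding is that before the flat-wall rerouting argument can be invoked one must first dispose of the case where $G$ contains a $K_{h(k)}$-minor --- for plain $H$-minor testing this is an immediate \texttt{YES} since $H\leq_{\rm m} K_{|V(H)|}$ --- so that the structural results apply to the remaining clique-minor-free, large-treewidth graphs.
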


To demonstrate the importance of the above results to  \FPT-algorithm design, we consider the 
following meta-problem.\\

\noindent{{\sc {$k$-Parameter Checking for }}\mbox{\rm\bf p}}\\
\noindent {\sl Instance:}~~ a graph ${G}$ and an integer  ${k}\geq 0$.\\
{\sl {Parameter:} ${k}$}\\
{\sl Question:}~~ {\bf p}({G})$\leq {k}$?\\

It is now easy to check that Lemma~\ref{easy}, Corollary~\ref{finitecor}, and Theorem~\ref{checking}  imply the following meta-algorithmic theorem.

\begin{theorem}
\label{metaena}
If ${\bf p}$ is a minor-closed graph parameter, then {{\sc {$k$-Parameter Checking for }}\mbox{\rm\bf p}}
 belongs to $ f(k)\cdot n^{3}$-\FPT.
\end{theorem}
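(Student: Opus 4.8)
The plan is to combine the three ingredients assembled just above into a single decision procedure. By Lemma~\ref{easy}, deciding whether ${\bf p}(G)\leq k$ is equivalent to verifying that \emph{no} graph in the obstruction set $\obs({\bf p},k)$ is a minor of $G$. By Corollary~\ref{finitecor} this set is finite, so we may write $\obs({\bf p},k)=\{H_{1},\ldots,H_{m}\}$, where both the number $m$ of obstructions and the size $|V(H_{i})|$ of each obstruction are bounded by quantities that depend only on $k$ (recall that ${\bf p}$ is fixed). The algorithm would then, for each $i\in\{1,\ldots,m\}$, run the $H$-minor checking routine of Theorem~\ref{checking} on the pair $(G,H_{i})$, and answer ``{\tt YES}'' (that is, ${\bf p}(G)\leq k$) precisely when every one of these tests fails.

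For the running-time analysis, each invocation of the routine of Theorem~\ref{checking} on $(G,H_{i})$ costs $f(|V(H_{i})|)\cdot n^{3}$ steps. Since every $|V(H_{i})|$ is bounded by a function of $k$ and there are only $m=m(k)$ obstructions to test, summing over $i=1,\ldots,m$ yields a total running time of the form $g(k)\cdot n^{3}$ for a suitable function $g$ depending only on $k$. This is exactly what is required to place the meta-problem in $g(k)\cdot n^{3}$-\FPT.

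The step I expect to be the genuine obstacle is not the routine running-time bookkeeping but the question of \emph{constructivity}. The finiteness of $\obs({\bf p},k)$ supplied by Corollary~\ref{finitecor} rests on the Robertson--Seymour theorem (Theorem~\ref{rste}), which is purely existential: it guarantees that the obstruction set is finite but offers no procedure to produce it from $k$. Hence the algorithm sketched above exists only in a \emph{non-uniform} sense --- for each fixed $k$ we know that a correct finite list $H_{1},\ldots,H_{m}$ exists, and therefore that a deciding algorithm exists, yet we are handed no effective way to compute this list. The membership in $g(k)\cdot n^{3}$-\FPT\ obtained this way is thus non-constructive: we prove that the desired \FPT-algorithm exists without exhibiting it, and even the function $g$ need not be computable. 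Upgrading the argument to a uniform one would require, in addition, an effective method for computing the obstruction sets $\obs({\bf p},k)$, which is a separate and in general considerably harder problem that must be settled on a case-by-case basis for each particular parameter.
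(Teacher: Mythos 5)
Your argument is exactly the paper's: the theorem is stated there as an immediate consequence of Lemma~\ref{easy}, Corollary~\ref{finitecor}, and Theorem~\ref{checking}, combined precisely as you describe. Your closing observation about non-constructivity also matches the remark the paper makes immediately after the theorem, namely that the algorithm is only shown to exist and can be made explicit only when $\obs({\bf p},k)$ is computable.
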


Notice that the above theorem implies the {\em existence} of an \FPT-algorithm for  {{\sc {$k$-Parameter Checking for }}\mbox{\rm\bf p}} and {\sl not} its constructibility.
The construction of the algorithm for some specific parameter ${\bf p}$ is possible only when $\obs({\bf p},k)$ can be constructed  for all $k\geq 0$. Moreover, the general  problem of computing $\obs({\bf p},k)$ is not recursively solvable, as  has been noticed in~\cite{FellowsL94}  (see also ~\cite{Leeuwen90grap}); see also the results of  Friedman,  Robertson, and Seymour~\cite{FriedmanRS87them} on the non-constructibility of the Robertson and Seymour theorem. 
%SED> is there more on the non-recursive nature of obs?

Our first example of the  applicability of Theorem~\ref{metaena} concerns {\sc Vertex Cover}. 
We define $\vc(G)$ as the minimum 
size of a vertex cover in $G$. Notice that  $\vc$ is minor-closed and therefore, Theorem~\ref{metaena} can be applied. 
Moreover, in~\cite{DinneenL07prop}, Dinneen and Lai proved that each graph in $\obs({\bf p},k)$ has at most $2k+1$ vertices and this implies the constructibility of $\obs({\bf p},k)$  for a given $k$: just enumerate all graphs of at most $2k+1$ vertices and filter those that are minor minimal. As a consequence,
we can construct an \FPT-algorithm for  {\sc Vertex Cover}.

Our next example concerns the following problem.\\

\noindent{\sc {$k$-Vertex Feedback Set}}\\
\noindent {\sl Instance:}~~ A graph ${G}$ and an integer  ${k}\geq 1$.\\
{\sl {Parameter:} ${k}$}\\
{\sl Question:}  Does $G$ contain a feedback vertex set of size $\leq k$ (i.e., a set of at most $k$ vertices meeting all its cycles)?\\

We define $\fvs(G)$ as the minimum size of a feedback vertex set of $G$. It is easy to observe that $\fvs$ is minor-closed, therefore {\sc {$k$-Vertex Feedback Set}} belongs to $ \FPT$. Unfortunately,
this approach is not able to construct the algorithm as there is no known upper bound on the size of $\obs(\fvs,k)$ for all $k\geq 3$ (for 
$k=0,1,2$,  $\obs(\fvs,k)$ has been found in~\cite{DinneenCF01forb}).
To make the result constructive, we need some more results.

We call a parameter ${\bf p}$ {\em treewidth-bounded}, if $\tw(G)\leq f({\bf p}(G))$ for some recursive function $f$. We also call ${\bf p}$ 
{\em MSOL-definable} if, for every $k\geq 0$, the class of graphs $\{G\mid {\bf p}(G)\leq k\}$ is  definable by an  MSOL formula $\phi_{k}$ (the length of the formula depends on $k$). 
The following lemma  was proved  by  Adler, Grohe, and 
Kreutzer in~\cite{AdlerGK08comp} (see also~\cite{CattellDDFL00onco}).
  
\begin{lemma}
\label{tbp}
If ${\bf p}$ is a parameter that is {MSOL-definable} and treewidth-bounded, then $\obs({\bf p},k)\leq f(k)$ for some recursive function $f$.
\end{lemma}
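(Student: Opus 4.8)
The plan is to read the conclusion as asserting that there is a recursive function $f$ such that every graph in $\obs({\bf p},k)$ has at most $f(k)$ vertices. Finiteness of $\obs({\bf p},k)$ is already guaranteed by Corollary~\ref{finitecor}; the genuine content here is the \emph{recursive size bound}, which is exactly what one needs to make the obstruction set constructible (cf.\ the enumeration argument used for {\sc Vertex Cover}). I would establish this in two stages: first bound the \emph{treewidth} of every obstruction by a recursive function of $k$, and then, working entirely inside this bounded-treewidth world, bound the \emph{size} of every obstruction by a finite-automaton pumping argument.

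For the first stage, we may assume the function $f$ witnessing treewidth-boundedness ($\tw(G)\le f({\bf p}(G))$) is nondecreasing, by replacing it with $k\mapsto \max_{m\le k}f(m)$, which is still recursive. Let $G\in\obs({\bf p},k)$. If $G$ has an edge $e$, then the contraction $G/e$ is a \emph{proper} minor of $G$, so by minor-minimality ${\bf p}(G/e)\le k$, and hence $\tw(G/e)\le f(k)$. Since contracting a single edge drops treewidth by at most one — from a width-$w$ tree decomposition of $G/e$, replacing the contracted vertex by its two preimages in every bag that contained it yields a valid tree decomposition of $G$ of width $\le w+1$ — we obtain $\tw(G)\le f(k)+1$. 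If $G$ is edgeless then $\tw(G)=0$. Either way every obstruction has treewidth at most $w(k):=f(k)+1$.

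For the second stage, fix $w=w(k)$ and represent graphs of treewidth $\le w$ as terms over the standard finite algebra of $w$-boundaried graphs (introduce, forget, join). The class $\{G:{\bf p}(G)\le k\}$ is defined by the MSOL formula $\phi_k$, so by the automata-theoretic content underlying Courcelle's Theorem~\ref{courcelle:th} there is a finite bottom-up tree automaton recognizing the property ${\bf p}(\cdot)>k$ on such terms, whose number of states is bounded by a recursive function $N(k)$ of $|\phi_k|$ and $w$. If an obstruction $G$ had more than a bound depending only on $N(k)$ vertices, then some state would repeat along a root-to-leaf path of its decomposition term; excising the context between the two repetitions produces a strictly smaller term $t'$ whose graph $G'$ is accepted by the same automaton, i.e.\ ${\bf p}(G')>k$. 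Provided $G'$ is a \emph{minor} of $G$, this contradicts minor-minimality, forcing $|V(G)|\le f_2(k)$ for some recursive $f_2$; taking $f=f_2$ then completes the proof.

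The step I expect to be the real obstacle is guaranteeing that the pumped-down graph $G'$ is genuinely a minor of $G$. Excising a context and gluing the lower separator onto the upper one identifies the two boundaries and deletes the intermediate vertices; deletion is harmless, but identifying boundary vertices is a minor operation only when they are joined by paths inside the removed piece. The Myhill--Nerode congruence coming directly from $\phi_k$ is blind to this connectivity, so a naive replacement need not respect $\le_{\rm m}$. The fix — which is the technical heart of the argument of Adler, Grohe, and Kreutzer~\cite{AdlerGK08comp} — is to refine the automaton states so that they also record the minor-relevant boundary data (in particular the partition of the boundary induced by the connected components of the boundaried graph) and to pump only along repetitions of this \emph{refined} state. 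Two things then have to be checked: that the refined congruence still has finite index bounded by a recursive function of $k$ (it does, since the extra data ranges over a finite set depending only on $w$), and that a repetition of the refined state makes the context-removal realizable by contractions and deletions, so that $G'\le_{\rm m}G$ indeed holds. Carrying out this refinement carefully is where essentially all of the work lies.
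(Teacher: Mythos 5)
Your two-stage plan is precisely the structure of the argument that this survey delegates to Adler, Grohe, and Kreutzer~\cite{AdlerGK08comp} (the paper itself only cites the result and gives no proof), and your first stage is correct and complete: for $G\in\obs({\bf p},k)$ with an edge $e$, the contraction $G/e$ is a proper minor, so ${\bf p}(G/e)\le k$ by minor-minimality, hence $\tw(G/e)\le f(k)$ with $f$ made nondecreasing, and un-contracting an edge raises treewidth by at most one, so every obstruction has treewidth at most $f(k)+1$. The framework for the second stage---terms over $w$-boundaried graphs, a tree automaton for $\neg\phi_k$ with state count recursive in $k$ (note this tacitly requires $k\mapsto\phi_k$ to be computable, which the paper's definition of MSOL-definability should be read as asserting), and pumping a repeated state on a long root-to-leaf path---is also the right one, and you have correctly located the crux: the excision must yield a \emph{minor} of $G$.

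The gap is that this crux is essentially the whole theorem, and the one concrete refinement you name does not close it. Recording the partition of the boundary into connected components of the boundaried graph is strictly weaker than what is needed. When you excise the annulus between nested positions $i<j$ and glue the inner boundary $\partial_j$ onto the outer boundary $\partial_i$, realizing the resulting graph $G'$ as a minor of $G$ requires a system of \emph{pairwise vertex-disjoint} paths inside the excised region joining $\partial_i(\ell)$ to $\partial_j(\ell)$ for every label $\ell$; these paths are the branch sets of the contractions. Two positions can induce identical component partitions on their boundaries (say, a single class at both) while the annulus between them has a cut vertex through which every boundary-to-boundary path must pass, so at most one of the required $w+1$ disjoint paths exists and the construction gives no minor map. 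What is actually needed is to record which rooted-minor foldings/linkages of the boundary are realizable inside the boundaried graph (still a finite amount of data for fixed $w$, so the index stays recursive), and then to prove that every sufficiently long path contains two nested positions agreeing in automaton state for which the inner boundaried graph is a boundaried minor of the outer one; that existence statement is a nontrivial effective counting/well-quasi-ordering step. Since you explicitly defer exactly this, the proposal is an accurate roadmap of the cited proof rather than a proof of the lemma.
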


Lemma~\ref{tbp} and Theorem~\ref{metaena} imply the following:

\begin{corollary}
\label{metaduo}
If ${\bf p}$ is a treewidth-bounded minor-closed graph parameter, then {{\sc {$k$-Parameter Checking for }}\mbox{\rm\bf p}}
is constructively in $f(k)\cdot n$-{\sf FPT}.
\end{corollary}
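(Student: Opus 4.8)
The plan is to combine three ingredients already in hand: the forbidden-minor characterization of $\{G \mid {\bf p}(G)\leq k\}$ (Lemma~\ref{easy}), the effective size bound on the obstruction set coming from treewidth-boundedness (Lemma~\ref{tbp}), and the linear-time machinery of Bodlaender's treewidth algorithm together with Courcelle's Theorem (Theorem~\ref{courcelle:th}). Relative to the non-constructive Theorem~\ref{metaena}, there are exactly two things to gain: (a) the algorithm must be produced \emph{effectively}, and (b) its running time must be brought down from $f(k)\cdot n^3$ to $f(k)\cdot n$.

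First I would observe that a minor-closed ${\bf p}$ is automatically MSOL-definable. By Lemma~\ref{easy}, $\{G \mid {\bf p}(G)\leq k\}$ equals $\{G \mid \text{no } H\in\obs({\bf p},k) \text{ is a minor of } G\}$, and for each fixed $H$ the property ``$H\leq_{\rm m} G$'' is expressible in MSOL: existentially quantify the $|V(H)|$ branch sets, assert their pairwise disjointness, the connectivity of each induced subgraph, and the adjacencies required by $E(H)$. Since $\obs({\bf p},k)$ is finite by Corollary~\ref{finitecor}, the formula $\phi_k:=\bigwedge_{H\in\obs({\bf p},k)}\neg(H\leq_{\rm m}G)$ is a single finite MSOL formula defining the class. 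Hence ${\bf p}$ is simultaneously MSOL-definable and treewidth-bounded, so Lemma~\ref{tbp} applies and yields a \emph{recursive} function $f$ bounding both the number and the size of the graphs in $\obs({\bf p},k)$. This is precisely what makes the construction effective: since $f$ is computable and ${\bf p}$ is by definition a recursive function, I can enumerate all graphs on at most $f(k)$ vertices and, evaluating ${\bf p}$ directly on these small graphs, keep exactly the minor-minimal ones with value exceeding $k$. This produces $\obs({\bf p},k)$ explicitly, and with it the concrete formula $\phi_k$.

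It remains to assemble a linear-time algorithm. Let $f'$ be the recursive witness of treewidth-boundedness, so ${\bf p}(G)\leq k$ forces $\tw(G)\leq f'(k)$ (replace $f'$ by $\max_{j\leq k}f'(j)$ to ensure monotonicity). Given $(G,k)$, I would first run Bodlaender's linear-time algorithm~\cite{Bodlaender96alin} to test whether $\tw(G)\leq f'(k)$: if it reports $\tw(G)>f'(k)$, then by treewidth-boundedness ${\bf p}(G)>k$ and we answer {\tt NO}; otherwise it returns, in $g(k)\cdot n$ steps, a tree decomposition of width $\leq f'(k)$, which (since $\bw(G)\leq\tw(G)+1$) also bounds $\bw(G)$. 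On this bounded-width decomposition I then invoke Courcelle's Theorem (Theorem~\ref{courcelle:th}) with the constructed formula $\phi_k$ to decide $G\models\phi_k$, i.e.\ whether ${\bf p}(G)\leq k$, in $h(k)\cdot n$ steps. The total running time is $(g(k)+h(k))\cdot n=F(k)\cdot n$, establishing constructive membership in $F(k)\cdot n$-\FPT.

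The main obstacle, and the genuine content beyond Theorem~\ref{metaena}, is the effectiveness step. The bare Graph-Minors approach only guarantees that $\obs({\bf p},k)$ is finite, via the non-constructive Robertson--Seymour Theorem, with no handle on its magnitude; this is exactly the source of non-constructivity in Theorem~\ref{metaena}. The treewidth-bounded hypothesis is what repairs this, since Lemma~\ref{tbp} upgrades ``finite'' to ``bounded by a \emph{recursive} function of $k$,'' enabling the brute-force enumeration of obstructions. By contrast, the improvement from $n^3$ to $n$ is comparatively routine: rather than testing each obstruction by $H$-minor checking (Theorem~\ref{checking}, at cost $n^3$), treewidth-boundedness lets us first reduce to a bounded-width decomposition in linear time à la Bodlaender and then evaluate the single MSOL formula $\phi_k$ by Courcelle's Theorem, both linear in $n$.
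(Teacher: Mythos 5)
Your proof follows exactly the route the paper intends: the paper derives Corollary~\ref{metaduo} in a single line as a consequence of Lemma~\ref{tbp} and Theorem~\ref{metaena}, and you supply precisely the glue that this one-liner omits --- the observation that minor-closedness together with Corollary~\ref{finitecor} yields MSOL-definability, the brute-force recovery of $\obs({\bf p},k)$ from the recursive size bound using the recursiveness of ${\bf p}$, and the replacement of the $n^{3}$ minor tests of Theorem~\ref{checking} by Bodlaender's linear-time algorithm plus Courcelle's Theorem so as to reach the claimed $f(k)\cdot n$ running time. One subtlety that you (and, to be fair, the paper) gloss over: to run your enumeration you must actually have the recursive bound $f$ of Lemma~\ref{tbp} \emph{in hand}, uniformly in $k$; but the bound of Adler, Grohe, and Kreutzer is computed from the defining formulas $\varphi_k$, and your $\varphi_k$ is itself assembled from $\obs({\bf p},k)$ --- the very set you are trying to compute. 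Breaking this circle is the real content of~\cite{AdlerGK08comp}: one needs their stronger statement that the obstruction set is computable from a decision procedure for the class together with an effective bound on the treewidth of its members, not merely the existence of a recursive size bound combined with a formula read off from the obstructions. Provided you invoke that form of their result, your argument is complete, and it is in fact considerably more detailed than the paper's own derivation.
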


It is easy to observe that  $\tw(G)\leq \fvs(G)$. Therefore, Corollary~\ref{metaduo} implies that one may construct an algorithm computing $\obs(\fvs,k)$ and this, in turn,  implies that   one can construct an \FPT-algorithm for  {\sc Feedback Vertex Set}.  
\\

Our next example is more complicated. 
A planar graph $H$ is a {\em $k$-fold planar cover} of a graph $G$ if there is a surjection $\chi: V(H)\rightarrow V(G)$ such that
\begin{itemize}
\item for every $e=\{x,y\}\in E(H)$, $\{\chi(x),\chi(y)\}\in E(G)$.
\item for every $v\in V(H)$, the restriction of $\chi$ to the neighbors of $v$ is a bijection.
\item for every $u\in V(G)$, $|\chi^{-1}(u)|\leq k$.
\end{itemize}

We consider the following parameterized problem.\\

\noindent{\sc {$k$-Planar Cover }}\\
\noindent {\sl Instance:}~~ A graph ${G}$ and an integer  ${k}\geq 1$.\\
{\em {Parameter:} ${k}$}\\
{\sl Question:}  Does $G$ have a $k$-fold planar cover?\\

Let ${\bf cov}(G)$ be the minimum $k$ for which $G$ has a   $k$-fold planar cover. For completeness we set ${\bf cov}(G)=\infty$
if such a cover does not exists. It is easy to observe (see e.g.~\cite{Adler08open}) that ${\bf cov}(G)$ is minor-closed, therefore, from Theorem~\ref{metaena}
{\sc {$k$-Planar Cover }}$\in\FPT$. However this result  still remains non-constructive as, so far, there is no known procedure to construct $\obs({\bf cov},k)$ for $k\geq 1$.
%SED>\marginpar{1 or more than one?}
\medskip

A more general theorem on the constructibility horizon of the obstruction sets is the following.

\begin{theorem}[Adler, Grohe, and Kreutzer \cite{AdlerGK08comp}]
\label{simpletomore}
There is an algorithm that, given two
classes ${\cal C}_{1}$, ${\cal C}_{2}$ of finite graphs represented by their obstruction sets
 $\obs({\cal C}_{1})$ and $\obs({\cal C}_{2})$,
computes the set of excluded minors for the union ${\cal C}= {\cal C}_{1}\cup{\cal C}_{2}$.
\end{theorem}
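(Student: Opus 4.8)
The plan is to exploit the fact that the union $\mathcal{C}=\mathcal{C}_{1}\cup\mathcal{C}_{2}$ is again minor-closed: if $G\in\mathcal{C}$ and $H\leq_{\rm m}G$, then $H$ is a minor of a graph lying in $\mathcal{C}_{1}$ or in $\mathcal{C}_{2}$, hence $H\in\mathcal{C}$. By Theorem~\ref{rste} and Corollary~\ref{finitecor}, $\mathcal{C}$ therefore has a finite obstruction set $\obs(\mathcal{C})$, so the only real task is to make its computation effective. First I would record that membership in $\mathcal{C}$ is decidable from the given data: by definition of the obstruction set (cf.\ Lemma~\ref{easy}), $G\notin\mathcal{C}_{i}$ exactly when some $H\in\obs(\mathcal{C}_{i})$ satisfies $H\leq_{\rm m}G$, and each such test is decidable (indeed in \FPT) by Theorem~\ref{checking}. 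Consequently the predicate ``$G\in\obs(\mathcal{C})$'' is decidable: one checks that $G\notin\mathcal{C}_{1}$ and $G\notin\mathcal{C}_{2}$, and that every graph obtained from $G$ by a single vertex deletion, edge deletion, or edge contraction lies in $\mathcal{C}_{1}\cup\mathcal{C}_{2}$. Thus a generate-and-test procedure will work provided we can compute an a priori bound on $|V(G)|$ for $G\in\obs(\mathcal{C})$.

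To obtain such a bound I would first reduce to a purely local question. If $G\in\obs(\mathcal{C})$ then $G\notin\mathcal{C}_{1}$ and $G\notin\mathcal{C}_{2}$, so there are $H_{1}\in\obs(\mathcal{C}_{1})$ and $H_{2}\in\obs(\mathcal{C}_{2})$ with $H_{1}\leq_{\rm m}G$ and $H_{2}\leq_{\rm m}G$; moreover $G$ is minor-minimal with this property for the \emph{fixed} pair $(H_{1},H_{2})$, since any proper minor still containing both $H_{1}$ and $H_{2}$ would already lie outside $\mathcal{C}$. As $\obs(\mathcal{C}_{1})$ and $\obs(\mathcal{C}_{2})$ are finite, it suffices to bound, for each of the finitely many pairs $(H_{1},H_{2})$, the size of a minor-minimal graph $G$ containing both $H_{1}$ and $H_{2}$ as minors. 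Fixing minor models $\mu_{1},\mu_{2}$ of $H_{1},H_{2}$ whose branch sets induce trees, minimality forces every vertex and every edge of $G$ to be used by $\mu_{1}$ or by $\mu_{2}$; a short contraction argument then shows that any branch set of $\mu_{1}$ that is not a single vertex must be covered entirely by branch sets of $\mu_{2}$ (otherwise an edge inside it could be contracted without disturbing $\mu_{2}$), and symmetrically. This confines the ``private'' part of the two models to at most $|V(H_{1})|+|V(H_{2})|$ vertices.

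The hard part will be to control the \emph{shared} region, where branch sets of the two models overlap: there the contraction argument only shows that along any tree-path of one model consecutive vertices lie in distinct branch sets of the other, which by itself does not forbid long interleavings. The way I would finish is to argue that this proper-colouring structure keeps the \emph{treewidth} of a minor-minimal witness bounded by a computable function of $|V(H_{1})|$ and $|V(H_{2})|$, and then invoke the effective obstruction machinery for treewidth-bounded, MSOL-definable families (Lemma~\ref{tbp} and Corollary~\ref{metaduo}, noting that ``$H_{1}\leq_{\rm m}G\wedge H_{2}\leq_{\rm m}G$'' is MSOL-expressible for fixed $H_{1},H_{2}$) to convert the treewidth bound into the required computable size bound. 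Taking the maximum of these bounds over all pairs yields a bound $N$ valid for every $G\in\obs(\mathcal{C})$; enumerating all graphs on at most $N$ vertices, discarding those that fail the decidable test ``$G\in\obs(\mathcal{C})$'', and keeping the minor-minimal survivors then outputs $\obs(\mathcal{C})$.
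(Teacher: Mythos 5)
The paper does not reproduce a proof of this theorem (it is quoted from Adler, Grohe, and Kreutzer \cite{AdlerGK08comp}), so your argument can only be judged on its own terms; its overall architecture does match the one used in \cite{AdlerGK08comp} (reduce to minor-minimal graphs containing a fixed pair $(H_{1},H_{2})$ as minors, bound their treewidth by a computable function, then invoke effective computation of obstruction sets for MSOL-definable minor-closed classes). The routine parts of your proposal are correct: $\mathcal{C}_{1}\cup\mathcal{C}_{2}$ is minor-closed; given a size bound, the predicate ``$G\in\obs(\mathcal{C})$'' is decidable by testing the finitely many one-step minors using Theorem~\ref{checking}; the reduction to fixed pairs $(H_{1},H_{2})$ is sound; and your contraction argument showing that every branch set of $\mu_{1}$ of size at least two is contained in the union of the branch sets of $\mu_{2}$ (and symmetrically) is correct.

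The genuine gap is at exactly the point you yourself flag as ``the hard part''. The entire mathematical content of the theorem is a computable bound on the shared region, i.e., on the size or treewidth of a minor-minimal graph containing both $H_{1}$ and $H_{2}$ as minors, and you do not prove it: you only assert that the interleaving ``proper-colouring structure keeps the treewidth bounded''. The structural information you have extracted --- that $G$ minus boundedly many edges is the union of two forests, each with boundedly many components, whose branch sets properly colour one another --- is not obviously sufficient: a union of two forests can already be an $n\times n$ grid, so sparsity alone does not bound treewidth, and some further, nontrivial exploitation of minor-minimality is indispensable; that is precisely where the work of \cite{AdlerGK08comp} lies. There is also a secondary problem with your intended finish: Lemma~\ref{tbp} and Corollary~\ref{metaduo} require the \emph{class} to be treewidth-bounded, i.e., $\tw(G)\leq f({\bf p}(G))$ for \emph{every} graph in the class, whereas the class $\{G \mid H_{1}\not\leq_{\rm m}G \mbox{ or } H_{2}\not\leq_{\rm m}G\}$ contains all $H_{1}$-minor-free graphs and therefore has unbounded treewidth whenever $H_{1}$ is non-planar. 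What you actually need is the stronger theorem of \cite{AdlerGK08comp} that the obstruction set of an MSOL-definable minor-closed class is computable from a bound on the treewidth of its \emph{obstructions} alone; as written, the lemmas you cite do not apply to your situation.
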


The above theorem, can be useful  for making the meta-algorithm of Theorem~\ref{metaena} constructive.
For this consider two graph parameters ${\bf p}_{1}$ and ${\bf p}_{2}$. We define the {\em union}
of ${\bf p}_{1}$ and ${\bf p}_{2}$ as the parameter ${\bf p}$ where ${\bf p}(G)=\max\{{\bf p}_{1}(G),{\bf p}_{2}(G)\}$.
A direct consequence of Theorem~\ref{simpletomore} is the following:

\begin{corollary}
\label{corpars}
Let ${\bf p}_{1}$ and ${\bf p}_{2}$ be  minor-closed parameters for which 
 {{\sc {$k$-Parameter Checking for }}\mbox{\rm\bf p}$_{i}$}  is constructively in ${\sf FPT}$ for $i=1,2$. Then 
the {{\sc {$k$-Parameter Checking for }}\mbox{\rm\bf p}} is also constructively in {\sf FPT}, where ${\bf p}$ is the union of ${\bf p}_{1}$ and ${\bf p}_{2}$.
\end{corollary}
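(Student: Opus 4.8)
The plan is to reduce the constructibility of {{\sc {$k$-Parameter Checking for }}\mbox{\rm\bf p}} to the constructibility of the two obstruction sets $\obs({\bf p}_{1},k)$ and $\obs({\bf p}_{2},k)$, and then glue these together using Theorem~\ref{simpletomore}. First I would fix an arbitrary threshold $k\geq 0$ and set ${\cal C}_{i}=\{G\mid {\bf p}_{i}(G)\leq k\}$ for $i=1,2$. Since each ${\bf p}_{i}$ is minor-closed, every ${\cal C}_{i}$ is a minor-closed graph class, and hence by Corollary~\ref{finitecor} its excluded-minor set $\obs({\cal C}_{i})=\obs({\bf p}_{i},k)$ is finite. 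The union parameter ${\bf p}$ is again minor-closed (if $H\leq_{\rm m}G$ then ${\bf p}_{i}(H)\leq {\bf p}_{i}(G)$ for both $i$, so the value obtained by combining them cannot increase under taking minors), so Theorem~\ref{metaena} already yields the {\em existence} of an $f(k)\cdot n^{3}$-\FPT\ algorithm; the only missing ingredient is a procedure that, for each $k$, actually outputs $\obs({\bf p},k)$.

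The key observation is that the yes-class of {{\sc {$k$-Parameter Checking for }}\mbox{\rm\bf p}} at threshold $k$ is exactly the set union ${\cal C}_{1}\cup{\cal C}_{2}$: a graph $G$ has ${\bf p}(G)\leq k$ precisely when at least one of ${\bf p}_{1}(G)\leq k$ or ${\bf p}_{2}(G)\leq k$ holds. This is what makes Theorem~\ref{simpletomore} the right tool, since that theorem takes two minor-closed classes presented by their obstruction sets and returns the excluded minors of their union. Concretely, on input $k$ the algorithm would: (i) invoke the constructive procedures supplied by the hypothesis on ${\bf p}_{1}$ and ${\bf p}_{2}$ to produce the finite sets $\obs({\bf p}_{1},k)$ and $\obs({\bf p}_{2},k)$ --- recall that, as explained just after Theorem~\ref{metaena}, being ``constructively in \FPT'' for a minor-closed parameter is exactly the ability to construct its obstruction sets for every $k$; (ii) feed these two sets into the algorithm of Theorem~\ref{simpletomore} to obtain $\obs({\cal C}_{1}\cup{\cal C}_{2})=\obs({\bf p},k)$. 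With $\obs({\bf p},k)$ in hand, the decision algorithm underlying Theorem~\ref{metaena} --- which by Lemma~\ref{easy} and Theorem~\ref{checking} merely tests whether any member of $\obs({\bf p},k)$ is a minor of the input $G$ --- is now fully effective, placing {{\sc {$k$-Parameter Checking for }}\mbox{\rm\bf p}} constructively in \FPT.

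The only genuine obstacle is step (ii), namely passing from the obstruction sets of ${\cal C}_{1}$ and ${\cal C}_{2}$ to that of their union; everything else is bookkeeping, and this obstacle is precisely what Theorem~\ref{simpletomore} removes. It is worth stressing why the heavy machinery is unavoidable here: a minor-minimal obstruction for ${\cal C}_{1}\cup{\cal C}_{2}$ need not occur in either $\obs({\cal C}_{1})$ or $\obs({\cal C}_{2})$, so one cannot simply read off the answer from the two given sets. By contrast, the complementary pattern that demands {\em both} ${\bf p}_{1}(G)\leq k$ and ${\bf p}_{2}(G)\leq k$ corresponds to ${\cal C}_{1}\cap{\cal C}_{2}$, whose excluded minors are just the minor-minimal elements of $\obs({\cal C}_{1})\cup\obs({\cal C}_{2})$ and so require no analogue of Theorem~\ref{simpletomore}. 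A final point I would check is uniformity across slices: the two hypothesized procedures deliver $\obs({\bf p}_{i},k)$ for every $k$, and Theorem~\ref{simpletomore} is a single algorithm applicable to any pair of finite obstruction sets, so the composed procedure constructs $\obs({\bf p},k)$ uniformly in $k$, which is what constructibility of the \FPT\ algorithm requires.
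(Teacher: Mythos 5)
Your proposal is correct in substance, and it is the paper's own argument: the paper proves Corollary~\ref{corpars} with the single remark that it is ``a direct consequence of Theorem~\ref{simpletomore}'', and your steps --- obtain $\obs({\bf p}_1,k)$ and $\obs({\bf p}_2,k)$ from the constructibility hypothesis, feed them to the algorithm of Theorem~\ref{simpletomore}, then decide ${\bf p}(G)\leq k$ by testing each member of the resulting obstruction set against $G$ via Theorem~\ref{checking}, as in Lemma~\ref{easy} and Theorem~\ref{metaena} --- are exactly what that remark compresses.

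One point needs attention, although the underlying confusion is arguably the paper's rather than yours. The paper defines the union of ${\bf p}_1$ and ${\bf p}_2$ by ${\bf p}(G)=\max\{{\bf p}_1(G),{\bf p}_2(G)\}$, and under this definition your central equivalence is backwards: ${\bf p}(G)\leq k$ holds iff \emph{both} ${\bf p}_1(G)\leq k$ and ${\bf p}_2(G)\leq k$, so the yes-class at threshold $k$ is ${\cal C}_1\cap{\cal C}_2$, not ${\cal C}_1\cup{\cal C}_2$. As you yourself observe in your closing aside, the intersection case is trivial --- the minor-minimal elements of $\obs({\cal C}_1)\cup\obs({\cal C}_2)$ already form its obstruction set, and Theorem~\ref{simpletomore} is not needed at all. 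The definition under which the corollary genuinely requires Theorem~\ref{simpletomore} is ${\bf p}(G)=\min\{{\bf p}_1(G),{\bf p}_2(G)\}$, whose yes-class is the class union ${\cal C}_1\cup{\cal C}_2$; that is the version your main argument proves, and it is plainly what the paper intends, since only that reading makes the corollary a consequence of the Adler--Grohe--Kreutzer result. So, taken as a whole, your proposal establishes the corollary under either reading, but you should state explicitly which definition of ``union'' you adopt, because the sentence asserting ``${\bf p}(G)\leq k$ precisely when at least one of ${\bf p}_1(G)\leq k$ or ${\bf p}_2(G)\leq k$ holds'' is false for the paper's displayed definition. A smaller caveat: you treat ``constructively in \FPT'' as synonymous with ``$\obs({\bf p}_i,k)$ is computable for every $k$''. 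That is the reading the corollary needs and the one the paper uses in its examples, but it should be taken as the meaning of the hypothesis rather than something derived from it: an explicit \FPT\ decision procedure does not in general yield the obstruction sets, since, as the paper itself notes, computing $\obs({\bf p},k)$ is not recursively solvable in general.
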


An interesting problem is to extend as much as possible the collection of parameters where Corollary~\ref{corpars}
can be made constructive.
Another running project is to find (when   possible) counterparts of the algorithm
of Theorem~\ref{checking} for other partial relations on graphs.  
An important step in this direction was done by Grohe, Kawarabayashi, Marx, and Wollan \cite{GroheKMW10find} for the relations of topological minor and immersion.

\subsubsection{The irrelevant vertex technique}
\label{irrelevant}

One of the most important  contributions of the Graph Minors project to 
parameterized algorithm design was the proof, in~\cite{RobertsonS-XIII}, 
that the  following problem is in $f(k)\cdot n^{3}$-$\FPT$ (a faster,
$f(k)\cdot n^{2}$
step algorithm appeared recently in~\cite{Kawarabayashi11thed}).\\

\noindent{\sc {$k$-disjoint Paths}}\\
\noindent {\sl Instance:}~~ A graph $G$ with  $k$ pairs of terminals $T=\{(s_1,t_1),\ldots, (s_k,t_k)\}\in V(G)^{2(k)}$.\\
{\em {Parameter:} ${k}$}\\
{\sl Question:}  Are there $k$ pairwise vertex disjoint paths $P_1,\ldots,P_k$ in $G$
		such that $P_i$ has endpoints $s_i$ and $t_i$?\\

The algorithm for the above problem is based on 
an idea known as the {\em irrelevant vertex technique}
and revealed strong links between structural graph theory and parameterized 
algorithms. In general the idea is described as follows.

Let $(G,T)$ be an input of the {\sc $k$-disjoint Paths}.
We say that a vertex $v\in V(G)$ is {\em solution-irrelevant}
when $(G,T)$ is a YES-instance if only if $(G\setminus v,T)$ is a YES-instance. 
If the input graph $G$ violates some structural condition, then it is possible
to find a vertex $v$ that is solution-irrelevant.
One then repeatedly removes
such vertices until the structural condition 
is met which means that the graph has been simplified
and a solution is easier to be found.  

The structural conditions used the in algorithm from~\cite{RobertsonS-XIII} 
are two: 
\begin{itemize}
\item[{\sf (i)}] $G$ excludes a clique, whose size depends on $k$,
as a minor and 
\item[{\sf (ii)}] $G$ has  treewidth bounded by some function of $k.$ 
\end{itemize}
In~\cite{RobertsonS-XIII} Robertson and Seymour proved, for some specific   function $h: \Bbb{N}\rightarrow \Bbb{N}$, that if the input graph $G$ contains 
some $K_{h(k)}$ as a minor, then 
$G$ contains some solution-irrelevant vertex that can be found in $h(k)\cdot n^{2}$ steps.
This permits us to assume that every input of the problem is $K_{h(k)}$-minor free, thus enforcing 
structural condition (i). Of course, such graphs may still be complicated enough and do not necessarily 
meet condition~(ii). Using a series of (highly non-trivial) structural results~\cite{RobertsonS-XXI,RobertsonS-XXII}, Robertson and Seymour
proved that there is some  function $g: \Bbb{N}\rightarrow \Bbb{N}$ such that 
every $K_{h(k)}$-minor free graph with treewidth at least $g(k)$ contains 
a solution-irrelevant vertex that can be found in $g(k)\cdot n^{2}$ steps.
This enforces structural property (ii) and then {\sc $k$-disjoint Paths}
can be solved in $f(k)\cdot n$ steps, 
using Courcelle's theorem (Theorem~\ref{courcelle:th}).

Actually the above idea was used in~\cite{RobertsonS-XIII} to solve a more 
general problem that contains both  {$k$-disjoint Paths} and {\sc $H$-Minor Checking}.
The only ``drawback'' of this algorithm is that its parametric dependence, i.e.,
the function $f$ is immense. Towards lowering the  contribution of $f$,
better combinatorial bounds where provided by Kawarabayashi and 
Wollan in~\cite{KawarabayashiW10asho}. Also, for planar graphs a better  upper bound was  given in~\cite{AdlerKKLST12tigh}.
The irrelevant vertex technique has been used extensively
in parameterized algorithm design. For a sample of  results that use this technique, see~\cite{DawarGK07loca,DawarK09domi,GolovachKPT09indu,KawarabayashiK08integ,Kawarabayashi2010oddc,GroheKMW10find,KobayashiK09algo}.

\subsection{Faster {\sf FPT}-algorithms}
\label{ffpta}

Clearly the {\sf FPT}-algorithms presented in the previous subsection are far from being practical as their parameter dependence 
may be huge.  A big part of research has been devoted to the reduction of their parameter dependence 
or at least to the detection of classes of their instances where such a simplification is possible.

\subsubsection{Dynamic Programming}
\label{dynamicpr}

While the parameter dependence of the algorithms derived by Theorem~\ref{courcelle:th} are huge (see~\cite{FrickG04thec}), they usually can be considerably improved with the use of  dynamic programming. In what follows, we describe the general approach and 
provide two  simple examples.

Suppose that $G$ is a graph and let $(T,\tau)$ be a branch decomposition of it of width at most $k$. 
For applying dynamic programming on $(T,\tau)$ we consider the tree $T$ to be {\em rooted} at one of its leaves.
Let $v_{r}$ be this leaf and let $e_{r}$ be the edge of $T$ that contains it.
Also, we slightly enhance the definition of a branch decomposition so that 
no edge of $G$ is assigned to $v_{r}$ and thus
$\mids(e_{r})=\emptyset$ (for this, we take any edge of the branch decomposition, subdivide it and
then connect the subdivision vertex with a new (root) leaf $t_{r}$, using the edge $e_{r}$). The edges of $T$ can be
oriented towards the root $e_{r}$, and for each edge $e\in E(T)$ we denote by $E_{e}$ the edges of
$G$ that are mapped to leaves of $T$ that are descendants of $e$. We also set $G_{e}=G[E_{e}]$ and
we denote by $L(T)$ the edges of $T$ that are incident to leaves of $T$  that are different than $v_{r}$. Given an edge $e$ heading
at a non-leaf vertex $v$, we denote by $e_L,e_R \in E(T)$ the two edges with tail $v$. 

%
%As both
%examples below are variants of the vertex cover problem, we can also assume that
%$|E(T)|=\mathcal{O}(k\cdot n)$ as, otherwise, the answer for each problem is trivially negative.

We give two examples. We first present how to to do dynamic programming for solving the following problem.\\

\noindent{\sc $bw$-Vertex Cover}

\noindent {\sl Instance:}~~A graph $G$ and a non-negative  integer $\ell$.  \\
{\sl Parameter:} $k=\bw(G)$.\\
{\sl Question:}~~Does $G$ have a vertex set $S$ of size at most $\ell$ that intersects all the edges of $G$?\\

Let $G$ be a graph and $X,X'\subseteq V(G)$ where $X\cap X'=\emptyset$. We say that
$\vc(G,X,X')\leq \ell$ if $G$ contains a vertex cover $S$ where $|S|\leq \ell$ and $X\subseteq S\subseteq
V(G)\setminus X'$.
%We also set  $V_{e}=V(G_{e})$.
%
 Let 
 \begin{eqnarray}
 {\cal R}_{e} & = & \{(X,\ell)\mid X\subseteq \mids(e) \mbox{~and~} \vc(G_{e},X,\mids(e)\setminus X)\leq \ell\}. \nonumber
 %\label{vc1st}
 \end{eqnarray}
The set ${\cal R}_{e}$ book-keeps  all pairs $(X,\ell)$ certifying the existence, in $G_{e}$,
of a vertex cover of size $\leq \ell$ whose restriction in $\mids(e)$ is $X$.
Observe that  $\vc(G)\leq \ell$ iff $(\emptyset,\ell)\in {\cal R}_{e_{r}}$.
 For each $e\in E(T)$ we can compute ${\cal R}_{e}$ by using the following dynamic programming formula:
 \begin{eqnarray*}
 {\cal R}_{e} & = &
\begin{cases}
\{(X,\ell)\mid X\subseteq e \mbox{~and~} X\neq \emptyset \wedge \ell\geq |X|\}  & \text{if $e\in L(T)$}\\
\{(X,\ell)\mid \exists (X_{1},\ell_{1})\in {\cal R}_{e_{1}},  \exists (X_{2},\ell_{2})\in {\cal R}_{e_{2}}:   & \\
(X_{1}\cup X_{2})\cap \mids(e)=X \wedge   \ell_{1}+\ell_{2}-|X_{1}\cap X_{2}|\leq \ell \} & \text{if
$e\not\in L(T)$}
\end{cases}
\end{eqnarray*}
Note that for each $e\in E(T)$, $|{\cal R}_{e}|\leq 2^{|\mids(e)|}\cdot \ell$. Therefore, the above
algorithm can check whether $\vc(G)\leq \ell$ in $O(4^{\bw(G)}\cdot \ell^2\cdot |V(T)|)$ steps.
Using the fact that every $n$-vertex graph has at most $O(\bw(G)\cdot n)$ edges, we obtain that 
 $|V(T)|=O(\bw(G)\cdot n)$. As $l\leq n$, the above 
 dynamic programming  algorithm implies that {\sc $bw$-Vertex Cover} belongs to  $2^{O(k)}\cdot n^{3}$-\FPT.\\
 
%is single-exponential in $\bw(G)$.

%
%Let $G$ be a graph and $X,X'\subseteq V(G)$ where $X\cap X'=\emptyset$. We say that
%$\vc(G,X,X')\leq l$ if $G$ contains a vertex cover $S$ where $|S|\leq l$ and $X\subseteq S\subseteq
%V(G)\setminus X'$. Given a rooted branch decomposition $(T,\tau)$ and an edge $e\in V(T)$, we define 
%%We also set  $V_{e}=V(G_{e})$.
%%
%$${\cal R}_{e}=\{(X,q)\mid X\subseteq \mids(e)\ \wedge\ q\leq l\ \wedge\ \vc(G_{e},X,\mids(e)\setminus X)\leq k\}.$$
% Notice that for each $e\in E(T)$, $|{\cal R}_{e}|\leq 2^{|\mids(e)|}\cdot l\leq 2^{k}\cdot l$  and observe that  $\vc(G)\leq l$ iff $(\emptyset,l)\in {\cal R}_{e_{r}}$.
% For each $e\in E(T)$ we can compute ${\cal R}_{e}$ by using the following dynamic programming formula:
%%
%%
% \begin{eqnarray*}
% {\cal R}_{e} & = &
%\begin{cases}
%\{(X,q)\mid  X\neq \emptyset \wedge q\geq |X|\}  & \text{if $e\in L(T)$}\\
%\{(X,q)\mid \exists (X_{L},q_{L})\in {\cal R}_{e_{L}},  \exists (X_{R},q_{R})\in {\cal R}_{e_{R}}:   & \\
%(X_{L}\cup X_{R})\cap \mids(e)=X\  \wedge\    q_{L}+q_{R}-|X_{L}\cap X_{R}|\leq q  & \text{if
%$e\not\in L(T)$}
%\end{cases}
%\end{eqnarray*}
%%
%%
%A naive implementation of  the above
%algorithm can decide  whether $\vc(G)\leq l$ in $\mathcal{O}(8^{k}\cdot l^{2}\cdot k \cdot |V(T)|)$ steps.
%
%
%

Our second example is a dynamic programming algorithm for the following problem:\\

\noindent{\sc $bw$-3-Coloring}\\
\noindent {\sl Instance:}~~A graph $G$.  \\
{\sl Parameter:} $k=\bw(G)$.\\
{\sl Question:}~~Does $G$ have a proper 3-coloring?\\

We will consider 3-coloring functions of the type $\chi: S\rightarrow \{1,2,3\}$ and, for $S'\subseteq S$,
we use the notation $\chi\!\!\mid_{S'}=\{(a,q)\in \chi\mid a\in S'\}$ and $\chi(S')=\{\chi(q)\mid q\in S'\}$.

 Given a rooted branch decomposition $(T,\tau)$ an edge $e\in V(T)$, we use the notation
 ${\cal X}_{e}$ for all   functions $\chi: \mids(e)\rightarrow \{1,2,3\}$  and the notation
  $\bar{\cal X}_{e}$ for all proper 3-colorings of $G_{e}$. We define 
  \begin{eqnarray}
 \alpha_{e} & = & \{\chi\in{\cal X}_{e}\mid \mbox{~$\exists \bar{\chi}\in\bar{\cal X}_{e}: \chi\mid_{\mids(e)} =  \bar{\chi}$}\}\nonumber
 %\label{3c2nd}
\end{eqnarray}
The set $\alpha_{e}$ stores the restrictions in $\mids(e)$ of  all
proper 3-colorings of $G_{e}$.
 Notice that for each $e\in E(T)$, $|\alpha_{e}|\leq 3^{\mids(e)}\leq 3^{k}$ and observe that $G$ has 
 a 3-coloring iff $\alpha_{e_{r}}\neq\emptyset$ (if $\alpha_{e}\neq\emptyset$, then it contains the empty function).
%\alpha 
  For each $e\in E(T)$ we can compute ${\cal R}_{e}$ by using the following dynamic programming formula:
 \begin{eqnarray*}
 \alpha_{e} & = &
\begin{cases}
\{\chi\in{\cal X}_{e}\mid |\chi(e)|=2\} & \text{if $e\in L(T)$}\\
\{\chi\in{\cal X}_{e}\mid \exists \chi_{L}\in {\cal X}_{e_{L}},   \exists \chi_{R}\in {\cal X}_{e_{R}},:   & \\
\chi_{L}\!\!\mid_{\mids(e_{L})\cap \mids(e_{R})}=\chi_{R}\!\!\mid_{\mids(e_{L})\cap \mids(e_{R})} \mbox{\ and}& \\
(\chi_{L}\cup \chi_{R})\!\!\mid_{\mids(e)}=\chi\}
& \text{if
$e\not\in L(T)$}
\end{cases}
\end{eqnarray*}

Clearly, this simple algorithm  proves that  {\sc $bw$-3-Coloring} belongs to  $2^{O(k)}\cdot n $-\FPT. 
A straightforward extension implies that  {\sc $bw$-}q{\sc -Coloring} belongs to  $q^{O(k)}\cdot n $-\FPT.

In both above examples, we associate to each edge $e\in E(T)$  some {\em characteristic structure},  
that, in case of {\sc $bw$-Vertex Cover} and {\sc $bw$-3-Coloring},  is ${\cal R}_{e}$ and $\alpha_{e}$  respectively.
This structure is designed so that its value for $e_{r}$ is able to determine the answer to the 
problem. Then, it remains to give this structure for the leafs of $T$ and then provide 
a recursive procedure to compute bottom-up all characteristic structures from the leaves to the root.
This  dynamic programming machinery has been 
used many times in parameterized algorithm design and for much 
more complicated types of problems. In this direction, the algorithmic 
challenge is to reduce as much as possible the information that is book-kept 
in the characteristic structure associated to each edge of $T$. Usually, for simple 
problems as those examined above, where the structure encodes subsets (or a bounded number of subsets)
of $\mids(e)$, it is  easy to achieve a single-exponential parametric dependence.
Typical examples of such  problems are {\sc Dominating Set}, {\sc Max Cut} and {\sc Independent Set},
parameterized by treewidth/branchwidth, 
where the challenge is to reduce as much as possible the constant hidden in the $O$-notation 
of their $2^{O(k)}$-parameter dependence\footnote{As here we care about the exact parameter dependence the constants may vary depending on whether we parameterize by treewidth or branchwidth.}. 
Apart from tailor-made improvements for specific problems such as  $tw$-{\sc Dominating Set} and $tw$-{\sc Vertex Cover} (see e.g.~\cite{AlberN02impr,AlberFN05expe,BetzlerNU06tree}),
substantial progress in this direction has  been done using the Fast Matrix Multiplication technique, introduced by Dorn in~\cite{Dorn06fast} and the results of Rooij, Bodlaender, and Rossmanith in~\cite{RooijBR09dyna}, where they used the 
Generalized Subset Convolution Technique (introduced by of Björklund, Husfeldt, Kaski, and Koivisto in~\cite{BjorklundHKK07four}).

Recently, some lower bounds  on the parameterized complexity of problems parameterized by treewidth
were given by Lokshtanov, Marx, and Saurabh in~\cite{LokshtanovMS11know}. According to~\cite{LokshtanovMS11know}, unless SAT is solvable in $O^{*}((2-\delta)^{n})$ steps,
{\sc $bw$-Independent-Set} does not belong to  $(2-\epsilon)^{k}\cdot n^{O(1)}$-\FPT,
{\sc $bw$-Max Cut} does not belong to  $(3-\epsilon)^{k}\cdot n^{O(1)}$-\FPT,
{\sc $bw$-Dominating-Set} does not belong to  $(3-\epsilon)^{k}\cdot n^{O(1)}$-\FPT,
{\sc $bw$-Odd Cycle Transversal} does not belong to  $(3-\epsilon)^{k}\cdot n^{O(1)}$-\FPT,
and 
{\sc $bw$-{\rm q}-Coloring} does not belong to  $(q-\epsilon)^{k}\cdot n^{O(1)}$-\FPT.
The assumption that  ${\sc SAT}\not\in O^{*}((2-\delta)^{n})$-{\sf TIME}, is known as the {\sl Strong Exponential Time Hypothesis} (SETH) and was introduced by  Impagliazzo and Paturi in~\cite{ImpagliazzoP99thec}. 

For more complicated problems, where the characteristic
encodes pairings, partitions, or packings of $\mids(e)$  the parametric dependence 
of the known \FPT\ algorithms is of the type $2^{O(k\log k)}\cdot n^{O(1)}$ or worst. Usually, 
these are problems involving  some global constraint such as connectivity on the certificate of their solution. 
Recent complexity results of Lokshtanov, Marx, and Saurabh~\cite{LokshtanovMS11slig} show that for problems such as 
the {\sc Disjoint Paths Problem} no $2^{o(k\log k)}\cdot n^{O(1)}$-algorithm exists unless ETH fails.
On the other hand, a technique was recently introduced by Cygan, Nederlof, M. Pilipczuk, M. Pilipczuk, van Rooij, and Wojtaszczyk in~\cite{CyganNPPRW11solv},
solved many problems of this type in $2^{O(\tw(G))}\cdot n^{O(n)}$ steps by  randomized Monte Carlo algorithms. This includes problem as~{\sc  Hamiltomian Path}, {\sc  Feedback Vertex Set} and {\sc  Connected Dominating Set}. 
For planar graphs it is still possible to 
design $O^{*}(2^{O(\bw)})$ step dynamic programming algorithms using the 
the {\em Catalan structures} technique introduced by Dorn, Penninkx, Bodlaender, and  Fomin in~\cite{DornPBF10effi}. This technique 
uses a special type of branch decomposition called {\em sphere cut decompositions} introduced in~\cite{SeymourT94call}.
In such decompositions, the vertices of  $\mids(e)$ are virtually arranged on a closed curve 
of the surface where the input graph is embedded. In case the characteristic structure encodes 
non-crossing pairings,  its size is proportional to  the $k$-th Catalan 
number that is single-exponential in $k$. This fact   yields the $O^{*}(2^{O(\bw)})$ time bound 
to the corresponding dynamic programming algorithms. The same technique 
was extended by Dorn,  Fomin, and Thilikos in~\cite{DornFT06fast} for bounded genus graphs and in~\cite{DornFT08cata} for every graph class that excludes some graph as a minor. Finally, Rué, Sau, and Thilikos~in~\cite{RueST10dyna}  extended this technique  to wider families of problems.

%
%
%can be adapted to many other combinatorial problems where the certificate of
%the solution is a subset of vertices (e.g. {\sc Dominating Set}, {\sc 3-Coloring},
%{\sc Independent Set}, among others).
%
%

\subsubsection{Single-Exponential Algorithms}
\label{singlee}

%The assumption that $3$-{\sc SAT}$\not\in{\sf DTIME}(2^{o(n)})$ (mentioned in Section~\ref{analog})

%
%is known as {\em Exponential Time Hypothesis} (ETH), is weaker than the assumption that ${\sf P}\neq {\sf NP}$,
%and conditions several lower bounds in classic or parameterized complexity (see eg, ????).
In~\cite{CaiJ03onth} Cai and Juedes proved that $k$-{\sc Vertex Cover} cannot be solved in $2^{o(k)}\cdot n^{O(1)}$ steps unless 
ETH collapses. The linear parameter dependence of  the standard reductions from $k$-{\sc Vertex Cover}
 to
other problems such as {\sc $k$-Dominating Set} or {\sc $k$-Feedback Vertex Set} imply that the same lower bound
holds for these problems as well. In other words, for a wide class of parameterized problems 
membership in the parameterized class~{\sf EPT} (containing all problems solvable in $2^{O(k)}\cdot n^{O(1)}$
steps as defined in Subsection~\ref{boundedx})  is the best we may expect. 
However, to prove  membership in {\sf EPT} is not always easy.
In fact none of the the techniques described so far  guarantees that the derived 
algorithms will have single-exponential parameter dependence. We describe two general techniques  that yield such bounds.

\paragraph{Iterative compression.}
As a technique, iterative compression dates back to the parameterized algorithm given by Reed, Smith, and Vetta in~\cite{ReedSV04find}
for the problem {\sc Odd Cycle Transversal} that, given  a graph $G$ and an integer $k$, 
asks whether there is a set $S$ of at most $k$ vertices meeting all odd cycles of $G$ (see also~\cite{Huffner05inst}).
Before~\cite{ReedSV04find}, this was a popular open problem and iterative compression appeared 
to be just the correct approach for its solution. In what follows we will give a generic example of this technique.
Let $\Pi$ be a graph property  that is {\em hereditary} i.e., if $H$ is an induced subgraph of $G$
and $G\in\Pi$, then also $H\in\Pi$. We define the following meta-problem:\medskip

\noindent{\sc $k$-Vertex Deletion Distance From $\Pi$} \\
\noindent {\sl Instance:}~~A graph $G$ and a non-negative  integer $k$.  \\
{\sl Parameter:} $k$.\\
{\sl Question:}~~Is there an $S\subseteq V(G)$ of size at most $k$ such that $G-S\in\Pi$?
(Here we denote by $G-S$ the graph $G[V(G)\setminus S]$.)\\

Our intention is to solve this problem for several choices of $\Pi$. Clearly, if $\Pi$ is 
``being edgeless'', the above problem is {\sc Vertex Cover}. Also, 
if $\Pi$ is ``being acyclic" the above problem defines {\sc Feedback Vertex Set}, while 
if $\Pi$ is ``being bipartite", it defines {\sc Odd Cycle Transversal}.
 In fact, iterative compression   reduces algorithmically the problem 
to its annotated version below with the additional restrictions that (a) $H[Q]\in\Pi$ and (b) $H-Q\in\Pi$. \medskip

\noindent{\sc $k$-Annotated Vertex Deletion Distance From $\Pi$} \\
\noindent {\sl Instance:}~~A graph $H$, a set $Q\subseteq V(H)$, and a non-negative integer $k$.  \\
{\sl Parameter:} $k$.\\
{\sl Question:}~~Does $H$ have a vertex set $R\subseteq V(H)$ of size at most $k$ such that $R\subseteq Q$ and $H-R\in\Pi$?\\

We now present the following routine that, given a graph $G$ and  a non-negative integer $k$, returns 
either a vertex cover $S$ of size at most $k$ or  {\tt NO} which means that no such solution exists.

\begin{tabbing}
{\sl Procedure} {\sf {solveVDD$_\Pi$}}$(G,k)$\\
{\bf 1.} If $V(G)=\emptyset$, then return $S=\emptyset$.\\
{\bf 2.} Pick \= a vertex $v\in V(G)$ and\\ 
\> if \=   {\sf {solveVDD$_\Pi$}}$(G-v,k)=${\tt NO}, then return {\tt NO}.\\
{\bf 3.} let $S=        $ {\sf {solveVDD$_\Pi$}}$     (G-v,k)\cup\{v\}.$\\
{\bf 4.} If $|S|\leq k$ then return $S$.\\
{\bf 5.} for \=  all $F\subseteq S$ such that $G[F]\in \Pi$,\\
%\> if $G[F]\not\in \Pi$, then return {\tt NO},\\
\> let $Q=V(G)-S$, $H=G[F\cup Q]$, $k'= k-|S-F|$, and check whether  \\
\> \> the {\sc $k$-Annotated Vertex Deletion Distance From $\Pi$}\\
\> \> with input $H$, $Q$, and $k'$ has a solution $R$ where   $R\subseteq Q$.\\
\> \> If this solution $R$ exists, then return $R\cup(S-F)$.\\
{\bf 6.} return {\tt NO}
\end{tabbing}

The above procedure considers a vertex ordering $(v_{1},\ldots,v_{n})$ of $G$
and solves the problem by considering the graphs $G_{i}=G[\{1,\ldots,i\}]$, $i=0,\ldots,n$.
If $i=0$, then, as returned in line {\bf 2}, the empty set is a solution.
Assume now that a solution $S'$ for $G_{i}$ is known, which, by the hereditarity 
of $\Pi$, implies that $S=S'\cup\{v_{i+1}\}$ is a solution for $G_{i+1}$ of size 
$\leq k+1$. If $|S|\leq k$, then $S$ is a solution also for $G_{i}$, as is decided in Step {\bf 4}.
In Step {\bf 5} the algorithm is trying to find a solution in $G$ that does not intersect $F\subseteq S$
and contains $S-F$ for all possible $2^{k+1}$ choices of $F$. Certainly, this is not possible 
if $H-Q=G[F]\not\in \Pi$ as indicated by the filtering condition  of this loop. In what remains,
one has to solve the annotated version of the problem of $H$ and for $k'$ with $Q$ as annotated set.
To find such a solution  one may use the additional properties that
 $H-Q\in \Pi$ and $H[Q]\in \Pi$ (recall that $Q=V(G)-S$ which means 
 that $H[Q]=G[Q]\in\Pi$). From now on, the technique is specialized to each particular problem.
 For instance, for {\sc Vertex Cover}, the graph $H$ is a bipartite graph with 
 parts $F$ and $Q$. The question is whether some subset $R, |R|\leq k'$ of $Q$ meets all the edges of $H$, 
 which essentially asks whether the non-isolated vertices of $Q$ are at most $k'$. As this can be decided 
 is polynomial time, the only non-polynomial load of the whole procedure is the one of Step {\bf 5}, and 
 this yields an $O(2^{k}\cdot n^{2})$ step algorithm. The analogous question for the case of {\sc Feedback Vertex Set} needs more effort as now $F$ and $Q$ 
 induce forests in $H$. In~\cite{DehneFLRS05anfp, GuoGHNW06comp}, it was shown that the annotated version of the {\sc Feedback Vertex Set} (where the requested solution is a subset of $Q$)
 can be reduced to an equivalent one where if there exists a solution, then  the annotated set has at most $c\cdot k$ vertices. As ${ck\choose k}=2^{O(k)}$, the solution of the annotated version adds a $2^{O(k)}$ overhead 
to the $2^{k+1}$ contribution of Step {\bf 5} which results to an $2^{O(k)}\cdot n^{2}$ step algorithm. The technique we describe  bellow is able to design \FPT-algorthms  for these three problems such as 
 {\sc Feedback Vertex Set} \cite{DehneFLRS05anfp, GuoGHNW06comp},
 {\sc Directed Feedback Vertex Set} \cite{ChenLOR08afix},
 {\sc Almost 2-SAT} \cite{RazgonO09almo}, and 
{\sc Cluster Vertex Deletion}  \cite{HKMN10fixe}
(see also the survey of Guo, H. Moser, and R. Niedermeier~\cite{GuoMN09iter} on results using the iterative compression  technique).

\paragraph{Color-coding.}
%
%In most of the cases, the time bounds of the greedy localization technique can be improved using the  {\em sl color-coding} technique, one of the 

One of the most beautiful ideas in parameterized algorithm design is 
Color-coding, introduced by Alon, Yuster, and Zwick in~\cite{AlonYZ95colo}.  
This technique  was first applied to the following problem.\\

\noindent{\sc $k$-Path}

\noindent {\sl Instance:}~~A graph $G$ and a non-negative  integer $k$.  \\
{\sl Parameter:} $k$.\\
{\sl Question:}~~Does $G$ contain a path of at least $k$ vertices?\\

The above problem can be solved in $2^{O(k\cdot \log k)}\cdot n$ steps using dynamic programming techniques  (see Subsection~\ref{dynamicpr}).  
However, 
an {\sf EPT}-algorithm  for this problem was highly welcome. The main reason for this is that it resolved the conjecture of  
Papadimitriou and Yannakakis who conjectured, 
 in~\cite{PapadimitriouY96onli}, that it is possible to check in polynomial whether a $n$-vertex graph contains a path of length $\log n$.

The first step for solving 
{\sc $k$-Path}, is to consider a function $\chi: V(G)\rightarrow \{1,\ldots,k\}$ coloring the vertices of 
$G$ with $k$ distinct colors. Given a path {\sf P} of $G$, we denote by ${\bf col}(P)$ the set of colors of the vertices in {\sf P}. We call a path {\sf  P} of $G$ {\em $\chi$-colorful}, or simply {\em colorful}, if all its colors are pairwise distinct.
We also use the term {\em $i$-path} for a path of $i$ vertices.
%, i.e. $|{\bf col}(P)|=|P|$. 
 We now define the following variant of the original problem:\\

\noindent{\sc $k$-Colorful Path}

\noindent {\sl Instance:}~~A graph $G$, a positive integer $k$, and a coloring $\chi: V(G)\rightarrow \{1,\ldots,k\}$.  \\
{\sl Parameter:} $k$.\\
{\sl Question:}~~Does $G$ contain a colorful $k$-path?\\

%
%First we require a procedure that given $G$, a vertex $s\in V(G)$, and integer $i$, and 
%the coloring $\chi$, checks whether $G$ contains a path {\sf P} starting from $v$ that has length $i$  and contains  $i$ distinct colors , i.e. $|{\bf col}(P)|=i$  (we 
%call such a path {\em $i$-colorful}). This can be done

{\sc $k$-Colorful Path} can be solved 
 by the following dynamic programing procedure. First we fix a vertex $s\in V(G)$ and then  
for any $v\in V(G)$, and $i\in\{1,\ldots,{k}\}$, we define \\
\begin{eqnarray*}
{\cal C}_{s}(i,v) & = & \{{R}\subseteq\{1,\ldots,{k}\}\mid \mbox{$G$ has a colorful $i$-path ${P}$ from}\\
& & \mbox{~~~~~~~~~~~~~~~~~~~~~~~  $s$ to $v$ such that    ${\bf col}(P)={R}$}\}.
\end{eqnarray*}
Notice that ${\cal C}_{s}(i,v)$ stores sets of colors in paths of length $i-1$ between 
$s$ and $v$, instead of the paths themselves. 
Clearly,  $G$ has a colorful ${k}$-path starting from $s$ iff $\exists v\in V(G): {\cal C}_{s}({k},v)\neq\emptyset$. The dynamic programming is based on the following 
relation:
$${\cal C}_{s}(i,v)=\bigcup_{v'\in N_{G}(v)}\{R\mid R\setminus \{\chi(v)\}\in {\cal C}_{s}(i-1,v')\}$$
{Notice} that  $|{\cal C}_{s}(i,v)|\leq {{k}\choose i}$ and, for all $v\in V(G)$,  ${\cal C}_{s}(i,v)$ can be computed in $O(m\cdot {{k}\choose i}\cdot i)$ steps (here, $m$ is the number of edges in $G$).
For all $v\in V(G)$, one can compute ${\cal C}_{s}({k},v)$ in  $O(\sum_{i=1,\ldots,{k}}m\cdot {{k}\choose i}\cdot i)=O(2^{{k}}\cdot {k}\cdot m)$ steps.
We conclude that one can check whether $G$ colored by $\chi$ has a colorful path of length $k$ in $O(2^{{k}}\cdot {k}\cdot m\cdot n)$ steps (just apply the above dynamic programming for each  possible starting vertex $s\in V(G)$).\\

A {\em family of  $k$-perfect hash functions} is a family ${\cal F}$  of functions from
$\{1,\ldots, n\}$  onto $\{1,\ldots,{k}\}$  such that for each ${S}\subseteq\{1,\ldots,n\}$ 
with $|{S}| = {k}$, there exists
an $\chi\in {\cal F}$ that is bijective when restricted to ${S}$. 

There is a lot of research on the construction of small size families of $k$-perfect hash functions dating back to the work of Fredman, J. Komlós, and E. Szemerédi~\cite{FredmanKS82stor} (see also~\cite{SlotB85onta,SchmidtS90thes,NaorSS95spli}).  A construction of such a family of size $2^{O(k)}\cdot \log n$ was given in~\cite{AlonYZ95colo}.
However,  the hidden constant in  the $O$-notation of this bound is quite big. Instead, 
we may  use the recent results of Chen, Lu, Sze, and Zhang in~\cite{ChenLSZ07impr} that 
give  a family of  $k$-perfect hash functions  ${\cal F}$  where $|{\cal F}|=O(6.4^{k}\cdot n)$. Moreover, according to~\cite{ChenLSZ07impr},
this collection  can be constructed in $O(6.4^{k}\cdot n)$ steps. 
\cite{ChenLSZ07impr}  gives also lower bounds on the size of  a family of  $k$-perfect hash functions indicating somehow the limits of the color-coding technique.

Clearly  $G$ contains a   $k$-path if and only if there is a $\chi\in {\cal F}$ such that $G$, colored by $\chi$, contains a $\chi$-colorful $k$-path.
This equivalence reduces the {\sc $k$-Path} problem  to the {\sc $k$-Colorful Path} problem: just run the above dynamic programming procedure for  {\sc $k$-Colorful Path}  for all colorings in ${\cal F}$.
As $|{\cal F}|=O(6.4^{k}\cdot n)$,  we conclude that 
{\sc $k$-Path}$\in O^{*}(12.8^{k})$-\FPT. \medskip

Color-coding has been used extensively in parameterized algorithm design.
Applications of the same technique can be found in~\cite{FellowsKNRRSTW08fast,DemaineHM09mini,MisraRSS09thei,BetzlerDKN08para,LiuLCS06gree} (see also~\cite{HuffnerWZ08algo,ShlomiSRRS06qpa, ScottIKS05effi,AlonYZ08colo}). Also novel developments 
of the color-coding idea appeared recently in~\cite{FominLRS10fast,AlonLS09fast}.

\subsubsection{Subexponential algorithms}
\label{subexpalg}

Our next step is to deal with the classification of parameterized 
problems in the class {\sf SUBEPT}.
We present  techniques that provide algorithms with subexponential 
parameter dependence for variants of parameterized problems where the inputs are 
restricted by some sparsity criterion.
The most common restriction for problems on graphs is to 
consider their {\em planar} variants where the input graph is embeddible in  a sphere 
without crossings. As mentioned in Theorem~\ref{ethvc}, in Subsection~\ref{optimality}, 
   {\sc $k$-Planar Dominating Set} does not belong to $2^{o(\sqrt{k})}\cdot n$-\FPT, unless ${\sc M}[1]=\FPT$
  and   the same holds 
   for several other problems on planar graphs such as  {\sc Planar Vertex Cover}, {\sc Planar Independent Set}, {\sc Planar Dominating
Set}, and {\sc Planar Red/Blue Dominating Set}~\cite{CaiJ03onth} (see also~\cite{ChenCFHJKX05tigh,ChenKPSX03genu}). 
This implies that for these problems, when the sparsity criterion includes planar graphs,  the best 
running time we may expect is  $2^{O(\sqrt{k})}\cdot n^{O(1)}$. 
The first sub-exponential parameterized algorithm  
on planar graphs appeared by Alber, H. Bodlaender, Fernau, and Niedermeier in~\cite{AlberBFN00} for {\sc Dominating Set}, {\sc Independent Dominating Set}, and 
{\sc Face Cover}. After that, many other problems were classified in  $2^{c\sqrt{k}}\cdot n^{O(1)}$-{\sf FPT}, while 
there was a considerable effort towards improving the constant $c$ for each one of them~\cite{AlberBFKN02,KloksLL02newa,DemaineFHT05talg,Fernau02grap,FernauJ04age,KanjPer02impr,FominT06domi,KoutsonasT10plan,DemaineHT05expo}.

Bidinensionality Theory was proposed in~\cite{DemaineFHT05sube} as a meta-algorithmic 
framework that describes when such optimal subexponential 
{\sf FPT} algorithms  are possible. Our first 
step is to illustrate the idea of bidimensionality for the following problem.\\

\noindent{\sc $k$-Planar  Longest Cycle}

\noindent {\sl Instance:}~~A planar graph $G$ and a nonnegative  integer $k$.  \\
{\sl Parameter:} $k$.\\
{\sl Question:}~~Does $G$ contain a cycle of length at least $k$?\\

We define the graph parameter  ${\bf lc}$ where $${\bf lc}(G)=\min \{k\mid \mbox{$G$ does not contain a cycle of length at least $k$}\}$$
Our subexponential algorithm for {\sc $k$-Planar  Longest Cycle} will be based on the Win/win technique whose main combinatorial
ingredient is the following result of Gu and Tamaki.

\begin{proposition}[\!\!\cite{GuoT10impr}] 
\label{planarexcl}
Every planar graph $G$ where $\bw(G)\geq 3k$ contains a $(k\times k)$-grid as a minor.
\end{proposition}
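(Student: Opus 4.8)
The plan is to route through the duality between branchwidth and tangles together with the special structure that tangles acquire in planar graphs. First I would invoke the Robertson--Seymour tangle duality theorem, which asserts that the branchwidth of a graph equals the maximum order of a tangle in it; hence the hypothesis $\bw(G)\geq 3k$ supplies a tangle $\mathcal{T}$ of order $3k$ in the planar graph $G$. Next I would fix a plane (equivalently, sphere) embedding of $G$. The reason for passing to a tangle is that a tangle consistently orients all small separations and thereby ``points toward'' a dense region of the graph; in the planar setting this pointing can be made geometric, which is exactly what lets branchwidth be converted into an explicit grid-like substructure.

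The core step, and the one carrying all the weight, is to turn the high-order tangle into a large family of concentric cycles about a point of the sphere. This is the content of the theory of \emph{respectful} tangles developed by Robertson and Seymour for embedded graphs: a respectful tangle of order $\theta$ in a plane graph is associated with a point of the surface, and around that point one can extract on the order of $\theta$ pairwise (vertex-)disjoint nested cycles, each separating the tangle's dense side from the rest. A nested family of $k$ such cycles, together with $k$ radial paths joining consecutive cycles (the paths coming from the connectivity that the tangle certifies), realizes a $(k\times k)$-cylinder, i.e.\ a partial grid, and a $(k\times k)$-grid minor is then obtained by contracting along the radial paths. To reach the tight threshold $3k$ rather than the weaker constant of the original Robertson--Seymour--Thomas bound, I would follow Gu and Tamaki and work with the contrapositive: assume $G$ has no $(k\times k)$-grid minor and build a branch decomposition of width at most $3k-1$, recursively cutting the sphere along short separators whose existence is forced by the absence of the grid.

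The main obstacle is precisely this optimal constant. That ``large planar branchwidth forces \emph{some} grid minor'' follows already from general grid-minor machinery; the delicate point is the \emph{linear} bound with constant $3$. The hard part will be the bookkeeping that converts ``no $(k\times k)$-grid'' into a family of separators each of order strictly below $3k$, and then assembles these into a single global sphere-cut decomposition of the embedded graph without ever exceeding width $3k-1$. Controlling how the nested cycles interact with the radial connections, and verifying that any shortfall in the attainable grid size translates into a genuine reduction of the decomposition's width, is where essentially all the technical effort lies; by comparison, the tangle duality and the passage from concentric cycles to a grid minor are comparatively routine.
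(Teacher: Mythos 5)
There is nothing to compare against inside the paper: Proposition~\ref{planarexcl} is imported verbatim from Gu and Tamaki~\cite{GuoT10impr}, and the surrounding text only remarks that it improves the constant $4$ of Robertson, Seymour, and Thomas~\cite{RobertsonST94quic} to $3$; no proof is given. So your proposal has to be judged on its own as a reconstruction of the external proof.

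Your outline points at the right circle of ideas --- the duality between branchwidth and tangle order, respectful tangles in plane graphs, concentric cycles about the tangle's ``center'' plus radial paths yielding a cylinder and hence a grid minor, and the contrapositive construction of a sphere-cut decomposition when no $(k\times k)$-grid is present. That is indeed the genus of argument behind all planar grid-minor theorems. But as written it is a plan, not a proof, and the gap sits exactly where the proposition's content is. The qualitative statement ``large planar branchwidth forces a grid minor'' is classical; the \emph{only} thing being asserted here beyond~\cite{RobertsonST94quic} is the constant $3$, and you explicitly defer that to ``bookkeeping.'' The step ``a tangle of order $3k$ yields $k$ pairwise disjoint nested cycles together with $k$ radial paths crossing all of them'' is not routine: a respectful tangle of order $\theta$ does not naively hand you $\theta/3$ concentric cycles with the required crossing structure, and the loss in this extraction is precisely why the earlier argument only reached $4k$. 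Dually, in the contrapositive direction you would need to show that the absence of a $(k\times k)$-grid forces, at every stage of the recursion, a closed curve meeting the graph in fewer than $3k$ vertices that can be spliced into a single global sphere-cut decomposition; you name this as the hard part but supply no mechanism for it. Until that quantitative step is carried out, nothing stronger than the already-known weaker bound has been established, so the proposal does not yet constitute a proof of the stated proposition.
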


Actually, Proposition~\ref{planarexcl} is an improvement 
of the result of Robertson, Seymour, and Thomas in~\cite{RobertsonST94quic} where the lower bound to branchwidth was originally $4k$ instead of $3k$.

Notice that ${\bf lc}$ is closed under taking of minors;  the contraction/removal  of an edge in a graph will not cause a bigger cycle to appear. This means that if $G$ is a planar graph and ${\bf lc}(G)\leq l^{2}$, none of the minors of $G$ can contain a cycle of length $l^{2}$. As the $(l\times l)$-grid contains a cycle of length $l^{2}$, we conclude  that $G$ does not contain an $(l\times l)$-grid as a minor. From Proposition~\ref{planarexcl}, $\bw{(G)}< 3l$.
In other words, $\bw(G)< 3\cdot\sqrt{{\bf lc}(G)}$. 

To solve {\sc $k$-Planar  Longest Cycle} we apply the following steps:
\begin{itemize}
\item[(a)] we compute an optimal branch decomposition of $G$. According  to~\cite{GuT08opti}, this can be done in $O(n^{3})$ steps (the result in~\cite{GuT08opti} is an improvement of the algorithm of Seymour and Thomas in~\cite{SeymourT94call}). 
\item[(b)] we check whether $\bw(G)\geq 3\sqrt{k}$. If this is the case then ${\bf lc}(G)>k$ and we can safely return that 
$G$ contains a cycle of length $k$.
\item[(c)] if $\bw(G)< 3\sqrt{k}$, then we solve the following problem by using dynamic programming.
\end{itemize}

\noindent{\sc $bw$-Planar  Longest Cycle}

\noindent {\sl Instance:}~~A planar graph $G$ and a non-negative  integer $l$.  \\
{\em Parameters:} $\bw(G)$.\\
{\sl Question:}~~Does $G$ contain a cycle of length at least $l$?\\

What we have done  so far is to reduce the {\sc $k$-Planar  Longest Cycle} to its bounded branchwidth counterpart.
For general graphs, dynamic programming for this problem requires $O^{*}(2^{O(k\log k)})$ steps.
% (assuming that $l\leq k$).  
However, 
for planar graphs, one may use the technique introduced in~\cite{DornPBF10effi} yielding a $2^{O(k)}\cdot n^{O(1)}$ step dynamic programming.
In fact, according to~\cite{Dorn07desi}, {\sc $bw$-Planar  Longest Cycle} can be solved in $O(2^{2.75\cdot  \bw(G)}\cdot n+n^{3})$ steps. 
We conclude that {\sc $k$-Planar  Longest Cycle} belongs  to $O(2^{8.25\cdot  \sqrt{k}}\cdot n+n^{3})$-{\sf SUBEPT}.

Notice that, in order to apply the above machinery, we made the following two observations on the parameter ${\bf lc}$: (1)
it is closed under taking of minors and (2) its value with the $(l\times l)$-grid as input is $\Omega(l^{2})$, i.e. a certificate 
of the solution for the  {\sc $t$-Planar  Longest Cycle} problem spreads along {\sl both} dimensions  of a square grid (i.e. it virtually  spreads {\em bidimensionally} on the grid).
These two observations apply to  a wide range of parameters where the same approach can be used  
for the corresponding problems on planar graphs.  Bidimensionality theory was introduced in~\cite{DemaineFHT05sube} and used
the above observations in order to derive subexponential algorithms for graphs embeddible in surfaces of  bounded genus (see also~\cite{DemaineHT06theb}). 
Later, the same theory was developed for parameters that are closed under contractions~\cite{FominGT09cont} and for classes of graphs 
excluding specific graphs as a minor~\cite{DemaineH08line,DemaineFHT05bidi}. 

Consider the following parameterized meta-problem where ${\cal C}$ is a class of graphs.\\

\noindent{{\sc {$k$-Parameter Checking for }}\mbox{\rm\bf p} {\sc on ${\cal C}$}}\\
\noindent {\sl Instance:}~~ a graph ${G}\in{\cal C}$ and an integer  ${k}\geq 0$.\\
{\em {Parameter:} ${k}$}\\
{\sl Question:}~~ {\bf p}({G})$\leq {k}$?\\

Briefly, this theory can be summarized by the following  meta-algorithmic framework.

\begin{theorem}[Minor bidimensionality~\cite{DemaineFHT05sube,DemaineH08line}]
\label{thm:bidiminor}
Let ${\cal G}$ be an $H$-minor free  graph class.
Let also  ${\bf p}$ be a graph parameter which satisfies the following conditions:
%\marginpar{constractions, closed under contractions?}
\begin{itemize}
\item[{\bf (i)}] ${\bf p}$ is minor  closed, 
\item[{\bf (ii)}] if $A_{k}$ is the $(k\times k)$-grid,  then ${\bf p}(A_{k})=\Omega(k^{2})$, and  
\item[{\bf (iii)}] for graphs in ${\cal G}$, ${\bf p}$ is computable in time $2^{O(\bw(G))} \cdot n^{O(1)}$.
\end{itemize}
Then,  {{\sc {$k$-Parameter Checking for }}\mbox{\rm\bf p} {\sc on ${\cal C}$}} belongs to $O^{*}(2^{O(\sqrt{k})})$-{\sf SUBEPT}.
\end{theorem}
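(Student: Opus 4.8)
The plan is to run the \emph{win/win} argument already illustrated for {\sc $k$-Planar Longest Cycle}, replacing the planar excluded-grid bound of Proposition~\ref{planarexcl} with its linear analogue for $H$-minor free classes. Concretely, the structural input I would use is the \emph{linear excluded-grid theorem} for graphs excluding a fixed minor (Demaine and Hajiaghayi~\cite{DemaineH08line}): there is a constant $c_{H}$, depending only on $H$, such that every graph $G\in{\cal G}$ with $\bw(G)\geq c_{H}\cdot t$ contains the $(t\times t)$-grid $A_{t}$ as a minor. This is exactly the ingredient that upgrades the planar $\Theta(\sqrt{k})$ guarantee to the whole $H$-minor free class.

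Next I would convert condition {\bf (ii)} into a numerical threshold. Since ${\bf p}(A_{t})=\Omega(t^{2})$, fix a constant $\beta>0$ with ${\bf p}(A_{t})\geq \beta t^{2}$ for all large $t$, and set $t=t(k)=\lceil\sqrt{(k+1)/\beta}\,\rceil=\Theta(\sqrt{k})$, so that $\beta t^{2}>k$. The algorithm then proceeds in two cases driven by the branchwidth of the input. First I would run a branchwidth routine that, for the target width $w=c_{H}\cdot t=O(\sqrt{k})$, either reports $\bw(G)\geq w$ or returns a branch decomposition of $G$ of width $O(w)$; for $H$-minor free classes such a constant-factor approximation is available in polynomial time (and, if one is willing to pay $2^{O(w)}n^{O(1)}$, even exactly). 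If the routine reports $\bw(G)\geq w=c_{H}t$, then by the excluded-grid theorem $A_{t}\leq_{\rm m}G$, hence by minor-closedness {\bf (i)} and {\bf (ii)} we get ${\bf p}(G)\geq{\bf p}(A_{t})\geq\beta t^{2}>k$, and I can safely answer {\tt NO}. Otherwise the routine hands me a decomposition of width $O(\sqrt{k})$, and condition {\bf (iii)} lets me compute ${\bf p}(G)$ exactly in time $2^{O(\bw(G))}\cdot n^{O(1)}=2^{O(\sqrt{k})}\cdot n^{O(1)}$, after which I compare the value with $k$. Multiplying the polynomial branchwidth step by the $2^{O(\sqrt{k})}\cdot n^{O(1)}$ dynamic programming gives the claimed $O^{*}(2^{O(\sqrt{k})})$ bound, i.e.\ membership in {\sf SUBEPT}.

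The main obstacle is not the bookkeeping above but the structural half: everything hinges on the \emph{linearity} of the excluded-grid bound for $H$-minor free graphs. For general graphs the best grid-minor guarantee is only polynomial in $\bw(G)$, which would force a threshold $t=k^{\Omega(1)}$ too small to certify ${\bf p}(G)>k$ in the large-branchwidth case and would break the $\sqrt{k}$ scaling; it is precisely the restriction to an $H$-minor free class that makes $t=\Theta(\sqrt{k})$ possible. A secondary point to verify carefully is that a branch (or tree) decomposition of width $O(\sqrt{k})$ can be produced within the $2^{O(\sqrt{k})}\cdot n^{O(1)}$ budget on ${\cal G}$, so that the decomposition step never dominates the running time; on proper minor-closed classes this follows from the known constant-factor branchwidth approximations, and one should also note that the argument is \emph{non-uniform} in $H$, in the sense that $c_{H}$ and hence the hidden constants depend on the excluded minor.
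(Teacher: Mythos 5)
Your proposal is correct and follows essentially the same win/win route the paper sketches for {\sc $k$-Planar Longest Cycle} and attributes to \cite{DemaineFHT05sube,DemaineH08line}: you correctly identify the linear excluded-grid theorem for $H$-minor free classes as the ingredient replacing Proposition~\ref{planarexcl}, combine it with minor-closedness and the $\Omega(k^{2})$ grid bound to certify ${\bf p}(G)>k$ when $\bw(G)=\Omega(\sqrt{k})$, and otherwise fall back on the $2^{O(\bw(G))}\cdot n^{O(1)}$ computation from condition {\bf (iii)}. Your two caveats (the need for a branch-decomposition approximation within budget, and the non-uniformity in $H$) are both apt and consistent with how the theorem is used in the paper.
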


Notice that Theorem~\ref{thm:bidiminor} can be applied to maximization and minimization problems. 
We provide a typical example of its application on a maximization problem through  $k$-{\sc Path}.  Here,   the associated parameter is
$${\bf p}(G)=\min\{k\mid \mbox{$G$ does not contain a path of length $k$}\}.$$
It is easy to check that conditions {\bf (i)} and {\bf (ii)} are satisfied. Indeed, no bigger path
occurs if we remove or contract an edge and the $(k\times k)$-grid has a (Hamiltonian) path of length $k^{2}$.
Let ${\cal G}$ be any  $H$-minor free graph class.
Condition~{\bf (iii)} holds for  ${\cal G}$, because of the results in~\cite{DornFT08cata}. Therefore,  $k$-{\sc Path} restricted 
to  graphs in ${\cal G}$ belongs to $O^{*}(2^{O(\sqrt{k})})$-{\sf SUBEPT}.

Problems for  which Theorem~\ref{thm:bidiminor} proves {\sf SUBEPT}-membership for graphs excluding 
some graph as a minor are $k$-{\sc Vertex Cover}, $k$-{\sc Feedback Vertex Set}, {\sc {$k$-Almost Outerplanar}}, $k$-{\sc Cycle Packing}, $k$-{\sc Path}, $k$-{\sc Cycle}, $k$-{\rm d}-{\sc Degree-Bounded Connected Subgraph},  $k$-{\sc Minimum Maximal Matching}, and many others.\\

We say that a graph  $H$ is a {\em contraction}  of a graph $G$ if $H$ can be obtained from  $G$ after applying a (possibly empty) sequence of edge contractions\footnote{Formally, the contraction relation occurs
if  
in the definition of the minor relation (Subsection~\ref{ftpfol}) we additionally demand that 
\begin{itemize}
\item [c)]  for every $ \{x,y\}\in E(G)$,  either $\phi(x)=\phi(y)$
or $\{\phi(x),\phi(y)\}\in E(H).$
\end{itemize}}.
A parameter {\bf p} is {\em contraction closed} if $H\leq_{c} G$ implies that ${\bf p}(H)\leq {\bf p}(G)$.

Clearly, there are parameters escaping the applicability of Theorem~\ref{thm:bidiminor} due 
to the fact that they are not minor-closed.  The most typical example of such a parameter is the dominating set number: take a path $P$ of length $3k$ and connect all its vertices with a new vertex.
Clearly,  the resulting graph has a dominating set of size $1$, while it contains $P$ as a minor (actually it contains it as a subgraph) and every dominating set of $P$ has size at least $k$. 
However, the dominating set number is contraction closed and the good news is that 
 there is a counterpart of Theorem~\ref{thm:bidiminor}
for contraction closed  parameters. Before we state this result, we need two more definitions. An {\em apex} graph is  defined to be a 
graph that becomes planar after the removal of a vertex.  A graph class ${\cal G}$ is {\em apex-minor free} if it is $H$-minor free for some apex graph $H$.  An apex-minor free graph  can be seen 
as having some  big enough  ``flat region'', provided that the graph has big treewidth. 
Graph classes that are apex-minor free  are the graph classes of bounded genus and the
classes of graphs excluding some single-crossing graph as a minor~\cite{DemaineHT05expo}.

The graph $\Gamma_{k}$ is
the graph obtained from the  $(k\times k)$-grid by triangulating internal faces of the $(k\times k)$-grid such that all internal vertices become  of degree $6$, all non-corner external vertices are of degree 4, and one corner is joined by edges with all vertices
of the external face (the {\em corners} are the vertices that in the underlying grid have  degree two). For the graph
$\Gamma_6$, see Figure~\ref{fig-gamma-k}.
\begin{figure}[ht]
\begin{center}
\scalebox{.7}{\includegraphics{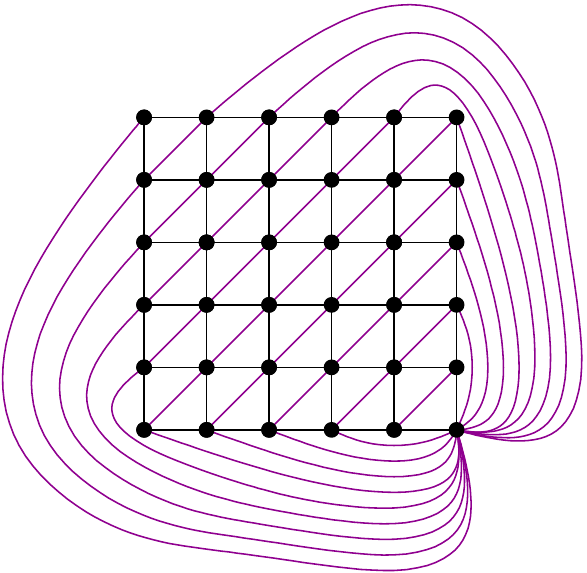}}
\end{center}
\caption{The graph $\Gamma_{6}$.}
\label{fig-gamma-k}
\end{figure}

\begin{theorem}[Contraction bidimensionality~\cite{FominGT09cont}]\label{thm:bidi_revised}
Let ${\cal G}$ be an apex-minor free graph class. Let also  ${\bf p}$ be a graph parameter which satisfies the following conditions:
\begin{itemize}
\item[{\bf (i)}] ${\bf p}$ is closed under contractions, 
\item[{\bf (ii)}] ${\bf p}(\Gamma_{k})=\Omega(k^{2})$, and  
\item[{\bf (iii)}] for graphs in ${\cal G}$, ${\bf p}$ is computable in time $2^{O(\bw(G))} \cdot n^{O(1)}$.
\end{itemize}
Then,  {{\sc {$k$-Parameter Checking for }}\mbox{\rm\bf p} {\sc on ${\cal C}$}} belongs to $O^{*}(2^{O(\sqrt{k})})$-{\sf SUBEPT}.\end{theorem}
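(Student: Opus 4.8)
The plan is to run exactly the Win/win (grid-or-decomposition) dichotomy that drove the minor case in Theorem~\ref{thm:bidiminor} and the {\sc $k$-Planar Longest Cycle} example, but with the minor relation replaced by the contraction relation $\leq_{c}$ and the plain $(k\times k)$-grid replaced by the partially triangulated grid $\Gamma_k$. First I would compute, for the input $G\in{\cal G}$, a branch decomposition of width $w=O(\bw(G))$; since ${\cal G}$ is $H$-minor free (for an apex graph $H$) such a decomposition is obtainable in polynomial time through a constant-factor approximation of branchwidth, and a constant factor only inflates the constant hidden in the final exponent. I then compare $w$ against a threshold of order $\sqrt{k}$.

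For the dichotomy itself, condition {\bf (ii)} supplies constants $c_1>0$ and $t_0$ with ${\bf p}(\Gamma_t)\ge c_1 t^2$ for $t\ge t_0$, and the key combinatorial ingredient (stated below) supplies a constant $c_2=c_2(H)$ such that $\bw(G)\ge c_2\cdot t$ forces $\Gamma_t$ to be a contraction of $G$. Setting $t=\lceil\sqrt{(k+1)/c_1}\rceil$ and $\theta=c_2\cdot t=O(\sqrt{k})$, I proceed as follows. If $\bw(G)\ge \theta$, then $\Gamma_t\leq_{c} G$, so contraction-closedness (condition {\bf (i)}) gives ${\bf p}(G)\ge {\bf p}(\Gamma_t)\ge c_1 t^2\ge k+1>k$, and the algorithm safely reports that ${\bf p}(G)>k$. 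If instead $\bw(G)<\theta=O(\sqrt{k})$, then condition {\bf (iii)} computes ${\bf p}(G)$ on the decomposition in time $2^{O(\bw(G))}\cdot n^{O(1)}=2^{O(\sqrt{k})}\cdot n^{O(1)}$, and I compare the value with $k$. Either branch runs in $O^{*}(2^{O(\sqrt{k})})$ steps, placing {{\sc {$k$-Parameter Checking for }}\mbox{\rm\bf p} {\sc on ${\cal C}$}} in {\sf SUBEPT}; for the finitely many $k$ below $t_0$ the threshold is a constant and the whole computation is polynomial anyway.

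Third, and this is where the real work lies, one must establish the contraction analogue of Proposition~\ref{planarexcl} for apex-minor-free classes: there is a constant $c_2=c_2(H)$ such that every $G\in{\cal G}$ with $\bw(G)\ge c_2\cdot t$ admits $\Gamma_t$ as a contraction. The reason one must strengthen the minor hypothesis of Theorem~\ref{thm:bidiminor} to apex-minor freeness is precisely that contraction-closed parameters (the prototype being the domination number, which by the path-plus-universal-vertex example is \emph{not} minor-closed) are sensitive to the very edge contractions that can destroy a grid \emph{minor}; the apex condition is what lets one upgrade a grid minor to a $\Gamma_t$ \emph{contraction}. I expect proving this lemma to be the main obstacle. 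The plan is to invoke the Robertson--Seymour structure theorem for $H$-minor-free graphs, use the apex condition to guarantee that large branchwidth produces a large, essentially planar \emph{flat} region, extract a large grid minor inside that flat region, and then exploit the planarity of the region to realise the triangulating edges of $\Gamma_t$ when the branch sets of the grid model are contracted.

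The delicate points I anticipate are controlling how the flat region interacts with the apices and the vortices of the decomposition (this is where apex-minor freeness, rather than mere $H$-minor freeness, is used), and tracking the dependence of the branchwidth threshold so that it stays \emph{linear} in $t$; only a linear bound $\bw(G)=O(t)=O(\sqrt{k})$ keeps the dynamic-programming exponent at $O(\sqrt{k})$ and hence inside {\sf SUBEPT}. Everything else — the branchwidth approximation, the threshold arithmetic, and the invocation of conditions {\bf (i)}--{\bf (iii)} — is routine once this grid-contraction lemma is in hand.
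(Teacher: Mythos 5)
Your proposal is correct and follows exactly the Win/win template that the paper illustrates on {\sc $k$-Planar Longest Cycle} and that the cited reference uses for the contraction case: compare $\bw(G)$ against a $O(\sqrt{k})$ threshold, use contraction-closedness plus ${\bf p}(\Gamma_t)=\Omega(t^2)$ to answer immediately when the branchwidth is large, and otherwise run the $2^{O(\bw(G))}\cdot n^{O(1)}$ algorithm of condition {\bf (iii)}. You have also correctly isolated the one non-routine ingredient --- that apex-minor-free graphs of branchwidth $\Omega(t)$ contain $\Gamma_t$ as a contraction, with the linearity of the threshold being essential --- which is precisely the combinatorial core of the cited work rather than something the survey proves.
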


Problems for  which Theorem~\ref{thm:bidi_revised} (but not Theorem~\ref{thm:bidiminor}) proves {\sf SUBEPT}-membership for apex-minor free graph classes   are $k$-{\sc Dominating set}, $k$-r-{\sc Distance Dominating Set}, 
%$k$-{\sc Minimum maximal matching}, 
$k$-{\sc Face cover} (for planar graphs), $k$-{\sc Edge Dominating set},  $k$-{\sc clique-transversal set}, $k$-{\sc Connected Dominating set},  and others.

For surveys on the application of Bidimensionality Theory
on the design of sub-exponential parameterized algorithms, see~\cite{DornFT08sube,DemaineH07theb} (see~\cite{DemaineH08bidi}). 
Finally, further applications of bidimensionality, appeared in~\cite{SauT10sube,DemaineHaj05bidi,FominLST10bidi,FominLRS11bidi}.
Finally, we should stress that bidimensionality is not the only technique to derive subexponential 
parameterized algorithms. Alternative approaches have been proposed in~\cite{AlonLS09fast,DornFLRS10beyo,FominV11sube,Tazari10fast,Thilikos11fast}.
%Fast sub-exponential  Algorithms and \\ Compactness in Planar Graphs

\subsection{Kernelization}
\label{kernelization}

Kernelization (see Definition~\ref{Definition:Kernelization} in Subsection~\ref{limker}) appears to be one of the most rapidly growing fields of parameterized computation. In this section we will
describe some of the most characteristic techniques and results in this field. All kernels that we present are polynomial.  We start with an easy case.

\subsubsection{An easy example}

Consider the following problem.\\

\noindent{\sc {$k$-Non-Blocker}}\\
\noindent {\sl Instance:}~~ A graph ${G}$ and an integer  ${k}\geq 1$.\\
{\sl {Parameter:} ${k}$}\\
{\sl Question:}  Does $G$ contain a set $S$ of at least  $k$ vertices such that each vertex in $S$ has a neighbor that is not in $S$?\\

A kernelization algorithm for the above problem runs as follows. 
The first step is to remove from $G$ all isolated vertices (clearly, they cannot be part of a  solution). Then, in the remaining graph $G'$, pick some vertex $v_{i}$ from each of its connected components
and define a partition of $V(G)$ into two 
sets $V_{0}$ and $V_{1}$ such that $V_{0}$ (resp. $V_{1}$) contains all vertices of $G'$ whose distance from some vertex $v_{i}$ is even (resp. odd).
W.l.o.g. we assume that $|V_{0}|\geq |V_{1}|$ (otherwise,  swap their indices). If $k\leq |V_{0}|$ then we know that $V_{0}$ is 
a solution to the problem and we return a {\tt YES}-instance such as $(K_{2},1)$. Otherwise we 
return $(G',k)$ that is a kernel whose graph has  at most $2k-2$ vertices. \\

Currently the best  known kernel for   {\sc $k$-Non-Blocker} has size $5/3\cdot k$~\cite{DehneFFPR06nonb}. We remark that the {\em dual} of the {\sc {$k$-Non-Blocker}} problem is the same problem parameterized by $k'=|V(G)|-k$. This problem is equivalent 
to the {\sc Dominating Set} problem that is {\sf {\sf W}[2]}-complete and therefore the existence of a  kernel for this problem is unlikely.

%SED> http://fpt.wikidot.com/fpt-races

%
%\medskip
%
%
%A propositional formula $\phi$ is written in the {\em 3-DNF form} if it has the form
%$$\phi=\bigvee_{i\in I}(\lambda_{i1}\wedge\lambda_{i2}\wedge\lambda_{i3})$$ where $\lambda_{ij}$
% are
%literals. The conjunctions  
%$(\lambda_{i1}\wedge\lambda_{i2}\wedge\lambda_{i3})$ are the  {\em terms} of $\phi$.\\
%
%\noindent{\sc {$k$-3-DNF Sat}}\\
%\noindent {\sl Instance:}~~ A formula written in the 3-DNF  form and an integer  ${k}\geq 1$.\\
%{\em {Parameter:} ${k}$}\\
%{\sl Question:} Is there a value assignment to the variables of  $\phi$ satisfying
% at least $k$ of its terms?\\
%
%
%The kernel for the above  problem is trivial as any such formula with at least $8k$ terms has an assignment satisfying 
%at least $k$ if them. To see this, use a straightforward probabilistic argument: the expected 
%number of terms satisfied by a random assignment is at least $m/8$, where m is the total number of terms of $\psi$.  If $\phi$ has more than $8k$ terms, then there is an assignment satisfying at least $k$ of its terms.
%If this is the case, the kernelization algorithm 
%returns an equivalent {\tt YES}-instance such as $(x_{1}\wedge x_{2}\wedge x_{3},1)$. Otherwise,
%we return $(\phi,k)$, knowing that the number of the terms of $\phi$ is bounded by $8k$ and therefore it has size $O(k)$.
%

In the above example the kernelization algorithm makes a very elementary preprocessing of  the input. However, in other cases things 
are more complicated and the kernelization algorithm needs first to discard some parts or simplify the input
in order to reduce it to an equivalent one of polynomial (on $k$)  size. Our next example concerns a non-trivial case of this type.

\subsubsection{Simplifying structure}

We now proceed with an example where the construction of a kernel requires some more work. Consider the following problem:\\

\noindent{\sc {$k$-3-Dimensional-Matching}}\\
\noindent {\sl Instance:}~~ Three disjoint  sets $A,B,C$, a collection of triples ${\cal T}\subseteq A\times B \times C$  and an integer  ${k}\geq 1$.\\
{\sl {Parameter:} ${k}$}\\
{\sl Question:}  If there a subset ${\cal M}\subseteq {\cal T}$ where $|{\cal M}|\geq k$ and such that no two triples in ${\cal M}$ share a common element (i.e. ${\cal M}$ is a matching of ${\cal T}$)?
\\

Let ${\cal M'}$ be a maximal matching of ${\cal T}$. Clearly, such a matching can be found greedily in polynomial time. If 
$|{\cal M'|}\geq k$ then we are done (in this case,  return a trivial {\tt YES}-instance). Therefore, we assume that $|{\cal M}'|\leq k-1$ and let $S$ be the set of all elements in the triples 
of ${\cal M}'$. Certainly, $|S|\leq 3k-3$ and, by the maximality of ${\cal M}'$, $S$ will intersect  every triple in ${\cal T}$.
Suppose now that ${\cal T}$ contains a sub-collection ${\cal A}$ with at least $k+1$ triples that agree in two of their coordinates. W.l.o.g. we 
assume that ${\cal A}=\{(x,y,v_{i}) \mid i=1\ldots \rho\}$ for some $r\geq k+1$. Our first reduction rule is to replace $({\cal T},k)$ by 
$({\cal T}',k)$ where  ${\cal T}'={\cal T}\setminus \{(x,y,v_{i})\mid i=k+1,\ldots,\rho\}$. We claim that this reduction rule is {\em safe}, i.e. 
the new instance  $({\cal T}',k)$ is equivalent to the old one. Obviously, every matching of ${\cal T'}$ is also  a matching of ${\cal T}$.
Suppose now that ${\cal M}$ is a matching of ${\cal T}$ that is not any more a matching of ${\cal T}'$. 
Then we show that ${\cal M}$ can be safely replaced by another one that is  in ${\cal T}'$.
Indeed,  if ${\cal M}$  is not a matching of ${\cal T}'$, then 
one of the triples in ${\cal M}$ is missing from ${\cal T}'$ and therefore is of the form $T=(x,y,v_{j})$ for some $j\in\{k+1,\ldots,\rho\}$.
Notice that  if one of the  triples in ${\cal M}\setminus \{T\}$ intersects a triple in $\{(x,y,v_{i}) \mid i=1\ldots k\}$, then 
this intersection will be a vertex in $\{v_{i}\mid i=1,\ldots,k\}$. As $|{\cal M}\setminus \{T\}|<k$, one, say $T'$  of the $k$ triples in $\{(x,y,v_{i}) \mid i=1\ldots k\}$
will not be intersected by the triples in ${\cal M}\setminus \{T\}$. Therefore,  ${\cal M}^*={\cal M}\setminus\{T\}\cup\{T'\}$ is a matching of ${\cal T'}$
and the claim follows as $|{\cal M}^*|\geq k$. Notice that the first rule simply ``truncates'' all but $k$ triples agreeing in the same two coordinates.

We now assume that ${\cal T}$ is a collection of triples 
where no more than $k$ of them agree in the same two coordinates.
Suppose now that ${\cal T}$ contains a sub-collection ${\cal B}$  with more than $2(k-1)k+1$ triples,  all agreeing to one coordinate.
W.l.o.g. we assume that the agreeing coordinate is the first one.
The second reduction rule  removes from ${\cal T}$ all but $2(k-1)k+1$ of the elements of ${\cal B}$. Again using a 
pigeonhole argument, it follows that in the  $2(k-1)k+1$ surviving triples of ${\cal B}$ there is a subset ${\cal C}$ of  $2k-1$ triples where each two of them disagree in both second and third coordinate. Again, if a discarded triple is used in a solution ${\cal M}$, then the $k-1$ other triples 
cannot intersect more than $2k-2$ triples of ${\cal C}$ and therefore a ``surviving'' one can substitute the discarded one in ${\cal M}$.
Therefore, the second truncation is also safe and leaves an equivalent  instance ${\cal T}'$ where no more than $2(k-1)k+1$ of them agree in the same coordinate. Recall now that the elements of the set $S$ intersect all triples in ${\cal T}$. As $|S|\le 3(k-1)$, we obtain
that, in the equivalent instance $({\cal T}',k)$, ${\cal T}'$ has at most $3(k-1)\cdot (2(k-1)k+1)=O(k^{3})$ triples. We conclude that $p${\sc -3-Dimensional-Matching} has a polynomial size kernel.

The above kernelization was proposed  by Fellows, Knauer, Nishimura, Ragde, Rosamond, Stege, Thilikos, and Whitesides in~\cite{FellowsKNRRSTW08fast} and, combined with the color coding technique,
gave algorithms of total complexity $2^{O(k)}+O(n)$ for a wide variety of packing and matching problems.
For other related results, see~\cite{ChenFKX07para,JiaZC04anef, PrietoS04look,MathiesonPS04pack,Abu-Khzam09aqu}.

\subsubsection{Superoptimality in Integer Programming}

Perhaps the first appearance of the  kernelization idea can be traced to the work of  Nemhauser and Trotter for {\sc Vertex Cover}~\cite{NemT75vert}.
It leads to a kernel of size $2k$ and it was proposed as an alternative way to obtain a 2-approximation for the vertex cover of a graph.
To explain the idea of this kernelization we need to express the problem as an integer programming problem:

\begin{eqnarray*}
\mbox{minimize} & \sum_{v\in V(G)}x_{v} &   \mbox{subject to} \\
\forall_{\{v,u\}\in E(G)} &   x_{v}+x_{u} \geq  1 &\\
\forall_{v\in V(G)} & 0\leq x_{v}  \leq  1 &
\end{eqnarray*}
Any integer solution to the above problem assigns, for each $v\in V(G)$, an integer value from $\{0,1\}$ to the corresponding 
variable $x_{v}$ and yields to a solution to the {\sc Vertex Cover} problem. 
When we relax the problem so that 
the variables lie in the real interval $[0,1]$ the resulting LP
can be solved in polynomial time. Moreover, the value of the optimal LP solution
is clearly a lower bound on the minimum size of a vertex cover.
%solutions are allowed to  belong to the real interval $[0,1]$, the above problem can be solved in polynomial time. 
However, these 
solutions do not represent any more an actual solution to the vertex cover problem. 
In~\cite{NemT75vert}, it was proved 
that every real solution can be transformed, in polynomial time, to a {\em half-integer} one where 
for each $v\in V(G)$, $x_{v}\in\{0,\frac{1}{2},1\}$. Such a, still non-integer, solution
defines a partition of $V(G)$ into three sets $V_{0}$, $V_{\frac{1}{2}}, $ and $V_{1}$.
As shown in~\cite{NemT75vert} (see also~\cite{Ch02thes}), these sets satisfy the following properties:
\begin{itemize}
\item[(1)] (approximability) $|V_{\frac{1}{2}}|\leq 2\cdot  \vc(G[V_{\frac{1}{2}}])$.
\item[(2)] ({\sl Super-optimality}) every minimum vertex cover $S$ of $G$ satisfies $V_{1}\subseteq S\subseteq V_{1}\cup V_{\frac{1}{2}}$
\end{itemize}

Observe that no edges in $G$ have both endpoints in $V_{0}$ or one endpoint
in $V_{0}$ and the other in $V_{\frac{1}{2}}$. This fact, together with property (2),
implies that if  $S$ is an optimal solution, then $S\setminus V_{1}\subseteq V_{\frac{1}{2}}$
and  $S\setminus V_{1}$ is a vertex cover of $G[V_{\frac{1}{2}}]$. Therefore $\vc(G)\geq \vc(G[V_{\frac{1}{2}}])+|V_{1}|$.
Moreover, if $S_{\frac{1}{2}}$ is a vertex cover of $G[V_{\frac{1}{2}}]$, then $S_{\frac{1}{2}}\cup V_{1}$
is a vertex cover of $G$, which implies that $\vc(G)\leq \vc(G[V_{\frac{1}{2}}])+|V_{1}|$. We conclude that:
\begin{eqnarray}
\vc(G) &  =  & \vc(G[V_{\frac{1}{2}}])+|V_{1}|. \nonumber
\end{eqnarray}
The above relation reduces the problem of computing $\vc(G)$ to the one of computing $\vc(G[V_{\frac{1}{2}}])$.  Recall that, from Property (1),  
$|V_{\frac{1}{2}}|/2\leq \vc(G[V_{\frac{1}{2}}])\leq |V_{\frac{1}{2}}|$. Therefore, the size of $V_{\frac{1}{2}}$
is a $2$-approximation of $\vc(G[V_{\frac{1}{2}}])$ and thus $|V_{\frac{1}{2}}|+|V_{1}|$
is a  $2$-approximation of $\vc(G)$. However, this type of approximation provides 
also a kernel of size $2k$ for {\sc Vertex Cover} as follows.
We first compute, in polynomial time,  $V_{\frac{1}{2}}$ and $V_{1}$ and we ask whether $|V_{\frac{1}{2}}|\leq 2(k-|V_{1}|)$.
If the answer is negative, then, again from Property (1), we have that $\vc(G[V_{\frac{1}{2}}])> k-|V_{1}|$. This implies that 
$\vc(G)>k$ and we can safely report that $G$ has no vertex cover of size $\leq k$ or, more formally, just return a {\tt NO}-instance 
such as $(K_{2},0)$.
 If $|V_{\frac{1}{2}}|\leq 2(k-|V_{1}|)$
then we output $(G[V_{\frac{1}{2}}],k-|V_{1}|)$ as an equivalent instance whose graph has size at most $2k$.\medskip

A natural question is whether it is possible to produce a linear kernel of  $c\cdot k$ vertices for some $c<2$. 
Notice that any such kernel would imply a ``better than two'' approximation for the {\sc Vertex Cover}.

As mentioned in Subsection~\ref{limker} the non-approximability results of~\cite{DinurS02thei}
imply that we cannot expect 
a kernel  with less than $1.36 k$ vertices for {\sc Vertex Cover}. 
Better lower bounds can  be derived for the case of {\sc Planar Vertex Cover}.
Using a technique developed in~\cite{ChenFKX07para}, it  follows easily that, for every $\epsilon>0$, there is no kernel with a planar graph
of size $(\frac{4}{3}-\epsilon)\cdot k$ for this problem, unless {\sf P}$=${\sc NP}.

An  intriguing question is to prove the existence of  a polynomial kernel for the following meta-problem.\\

\noindent{\sc {$k$-Distance from $H$-minor Free}}\\
\noindent {\sl Instance:}~~ A graph ${G}$ and an integer  ${k}\geq 1$.\\
{\sl {Parameter:} ${k}$}\\
{\sl Question:}  Does $G$ contain a subset $S$ of at most $k$ vertices such that $G-S$ does not 
contain $H$ as a minor?\\

Clearly, the above problem is {\sc Vertex Cover} when $H=K_{2}$ and  {\sc Feedback Vertex Set} when $H=K_{3}$ and these are the only two cases
where a polynomial kernel is known. Actually, the research on a kernel for the {\sc Feedback Vertex Set} was quite extended (see~\cite{BurrageEFLMR06theu, Bodlaender07acub})
and the best up to now result is the $O(k^{2})$ kernel provided by Thomassé in~\cite{Thomasse09aqua}. Recently, a polynomial kernel 
was devised for the case where $H$ is the $\Theta_{c}$ graph, that is 
two vertices connected with $c$ parallel edges~\cite{FominLMPS11hitt}.
For more general 
cases of $H$  the  problem remains open.

\subsubsection{Extremal combinatorics}

Many polynomial kernels make strong use of results in extremal graph theory. An easy, though characteristic, example is the following problem:\\

\noindent{\sc {$k$-Max Leaf}}\\
\noindent {\sl Instance:}~~ A connected graph ${G}$ and an integer  ${k}\geq 1$.\\
{\sl {Parameter:} ${k}$}\\
{\sl Question:}  Does $G$ contain a spanning tree with at least $k$ leaves?\\

The above problem is {\sf NP}-complete~\cite{Lemke98inst}. A  kernel of at most  $8k$ vertices  for this problem is the procedure that repetitively applies the following reduction rules until none of them applies more:
\begin{itemize}

\item[R1:] If $G$ contains a vertex $v$ of degree $1$ with a neighbor of  degree 2 then set $(G,k)\leftarrow (G\setminus \{v\},k)$.
\item[R2:] If $G$ contains two vertices $v$ and $v'$ of degree 1 with the same neighbor, then set $(G,k)\leftarrow (G\setminus \{v\},k-1)$.
\item[R3:] If $G$ contains a chain of length at least 4 then set $(G,k)\leftarrow (G',k)$ 
where $G'$ is the graph obtained if we contract some of the edges of this chain.
(A {\em chain} is a path of length at least 2, where all internal vertices have degree 2, and with endpoints of degree greater than 2.)
\end{itemize}

It is not hard to see that all rules above are producing (simpler) equivalent instances.
A {\em reduced instance} $(G',k')$, is an instance of the problem where none of 
the above rules can be applied any more.  
What remains  to prove is that if in a reduced instance $(G',k')$, $G'$ has  least $8k'$ vertices,  then $(G',k')$ is a 
 {\tt NO}-instance.

We denote by $V_{1}$, $V_{2}$, and $V_{\geq 3}$ the 
set of 
vertices of $G'$ with degrees 1, 2, or $\geq 3$ respectively.
Because of the first  rule,
 vertices in $V_{1}$ are  adjacent only with vertices in $V_{3}$
 and, from the second rule, we have that $|V_{1}|\leq |V_{3}|$.
Moreover, from the third rule, all the chains of $G'$  have length at most $3$.

We aim to  prove that if $|V(G')|\geq 8 k'$, then $G'$ contains a spanning 
tree of $k'$ leaves. % and the answer is {\tt YES}. 
For this, we construct, using $G'$, an auxiliary graph  $H$
by  removing all vertices of degree 1 and  by replacing all chains of $G'$ by chains of length 2. 
Notice that $H$ does not have vertices of degree 1 and all its vertices of degree 2 -- we call them {\em subdivision vertices}  -- 
are internal vertices of some chain of length 2. We denote by $Q_{2}$ and $Q_{\geq 3}$ the vertices 
of $H$ that have degree 2 or $\geq 3$ respectively.
Notice that $|V_{\geq 3}|=|Q_{\geq 3}|$ which, combined with the fact that  $|V_{1}|\leq |V_{3}|$, implies that 
$|V_{1}|+|V_{\geq 3}|\leq 2\cdot |Q_{\geq 3}|$.
Moreover, each vertex in $Q_{2}$ corresponds to at most two vertices of $V_{2}$. Therefore 
$|V_{2}|\leq 2\cdot |Q_{2}|$. We conclude that $|V(G')|=|V_{1}|+|V_{2}|+|V_{\geq 3}|\leq 2\cdot |Q_{2}|+2\cdot |Q_{\geq 3}|=2\cdot |V(H)|$. As $|V(G')|\geq 8 k'$, we conclude that $|V(H)|\geq 4k'$. We call two subdivided edges of $H$ {\em siblings} if they have the same neighborhood.
We  add edges between siblings so that $H$ is transformed to a graph $H'$ with minimum degree 3.
It is easy to see that if $H'$ contains a spanning tree with at least $k'$ leaves then also $H$
contains a spanning tree with at least $k'$ leaves.  As $|V(H')|\geq 4 k'$, then from the main result of Kleitman and West in~\cite{KleitmanW91span}
we have  that $H'$ (and therefore $H$) contains a spanning tree of with  least $k'$ leaves. By the construction of $H$,
it easily follows that this spanning tree of $H$ can be extended  to a spanning tree in $G'$ with at least the same number of leaves and we are done.\medskip

Clearly, the correctness of the above kernel is based on the combinatorial result of \cite{KleitmanW91span}. By using more reduction
rules and more refined arguments  it is possible to construct even better kernels for {\sc {$k$-Max Leaf}}.
For instance, in~\cite{Estivill-CastroFLR05fpti} Estivill-Castro, Fellows, Langston, and Rosamond give a $3.75\cdot k$ kernel for this problem. Notice that the parameterized complexity of the problem
changes drastically if we ask  for a spanning tree with {\sl at most} $k$ leaves. Even for $k=2$, this is 
an {\sf NP}-complete problem. Therefore there is no $n^{O(k)}$ algorithm for this parametrization, unless ${\sf P}={\sf NP}$.

\subsubsection{Kernels for sparse graphs}

Clearly, one cannot expect that a kernel exists for a problem that is hard for some level of the W-hierarchy. However, such problems may become fixed parameter tractable when their instances
 are restricted to sparse structures. In case of graph-theoretic
problems such a sparsity criterion is usually planarity  or the exclusion of certain types of graphs as a minor.
The prototype of such a problem was {\sc Planar Dominating Set} that, as we mentioned in Subsection~\ref{subexpalg}, belongs to {\sf SUBFPT}. The first kernel for this problem was given by the celebrated result 
of Alber, Fellows, and Niedermeier in~\cite{AlberFN04poly} and was of size $335k$. The kernel of~\cite{AlberFN04poly} is using the fact that 
the sphere in which 
 the input graph is embedded, can  be decomposed into a set of  $O(k)$ regions such that each vertex inside a region
is in the neighborhood of its boundary vertices. This last property, makes it possible to apply a suitable reduction in each region
and reduce its vertices to a constant size. Because of planarity,  there are  $O(k)$ regions in total and this implies the existence of a linear kernel.
Based on the same idea and a more refined analysis, this kernel was improved to one of size $67k$ in~\cite{ChenFKX07para}.
The idea of a region decomposition was further extended in~\cite{GuoNiedermeier2007line} for several problems such as  {\sc Connected Vertex Cover}, {\sc Minimum Edge Dominating Set}, {\sc Maximum Triangle Packing}, and {\sc Efficient Dominating Set} on planar graphs.
While in all the above results the reduction rules were particular for each problem, it appeared that the whole idea can be vastly extended 
to a general meta-algorithmic framework that we will describe in the rest of this section.

\paragraph{The compactness condition.}
Let ${\cal G}_g$ be the family of all graphs that can be embedded in a surface  $\Sigma$ of Euler-genus at most $g$. Given a graph $G$ embedded in a surface  $\Sigma$ of Euler-genus $g$, and a set $S$, we define ${\bf R}_{G}^{r}(S)$ to be the set of all vertices of $G$ whose radial distance from some vertex of $S$ is at most $r$. The {\em radial distance} between two vertices $x,y$ is the minimum length of an alternating sequence of vertices and faces starting from $x$ and ending in $y$, such that every two consecutive elements of this sequence are incident 
to each other.

We consider parameterized problems where $\Pi \subseteq {\cal G}_{g} \times \mathbb{N}$, i.e. we impose bounded genus as a promise condition.
We say that a parameterized problem $\Pi \subseteq {\cal G}_{g} \times \mathbb{N}$ is \emph{\sl compact} if there exists an integer $r$ such that for all 
$(G=(V,E),k)\in\Pi$, there is an embedding of $G$ in a surface  $\Sigma$ of Euler-genus at most $g$ and a set $S \subseteq V$ such that $|S|\leq r \cdot k$ and  ${\bf R}_{G}^{r}(S)=V$.  The compactness condition is what makes it possible to  construct  a region decomposition for the instances of a parameterized problem. The region decomposition is constructed on the surface where the input graph is embedded.
For the automation of the reduction rules that are applied in the regions we demand the expressibility of the problem 
by a certain logic.

%and $k\leq |V|^r$. 

%
%Similarly, $\Pi$ is \emph{quasi-compact} if there exists an integer $r$ such that for every $(G,k) \in\Pi$, there is an embedding of $G$ 
%%\sed{I am wondering whether we could avoid that ugly \mbox{$k\leq |V|^r$} in the defintion. It appears as a necessity in the proof of the main theorem for finite integer problems and it is not necessary  for the CMSO theorem. Also, I guess that, instead of asking
%% $|S|\leq r \cdot k$, we could even ask  $|S|\leq k^{r}$, and still have polynomial kernels in MSOL problems. I guess we cannot do this 
%%for uniformity reasons but at least one comment in the end might be nice.}
%in a surface  $\Sigma$ of Euler-genus at most $g$ and a set $S \subseteq V$ such that $|S|\leq r \cdot k$ and $\tw(G\setminus {\bf R}_{G}^{r}(S))\leq r.$ Notice that if a problem is compact then it is also quasi-compact.

\paragraph*{Counting Monadic Second Order Logic.} {Counting Monadic Second-Order Logic} (CMSOL) is an extension of
MSOL defined in~Subsection~\ref{ftpfol}  where, in addition to the usual features of monadic second-order logic, we have atomic formulas testing whether the cardinality of a set is equal 
to $n$ modulo $p$, where $n$ and $p$ are integers such that $ 0\leq n<p $ and $p\geq 2$ (see~\cite{ArnborgLS91easy,Courcelle90them,Courcelle97thee}). So essentially CMSOL 
is MSOL with  atomic formulas of the following type:
\begin{quote}
{\sl If $U$ denotes a set $X$, then 
$\mathbf{card}_{n,p}(U) = \mathbf{true}$ if and only if $|X |$ is $n~\mathbf{mod}~p$. }
\end{quote}
%

%
%Our first result concerns a parameterized analogue of graph optimization problems where the objective is to find a maximum or minimum sized vertex or edge set satisfying a CMSO-expressible property. In particular, the parameterized problems considered are defined as follows. 

In a {\sc $k$-min-CMSOL} parameterized graph problem $\Pi \subseteq {\cal G}_{g} \times \mathbb{N}$, we are given a graph $G=(V,E)$ and an integer $k$ as input. The objective is to decide whether there is a vertex/edge set $S$ of size at most $k$ such that the CMSOL-expressible predicate $P_\Pi(G,S)$ is satisfied.
% In a {\sc $k$-eq-CMSO} problem the size of $S$ is required to be exactly $k$ and in a {\sc $k$-max-CMSO} problem the size of $S$ is required to be at least $k$. 
The \emph{\sl annotated} version $\Pi^{\alpha}$ of a {\sc $k$-min/eq/max-CMSOL} problem $\Pi$ is defined as follows. The input is a triple $(G=(V,E),Y,k)$ where $G$ is a graph, $Y\subseteq V$ is a set of black vertices, and $k$ is a non-negative integer.  In the {\em annotated version} of a {\sc $k$-min-CMSOL} graph problem, $S$ is additionally required to be a subset of $Y$. 
%For the annotated version of a 
%{\sc $k$-max-CMSO} graph problem $S$ is not required to be a subset of $Y$, but instead of $|S| \geq k$ we demand that $|S \cap Y| \geq k$. 
Finally, for a parameterized problem $\Pi \subseteq {\cal G}_g\times \mathbb{N}$, let 
$\overline{\Pi}\subseteq {\cal G}_g \times \mathbb{N}$ denote the set of all no-instances of $\Pi$.

\paragraph{Kernelization meta-theorems.}
The first meta-algorithmic  results on kernels appeared by Bodlaender, Fomin, Lokshtanov, Penninkx, Saurabh, and Thilikos in~\cite{BodlaenderFLPS09meta} and by Fomin, Lokshtanov, Saurabh, and Thilikos in~\cite{FominLST10bidi}. As a sample, we present the following one:

\begin{theorem}[\!\!\cite{BodlaenderFLPS09meta}]
\label{cor:antonormal}
Let $\Pi\subseteq {\cal G}_g\times \mathbb{N}$ be an  {\sf NP}-complete 
%\sed{Why $\Pi$ cannot be  just {\sf NP}-hard?}
{\sc $k$-min-CMSOL} parameterized problem such that  either $\Pi$ or $\overline{\Pi}$ is compact and $\Pi^{\alpha}$ is in $\classNP{}$.
Then $\Pi$ admits a polynomial kernel.
\end{theorem}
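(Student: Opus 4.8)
The plan is to prove the theorem by a \emph{protrusion replacement} scheme: a fixed family of polynomial-time reduction rules that shrink the instance until its size is forced to be linear in $k$, with the two complexity hypotheses ($\widetilde{\Pi}$ being {\sf NP}-complete, $\Pi^{\alpha}\in\classNP{}$) supplying the effectiveness of the rules and the compactness hypothesis supplying the final size bound. First I would dispose of the two symmetric cases at once: the reduction rules below are driven only by the CMSOL description $P_{\Pi}$ and are \emph{equivalence preserving}, so the same reduced instance certifies membership in both $\Pi$ and $\overline{\Pi}$. Consequently, if $\Pi$ is compact I use the decomposition bound to cap the size of \emph{yes}-instances (and otherwise output a trivial no-instance), while if $\overline{\Pi}$ is compact I cap the size of \emph{no}-instances (and otherwise output a trivial yes-instance). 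Hence it suffices to treat the compact problem, say $\Pi$.

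The combinatorial objects are $t$-\emph{protrusions}: vertex sets $X$ with boundary $|\partial(X)|\le t$ and $\bw(G[X])=O(t)$. The engine is a finite-state property that I would derive from Courcelle's Theorem (Theorem~\ref{courcelle:th}, in its CMSOL form): for fixed $t$, the $t$-boundaried graphs of bounded branchwidth fall into finitely many classes under the canonical congruence, where two boundaried graphs are equivalent when, for every host graph glued along the boundary, the minimum sizes of sets satisfying $P_{\Pi}$ agree up to a constant offset depending only on the class. Each class has a representative of constant size. The role of the {\sf NP} hypotheses is exactly to make this effective and uniform: since $\widetilde{\Pi}$ is {\sf NP}-complete and the annotated $\Pi^{\alpha}$ lies in $\classNP{}$, the minimum-solution and offset computations needed to identify a protrusion's class and its representative reduce, in polynomial time, to a single fixed problem, so the whole scheme runs generically in polynomial time rather than needing problem-specific arguments.

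With this in hand, the \emph{protrusion replacement rule} reads: if $G$ contains a $t$-protrusion $X$ larger than the largest representative, replace $G[X]$ by the representative of its congruence class and decrease $k$ by the associated offset (using the black-vertex annotation of $\Pi^{\alpha}$ to record boundary vertices that must, or must not, enter the solution). This rule is safe by the definition of the congruence and is applied exhaustively. It remains to bound a fully reduced yes-instance. Here I invoke compactness: a yes-instance admits an embedding in a surface of Euler-genus $\le g$ and a set $S$ with $|S|\le r\cdot k$ and $\mathbf{R}_{G}^{r}(S)=V$. Generalizing the planar region decomposition of Alber, Fellows, and Niedermeier to bounded genus, I would cut the embedding along $S$ into $O(k)$ regions, each of bounded radial depth and enclosed by a constant number of $S$-vertices, hence each an $O(r)$-protrusion. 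Since the instance is fully reduced, every such region already has constant size, so $|V(G)|=O(|S|)+O(k)\cdot O(1)=O(k)$. Therefore, if after exhaustive reduction the graph still exceeds $c\cdot k$ vertices for the appropriate constant $c$, it cannot be a yes-instance and I output a trivial no-instance; otherwise the reduced instance itself is a linear kernel.

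The hard part will be the passage from compactness to a genuine \emph{protrusion} decomposition: it is straightforward that radial balls have bounded treewidth, but one must also guarantee that the pieces between the $S$-balls have \emph{bounded boundary}, and it is the Euler-genus bound that keeps the number of ``holes'', and therefore the boundary size of each region, constant. The second delicate point is verifying that the congruence derived from Courcelle's automaton is compatible with the \emph{annotated} predicate and yields a \emph{bounded} offset, so that decreasing $k$ keeps $k'$ polynomial (indeed $k'\le k$); making the representatives and offsets computable in polynomial time is precisely where the two {\sf NP}-assumptions are consumed.
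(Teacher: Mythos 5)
Your plan replaces the paper's argument with a direct protrusion-replacement scheme, and the two points you yourself flag as ``delicate'' are in fact where the proposal breaks. First, the claim that the canonical congruence on $t$-boundaried graphs has finitely many classes \emph{together with a bounded offset} on the optimum is precisely the \emph{finite integer index} property. This does not follow from $P_{\Pi}$ being CMSOL-expressible: Courcelle's theorem gives finitely many classes for the \emph{decision} behaviour at each fixed parameter value, but tracking how the minimum solution size shifts when a protrusion is swapped can produce unboundedly many classes (or unbounded offsets) for a general {\sc $k$-min-CMSOL} predicate. In the meta-kernelization framework, finite integer index is an \emph{additional} hypothesis used to get linear kernels; the theorem you are proving deliberately omits it and correspondingly promises only a polynomial kernel. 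A proof deriving a linear kernel from the stated hypotheses would be a strictly stronger result, which is a sign the step cannot be carried out as written.

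Second, you have misassigned the role of the two \classNP{} hypotheses, and this hides the missing final step. The paper's route is: (1) use compactness and the region decomposition to produce a \emph{quadratic kernel for the annotated problem} $\Pi^{\alpha}$ --- the black-vertex annotation is exactly what lets one shrink a region without knowing the offset, sidestepping the finite-integer-index issue above; (2) since $\Pi^{\alpha}\in\classNP$ and $\widetilde{\Pi}$ is {\sf NP}-complete, apply the polynomial-time Karp reduction from $\Pi^{\alpha}$ to $\widetilde{\Pi}$ to turn the annotated kernel back into an (unannotated) instance of $\Pi$ of polynomially bounded size. The polynomial blow-up of this reduction is why the conclusion is ``polynomial kernel'' rather than quadratic or linear. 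In your proposal the reduction rules consume the annotation (``record boundary vertices that must, or must not, enter the solution''), so the object you end with is an instance of $\Pi^{\alpha}$, not of $\Pi$, and you never translate it back; the {\sf NP} hypotheses are not there to make representatives computable, they are there to perform exactly this back-translation. Repairing your argument essentially forces you onto the paper's route.
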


The above theorem is essentially a corollary of the fact, proven in~\cite{BodlaenderFLPS09meta}, that if  $\Pi\subseteq {\cal G}_g\times \mathbb{N}$ is  a  {\sc $k$-min-CMSOL} parameterized  problem and either  $\Pi$ or $\overline{\Pi}$ is compact, then its annotated version $\Pi^{\alpha}$  admits a  quadratic kernel.
Another condition that, together with the compactness condidtion, allows the derivation of a linear kernel for a  {\sc $k$-min-CMSOL} parameterized  problem
is the one having {\em finite integer index}, introduced in~\cite{BodlaenderV01redu}. This condition, combined with bidimensionality theory (presented in Subsection~\ref{subexpalg}) was used in~\cite{FominLST10bidi}
in order to derive kernelization meta-theorems for more general graph classes, namely classes excluding some graph (apex or general) as a minor. \smallskip

Theorem~\ref{cor:antonormal}  immediately implies the existence of polynomial kernels for a wide family of problems. 
However, the potential of  this type of meta-algorithmic results has not been fully investigated yet. \medskip

We conclude our kernelization subsection by  mentioning that it cannot be more than incomplete. For a small sample of  
recent results on the existence of polynomial kernels for several parameterized problems, see~\cite{GutinIMY11ever,CrowstonGJKR10syst,AG08TechReport,PhilipRS09,AlonGKSY10solv,GutinKSY11apro,DellM10sati,KratschMW10para,CyganPPW10kern,BodlaenderJK11cros,BodlaenderJK11prep,HermelinHKW11para,FominLMPS11hitt,JansenB11vert}.

\subsection{(Much) more on parameterized algorithms}

Certainly, there are many issues on parameterized algorithm design that
are not (and cannot be) covered here. Among them, we should mention parameterized algorithms for counting problems~\cite{McCartin06para,DiazST08effi,NishimuraRT05para, FlumG02thep,Thurley07kern,DemaineHM10para}, parameterized approximation~\cite{DowneyFM08para,Jansen09para,DowneyFM06para,CaiH06fixe,ChenGG06onpa,MarxR09cons}, 
 and parameterized  parallelization~\cite{Cesati98para}.   
 Also we should mention that new powerful  techniques emerge quite rapidly in parameterized algorithm design, such as 
 the use of {\sl important seprators} for the {\sf FPT}-algorithm proposed for the {\sc $k$-Edge Multicut} and 
 the {\sc $k$-Vertex  Multicut} problems by Dániel Marx and Igor Razgon in~\cite{MarxR11fixed} (see also~\cite{BousquetDT11mult}). 
 
\bigskip

%$\ldots$

\paragraph{Acknowledgements} Many thanks to Mike Fellows, Fedor V. Fomin, Bart M. P. Jansen, Stavros G. Kolliopoulos, Moritz M{\"u}ller, Saket Saurabh, and Geoff Whittle
for corrections and  helpful comments.

% \section*{References}
\addcontentsline{toc}{section}{\bf \ \ \ \ References}

\end{document}